\documentclass[a4paper, 10pt]{article} \pdfoutput=1
\usepackage{amsmath,amssymb,amsthm,bbm,bm, amscd, mathtools, ascmac}
\usepackage{longtable, geometry}
\usepackage{enumitem}
\usepackage{latexsym}
\usepackage{braket}
\usepackage{mathrsfs}
\usepackage{authblk}
\usepackage{graphicx}
\usepackage{physics}
\usepackage{mathtools}
\usepackage{empheq}
\usepackage{stmaryrd}
\usepackage[utf8]{inputenc}
\usepackage[english]{babel}
\usepackage{esint}

\usepackage[subrefformat=parens]{subcaption}
\usepackage[svgnames,dvipsnames]{xcolor}
\usepackage[colorlinks,citecolor=DarkGreen,linkcolor=FireBrick,linktocpage,unicode]{hyperref}
\usepackage{xparse}
\usepackage{tikz}
\usetikzlibrary{arrows.meta,positioning,calc}

\usepackage{tikz-cd}
\usetikzlibrary{intersections,calc,arrows}
\usetikzlibrary{decorations.pathreplacing}
\usetikzlibrary{positioning}
\usepackage{url}
\hypersetup{urlcolor=blue}
\usepackage{comment}

\newtheorem{thm}{Theorem}[section]
\newtheorem{prop}[thm]{Proposition}
\newtheorem{lem}[thm]{Lemma}
\newtheorem{cor}[thm]{Corollary}
\theoremstyle{definition}
\newtheorem{ass}[thm]{Assumption}

\newtheorem{rem}[thm]{Remark}

\newcommand{\be}{\begin{equation}}
\newcommand{\ee}{\end{equation}}

\newcommand{\im}{\mathrm{i}\,}

\numberwithin{equation}{section}

\newcommand{\cH}{\mathcal{H}}

\newcommand{\supp}{\mathrm{supp}}

\newcommand{\ad}{\mathrm{ad}}

\author[a]{Tadahiro Miyao}

\title{
\textbf{
Hubbard--Heisenberg Thermodynamic Comparison at Half Filling in a Fixed Staggered Field
}
}

\affil[a]{Department of Mathematics,  Hokkaido University}

\begin{document}

\maketitle

\begin{abstract}
We study the repulsive Hubbard model at half filling in the
strong-coupling regime, with a staggered magnetic field of strength
\(h\).  The analysis is carried out in the canonical half-filled
ensemble, with temperature measured on the Heisenberg scale
\(J_0(U)=4t^2/U\).  Uniformly for \(|h|\le h_0\) and
\(\beta J_0(U)\ge \ell_0\), we prove finite-volume
Hubbard--Heisenberg pressure estimates with errors uniform in the system
size.  These estimates pass to thermodynamic limits whenever the
limiting pressures exist.

The proof uses a strong-coupling unitary transformation which separates
the singly occupied spin sector from sectors containing empty or doubly
occupied sites.  On the singly occupied sector, the effective
Hamiltonian is compared with the Heisenberg reference Hamiltonian; the
remaining sectors are controlled through a decomposition of the
transformed partition function according to the set of empty or doubly
occupied sites.  For fixed positive staggered-field windows
\(I\Subset(0,h_0]\), the magnetisation comparison is then derived from
the pressure comparison by convexity of finite-volume pressures.

We also prove charge-sector suppression estimates, uniformly for
\(|h|\le h_0\): in the large positive-\(U\) Heisenberg-scale regime, the
density of empty or doubly occupied sites and the double-occupancy
density are small, and the squared staggered charge divided by
\(|\Lambda|^2\) is small.  Thus the results give a quantitative
Gibbs-state formulation of the strong-coupling picture in which the
half-filled repulsive Hubbard model is described, at the Heisenberg
scale, by effective antiferromagnetic spin degrees of freedom, while
charge fluctuations are suppressed.
\end{abstract}

      \tableofcontents



\section{Introduction and main results}
\label{sec:intro-main-results}

\subsection{Background and purpose}
\label{subsec:background-purpose}

The Hubbard model is one of the simplest lattice models for interacting
electrons, yet it has remained a central object in mathematical physics
and condensed matter theory for many decades.  It was introduced as a
model for electron correlations in narrow bands \cite{Hubbard1963}, and
has since been used as a basic framework for studying Mott physics,
magnetism, and the competition between spin and charge degrees of
freedom; see also the exact one-dimensional solution of Lieb and Wu
\cite{LiebWu1968}.  Its magnetic properties have motivated several
rigorous approaches.  Important examples include Nagaoka-type
ferromagnetism, Lieb's spin-reflection-positivity approach to the
half-filled bipartite model, and flat-band or nearly-flat-band mechanisms
for itinerant ferromagnetism
\cite{Nagaoka1966,Lieb1989,MielkeTasaki1993,Tasaki1998}.
Rigorous analyses of Hubbard-type systems have also been developed from
other viewpoints, for example by constructive renormalization group
methods in the half-filled honeycomb Hubbard model
\cite{GiulianiMastropietro2010}.

Despite these developments, rigorous results for physically natural
regimes remain rather limited.  The difficulty is not caused by a
complicated definition of the model.  Rather, it comes from the delicate
interplay between electron itinerancy, spin degrees of freedom, and the
local Coulomb repulsion: the same hopping term which favours
delocalisation also competes with the strong on-site repulsion and
generates effective magnetic interactions.

In this paper we study the repulsive Hubbard model at half filling in
the strong-coupling regime.  The expected physical picture is simple:
when \(U\gg |t|\), charge fluctuations are energetically suppressed and
the remaining low-energy thermodynamics in the singly occupied sector is
governed by an antiferromagnetic spin model.  More precisely, virtual
hopping processes of second order in \(t/U\) produce the superexchange
scale
\[
  J_0(U):=\frac{4t^2}{U}.
\]
Thus the natural reference model is the spin-\(\frac12\) Heisenberg model
with exchange constant \(J_0(U)\).  The point of the present work is to
give a quantitative mathematical formulation of this perturbative
picture at the level of finite-volume canonical Gibbs states.  We do not
only identify the second-order effective interaction; we control the
higher-order effective remainder and the charge-defect contribution with
error bounds uniform in the volume and in fixed positive-field windows.
In this way, the formal Hubbard-to-Heisenberg reduction is upgraded to
a volume-uniform comparison of canonical thermodynamic quantities.

From the viewpoint of phase diagrams, the charge-sector estimates are
also part of the significance of the result.  In attractive or
charge-favouring regimes, empty and doubly occupied sites may be
energetically favoured, and staggered charge order can become stable;
rigorous Pirogov--Sinai analyses of Borgs, Koteck{\'y} and collaborators
establish such charge-ordered phases in related extended Hubbard
settings
\cite{BorgsJedrzejewskiKotecky1996,BorgsKoteckyUeltschi1996}.
The present work describes the opposite large positive-\(U\) regime at
half filling: charge defects are suppressed, macroscopic staggered charge
order is excluded, and the remaining thermodynamics is governed by the
effective spin degrees of freedom at the Heisenberg scale.

The temperature scale used here is part of the physical regime.  The
charge excitation scale is of order \(U\), whereas the spin interaction
inside the singly occupied sector appears only at the superexchange scale
\(J_0(U)=4t^2/U\).  Therefore, to see the nontrivial spin thermodynamics
described by the Heisenberg reference model, temperature should be
measured in units of \(J_0(U)\).  This is expressed by the condition
\[
  \beta J_0(U)\ge \ell_0.
\]
One may try to formulate comparisons outside this scale as well, but
that would require additional control of thermal regimes not addressed in
the present work.

A distinctive feature of our setting is that the analysis is carried out
directly in the canonical half-filled ensemble,
\[
  \cH_\Lambda^{\rm hf}
  =
  \ker(N_\Lambda-|\Lambda|).
\]
This differs from many rigorous low-temperature phase-diagram analyses
of Hubbard-type or lattice-gas models, which are naturally formulated in
a grand-canonical setting with the chemical potential as one of the
parameters.  Strong-coupling constructions of effective Hamiltonians for
quantum lattice systems, including Hubbard-type tight-binding models,
are a central ingredient in the quantum Pirogov--Sinai program of
Datta, Fern\'andez and Fr\"ohlich
\cite{DattaFernandezFrohlich1996,DattaFernandezFrohlich1999}.  Related
Pirogov--Sinai analyses, such as those of
Borgs--Jedrzejewski--Koteck{\'y} and
Borgs--Koteck{\'y}--Ueltschi, establish low-temperature phase diagrams
and charge-ordered phases for extended Hubbard or lattice-gas type
models in regimes where charge degrees of freedom are energetically
favoured
\cite{Jedrzejewski1994,BorgsJedrzejewskiKotecky1996,
BorgsKoteckyUeltschi1996,BorgsKotecky2000}.  In these works, effective
interactions are derived and then used to analyse phase diagrams under
assumptions adapted to grand-canonical or charge-order settings.

The emphasis of the present paper is different.  Rather than working at
the level of maximal generality, we restrict attention to the concrete
repulsive Hubbard model at half filling.  This allows us to keep track of
the canonical constraint, the singly occupied spin sector, and the
charge-defect sectors in a more explicit way.  In particular, the
effective Hamiltonian is used here as part of a quantitative
Hubbard--Heisenberg transfer of pressure, magnetisation density, and
charge-sector estimates, rather than as an input to a grand-canonical
phase-diagram analysis.  Possible extensions of the method to more
general settings are discussed in Subsection~\ref{subsec:possible-extensions}.

In the canonical half-filled sector, the global constraint
\(N_\Lambda=|\Lambda|\) prevents a naive tensor-product factorisation of
local traces.  Thus, when charge defects are separated from the singly
occupied spin sector, one has to keep track of the half filling
constraint carefully.  In the proof below this appears through a
restricted trace factorisation with a common outside single-occupancy
projection.

The role of the staggered field differs among the results.  The pressure
comparison and the charge-sector estimates are uniform for
$
  |h|\le h_0.
$
The magnetisation comparison is stated on fixed positive windows
$
  I\Subset(0,h_0],
$
where the field is bounded away from zero.  This positivity is used in
the convexity argument which turns pressure comparison into
magnetisation comparison.  No quasi-average limit \(h\downarrow0\) is
taken in this paper.

The precise model, the reference Heisenberg system, and the main results
are stated in the following subsections.

\subsection{Hubbard Hamiltonian at half filling}\label{subsec:model}

\paragraph{Fermions and the half-filled sector.}
Fix \(d\ge1\) and let
$
  \Lambda_L:=\bigl(\mathbb Z/L\mathbb Z\bigr)^d
  $
  with 
  $
   L\in2\mathbb N.
$
Set \(\mathfrak h_{\Lambda_L}:=\ell^2(\Lambda_L)\otimes\mathbb C^2\) and
\[
  \cH_{\Lambda_L}
  :=
  \bigoplus_{n=0}^{2|\Lambda_L|}
  \bigwedge\nolimits^{n}\mathfrak h_{\Lambda_L} .
\]
For \(x\in\Lambda_L\) and \(\sigma\in\{\uparrow,\downarrow\}\), let
\(c_{x\sigma},c_{x\sigma}^\ast\) be the fermionic annihilation and
creation operators on \(\cH_L\), satisfying the CAR:
\[
  \{c_{x\sigma},c_{y\tau}^\ast\}
  =
  \delta_{xy}\delta_{\sigma\tau},
  \qquad
  \{c_{x\sigma},c_{y\tau}\}=0,
  \qquad
  \{c_{x\sigma}^\ast,c_{y\tau}^\ast\}=0.
\]
Define
\[
  n_{x\sigma}:=c_{x\sigma}^\ast c_{x\sigma},
  \qquad
  n_x:=n_{x\uparrow}+n_{x\downarrow},
  \qquad
  N_{\Lambda_L}:=\sum_{x\in\Lambda_L} n_x .
\]
We work in the half-filled sector
\[
  \cH_{\Lambda_L}^{\rm hf}
  :=
  \ker\bigl(N_{\Lambda_L}-|\Lambda_L|\bigr)
  =
  \bigwedge\nolimits^{|\Lambda_L|}\mathfrak h_{\Lambda_L} .
\]

\paragraph{Hamiltonian.}
The Hubbard Hamiltonian at half filling is given by
\[
  H^{\rm Hub}_{\Lambda_L}
  :=
  UD_{\Lambda_L}+T_{\Lambda_L},
\]
where \(U>0\) is the on-site Coulomb repulsion,
\[
  D_{\Lambda_L}
  :=
  \sum_{x\in\Lambda_L} n_{x\uparrow}n_{x\downarrow},
\]
and \(T_{\Lambda_L}\) is the nearest-neighbour hopping term.  Let
\(\mathscr B_{\Lambda_L}\) denote the set of unordered nearest-neighbour
bonds in the torus \(\Lambda_L\), each bond being counted once.  Then
\[
  T_{\Lambda_L}
  :=
  -t
  \sum_{\{x,y\}\in\mathscr B_{\Lambda_L}}
  \sum_{\sigma=\uparrow,\downarrow}
  \bigl(
    c_{x\sigma}^\ast c_{y\sigma}
    +
    c_{y\sigma}^\ast c_{x\sigma}
  \bigr),
  \qquad
  t\in\mathbb R\setminus \{0\}.
\]
Define the staggering
\[
  \eta_x:=(-1)^{x_1+\cdots+x_d},
\]
which is well-defined on the torus \((\mathbb Z/L\mathbb Z)^d\) since
\(L\) is even, and set
\[
  S_x^{(3)}
  :=
  \frac12\bigl(n_{x\uparrow}-n_{x\downarrow}\bigr),
  \qquad
  M_{\Lambda_L}
  :=
  \sum_{x\in\Lambda_L}\eta_x S_x^{(3)} .
\]
For \(h\in\mathbb R\), we consider the staggered-field Hamiltonian on
\(\cH_{\Lambda_L}^{\rm hf}\),
\[
  H^{\rm Hub}_{\Lambda_L}(h)
  :=
  H^{\rm Hub}_{\Lambda_L}-hM_{\Lambda_L}.
\]

\subsection{Reference Heisenberg model, pressures, and quasi-averages}
\label{subsec:ref-heis-pressure}

Throughout this subsection, for notational simplicity, we write
$
  \Lambda=\Lambda_L.
$

\paragraph{The reference Heisenberg model.}
Let
\[
  \cH^{\rm spin}_\Lambda
  :=
  \bigotimes_{x\in\Lambda}\mathbb C^2
\]
with the tensor factors ordered in any fixed way. Let
\(\sigma^{(1)},\sigma^{(2)},\sigma^{(3)}\) be the Pauli matrices:
\[
  \sigma^{(1)}
  =
  \begin{pmatrix}
    0&1\\
    1&0
  \end{pmatrix},
  \qquad
  \sigma^{(2)}
  =
  \begin{pmatrix}
    0&-\im\,\\
    \im\,&0
  \end{pmatrix},
  \qquad
  \sigma^{(3)}
  =
  \begin{pmatrix}
    1&0\\
    0&-1
  \end{pmatrix}.
\]
For each \(x\in\Lambda\) and \(a\in\{1,2,3\}\), define the local spin
operator \(S_x^{(a)}\) on \(\cH_\Lambda^{\rm spin}\), acting as
\(\frac12\sigma^{(a)}\) on the tensor factor at \(x\) and as the identity
on all other tensor factors.  Set
\[
  \boldsymbol S_x
  :=
  \bigl(S_x^{(1)},S_x^{(2)},S_x^{(3)}\bigr).
\]
This notation is chosen to match the corresponding fermionic spin
operators introduced above.  When both the fermionic Hilbert space and
the reference spin Hilbert space appear in the same formula, the ambient
space or the identifying map will be stated explicitly.
Let
\[
  M_\Lambda^{\rm spin}
  :=
  \sum_{x\in\Lambda}\eta_x S_x^{(3)}.
\]
For \(J>0\), define
\[
  H^{\rm Heis}_\Lambda(J)
  :=
  J\sum_{\{x,y\}\in\mathscr B_\Lambda}
  \left(
    \boldsymbol S_x\cdot\boldsymbol S_y-\frac14
  \right),
  \qquad
  \boldsymbol S_x\cdot\boldsymbol S_y
  :=
  \sum_{a=1}^3 S_x^{(a)}S_y^{(a)}.
\]
With staggered field \(h\in\mathbb R\), set
\[
  H^{\rm Heis}_\Lambda(J,h)
  :=
  H^{\rm Heis}_\Lambda(J)-hM_\Lambda^{\rm spin}.
\]

The Heisenberg reference model is used here as the effective spin model
at the superexchange scale.  Rigorous analyses of quantum spin systems
and Heisenberg-type models have been developed from several viewpoints,
including phase-transition and correlation methods for quantum spin
systems and low-temperature spin-wave/free-energy asymptotics; see, for
example,
\cite{BorgsKoteckyUeltschi1996,CorreggiGiulianiSeiringer2015}.

\paragraph{Hubbard pressure and Gibbs state.}
For \(\beta>0\), \(U>0\), and \(h\in\mathbb R\), define the finite-volume
Hubbard partition function by
\[
  Z_{\Lambda,\beta,U}^{\rm Hub}(h)
  :=
  \Tr_{\cH_\Lambda^{\rm hf}}
  e^{-\beta H_\Lambda^{\rm Hub}(h)}.
\]
The corresponding pressure is
\[
  p_{\Lambda,\beta,U}^{\rm Hub}(h)
  :=
  \frac1{\beta|\Lambda|}
  \log Z_{\Lambda,\beta,U}^{\rm Hub}(h).
\]
For \(O\in\mathcal B(\cH_\Lambda^{\rm hf})\), the Hubbard Gibbs state is
\[
  \omega_{\Lambda,\beta,U,h}^{\rm Hub}(O)
  :=
  \frac{
    \Tr_{\cH_\Lambda^{\rm hf}}
    \left(
      Oe^{-\beta H_\Lambda^{\rm Hub}(h)}
    \right)
  }{
    Z_{\Lambda,\beta,U}^{\rm Hub}(h)
  }.
\]
The finite-volume staggered magnetisation density is
\[
  m_{\Lambda,\beta,U}^{\rm Hub}(h)
  :=
  \frac1{|\Lambda|}
  \omega_{\Lambda,\beta,U,h}^{\rm Hub}(M_\Lambda).
\]
Since the volume is finite, \(h\mapsto p_{\Lambda,\beta,U}^{\rm Hub}(h)\)
is differentiable, and
\[
  \partial_h p_{\Lambda,\beta,U}^{\rm Hub}(h)
  =
  m_{\Lambda,\beta,U}^{\rm Hub}(h).
\]

\paragraph{Heisenberg pressure and Gibbs state.}
For \(J>0\), \(\beta>0\), and \(h\in\mathbb R\), define
\[
  Z_{\Lambda,\beta}^{\rm Heis}(J,h)
  :=
  \Tr_{\cH_\Lambda^{\rm spin}}
  e^{-\beta H_\Lambda^{\rm Heis}(J,h)}.
\]
The corresponding pressure is
\[
  p_{\Lambda,\beta}^{\rm Heis}(J,h)
  :=
  \frac1{\beta|\Lambda|}
  \log Z_{\Lambda,\beta}^{\rm Heis}(J,h).
\]
For \(O\in\mathcal B(\cH_\Lambda^{\rm spin})\), the Heisenberg Gibbs
state is
\[
  \omega_{\Lambda,\beta,J,h}^{\rm Heis}(O)
  :=
  \frac{
    \Tr_{\cH_\Lambda^{\rm spin}}
    \left(
      Oe^{-\beta H_\Lambda^{\rm Heis}(J,h)}
    \right)
  }{
    Z_{\Lambda,\beta}^{\rm Heis}(J,h)
  }.
\]
The finite-volume staggered magnetisation density is
\[
  m_{\Lambda,\beta}^{\rm Heis}(J,h)
  :=
  \frac1{|\Lambda|}
  \omega_{\Lambda,\beta,J,h}^{\rm Heis}(M_\Lambda^{\rm spin}).
\]
Again,
\[
  \partial_h p_{\Lambda,\beta}^{\rm Heis}(J,h)
  =
  m_{\Lambda,\beta}^{\rm Heis}(J,h).
\]

\paragraph{The effective Heisenberg reference model.}
For the Hubbard model with coupling \(U\), set
\[
  J_0(U):=\frac{4t^2}{U}.
\]
The Heisenberg model used as the reference model in the main comparison
theorems is the nearest-neighbour antiferromagnetic Heisenberg model with
coupling \(J_0(U)\) and the same staggered field \(h\):
$
  H_\Lambda^{\rm Heis}(J_0(U),h).
$
The corresponding pressure is denoted by
\[
  p_{\Lambda,\beta,U}^{\rm Heis}(h)
  :=
  p_{\Lambda,\beta}^{\rm Heis}(J_0(U),h).
\]
We also write
\[
  \omega_{\Lambda,\beta,U,h}^{\rm Heis}
  :=
  \omega_{\Lambda,\beta,J_0(U),h}^{\rm Heis}
\]
and
\[
  m_{\Lambda,\beta,U}^{\rm Heis}(h)
  :=
  m_{\Lambda,\beta}^{\rm Heis}(J_0(U),h)
  =
  \partial_h p_{\Lambda,\beta,U}^{\rm Heis}(h).
\]

\subsection{Main theorems}\label{subsec:main-theorems}

\subsubsection{Pressure and magnetisation comparison}

\paragraph{Heisenberg-scale regime.}
The comparison is made in the Heisenberg-scale temperature regime
$
  \beta J_0(U)\ge \ell_0.
$
This means that the inverse temperature is measured on the spin-exchange
scale \(J_0(U)=4t^2/U\), rather than on the original hopping scale.  Thus,
as \(U\to\infty\), the temperature is low enough to resolve the effective
spin interaction.

Below, the pressure comparison and the charge-sector estimates are stated
uniformly for
$
  |h|\le h_0,
$
whereas the magnetisation comparison is stated on fixed field windows
$
  I\Subset(0,h_0].
$
The condition \(\inf I>0\) is used only in the convexity step which
converts pressure comparison into magnetisation comparison.  No
quasi-average limit \(h\downarrow0\) is taken.

\begin{thm}[Hubbard--Heisenberg pressure comparison on the bounded field window]
\label{thm:pressure-comparison-bounded-window-face}
Fix \(\ell_0>0\).  There exist \(U_0=U_0(\ell_0,d,h_0,t)\) and an error
function
$
  \varepsilon_{\rm FE}(U,\beta;h_0)\ge0
$
such that, for all \(U\ge U_0\), all \(\beta>0\) satisfying
$
  \beta J_0(U)\ge\ell_0,
$
all even tori \(\Lambda_L\), and all \(|h|\le h_0\),
\[
  \left|
  p_{\Lambda_L,\beta,U}^{\rm Hub}(h)
  -
  p_{\Lambda_L,\beta,U}^{\rm Heis}(h)
  \right|
  \le
  \varepsilon_{\rm FE}(U,\beta;h_0).
\]
Moreover,
\[
  \lim_{U\to\infty}
  \sup_{\beta J_0(U)\ge\ell_0}
  \varepsilon_{\rm FE}(U,\beta;h_0)
  =
  0.
\]
The estimate is uniform in \(L\) and \(|h|\le h_0\).
\end{thm}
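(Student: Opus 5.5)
We follow the route sketched in the abstract. First we conjugate by a strong-coupling unitary, then compare the singly occupied block with the Heisenberg model, and finally bound the contribution of the charge-defect sectors.

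\emph{Step 1 (unitary transformation).} We construct a Schrieffer--Wolff type unitary \(e^{S}\), with \(S=S_1+S_2+\cdots\) a finite sum of local anti-Hermitian terms. Here \(S_1=\frac1U\sum_{\{x,y\}}(P_{+}T_{xy}P_{0}-P_0T_{xy}P_+ + \dots)\) is the standard first-order generator removing hopping terms that change \(D_\Lambda\). Higher orders are added so that
\[
e^{S}H^{\rm Hub}_\Lambda(h)e^{-S}=UD_\Lambda-hM_\Lambda+H^{\rm eff}+R,
\]
where \(H^{\rm eff}\) commutes with \(P_\Lambda^{\rm s}\), the projection onto the singly occupied sector \(\{n_x=1\ \forall x\}\). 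Both the off-diagonal remainder \(R\) and the diagonal part \(H^{\rm eff}\) are sums of local terms. The key bounds are \(\|R_{\rm loc}\|\lesssim t^{k+1}/U^{k}\) per site and \(\|H^{\rm eff}_{\rm loc}-J_0(U)(\boldsymbol S_x\cdot\boldsymbol S_y-\frac14)\|\lesssim t^4/U^3+h_0t^2/U^2\) per bond. The field enters \(S\) through \(M_\Lambda\); its commutator with hopping is \(O(h_0|t|)\), which is small relative to \(U\). Unitary invariance gives equality of the partition functions. Since \(R\) is extensive and only needs to be small in norm per site, Peierls--Bogoliubov (or Golden--Thompson) bounds contribute an error of order \(\|R\|/|\Lambda|\le C t^{k+1}/U^{k}\) to the pressure. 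Choosing \(k=3\), this error is \(o(J_0(U))\), which is what we need.

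\emph{Step 2 (spin sector).} On \(P^{\rm s}\cH^{\rm hf}_\Lambda\cong\cH^{\rm spin}_\Lambda\), the operator \(P^{\rm s}(H^{\rm eff}-hM_\Lambda)P^{\rm s}\) equals \(H^{\rm Heis}_\Lambda(J_0(U),h)+E\), with \(\|E\|\le C|\Lambda|(t^4/U^3)\). The Peierls--Bogoliubov inequality then gives
\[
\bigl|\tfrac1{\beta|\Lambda|}\log\Tr_{P^{\rm s}}e^{-\beta(\cdot)}-p^{\rm Heis}_{\Lambda,\beta,U}(h)\bigr|\le C t^4/U^3 .
\]
This is \(o(1)\), uniformly in \(\beta\) and \(L\).

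\emph{Step 3 (defect sectors; main obstacle).} The transformed Hamiltonian commutes with the occupation configuration of defects only up to the remainder already handled. So we decompose
\[
\Tr e^{-\beta \tilde H}=\sum_{A\subset\Lambda}\Tr\bigl(Q_A e^{-\beta\tilde H}\bigr),
\]
where \(Q_A\) projects onto configurations whose set of empty or doubly occupied sites is exactly \(A\). Half filling forces \(A\) to contain equally many holes and doublons, and costs energy at least \(U|A|/2\). The difficulty is that the canonical constraint prevents naive factorisation. We use the restricted trace factorisation with a common outside single-occupancy projection: the block for \(A\) is bounded by \(e^{-\beta U|A|/2+\beta C|A|}\cdot 4^{|A|}\) times the singly occupied trace on \(\Lambda\setminus A\). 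The factor \(4^{|A|}\) counts defect configurations and the spins of their neighbours. The trace on \(\Lambda\setminus A\) is in turn compared with the full spin trace, at the cost of \(e^{\beta C(J_0+h_0)|\partial A|}\). Summing over \(A\), the ratio of the full trace to the spin-sector trace lies between \(1\) and \(\prod_x(1+e^{-\beta(U/2-C')})\). This gives a pressure error of at most
\[
\beta^{-1}\log\bigl(1+e^{-\beta(U/2-C')}\bigr)\le \beta^{-1}e^{-\beta U/4}.
\]
Using \(\beta\ge \ell_0U/(4t^2)\), this is bounded by \(\frac{4t^2}{\ell_0U}e^{-\ell_0U^2/(16t^2)}\to0\). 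The hard part is making the restricted factorisation rigorous, namely showing that the transformed effective terms straddling \(\partial A\) are controlled by \(|A|\) times constants independent of \(L\).

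Collecting the three steps, we set
\[
\varepsilon_{\rm FE}=C\Bigl(\frac{t^4}{U^3}+\frac{h_0t^2}{U^2}+\frac{|t|^{4}}{U^{3}}+\frac{t^2}{\ell_0U}e^{-\ell_0U^2/(16t^2)}\Bigr)
\]
(up to constants depending on \(d\)). This bound is uniform in \(L\), in \(|h|\le h_0\), and in \(\beta J_0\ge\ell_0\), and it tends to \(0\) as \(U\to\infty\). We take \(U_0\) large enough that the expansion converges and that \(U/2>C'\).
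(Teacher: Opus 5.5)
Your architecture is the paper's. You conjugate by a strong-coupling unitary, compare the singly occupied block with \(H^{\rm Heis}_\Lambda(J_0(U),h)\) through the pressure Lipschitz bound, and control the defect sectors by decomposing over defect sets with a common outside single-occupancy projection, ending with \(\beta^{-1}e^{-\beta U/4}\) and \(\beta\ge\ell_0U/(4t^2)\). Truncating Schrieffer--Wolff at finite order and discarding the off-diagonal remainder \(R\) at cost \(\|R\|/|\Lambda|\) is a legitimate simplification of the paper's convergent LS/SW iteration, which instead produces an exactly \(D_\Lambda\)-diagonal \(UD_\Lambda+A_\Lambda(h)\).

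The step that fails as written is the opening claim of Step~3. After \(R\) is removed, the transformed Hamiltonian commutes with \(D_\Lambda\), but not with the defect-set projections \(Q_A\). Its \(D_\Lambda\)-diagonal part contains \(T_\Lambda^{(0)}\), which is of size \(|t|\) and is not part of \(R\). This term hops a doublon or hole onto a neighbouring singly occupied site, so it changes \(A\) at fixed \(D_\Lambda\). Consequently you cannot bound \(\Tr(Q_Ae^{-\beta\tilde H})\) by discarding the terms straddling \(\partial A\) via their norms. The implication \(X\le Y\Rightarrow \Tr(Qe^{-\beta X})\ge\Tr(Qe^{-\beta Y})\) requires \(Q\) to commute with \(X\) and \(Y\). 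For example, take \(X=\mathrm{diag}(M,0)\), \(Y=\epsilon\sigma^{(1)}\), and \(Q\) the first coordinate projection. For \(\beta M\gg1\) one has \(\Tr(Qe^{-\beta(X+Y)})\approx(\epsilon/M)^2\), whereas \(e^{\beta\|Y\|}\Tr(Qe^{-\beta X})=e^{-\beta(M-\epsilon)}\).

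The repair is as follows.
\begin{itemize}
\item Group all \(A\) with \(|A|=2m\), i.e.\ work on the invariant subspace \(\{D_\Lambda=m\}\).
\item On that subspace, split \(\tilde H-R\) into \(\bigoplus_{|A|=2m}Q_A(\tilde H-R)Q_A\) plus a defect-moving part \(W\). Only \(H^{\rm eff}\) contributes to \(W\).
\item Bound \(W\) by a block Schur test. Terms supported away from \(A\) preserve \(\operatorname{ran}Q_A\), and a term supported on \(X\) can only change \(A\) inside \(X\), so at most \(2^{|X|}\) blocks are reached. This gives \(\|W\|\le 2m\|H^{\rm eff}\|_\kappa\) with \(\kappa=\log 2\).
\item After paying \(e^{\beta\|W\|}\), each block \(Q_A(\tilde H-R)Q_A\) acts on the invariant space \(\operatorname{ran}Q_A\). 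There your norm bound on the straddling terms and the restricted factorisation are legitimate.
\end{itemize}
The final estimate survives with a larger \(C'\).

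The paper's proof of Proposition~\ref{prop:defect-pressure-soft-bound} passes through the same step. It removes \(A_\partial(h)\) from \(\Tr(Q_{\mathcal D}e^{-\beta A_\Lambda(h)})\) using \(X-\|Y\|\le X+Y\le X+\|Y\|\), although \(Q_{\mathcal D}\) does not commute with \(A_\Lambda(h)\). The same repair is therefore needed there.
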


The proof of Theorem~\ref{thm:pressure-comparison-bounded-window-face}
is given in Subsection~\ref{subsec:assembly-pressure-comparison}.

\begin{thm}[Fixed positive-field magnetisation comparison]
\label{thm:fixed-field-magnetisation-comparison-face}
Fix \(\ell_0>0\) and \(I\Subset(0,h_0]\).  There exist
\(U_0=U_0(I,\ell_0,d,t)\) and an error function
$
  \varepsilon_{\rm mag}(U,\beta;I)\ge0
$
such that, for all \(U\ge U_0\), all \(\beta>0\) satisfying
$
  \beta J_0(U)\ge\ell_0,
$
all even tori \(\Lambda_L\), and all \(h\in I\),
\[
  \left|
  m_{\Lambda_L,\beta,U}^{\rm Hub}(h)
  -
  m_{\Lambda_L,\beta,U}^{\rm Heis}(h)
  \right|
  \le
  \varepsilon_{\rm mag}(U,\beta;I).
\]
Moreover,
\[
  \lim_{U\to\infty}
  \sup_{\beta J_0(U)\ge\ell_0}
  \varepsilon_{\rm mag}(U,\beta;I)
  =
  0.
\]
The estimate is uniform in \(L\) and \(h\in I\).
\end{thm}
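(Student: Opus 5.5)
The plan is to derive the magnetisation comparison from Theorem~\ref{thm:pressure-comparison-bounded-window-face} by the standard convexity argument for finite-volume pressures. Both \(h\mapsto p_{\Lambda,\beta,U}^{\rm Hub}(h)\) and \(h\mapsto p_{\Lambda,\beta,U}^{\rm Heis}(h)\) are convex: they are \(\frac1{\beta|\Lambda|}\) times logarithms of traces of \(e^{-\beta(A-hB)}\) with \(A,B\) self-adjoint. Their derivatives are the magnetisation densities \(m^{\rm Hub}\) and \(m^{\rm Heis}\). Write \(\varepsilon:=\varepsilon_{\rm FE}(U,\beta;h_0)\). For \(h\in I=[a,b]\) with \(0<a\le b\le h_0\), and \(0<\delta\le a/2\), convexity gives
\[
  \frac{p^{\rm Hub}(h)-p^{\rm Hub}(h-\delta)}{\delta}
  \le m^{\rm Hub}(h)\le
  \frac{p^{\rm Hub}(h+\delta)-p^{\rm Hub}(h)}{\delta}.
\]
We shall require \(h+\delta\le h_0\). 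Since \(I\Subset(0,h_0]\) may touch \(h_0\), either enlarge \(h_0\) to \(2h_0\) in Theorem~\ref{thm:pressure-comparison-bounded-window-face}, which is allowed because \(U_0\) may depend on \(I\), or use only the left difference quotient near \(b\).

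We then replace each Hubbard pressure by the Heisenberg one at a cost of \(2\varepsilon/\delta\). This yields
\[
  m^{\rm Hub}(h)-m^{\rm Heis}(h)\le
  \frac{p^{\rm Heis}(h+\delta)-p^{\rm Heis}(h)}{\delta}-m^{\rm Heis}(h)+\frac{2\varepsilon}{\delta}
  \le m^{\rm Heis}(h+\delta)-m^{\rm Heis}(h)+\frac{2\varepsilon}{\delta},
\]
and a symmetric lower bound with \(h-\delta\). The step we expect to be the main obstacle is bounding the modulus of continuity \(m^{\rm Heis}(h\pm\delta)-m^{\rm Heis}(h)\) uniformly in \(L\) and in \(\beta\) with \(\beta J_0(U)\ge\ell_0\). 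Convexity alone gives no such uniform control. A plain second-derivative bound on \(p^{\rm Heis}\) is also insufficient, since the susceptibility \(\beta\,\mathrm{Var}(M)/|\Lambda|\) grows with \(\beta\).

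To handle this we use the scaling structure of the Heisenberg model, together with the fact that \(h\) is bounded away from zero. We have \(\beta H^{\rm Heis}(J_0,h)=\beta J_0\,H^{\rm Heis}(1,h/J_0)\), so \(m^{\rm Heis}_{\Lambda,\beta,U}(h)=m^{\rm Heis}_{\Lambda,\beta J_0}(1,h/J_0)\). As \(U\to\infty\), \(J_0\to0\) and \(h/J_0\ge a/J_0\to\infty\), so the reduced field becomes huge compared with the exchange. In that regime the staggered Néel state is separated by a gap of order \(h\) from all other states, uniformly in \(L\). A first-order perturbation and cluster argument then gives \(|m^{\rm Heis}(h)-\frac12|\le C(J_0/a+e^{-\beta a/2})\) uniformly for \(h\in[a/2,h_0+a]\). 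Concretely, we bound \(\omega(\frac12|\Lambda|-M)\) via the operator inequality \(H^{\rm Heis}(J_0,h)\ge -hM-J_0 d|\Lambda|\) and the spectral fact that \(\frac12|\Lambda|-M\) is the number of flipped spins; \(\beta a\ge\ell_0 a/J_0\to\infty\) disposes of the thermal part.

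Hence \(|m^{\rm Heis}(h\pm\delta)-m^{\rm Heis}(h)|\le\rho(U):=2C(J_0/a+e^{-\ell_0 a/(2J_0)})\). This gives
\[
  |m^{\rm Hub}(h)-m^{\rm Heis}(h)|\le \rho(U)+\frac{2\varepsilon}{\delta}.
\]
Finally we choose \(\delta=\delta(U,\beta)=\min\{a/2,\varepsilon^{1/2}\}\) and set \(\varepsilon_{\rm mag}:=\rho(U)+2\varepsilon^{1/2}+4\varepsilon/a\). This bound is uniform in \(L\) and \(h\in I\), and it tends to \(0\) uniformly in \(\beta J_0(U)\ge\ell_0\) by Theorem~\ref{thm:pressure-comparison-bounded-window-face}. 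If the flipped-spin bound proves delicate, a fallback is to bound instead \(|m^{\rm Heis}(h\pm\delta)-m^{\rm Heis}(h)|\) by the same uniform estimate applied to both magnetisations at nearby fields, since both are within \(\rho\) of \(\frac12\).
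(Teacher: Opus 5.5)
Your proof is correct, but you organise it differently from the paper. You compare Hubbard and Heisenberg difference quotients on both sides. That reduces everything to a uniform modulus of continuity for \(m^{\rm Heis}\), which you then extract from near-saturation of the Heisenberg magnetisation.

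The paper never compares the two magnetisations directly. It compares both pressures with the explicit pure-field pressure \(p^0_{\Lambda,\beta}(h)=\beta^{-1}\log(2\cosh(\beta h/2))\), using \(|p^{\rm Heis}_{\Lambda,\beta,U}-p^0_{\Lambda,\beta}|\le C_dJ_0(U)\) from the Lipschitz bound. It uses only the left secant from \(h/2\) to \(h\), which gives \(m^\sharp(h)\ge\frac12\tanh(\beta h_I/4)-\frac4{h_I}(\varepsilon_{\rm FE}+C_dJ_0)\) for both models. It then closes with the trivial ceiling \(m^\sharp\le\frac12\), so both magnetisations are squeezed into a short interval below \(\frac12\).

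The paper's route buys three things:
\begin{itemize}
\item It needs no right secant, hence no enlargement of \(h_0\) beyond \(\sup I\).
\item It handles your ``main obstacle'' with the same convexity step, not a separate flipped-spin estimate.
\item Its error is linear in \(\varepsilon_{\rm FE}\) rather than \(\varepsilon_{\rm FE}^{1/2}\).
\end{itemize}

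Your scheme can be trimmed the same way. Once you have \(m^{\rm Heis}\ge\frac12-\rho\) on \([a/2,h_0+a]\), the left secant together with \(m^{\rm Hub}\le\frac12\) already suffices. Also, \(\delta=a/2\) works, so the optimisation in \(\delta\) is superfluous.

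Your flipped-spin bound needs one more ingredient. The lower operator inequality alone does not control a Gibbs expectation. You also need \(H^{\rm Heis}_\Lambda(J_0(U),h)\le -hM^{\rm spin}_\Lambda\) to bound the partition function from below, together with convexity of \(s\mapsto\log\operatorname{Tr}e^{-\beta H+sN_{\rm flip}}\). Taking \(s=\beta h/2\) then gives \(\omega(N_{\rm flip})/|\Lambda|\le 2dJ_0/h+\frac{2}{\beta h}\log(1+e^{-\beta h/2})\). The gap/cluster heuristic is not needed.

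What your formulation buys is conceptual. It isolates the principle that pressure closeness plus uniform equicontinuity of the reference magnetisation gives magnetisation closeness. That principle would survive in a non-saturated regime if such equicontinuity were known. The paper's argument is tied to saturation, \(\beta h_I\to\infty\) and \(J_0/h_I\to0\).
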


The proof of Theorem~\ref{thm:fixed-field-magnetisation-comparison-face}
is given in
Subsection~\ref{subsec:proof-finite-volume-fixed-field-comparison}.

To state the corresponding infinite-volume consequences in a concise form,
we shall use the following standing assumption on the existence of fixed-field
pressure limits.

\begin{ass}[Thermodynamic limits of the bounded-field pressures]
\label{ass:thermodynamic-limit-pressure-face}
For every \(U>0\) and every \(\beta>0\), the canonical half-filled
Hubbard pressures have the thermodynamic limit
\[
  p_{\beta,U}^{\rm Hub}(h)
  :=
  \lim_{L\to\infty}
  p_{\Lambda_L,\beta,U}^{\rm Hub}(h),
  \qquad |h|\le h_0,
\]
and the Heisenberg reference pressures have the thermodynamic limit
\[
  p_{\beta,U}^{\rm Heis}(h)
  :=
  \lim_{L\to\infty}
  p_{\Lambda_L,\beta,U}^{\rm Heis}(h),
  \qquad |h|\le h_0.
\]
\end{ass}

\begin{rem}[On the thermodynamic-limit assumption]
\label{rem:thermodynamic-limit-assumption-face}
Assumption~\ref{ass:thermodynamic-limit-pressure-face} is a standard
thermodynamic-limit input and is not a new mechanism in the present
paper.  For the  Heisenberg model, the existence of the pressure
follows from the usual thermodynamic-limit argument for finite-range
quantum spin systems, based on locality and boundary surface estimates.

For the Hubbard model, the point is slightly more notational because we
work in the canonical half-filled sector.  This sector does not factorise
under a spatial decomposition.  Rather, for a decomposition
\(\Lambda=\Lambda_1\cup\Lambda_2\), one has
\[
  \cH_{\Lambda}^{N_\Lambda=|\Lambda|}
  =
  \bigoplus_{N_1+N_2=|\Lambda|}
  \cH_{\Lambda_1}^{N_1}
  \otimes
  \cH_{\Lambda_2}^{N_2}.
\]
Thus the canonical proof is most cleanly obtained from the standard
thermodynamic-limit theory for finite-range lattice fermions at fixed
density, equivalently by first proving the grand-canonical pressure and
then using the Legendre--Fenchel ensemble equivalence to recover the
canonical half-filled pressure.

We keep this as an assumption because the present paper is concerned
with the Hubbard--Heisenberg comparison at fixed positive field, not with
reproving the general thermodynamic-limit theory for finite-range
fermion systems.
\end{rem}
Assume Assumption~\ref{ass:thermodynamic-limit-pressure-face}.  For
\(\sharp\in\{{\rm Hub},{\rm Heis}\}\), define
\[
  \mathcal D_{\beta,U}^{\sharp}
  :=
  \left\{
    h\in(-h_0,h_0):
    p_{\beta,U}^{\sharp}
    \text{ is differentiable at }h
  \right\}.
\]
Since \(p_{\beta,U}^{\sharp}\) is convex, the set
$
  (-h_0,h_0)\setminus \mathcal D_{\beta,U}^{\sharp}
$
is at most countable.  For \(h\in\mathcal D_{\beta,U}^{\sharp}\), set
\[
  m_{\beta,U}^{\sharp}(h)
  :=
  \partial_h p_{\beta,U}^{\sharp}(h).
\]
By Lemma~\ref{lem:thermo-derivative-magnetisation-pressure}, this agrees
with the thermodynamic limit of the finite-volume magnetisations:
\[
  m_{\beta,U}^{\sharp}(h)
  =
  \lim_{L\to\infty}
  m_{\Lambda_L,\beta,U}^{\sharp}(h),
  \qquad
  h\in\mathcal D_{\beta,U}^{\sharp}.
\]

\begin{cor}[Infinite-volume pressure and fixed-field magnetisation comparison]
\label{cor:infinite-volume-fixed-field-comparison-face}
Assume Assumption~\ref{ass:thermodynamic-limit-pressure-face}.  Under the
assumptions of
Theorem~\ref{thm:pressure-comparison-bounded-window-face}, one has, for
all \(|h|\le h_0\),
\[
  \left|
  p_{\beta,U}^{\rm Hub}(h)
  -
  p_{\beta,U}^{\rm Heis}(h)
  \right|
  \le
  \varepsilon_{\rm FE}(U,\beta;h_0).
\]

Let \(I\Subset(0,h_0]\).  Under the assumptions of
Theorem~\ref{thm:fixed-field-magnetisation-comparison-face}, one has,
for every
$
  h\in
  \operatorname{int} I
  \cap
  \mathcal D_{\beta,U}^{\rm Hub}
  \cap
  \mathcal D_{\beta,U}^{\rm Heis},
$
the magnetisation comparison
\[
  \left|
  m_{\beta,U}^{\rm Hub}(h)
  -
  m_{\beta,U}^{\rm Heis}(h)
  \right|
  \le
  \varepsilon_{\rm mag}(U,\beta;I).
\]
Thus, at every common differentiability point in the positive field
window, the infinite-volume staggered magnetisation density of the
Hubbard model is approximated by that of the Heisenberg reference model
on the Heisenberg scale.
\end{cor}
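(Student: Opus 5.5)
The corollary follows from the finite-volume Theorems~\ref{thm:pressure-comparison-bounded-window-face} and~\ref{thm:fixed-field-magnetisation-comparison-face} by passing to the limit \(L\to\infty\). The error functions \(\varepsilon_{\rm FE}(U,\beta;h_0)\) and \(\varepsilon_{\rm mag}(U,\beta;I)\) do not depend on \(L\), so nothing needs to be controlled uniformly beyond what those theorems already provide.

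For the pressure statement, fix \(U\ge U_0\), \(\beta\) with \(\beta J_0(U)\ge\ell_0\), and \(|h|\le h_0\). Theorem~\ref{thm:pressure-comparison-bounded-window-face} gives
\(|p_{\Lambda_L,\beta,U}^{\rm Hub}(h)-p_{\Lambda_L,\beta,U}^{\rm Heis}(h)|\le\varepsilon_{\rm FE}(U,\beta;h_0)\) for every even \(L\). By Assumption~\ref{ass:thermodynamic-limit-pressure-face} both finite-volume pressures converge as \(L\to\infty\). Continuity of the absolute value then carries the bound over to the limits.

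For the magnetisation statement, fix \(h\in\operatorname{int}I\cap\mathcal D_{\beta,U}^{\rm Hub}\cap\mathcal D_{\beta,U}^{\rm Heis}\). Theorem~\ref{thm:fixed-field-magnetisation-comparison-face} gives the uniform bound
\(|m_{\Lambda_L,\beta,U}^{\rm Hub}(h)-m_{\Lambda_L,\beta,U}^{\rm Heis}(h)|\le\varepsilon_{\rm mag}(U,\beta;I)\) for all \(L\). Since \(h\) is a differentiability point of both limiting pressures, Lemma~\ref{lem:thermo-derivative-magnetisation-pressure} identifies \(m_{\beta,U}^\sharp(h)\) with \(\lim_L m_{\Lambda_L,\beta,U}^\sharp(h)\) for each \(\sharp\). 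Passing to the limit in the finite-volume bound gives the claim.

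Lemma~\ref{lem:thermo-derivative-magnetisation-pressure} is the standard Griffiths-type convexity fact: if convex differentiable functions converge pointwise and the limit is differentiable at an interior point, the derivatives converge there. If one prefers to reprove it, use the sandwich
\[
\frac{p_L(h)-p_L(h-\delta)}{\delta}\le p_L'(h)\le\frac{p_L(h+\delta)-p_L(h)}{\delta},
\]
send \(L\to\infty\) and then \(\delta\downarrow0\). Here the requirement \(h\in\operatorname{int}I\) matters, and the only point needing care is that \(h\pm\delta\) stays inside \((-h_0,h_0)\), where the limiting pressure exists. For \(h\in\operatorname{int}I\subset(0,h_0)\) this holds for small \(\delta\). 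There is no real obstacle beyond this bookkeeping.
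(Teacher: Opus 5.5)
Your proposal is correct and follows the paper's own argument. The pressure bound passes to the limit directly because $\varepsilon_{\rm FE}$ does not depend on $L$; at common differentiability points, Lemma~\ref{lem:thermo-derivative-magnetisation-pressure} identifies $m_{\beta,U}^\sharp(h)$ with $\lim_{L\to\infty} m_{\Lambda_L,\beta,U}^\sharp(h)$, so the $L$-uniform bound of Theorem~\ref{thm:fixed-field-magnetisation-comparison-face} carries over. Your secant-sandwich justification of that lemma is exactly how the paper proves it.
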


\begin{rem}[Differentiability in positive field]
\label{rem:differentiability-positive-field}
The differentiability restriction in
Corollary~\ref{cor:infinite-volume-fixed-field-comparison-face} is a
convexity-theoretic precaution.  The limiting pressures are convex
functions of \(h\), and hence are differentiable except possibly at
countably many points.  In the fixed positive-field regime one expects
differentiability, and often analyticity, but proving such a one-phase
regularity statement is not part of the present paper.  We therefore
state the infinite-volume magnetisation comparison only at common
differentiability points.
\end{rem}

\paragraph{Positive fixed-field Hubbard magnetisation.}
The fixed positive-field magnetisation comparison has a concrete
consequence for the Hubbard magnetisation itself.  We combine the
elementary lower bound for the Heisenberg reference magnetisation with
Theorem~\ref{thm:fixed-field-magnetisation-comparison-face}.

\begin{cor}[Positive fixed-field Hubbard magnetisation]
\label{cor:positive-Hub-magnetisation-lower-bound}
Fix \(\ell_0>0\) and \(I\Subset(0,h_0]\), and set
$
  h_I:=\inf I>0.
$
Let \(\varepsilon_{\rm mag}(U,\beta;I)\) be the error function in
Theorem~\ref{thm:fixed-field-magnetisation-comparison-face}.  Let
\(C_d<\infty\) be the constant in
Lemma~\ref{lem:Heis-reference-magnetisation-lower-bound}, and define
\[
  m_{\rm Hub}^{\rm lb}(U,\beta;I)
  :=
  \frac12\tanh\left(\frac{\beta h_I}{4}\right)
  -
  \frac{4C_d}{h_I}J_0(U)
  -
  \varepsilon_{\rm mag}(U,\beta;I).
\]
Then, after increasing the strong-coupling threshold if necessary, for
all \(U\) above this threshold, all \(\beta>0\) satisfying
$
  \beta J_0(U)\ge \ell_0,
$
all even tori \(\Lambda_L\), and all \(h\in I\), one has
\[
  m_{\Lambda_L,\beta,U}^{\rm Hub}(h)
  \ge
  m_{\rm Hub}^{\rm lb}(U,\beta;I).
\]
Moreover,
\[
  \lim_{U\to\infty}
  \inf_{\beta J_0(U)\ge\ell_0}
  m_{\rm Hub}^{\rm lb}(U,\beta;I)
  =
  \frac12.
\]
Consequently, after increasing the threshold once more if necessary,
\[
  m_{\Lambda_L,\beta,U}^{\rm Hub}(h)
  \ge
  \frac14
\]
uniformly in \(L\), \(h\in I\), and
\(\beta>0\) satisfying \(\beta J_0(U)\ge\ell_0\).
\end{cor}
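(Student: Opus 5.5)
The plan is to obtain the lower bound in two layers: first bound the Heisenberg reference magnetisation from below, then pass to the Hubbard magnetisation using Theorem~\ref{thm:fixed-field-magnetisation-comparison-face}. The Heisenberg input is Lemma~\ref{lem:Heis-reference-magnetisation-lower-bound}. I will invoke it in the form
\[
  m_{\Lambda_L,\beta,U}^{\rm Heis}(h)\ge \frac12\tanh\Bigl(\frac{\beta h_I}{4}\Bigr)-\frac{4C_d}{h_I}J_0(U),
  \qquad h\in I .
\]
This is the form that matches the definition of \(m^{\rm lb}_{\rm Hub}\). Note that \(\tanh\) is increasing and \(h\ge h_I\), so a bound stated at the field \(h\) can be replaced by one at \(h_I\).

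If I had to reconstruct this lemma myself, I would use convexity of \(h\mapsto p^{\rm Heis}_{\Lambda,\beta,U}(h)\). Its derivative at \(h\) dominates the difference quotient \(\bigl(p(h)-p(h-\delta)\bigr)/\delta\) with \(\delta=h_I/2\). I would then compare \(p^{\rm Heis}\) with the decoupled Néel-field pressure \(\frac1\beta\log\bigl(2\cosh(\beta h/2)\bigr)\) via \(\|H^{\rm Heis}(J_0)\|\le C_d J_0|\Lambda|\). This costs an error \(2C_dJ_0/\delta\), whose derivative part produces \(\frac12\tanh(\beta h'/2)\) at some \(h'\ge h_I/2\). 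The constants \(1/4\) and \(4C_d\) in the statement are consistent with this route.

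Next, for \(U\ge U_0(I,\ell_0,d,t)\), \(\beta J_0(U)\ge\ell_0\) and \(h\in I\), Theorem~\ref{thm:fixed-field-magnetisation-comparison-face} gives
\[
  m^{\rm Hub}_{\Lambda_L,\beta,U}(h)\ge m^{\rm Heis}_{\Lambda_L,\beta,U}(h)-\varepsilon_{\rm mag}(U,\beta;I).
\]
Combining the two inequalities yields \(m^{\rm Hub}_{\Lambda_L,\beta,U}(h)\ge m^{\rm lb}_{\rm Hub}(U,\beta;I)\), uniformly in \(L\). The threshold is the maximum of \(U_0\) from the theorem and any threshold needed for the lemma.

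For the limit, the key fact is that \(\beta J_0(U)\ge \ell_0\) forces \(\beta\ge \ell_0U/(4t^2)\). Hence
\[
  \inf_{\beta J_0(U)\ge\ell_0}\frac12\tanh\Bigl(\frac{\beta h_I}{4}\Bigr)\ge\frac12\tanh\Bigl(\frac{\ell_0 h_I U}{16t^2}\Bigr)\to\frac12 .
\]
The second term \(\frac{4C_d}{h_I}J_0(U)\) does not depend on \(\beta\) and tends to \(0\). The third term satisfies \(\sup_\beta\varepsilon_{\rm mag}\to0\) by Theorem~\ref{thm:fixed-field-magnetisation-comparison-face}. Therefore \(\liminf_{U\to\infty}\inf_\beta m^{\rm lb}_{\rm Hub}\ge \frac12\).

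For the matching upper bound, note that \(m^{\rm lb}_{\rm Hub}\le\frac12+o(1)\) trivially, since \(\tanh\le1\) and the subtracted terms are nonnegative. So the limit equals \(\frac12\). Choosing \(U\) large enough that \(\inf_\beta m^{\rm lb}_{\rm Hub}\ge\frac14\) then gives the final uniform bound \(m^{\rm Hub}_{\Lambda_L,\beta,U}(h)\ge\frac14\).

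The only delicate step is matching the exact form of Lemma~\ref{lem:Heis-reference-magnetisation-lower-bound} to the expression defining \(m^{\rm lb}_{\rm Hub}\), specifically the evaluation at \(h_I\) and the uniformity over \(h\in I\). Monotonicity of \(\tanh\) handles this. Everything else is bookkeeping of the thresholds.
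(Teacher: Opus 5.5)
Your proposal is correct and follows the paper's proof: you combine the Heisenberg lower bound of Lemma~\ref{lem:Heis-reference-magnetisation-lower-bound} with the comparison in Theorem~\ref{thm:fixed-field-magnetisation-comparison-face}, then take infima using $\beta\ge \ell_0U/(4t^2)$ and $\sup_{\beta J_0(U)\ge\ell_0}\varepsilon_{\rm mag}\to0$. Two points differ only slightly and are harmless: you state the upper bound $m^{\rm lb}_{\rm Hub}\le\frac12$ explicitly, which the paper leaves implicit, and in reconstructing the lemma you use a secant over $[h-h_I/2,h]$ where the paper uses $[h/2,h]$.
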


The proof of
Corollary~\ref{cor:positive-Hub-magnetisation-lower-bound}
is given in
Appendix~\ref{subsec:appendix-positive-Hub-magnetisation}.

\begin{rem}[Significance of the fixed-field transfer]
\label{rem:significance-fixed-field-transfer}
Corollary~\ref{cor:positive-Hub-magnetisation-lower-bound} should be read
as a quantitative fixed-field transfer statement.  The positivity of the
Hubbard staggered magnetisation in a positive external field is physically
natural, but it is not an immediate consequence of the formal
strong-coupling picture: one still has to control hopping effects and
charge defects in the canonical half-filled ensemble.  Theorem~\ref{thm:fixed-field-magnetisation-comparison-face}
provides this control by transferring the elementary positive-field lower
bound for the Heisenberg reference model to the Hubbard model.

This is a fixed-positive-field result.  No \(h\downarrow0\) comparison is
proved here.  The statement establishes the fixed-field part of the
effective Heisenberg description, with estimates that are uniform and
quantitative in the Heisenberg-scale regime.
\end{rem}

\subsubsection{Charge-sector consequences}

\paragraph{Standing assumptions and Gibbs state.}
Throughout this subsubsection, fix \(\ell_0>0\).  Recall the
finite-volume half-filled Hubbard Gibbs state
\(\omega_{\Lambda_L,\beta,U,h}^{\rm Hub}\) introduced above.  All
expectations below are taken with respect to this state, uniformly for
\(|h|\le h_0\).

\paragraph{Charge observables.}
Set
\[
  q_x:=(n_x-1)^2.
\]
Thus \(q_x\) detects deviations from single occupancy at \(x\).  We also
define the staggered charge observable
\[
  C_{\Lambda_L}^{\rm ch}
  :=
  \sum_{x\in\Lambda_L}\eta_x(n_x-1).
\]

\begin{thm}[Charge-sector suppression and absence of macroscopic CDW]
\label{thm:charge-sector-suppression-face}
There exist \(U_0=U_0(\ell_0,d,h_0,t)\) and an error function
$
  \varepsilon_{\rm ch}(U,\beta;h_0)\ge0
$
such that, for all \(U\ge U_0\), all \(\beta>0\) satisfying
$
  \beta J_0(U)\ge\ell_0,
$
all even tori \(\Lambda_L\), and all \(|h|\le h_0\), one has
\[
  \frac1{|\Lambda_L|}
  \sum_{x\in\Lambda_L}
  \omega_{\Lambda_L,\beta,U,h}^{\rm Hub}(q_x)
  \le
  \varepsilon_{\rm ch}(U,\beta;h_0).
\]
Consequently,
\[
  \frac1{|\Lambda_L|}
  \sum_{x\in\Lambda_L}
  \omega_{\Lambda_L,\beta,U,h}^{\rm Hub}
  (n_{x\uparrow}n_{x\downarrow})
  \le
  \frac12\varepsilon_{\rm ch}(U,\beta;h_0).
\]
Moreover,
\[
  \left|
  \frac1{|\Lambda_L|}
  \omega_{\Lambda_L,\beta,U,h}^{\rm Hub}(C_{\Lambda_L}^{\rm ch})
  \right|
  \le
  \varepsilon_{\rm ch}(U,\beta;h_0),
\]
and
\[
  \frac1{|\Lambda_L|^2}
  \omega_{\Lambda_L,\beta,U,h}^{\rm Hub}
  \left(
    (C_{\Lambda_L}^{\rm ch})^2
  \right)
  \le
  \varepsilon_{\rm ch}(U,\beta;h_0).
\]
Finally,
\[
  \lim_{U\to\infty}
  \sup_{\beta J_0(U)\ge\ell_0}
  \varepsilon_{\rm ch}(U,\beta;h_0)
  =
  0.
\]
\end{thm}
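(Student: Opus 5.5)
The plan is to get all four estimates from one perturbative pressure argument in the variable coupling \(U\), using convexity of \(\lambda\mapsto\log\Tr e^{-\beta(H^{\rm Hub}_{\Lambda}(h)+\lambda Q_\Lambda)}\) with \(Q_\Lambda:=\sum_x q_x\). The first step is to rewrite the interaction on \(\cH^{\rm hf}_\Lambda\). There \(\sum_x(n_x-1)=0\), so \(\sum_x(n_x-1)^2=\sum_x(n_x^2-2n_x+1)=\sum_x n_x^2-|\Lambda|\). Since \(n_x^2=n_x+2n_{x\uparrow}n_{x\downarrow}\), this gives
\[
Q_\Lambda=2D_\Lambda \quad\text{on }\cH^{\rm hf}_\Lambda .
\]
The double-occupancy bound with the factor \(\frac12\) then follows at once from the first bound. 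The linear CDW bound follows from \(|\eta_x(n_x-1)|\le q_x\), because \(|n_x-1|\in\{0,1\}\) implies \(|n_x-1|=q_x\).

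For the main bound I would use \(UD_\Lambda=\frac U2 Q_\Lambda\). Writing \(F(\lambda):=\frac1{\beta|\Lambda|}\log\Tr_{\cH^{\rm hf}_\Lambda}e^{-\beta(H^{\rm Hub}_\Lambda(h)+\lambda Q_\Lambda)}\), we have \(-F'(0)=\frac1{|\Lambda|}\sum_x\omega^{\rm Hub}(q_x)\). The function \(F\) is convex, so for \(\lambda=-U/4\), i.e.\ effective coupling \(U/2\), we get
\[
\frac1{|\Lambda|}\sum_x\omega^{\rm Hub}(q_x)\le \frac{F(-U/4)-F(0)}{U/4}
=\frac4U\bigl(p^{\rm Hub}_{\Lambda,\beta,U/2}(h)-p^{\rm Hub}_{\Lambda,\beta,U}(h)\bigr).
\]
Theorem~\ref{thm:pressure-comparison-bounded-window-face} is then applied at both \(U\) and \(U/2\). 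This is legitimate provided \(U/2\ge U_0\), and it uses \(\beta J_0(U/2)=2\beta J_0(U)\ge\ell_0\). It yields
\[
p^{\rm Hub}_{U/2}-p^{\rm Hub}_U\le p^{\rm Heis}_\Lambda(2J_0,h)-p^{\rm Heis}_\Lambda(J_0,h)+\varepsilon_{\rm FE}(U/2,\beta)+\varepsilon_{\rm FE}(U,\beta).
\]
The Heisenberg difference is at most \(J_0\cdot d\cdot\frac12\), since \(\|\boldsymbol S_x\cdot\boldsymbol S_y-\frac14\|\le1\) and there are \(d|\Lambda|\) bonds. Hence the right-hand side is \(O(t^2/U^2)+\frac4U(\varepsilon_{\rm FE}(U/2)+\varepsilon_{\rm FE}(U))\), and I take this as the definition of \(\varepsilon_{\rm ch}\).

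The main obstacle is that \(\varepsilon_{\rm FE}\) is only known to tend to zero and not to be \(O(U)\)-small. Here the factor \(\frac4U\) is harmless, and the bound is in fact even stronger than needed. One must, however, check that the supremum over \(\beta J_0(U)\ge\ell_0\) of \(\varepsilon_{\rm FE}(U/2,\beta)\) is dominated by \(\sup_{\beta J_0(U/2)\ge\ell_0}\varepsilon_{\rm FE}(U/2,\beta)\). This is immediate because the constraint set is smaller. So \(\varepsilon_{\rm ch}\to0\), and the threshold is \(2U_0\).

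Finally, for the squared CDW I use the crude bound \((C^{\rm ch})^2\le \bigl(\sum_x|n_x-1|\bigr)^2\le |\Lambda|\,Q_\Lambda\). This uses \(|n_x-1|\le1\), so \(\sum_x|n_x-1|\le|\Lambda|\) and \(\sum_x|n_x-1|=Q_\Lambda\) as operators, all of which commute. Therefore \(\frac1{|\Lambda|^2}\omega\bigl((C^{\rm ch})^2\bigr)\le\frac1{|\Lambda|}\omega(Q_\Lambda)\le\varepsilon_{\rm ch}\), which closes the argument.
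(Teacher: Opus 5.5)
Your argument is correct, and it takes a genuinely different route from the paper's.

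The paper places the defect source in the LS/SW-diagonal frame. It deforms $H_{*,\Lambda_L}(h)=UD_{\Lambda_L}+A_{\Lambda_L}(h)$ by $-\alpha\sum_x q_x$ and reruns the soft defect decomposition. This shows that $\omega^{*}_{\Lambda_L,\beta,U,h}(\bar q_{\Lambda_L})$ is exponentially small (Proposition~\ref{prop:charge-defect-density-star}). It then transfers this to the physical Gibbs state with the dressing estimate $\rho_q(U;h_0)=O(|t|/U)$ of Proposition~\ref{prop:charge-dressing-normalized-defect}. Since $U_{\rm SW}$ depends on the coupling, a source in the dressed frame is not a shift of $U$, and that is why the dressing step is needed.

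You instead put the source in the physical frame. There $\sum_x q_x=2D_{\Lambda_L}$ makes it exactly the shift $U\mapsto U+2\lambda$. One secant of the convex pressure between $U/2$ and $U$, plus Theorem~\ref{thm:pressure-comparison-bounded-window-face} at both couplings, then suffices, and no dressing is required. One small slip: your own reasons give the constant $dJ_0(U)$ for the Heisenberg difference, not $\frac d2 J_0(U)$, but this is immaterial.

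Your route buys brevity and a better rate. Your $\varepsilon_{\rm ch}$ is $o(1/U)$ using only the statement of Theorem~\ref{thm:pressure-comparison-bounded-window-face}, and of order $U^{-2}$ with the paper's explicit error functions. The paper's $\varepsilon_{\rm ch}$ instead carries the $O(|t|/U)$ dressing term. In fact the secant argument is soft enough that an elementary version already gives $O(|t|/U)$. Use the crude bound $|p^{\rm Hub}_{\Lambda_L,\beta,U}(h)-p_\beta(UD_{\Lambda_L}-hM_{\Lambda_L})|\le\|T_{\Lambda_L}\|/|\Lambda_L|=O(|t|)$. Combine it with the explicitly computable atomic-limit pressures, whose values at couplings $U/2$ and $U$ differ by at most $\beta^{-1}e^{-\beta U/4}$.

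The paper's route, in turn, is logically independent of the Heisenberg identification and the $P$-block remainder analysis. It needs only the LS/SW construction, the soft defect bound and the dressing estimate. It also yields the structural fact that charge defects are exponentially suppressed in the dressed frame. So, up to exponentially small terms, the physical defect density is a dressing effect.
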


The proof of
Theorem~\ref{thm:charge-sector-suppression-face}
is given in
Subsection~\ref{subsec:proof-charge-sector-face}.

\begin{rem}[Meaning of the charge estimates]
\label{rem:meaning-charge-sector-suppression}
The theorem should be read as a charge-sector statement in the large
positive-\(U\), Heisenberg-scale regime.  The estimate on
\[
  \frac1{|\Lambda_L|}
  \sum_{x\in\Lambda_L}
  \omega_{\Lambda_L,\beta,U,h}^{\rm Hub}(q_x),
  \qquad
  q_x=(n_x-1)^2,
\]
says that the density of charge defects is small.  Since \(q_x\) vanishes
on singly occupied states and equals one on empty or doubly occupied
states, the Hubbard Gibbs state is concentrated, up to a small density
error, near the singly occupied spin sector.

On the half-filled sector,
\[
  \sum_{x\in\Lambda_L}q_x
  =
  2\sum_{x\in\Lambda_L}n_{x\uparrow}n_{x\downarrow}.
\]
Thus the same estimate gives suppression of the double-occupancy
density.  The estimates involving
$
  C_{\Lambda_L}^{\rm ch}
  =
  \sum_{x\in\Lambda_L}\eta_x(n_x-1)
$
show that macroscopic staggered charge order is excluded in the
thermodynamic normalization used here.

In short, the theorem confirms the expected large-repulsion picture:
charge defects are dilute, and the dominant low-energy degrees of
freedom are spin degrees of freedom on an approximately singly occupied
background.  The theorem does not assert exponential decay of local
charge correlations.
\end{rem}
\subsection{Possible extensions and robustness of the comparison}
\label{subsec:possible-extensions}

The main statements above are formulated on the even torus
\(\Lambda_L=(\mathbb Z/L\mathbb Z)^d\) with nearest-neighbour hopping and
a uniform staggered field.  This choice keeps the notation simple and
avoids boundary terms.  The mechanism, however, is more flexible.

\begin{itemize}
\item
\emph{Other boundary conditions and van Hove sequences.}
The finite-volume comparison is local in nature.  Thus one expects the
same thermodynamic pressure and magnetisation comparison along standard
van Hove sequences of finite boxes, with periodic or open boundary
conditions, provided the boundary contribution is negligible after
division by the volume.  The torus assumption is used here mainly to
avoid carrying these boundary errors.

\item
\emph{More general bipartite graphs.}
The argument should extend to families of finite bipartite graphs with
uniformly bounded degree and a compatible staggering \(\eta_x\), as long
as the hopping connects the two sublattices and the relevant local
estimates are uniform in the volume.  In that setting the effective spin
Hamiltonian is the antiferromagnetic Heisenberg model on the same graph.

\item
\emph{Non-uniform finite-range hopping.}
For hopping amplitudes \(t_{xy}\) of finite range, the second-order
effective spin model should have edge-dependent antiferromagnetic
couplings of order
$
  J_{xy}\simeq \frac{4|t_{xy}|^2}{U}.
$
Thus the  comparison with \(J_0(U)=4t^2/U\) is replaced by a
comparison with a non-uniform effective Heisenberg model.  The present
paper treats the translation-invariant nearest-neighbour case in order
to keep the main theorem transparent.

\item
\emph{Fixed positive fields beyond the uniform case.}
The fixed-field nature of the argument is essential, but the precise
choice of a constant staggered field is not expected to be essential.
A site-dependent staggered field bounded away from zero should lead to a
similar field-dominated comparison, with constants depending on the lower
bound of the field window.

\item
\emph{Zero-temperature finite-volume limits.}
At fixed finite volume, one may simply take \(\beta\to\infty\).  Since
the Hilbert space is finite-dimensional, the Gibbs state converges to the
normalized projection onto the ground-state space.  Hence all estimates
in the main theorem which are uniform in \(\beta\) pass directly to the
finite-volume zero-temperature mixed ground-state expectation.  No
finite-volume ground-state uniqueness is needed for this passage.

It is plausible that, in the present positive staggered-field setting,
the finite-volume ground state is in fact unique.  This, however, is a
separate Perron--Frobenius or reflection-positivity type input and is not
used in the present comparison argument; see, for example,
\cite{Lieb1989,YoshidaKatsura2021,MiyaoTominaga2021}.  If such a
uniqueness statement is established in the present setting, the mixed
ground-state expectation above may be replaced by the unique ground-state
expectation.
\end{itemize}

These extensions are not needed for the main results of this paper.
They indicate that the comparison mechanism is not tied to the
translation-invariant torus geometry, but rather to three structural
features: strong repulsion at half filling, a bipartite staggered-field
structure, and locality of the hopping.

\subsection{Strategy and outline}
\label{subsec:strategy}

The goal of the paper is to compare the canonical half-filled Hubbard
model with the Heisenberg reference model on the Heisenberg scale
$
  \beta J_0(U)\ge \ell_0
  $
  with
  $
  J_0(U)=\frac{4t^2}{U}.
$
The pressure comparison and the charge-sector estimates are uniform for
$
  |h|\le h_0.
$
The magnetisation comparison is stated on fixed positive windows
$
  I\Subset(0,h_0],
$
because the passage from pressure comparison to magnetisation comparison
uses a fixed-window convexity argument.  No quasi-average limit
\(h\downarrow0\), reflection positivity, infrared bound, or
long-range-order input is used.

\paragraph{Road map.}
The proof has three main parts.

\begin{itemize}
\item
First, we use the LS/SW diagonalisation scheme to conjugate the Hubbard
Hamiltonian to a \(D_{\Lambda_L}\)-diagonal effective Hamiltonian,
\[
  U_{{\rm SW},\Lambda_L}(h)
  H_{\Lambda_L}^{\rm Hub}(h)
  U_{{\rm SW},\Lambda_L}(h)^\ast
  =
  UD_{\Lambda_L}+A_{\Lambda_L}(h),
  \qquad
  [A_{\Lambda_L}(h),D_{\Lambda_L}]=0.
\]
The sector \(D_{\Lambda_L}=0\) is the singly occupied spin sector and is
denoted by \(P\).  Since the transformed Hamiltonian preserves the
\(D_{\Lambda_L}\)-sectors, the \(P\)-block can be analysed separately.

\item
Second, we compare the effective \(P\)-block Hamiltonian with the
Heisenberg reference model.  The second-order term gives a Heisenberg
Hamiltonian with intermediate parameters \(J(h)\) and \(h_{\rm eff}(h)\).
The parameter renormalisation and the higher-order \(P\)-block remainder
are controlled uniformly for \(|h|\le h_0\).  This gives the pressure
comparison between the effective \(P\)-block Hamiltonian and the
Heisenberg reference model.

\item
Third, we compare the full Hubbard pressure with the \(P\)-block
pressure by a soft defect estimate.  Combining this estimate with the
\(P\)-block comparison gives the Hubbard--Heisenberg pressure comparison
for \(|h|\le h_0\).  The charge-sector estimates are obtained from the
same defect decomposition, together with the LS/SW dressing estimate for
the normalized defect density.
\end{itemize}

The magnetisation comparison is then derived from the pressure comparison
by convexity.  

This proves the finite-volume pressure comparison stated in
Theorem~\ref{thm:pressure-comparison-bounded-window-face}, the
fixed-positive-field magnetisation comparison stated in
Theorem~\ref{thm:fixed-field-magnetisation-comparison-face}, and the
charge-sector estimates stated in
Theorem~\ref{thm:charge-sector-suppression-face}.
\paragraph{Positive fixed-field magnetisation.}
Since \(h\) is bounded away from zero and \(J_0(U)\to0\), the
Heisenberg reference model is close to the pure staggered-field spin
system at the level of pressure.  This gives an explicit lower bound
\[
  m_{\Lambda_L,\beta,U}^{\rm Heis}(h)
  \ge
  m_{\rm Heis}^{\rm lb}(U,\beta;I),
  \qquad h\in I,
\]
with
$
  m_{\rm Heis}^{\rm lb}(U,\beta;I)\to\frac12
$
uniformly under \(\beta J_0(U)\ge\ell_0\) as \(U\to\infty\).  The
Hubbard--Heisenberg magnetisation comparison then gives the corresponding
positive fixed-field magnetisation bound for the Hubbard model.

\paragraph{Charge-sector consequences.}
The same defect decomposition controls the density of charge defects
$
  q_x=(n_x-1)^2.
$
After conjugating back to the original Hubbard Gibbs state by the LS/SW
dressing estimate for the normalized defect density, we obtain
\[
  \frac1{|\Lambda_L|}
  \sum_{x\in\Lambda_L}
  \omega_{\Lambda_L,\beta,U,h}^{\rm Hub}(q_x)
  \le
  \varepsilon_{\rm ch}(U,\beta;h_0).
\]
This implies suppression of double occupancy and excludes macroscopic
staggered charge order in the normalization used here.

\paragraph{Organisation of the paper.}
 Section~\ref{sec:LS-output} develops
the LS/SW diagonalisation and derives the \(D_{\Lambda_L}\)-diagonal
effective Hamiltonian
$
  H_{*,\Lambda_L}(h)=UD_{\Lambda_L}+A_{\Lambda_L}(h).
$
Section~\ref{sec:two-schemes} introduces the two auxiliary deformation
schemes used to identify the relevant second-order spin term and to
compare it with the physical endpoint correction.  The resulting
second-order \(P\)-block Hamiltonian is identified with a Heisenberg-type
spin Hamiltonian in Section~\ref{sec:H2-Heis}.

Section~\ref{sec:P-block-Heisenberg} compares the effective
\(P\)-block spin Hamiltonian with the Heisenberg reference model.  This
step controls both the mismatch between the renormalized parameters and
the reference parameters, and the higher-order \(P\)-block remainder.
Section~\ref{sec:defects} then compares the full Hubbard pressure with
the \(P\)-block pressure by the soft defect estimate, and records the
corresponding diagonal charge-defect bound.

The main finite-volume estimates are assembled in
Section~\ref{sec:proof-main-theorems}.  First, the \(P\)-block comparison
and the soft defect pressure estimate give the Hubbard--Heisenberg
reference pressure comparison.  Next, the magnetisation comparison is
obtained from the pressure comparison on the enlarged window
\(I^\sharp\) by the fixed-window convexity argument.  The same section
also proves the positive fixed-field magnetisation consequence, the
charge-sector theorem, and the thermodynamic-limit corollaries.

The appendices contain the deferred technical estimates.  Appendix~\ref{app:LS-BCH}
collects the BCH, graded-word, and quantitative LS/SW estimates, including
the LS/SW dressing estimate for the normalized defect density.  Appendix~\ref{app:spin-two-site}
contains the spin representation and the two-site computation underlying
the Heisenberg identification.  Appendix~\ref{app:P-block-remainder}
proves the \(P\)-block remainder and parameter-mismatch estimates.
Appendix~\ref{app:pressure-convexity-tools} records the finite-volume
pressure Lipschitz, derivative, and convexity tools used in the final
assembly and in the thermodynamic-limit arguments.

\subsection*{Acknowledgements}
This work was supported by JSPS KAKENHI Grant Number 23H01086.

\subsection*{Declarations}
\begin{itemize}
\item Conflict of interest: The authors have no conflicts of interest to declare
that are relevant to the content of this article.
\item Data availability: Data sharing is not applicable to this article as no
datasets were generated or analysed during the current study.
\end{itemize}

\section{\texorpdfstring{\(D_\Lambda\)-diagonalisation by LS/SW and the physical output}
{D-Lambda diagonalisation by LS/SW and the physical output}}
\label{sec:LS-output}

Throughout this section, \(\Lambda=\Lambda_L\) denotes a fixed even torus,
and all operators are understood on the half-filled Hilbert space
\(\cH_\Lambda^{\rm hf}\), unless explicitly stated otherwise.

The diagonalisation used below is a finite-volume strong-coupling
Lie--Schwinger/Schrieffer--Wolff scheme.  Such transformations go back to
the work of Schrieffer and Wolff, and related \(t/U\)-expansions for the
Hubbard model have been developed in several forms; see, for example,
\cite{SchriefferWolff1966,HarrisLange1967,MacDonaldGirvinYoshioka1988,
BravyiDiVincenzoLoss2011}.  Closely related Lie--Schwinger
block-diagonalisation methods have also been used in recent work on
gapped quantum chains, including systems with unbounded interactions
\cite{DelVecchioFrohlichPizzoRossi2021}.

In the present paper the grading is given by the double-occupancy
operator \(D_\Lambda\).  The purpose of the construction is to conjugate
the Hubbard Hamiltonian to a \(D_\Lambda\)-diagonal form
\[
  UD_\Lambda+A_\Lambda(h),
  \qquad
  [A_\Lambda(h),D_\Lambda]=0,
\]
with estimates uniform in the volume and in the field range
$
  |h|\le h_0.
$
The restriction to fixed positive field windows enters only later, when
the pressure comparison is converted into a magnetisation comparison by
convexity.

\subsection{\(D\)-grading, interactions, and norms}
\label{subsec:grading-norms}

\paragraph{Defect projections.}
Set
\[
  n_x:=n_{x\uparrow}+n_{x\downarrow},
  \qquad
  q_x:=(n_x-1)^2 .
\]
Then \(q_x\) is the projection onto the subspace with
\(n_x\in\{0,2\}\), and the family \(\{q_x\}_{x\in\Lambda}\) is
commuting.  On the half-filled subspace \(\cH_\Lambda^{\rm hf}\) one has
\begin{equation}\label{eq:spec-qsum-2D}
  \sum_{x\in\Lambda}q_x=2D_\Lambda .
\end{equation}
Since we work at half filling, \(N_\Lambda=|\Lambda|\), the sector
\(D_\Lambda=0\) consists precisely of the singly occupied configurations.
Equivalently,
$
  (n_{x\uparrow}+n_{x\downarrow})\psi=\psi
  \, 
  (x\in\Lambda)
$
for every vector \(\psi\) in this sector.  We set
\begin{equation}\label{eq:def-PQ}
  P:=\mathbbm 1_{\{D_\Lambda=0\}}\restriction_{\cH_\Lambda^{\rm hf}},
  \qquad
  Q:=\mathbbm 1-P .
\end{equation}

\paragraph{The CAR algebra, local subalgebras, and even observables.}
Let \(\frak A\) denote the quasi-local CAR algebra on the infinite
lattice \(\mathbb Z^d\).\footnote{For the general \(C^\ast\)-algebraic
formulation of CAR algebras and quasi-local algebras, we refer to
Bratteli and Robinson~\cite{BratteliRobinson1,BratteliRobinson2}.}
For a finite set of sites \(X\), either in \(\mathbb Z^d\) or in a finite
torus \(\Lambda\), we write \(\frak A_X\) for the CAR algebra generated
by
$
  \{c_{x\sigma},c_{x\sigma}^\ast:
    x\in X,\ \sigma\in\{\uparrow,\downarrow\}\}.
$
Thus, when \(X\subset\Lambda\), the symbol \(\frak A_X\) denotes the
corresponding finite-volume local algebra on the torus, and
\(\frak A_\Lambda\) denotes the full finite-volume CAR algebra acting on
\(\cH_\Lambda\).  We work almost exclusively in finite volume; the
quasi-local algebra \(\frak A\) is used only as a convenient language for
locality.

Let \(\Theta\) be the fermion-parity automorphism, defined on generators by
\[
  \Theta(c_{x\sigma})=-c_{x\sigma},
  \qquad
  \Theta(c_{x\sigma}^\ast)=-c_{x\sigma}^\ast .
\]
We write
\[
  \frak A_X^{\rm even}
  :=
  \{A\in\frak A_X:\Theta(A)=A\}
\]
for the even local subalgebra.  We shall use the elementary fact that if
\(X\cap Y=\varnothing\), \(A\in\frak A_X^{\rm even}\), and
\(B\in\frak A_Y\), then
$
  [A,B]=0.
$
Thus even local observables with disjoint supports commute in the usual
sense.  

\paragraph{Adjoint notation and BCH.}
For \(S\in\frak A_\Lambda\), define the inner derivation
\[
  \ad_S:\frak A_\Lambda\to\frak A_\Lambda,
  \qquad
  \ad_S(A):=[S,A].
\]
We also write
\[
  \ad_S^{\,0}(A):=A,
  \qquad
  \ad_S^{\,n}(A):=\ad_S\bigl(\ad_S^{\,n-1}(A)\bigr)
  \quad(n\ge1).
\]
We repeatedly use the Baker--Campbell--Hausdorff expansion
\[
  e^{S}Ae^{-S}
  =
  \sum_{n\ge0}\frac{1}{n!}\ad_S^{\,n}(A),
\]
whenever the series is norm convergent.

\paragraph{Interactions and the identification convention.}
Throughout the paper, an interaction on \(\Lambda\) means a family of
even local terms
\[
  \Phi=\{\Phi_X\}_{X\subset\Lambda},
  \qquad
  \Phi_X\in\frak A_X^{\rm even}.
\]
Self-adjointness or anti-self-adjointness will be specified separately
when needed.  We say that \(\Phi\) is finite-range if there exists
\(R<\infty\) such that \(\Phi_X=0\) whenever
\(\operatorname{diam}(X)>R\), where the diameter is taken with respect to
the graph distance on \(\Lambda\).%
\footnote{Equivalently, \(\Phi_X=0\) unless \(X\) is contained in some
ball of radius \(R\), up to a harmless change of \(R\).}
The associated finite-volume operator is
\[
  \Phi_\Lambda:=\sum_{X\subset\Lambda}\Phi_X .
\]
Throughout the paper we freely identify \(\Phi\) with \(\Phi_\Lambda\)
when no confusion can arise.  Termwise operations for interactions are
collected in Appendix~\ref{app:LS-BCH}.

\paragraph{Interaction norm and interaction commutator.}
For a finite-range interaction \(\Phi=\{\Phi_X\}_{X\subset\Lambda}\) and
\(\kappa\ge0\), define
\begin{equation}\label{eq:spec-kappa-norm}
  \|\Phi\|_\kappa
  :=
  \sup_{x\in\Lambda}
  \sum_{X\ni x} e^{\kappa |X|}\|\Phi_X\|,
  \qquad
  \|\Phi\|_0:=\|\Phi\|_{\kappa=0}.
\end{equation}
Here \(|X|\) denotes the cardinality of \(X\), and \(\|\cdot\|\) denotes
the operator norm in the finite-volume representation.  The notation
$
  \sum_{X\ni x}
$
means the sum over all nonempty subsets \(X\subset\Lambda\) containing the
site \(x\).

For two interactions \(A=\{A_X\}_{X\subset\Lambda}\) and
\(B=\{B_Y\}_{Y\subset\Lambda}\), define the interaction
\(\ad_A(B)=\{(\ad_A(B))_Z\}_{Z\subset\Lambda}\) by
\begin{equation}\label{eq:spec-ad-interaction-def}
  (\ad_A(B))_Z
  :=
  \sum_{\substack{X,Y\subset\Lambda:\\ X\cup Y=Z}}
  [A_X,B_Y].
\end{equation}

\paragraph{Support size of an interaction.}
For an interaction \(A=\{A_X\}_{X\subset\Lambda}\), set
\[
  \supp A:=\{X\subset\Lambda:\ A_X\neq0\},
  \qquad
  s(A):=\sup\{|X|:\ X\in\supp A\}.
\]
Thus \(s(A)<\infty\) whenever \(A\) is finite-range with uniformly
bounded support size.

\begin{lem}[Commutator bound in \(\|\cdot\|_\kappa\)]
\label{lem:spec-com-bound}
Let \(A,B\) be finite-range interactions.  Then
\[
  \|\ad_A(B)\|_\kappa
  \le
  2\bigl(s(A)+s(B)\bigr)\,
  \|A\|_\kappa\,\|B\|_\kappa .
\]
\end{lem}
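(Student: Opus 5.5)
The plan is a direct estimate: expand the definition \eqref{eq:spec-ad-interaction-def}, use the triangle inequality, drop the commutators of disjointly supported terms, and then count overlapping pairs with a union bound over the location of the overlap. We use the standing convention that every \(A_X\) and \(B_Y\) is even, i.e.\ lies in \(\frak A_X^{\rm even}\), respectively \(\frak A_Y^{\rm even}\).

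\emph{Step 1 (reduction to overlapping pairs).} Fix \(x\in\Lambda\). Each pair \((X,Y)\) contributes to exactly one set \(Z=X\cup Y\). The triangle inequality therefore gives
\[
  \sum_{Z\ni x}e^{\kappa|Z|}\bigl\|(\ad_A(B))_Z\bigr\|
  \le
  \sum_{\substack{X,Y:\ x\in X\cup Y}}
  e^{\kappa|X\cup Y|}\,\|[A_X,B_Y]\| .
\]
If \(X\cap Y=\varnothing\), then \([A_X,B_Y]=0\) by the even-locality fact recorded above, since \(A_X\) is even and supported in \(X\). So only pairs with \(X\cap Y\neq\varnothing\) survive. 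For these we use two elementary bounds:
\[
  \|[A_X,B_Y]\|\le 2\|A_X\|\,\|B_Y\|,
  \qquad
  e^{\kappa|X\cup Y|}\le e^{\kappa|X|}e^{\kappa|Y|},
\]
the second because \(\kappa\ge0\).

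\emph{Step 2 (splitting by the position of \(x\)).} Since \(x\in X\cup Y\), we bound the indicator of this event by \(\mathbbm 1_{x\in X}+\mathbbm 1_{x\in Y}\).

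In the first case, \(x\in X\), the overlap condition \(X\cap Y\neq\varnothing\) means that \(Y\ni y\) for some \(y\in X\). A union bound over \(y\) gives
\[
  \sum_{X\ni x}e^{\kappa|X|}\|A_X\|
  \sum_{y\in X}\sum_{Y\ni y}e^{\kappa|Y|}\|B_Y\|
  \le
  \sum_{X\ni x}e^{\kappa|X|}\|A_X\|\;|X|\,\|B\|_\kappa
  \le
  s(A)\,\|A\|_\kappa\|B\|_\kappa .
\]
Here \(|X|\le s(A)\) holds for every \(X\in\supp A\), and terms with \(A_X=0\) contribute nothing.

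In the second case, \(x\in Y\), the roles are reversed. We sum first over \(Y\ni x\), then over \(y\in Y\), then over \(X\ni y\). This yields the bound \(s(B)\,\|A\|_\kappa\|B\|_\kappa\).

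\emph{Step 3 (conclusion).} Multiplying by the factor \(2\) from the commutator bound and taking the supremum over \(x\) gives
\[
  \|\ad_A(B)\|_\kappa\le 2\bigl(s(A)+s(B)\bigr)\|A\|_\kappa\|B\|_\kappa .
\]
The argument has no genuine obstacle. The only points needing care are that disjoint pairs really drop out, which uses evenness of the \(A_X\), and that the overlap counting is organised so that each factor \(|X|\) or \(|Y|\) is bounded by the corresponding support size \(s(\cdot)\).
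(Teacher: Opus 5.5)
Your proposal is correct and is essentially the paper's own proof. Both discard disjointly supported pairs by evenness, bound each surviving term using $\|[A_X,B_Y]\|\le 2\|A_X\|\,\|B_Y\|$ and $e^{\kappa|X\cup Y|}\le e^{\kappa|X|}e^{\kappa|Y|}$, split $x\in X\cup Y$ into the cases $x\in X$ and $x\in Y$, and then union-bound over the overlap site to extract the factors $s(A)$ and $s(B)$.
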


\begin{proof}
See Appendix~\ref{app:LS-BCH}, \S\ref{app:LS-BCH-proofs}.
\end{proof}

\begin{lem}[BCH summability bound]
\label{lem:BCH-summability-kappa}
Let \(A,B\) be finite-range interactions.  Assume
$
  2s(A)\|A\|_\kappa\le \rho<1 .
$
Then there exists a constant
$
  C_{\rm BCH}=C_{\rm BCH}(s(A),s(B),\rho)
$
such that
\[
  \sum_{n\ge1}\frac1{n!}
  \|\ad_A^{\,n}(B)\|_\kappa
  \le
  C_{\rm BCH}\,\|A\|_\kappa\,\|B\|_\kappa.
\]
Consequently,
\[
  \left\|
    e^A B e^{-A}-B
  \right\|_\kappa
  \le
  C_{\rm BCH}\,\|A\|_\kappa\,\|B\|_\kappa.
\]
\end{lem}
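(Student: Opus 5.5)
We will prove Lemma~\ref{lem:BCH-summability-kappa} by iterating the commutator bound of Lemma~\ref{lem:spec-com-bound}, keeping track of how the support size grows with each commutator. The first step is to record that if \(A\) and \(B\) have support sizes \(s(A)\) and \(s(B)\), then \(\ad_A(B)\) has support size at most \(s(A)+s(B)\): by \eqref{eq:spec-ad-interaction-def}, every nonzero term sits on a set \(Z=X\cup Y\) with \(|Z|\le |X|+|Y|\). By induction, \(s(\ad_A^{\,n}(B))\le n\,s(A)+s(B)\).

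Next, applying Lemma~\ref{lem:spec-com-bound} with \(\ad_A^{\,k-1}(B)\) in place of \(B\) gives
\[
  \|\ad_A^{\,k}(B)\|_\kappa
  \le
  2\bigl(k\,s(A)+s(B)\bigr)\|A\|_\kappa\,
  \|\ad_A^{\,k-1}(B)\|_\kappa .
\]
Iterating this over \(k=1,\dots,n\) yields
\[
  \|\ad_A^{\,n}(B)\|_\kappa
  \le
  \|A\|_\kappa^{\,n}\|B\|_\kappa\prod_{k=1}^n 2\bigl(k\,s(A)+s(B)\bigr).
\]
Write \(a:=s(A)\ge1\) (if \(A=0\) there is nothing to prove) and \(\sigma:=s(B)/a\). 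Then the product equals
\[
  (2a)^n\prod_{k=1}^n (k+\sigma)=(2a)^n\,\frac{\Gamma(n+1+\sigma)}{\Gamma(1+\sigma)} .
\]

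The key step is controlling the factorial growth. Dividing by \(n!\), we have \(\Gamma(n+1+\sigma)/(n!\,\Gamma(1+\sigma))\le C_\sigma (n+1)^{\sigma}\), which is polynomial in \(n\); for example, it is bounded by \(\binom{n+\lceil\sigma\rceil}{\lceil\sigma\rceil}\). Hence
\[
  \frac1{n!}\|\ad_A^{\,n}(B)\|_\kappa
  \le
  \binom{n+\lceil\sigma\rceil}{\lceil\sigma\rceil}\,(2a\|A\|_\kappa)^n\|B\|_\kappa .
\]
Now use the hypothesis \(2a\|A\|_\kappa\le\rho<1\). Factor out one power: \((2a\|A\|_\kappa)^n\le 2a\|A\|_\kappa\,\rho^{n-1}\). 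Summing over \(n\ge1\) gives
\[
  \sum_{n\ge1}\frac1{n!}\|\ad_A^{\,n}(B)\|_\kappa
  \le
  2a\|A\|_\kappa\|B\|_\kappa\sum_{n\ge1}\binom{n+m}{m}\rho^{n-1},
  \qquad m:=\lceil\sigma\rceil .
\]
The series converges and is bounded by \((1-\rho)^{-(m+1)}/\rho\) times a constant; more cleanly, it is at most \(2^{m}(m+1)(1-\rho)^{-(m+1)}\) after reindexing. Either way, it depends only on \(s(A)\), \(s(B)\) and \(\rho\). This defines \(C_{\rm BCH}\).

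For the consequence, note that the BCH series \(e^ABe^{-A}=\sum_n\ad_A^{\,n}(B)/n!\) converges in operator norm in finite volume, since \(\frak A_\Lambda\) is finite-dimensional. The interaction-level series converges absolutely in \(\|\cdot\|_\kappa\) by the bound above, and summing the interaction terms reproduces the operator series. Interpreting \(e^ABe^{-A}-B\) as the interaction \(\sum_{n\ge1}\ad_A^{\,n}(B)/n!\), the triangle inequality for \(\|\cdot\|_\kappa\) gives the claim.

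The only delicate point is that the support sizes grow linearly in \(n\), so that \(\prod_k(k a+s(B))\) behaves like \(n!\,a^n\) times a polynomial factor. This is exactly what the \(1/n!\) absorbs, and the smallness \(2s(A)\|A\|_\kappa\le\rho<1\) then gives a geometric series.
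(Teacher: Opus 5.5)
Your proof is correct and follows the paper's argument. You use the linear growth $s(\ad_A^{\,n}(B))\le s(B)+n\,s(A)$, iterate Lemma~\ref{lem:spec-com-bound}, let $1/n!$ absorb the factorial up to a polynomial factor, and sum a geometric series after factoring out one power of $2s(A)\|A\|_\kappa$. The only cosmetic difference is that you bound $\prod_{k=1}^n(1+\sigma/k)$ by $\binom{n+\lceil\sigma\rceil}{\lceil\sigma\rceil}$, whereas the paper uses $e^{\sigma}(n+1)^{\sigma}$.
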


\begin{proof}
See Appendix~\ref{app:LS-BCH}, \S\ref{app:LS-BCH-proofs}.
\end{proof}

\paragraph{\(D\)-grading.}
For \(m\in\mathbb N_0\), let
\[
  P_m
  :=
  \mathbbm 1_{\{D_\Lambda=m\}}\restriction_{\cH_\Lambda^{\rm hf}}
\]
be the corresponding spectral projection.  We set \(P_m:=0\) if
\(m\notin\operatorname{spec}(D_\Lambda)\).  Then
\[
  \sum_{m\ge0}P_m=\mathbbm 1,
  \qquad
  D_\Lambda P_m=mP_m .
\]
For any \(B\in\frak A_\Lambda\), define its \(D\)-graded parts by
\begin{equation}\label{eq:def-D-graded}
  B^{(k)}
  :=
  \sum_{m\ge0}P_{m+k}BP_m,
  \qquad
  k\in\mathbb Z,
\end{equation}
with the convention \(P_{m+k}=0\) if \(m+k<0\).  Since the volume is
finite, the sum over \(k\) is finite and
\[
  B=\sum_{k\in\mathbb Z}B^{(k)}.
\]
Moreover,
\begin{equation}\label{eq:adUD-on-graded}
  \mathrm{ad}_{UD_{\Lambda}}(B^{(k)})=kU B^{(k)}.
\end{equation}
We write
\[
  B^{\rm diag}:=B^{(0)},
  \qquad
  B^{\rm off}:=\sum_{k\neq0}B^{(k)}.
\]
We say that \(B\) is \(D\)-diagonal if \(B=B^{\rm diag}\), equivalently
\(\mathrm{ad}_{D_\Lambda}(B)=0\).

\begin{lem}[Contractivity of the \(D\)-grading]
\label{lem:D-grading-contraction}
For any bounded operator \(A\) and any \(k\in\mathbb Z\),
\begin{equation}\label{eq:D-graded-op-norm}
  \|A^{(k)}\|
  =
  \sup_{m\ge0}\|P_{m+k}AP_m\|
  \le
  \|A\|.
\end{equation}
Moreover,
\[
  \|A^{\rm diag}\|\le\|A\|,
  \qquad
  \|A^{\rm off}\|\le2\|A\|.
\]

Consequently, if \(C=\{C_X\}_{X\subset\Lambda}\) is an interaction and
$
  C^{(k)}:=\{(C_X)^{(k)}\}_{X\subset\Lambda},
$
then
\[
  \|C^{(k)}\|_\kappa\le \|C\|_\kappa,
  \qquad
  \|C^{\rm diag}\|_\kappa\le \|C\|_\kappa,
  \qquad
  \|C^{\rm off}\|_\kappa\le2\|C\|_\kappa.
\]
If, in addition,
$
  s(C)<\infty,
$
then
\[
  \sum_{k\neq0}\|C^{(k)}\|_\kappa
  \le
  2s(C)\|C\|_\kappa .
\]
\end{lem}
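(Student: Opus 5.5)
The first claim follows from an orthogonal-block structure. The shifted blocks \(P_{m+k}AP_m\) map mutually orthogonal subspaces \(\operatorname{ran}P_m\) into mutually orthogonal subspaces \(\operatorname{ran}P_{m+k}\).

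\emph{Step 1: the norm identity \eqref{eq:D-graded-op-norm}.} For \(\psi\in\cH_\Lambda^{\rm hf}\), write \(\psi_m:=P_m\psi\), so that \(\|\psi\|^2=\sum_m\|\psi_m\|^2\). For fixed \(k\), the vectors \(P_{m+k}AP_m\psi_m\) lie in the pairwise orthogonal ranges of the \(P_{m+k}\), \(m\ge0\). Hence
\[
\|A^{(k)}\psi\|^2=\sum_m\|P_{m+k}AP_m\psi_m\|^2\le\Bigl(\sup_m\|P_{m+k}AP_m\|\Bigr)^2\|\psi\|^2 .
\]
The reverse inequality follows by testing on vectors in a single \(\operatorname{ran}P_m\), on which \(A^{(k)}\) acts as \(P_{m+k}AP_m\). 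This gives the identity. The bound by \(\|A\|\) holds because \(\|P_{m+k}AP_m\|\le\|A\|\). Taking \(k=0\) gives \(\|A^{\rm diag}\|\le\|A\|\). Since \(A^{\rm off}=A-A^{\rm diag}\), the triangle inequality gives \(\|A^{\rm off}\|\le2\|A\|\).

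\emph{Step 2: interactions.} The norm \eqref{eq:spec-kappa-norm} involves the terms only through \(\|C_X\|\). Applying Step 1 termwise gives \(\|(C_X)^{(k)}\|\le\|C_X\|\), \(\|(C_X)^{\rm diag}\|\le\|C_X\|\) and \(\|(C_X)^{\rm off}\|\le2\|C_X\|\). Inserting these into \eqref{eq:spec-kappa-norm} yields the three \(\kappa\)-norm bounds. Locality of the graded parts is not needed for these norm statements.

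\emph{Step 3: the summed bound.} This is the only step that uses locality. A direct termwise argument does not work, since a sum of suprema over \(x\) is not dominated by a supremum of sums. Instead I would show that \(C^{(k)}=0\) whenever \(|k|>s(C)\). Then at most \(2s(C)\) nonzero values of \(k\) contribute, each bounded by \(\|C\|_\kappa\) by Step 2, which gives the claim.

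To prove the vanishing, fix \(X\) and split \(D_\Lambda=D_X+D_{X^c}\), where \(D_X:=\sum_{x\in X}n_{x\uparrow}n_{x\downarrow}\). The operator \(D_{X^c}\) is an even element of \(\frak A_{X^c}\) and \(C_X\in\frak A_X\), so \([C_X,D_{X^c}]=0\). The operators \(D_X\) and \(D_{X^c}\) commute and are simultaneously diagonalisable, with \(\operatorname{spec}D_X\subset\{0,\dots,|X|\}\). Let \(\psi\) be a joint eigenvector with \(D_{X^c}\psi=a\psi\) and \(D_X\psi=b\psi\). Then \(C_X\psi\) remains in the \(D_{X^c}=a\) eigenspace, so \(D_\Lambda\) takes values in \(a+\{0,\dots,|X|\}\) on \(C_X\psi\), while \(D_\Lambda\psi=(a+b)\psi\). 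Therefore \(P_{m+k}C_XP_m=0\) whenever \(|k|>|X|\), and in particular whenever \(|k|>s(C)\).

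All the steps are elementary. The main point requiring care is Step 3: one must bound the number of contributing \(k\) via the support size rather than sum the \(\kappa\)-norms termwise.
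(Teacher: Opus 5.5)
Your proof is correct, and Steps 1 and 2 agree with the paper's argument: orthogonality of the ranges of the \(P_{m+k}\), the reverse bound from \(P_{m+k}AP_m=P_{m+k}A^{(k)}P_m\), and termwise application to the \(C_X\).

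In Step 3 both arguments rest on the same fact, \((C_X)^{(k)}=0\) for \(|k|>|X|\), but you derive it slightly differently. The paper uses that \(C_X\) commutes with \(q_y\) for \(y\notin X\), together with \(\sum_y q_y=2D_\Lambda\) on the half-filled space. You instead split \(D_\Lambda=D_X+D_{X^c}\) and use \(\operatorname{spec}D_X\subset\{0,\dots,|X|\}\), which avoids the half-filling identity.

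The more substantive difference is how the summed bound is closed. The paper bounds \(\sum_{k\neq0}\|(C_X)^{(k)}\|\le 2s(C)\|C_X\|\) termwise, then multiplies by \(e^{\kappa|X|}\), sums over \(X\ni x\), and takes \(\sup_x\). As literally written, this controls \(\sup_x\sum_{X\ni x}e^{\kappa|X|}\sum_{k\neq0}\|(C_X)^{(k)}\|\). That quantity can be strictly smaller than \(\sum_{k\neq0}\|C^{(k)}\|_\kappa=\sum_{k\neq0}\sup_x\sum_{X\ni x}e^{\kappa|X|}\|(C_X)^{(k)}\|\), so the last deduction interchanges a sum and a supremum in the wrong direction.

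Your route does not have this problem. Since \(C^{(k)}=0\) for \(|k|>s(C)\), at most \(2s(C)\) grades survive, and each satisfies \(\|C^{(k)}\|_\kappa\le\|C\|_\kappa\) by Step 2. This is the step that actually yields the stated inequality. The paper's proof does contain the needed ingredient, since it notes that only \(0<|k|\le s(C)\) contribute, but it should be completed the way you did.
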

\begin{proof}
See Appendix~\ref{app:LS-BCH}, \S\ref{app:LS-BCH-proofs}.
\end{proof}

\paragraph{Homological inverses \(\mathcal I\) and \(\mathcal I_h\).}
We use the \(D\)-grading notation introduced above.  For
\(B\in\frak A_\Lambda\), recall
$
  B^{\rm diag}:=B^{(0)},
$
and 
$
  B^{\rm off}:=\sum_{k\neq0}B^{(k)}.
$

\smallskip
\noindent\textbf{The case \(h=0\).}
On the \(D\)-off-diagonal subspace, the map
\(\ad_{UD_\Lambda}\) is inverted explicitly by
\begin{equation}\label{eq:def-I}
  \mathcal I(B^{\rm off})
  :=
  \sum_{k\neq0}\frac{1}{kU}B^{(k)} .
\end{equation}
Indeed,
\begin{equation}\label{eq:adUD-I}
  \ad_{UD_\Lambda}
  \bigl(
    \mathcal I(B^{\rm off})
  \bigr)
  =
  B^{\rm off}.
\end{equation}
Moreover, if \(B\) is self-adjoint, then
$
  \mathcal I(B^{\rm off})^\ast
  =
  -\mathcal I(B^{\rm off}).
$

\smallskip
\noindent\textbf{The case \(|h|\le h_0\).}
For nonzero field we use a local, \(h\)-dependent homological inverse.
More precisely, let \(A\in\frak A_X\) be a local operator of
\(D\)-grade \(k\neq0\).  Under the standing large-\(U\) condition
$
  |k|U>h_0|X|,
$
we define \(\mathcal I_h(A)\) as the local inverse of
\(\ad_{UD_\Lambda-hM_\Lambda}\) on the \(k\)-graded piece, so that
\begin{equation}\label{eq:adUDh-Ih-local}
  \ad_{UD_\Lambda-hM_\Lambda}
  \bigl(
    \mathcal I_h(A)
  \bigr)
  =
  A.
\end{equation}
The concrete definition is given in Appendix~\ref{app:Ih} by a Neumann
series on each graded local piece.  It has the following properties:
\(\mathcal I_h(A)\) is supported in the same set \(X\), has the same
\(D\)-grade \(k\), and satisfies the corresponding local
\(\|\cdot\|\)- and \(\|\cdot\|_\kappa\)-bounds.  Moreover, if \(A\) is
self-adjoint after summing over opposite grades, then the resulting
\(\mathcal I_h\)-image is anti-self-adjoint.

At \(h=0\), this local inverse agrees with \(\mathcal I\).  Thus one may
write
$
  \mathcal I_0=\mathcal I.
$

\smallskip
\noindent\textbf{Interaction-level convention.}
If \(B=\{B_X\}_{X\subset\Lambda}\) is a finite-range interaction, then
all operations above are understood termwise in \(X\).  Namely, we define
\[
  B^{(k)}
  :=
  \{(B_X)^{(k)}\}_{X\subset\Lambda},
  \qquad
  B^{\rm off}
  :=
  \{(B_X)^{\rm off}\}_{X\subset\Lambda},
\]
and set
\[
  \mathcal I(B^{\rm off})
  :=
  \left\{
    \mathcal I\bigl((B_X)^{\rm off}\bigr)
  \right\}_{X\subset\Lambda},
  \qquad
  \mathcal I_h(B^{\rm off})
  :=
  \left\{
    \mathcal I_h\bigl((B_X)^{\rm off}\bigr)
  \right\}_{X\subset\Lambda}.
\]
For \(U\) sufficiently large compared with \(h_0\) and the finite range
of \(B\), this termwise definition is well-defined.  In particular,
\(\mathcal I(B^{\rm off})\) and \(\mathcal I_h(B^{\rm off})\) are again
finite-range interactions with the same range as \(B\); see
Appendix~\ref{app:Ih}, especially
Lemma~\ref{lem:Ih-interaction-bound}.

\begin{rem}[Relation to spectral-flow constructions]
\label{rem:relation-Teufel-SAPT}
The field-dependent homological inverse \(\mathcal I_h\) plays the role
of a local inverse to the commutator with
\(UD_\Lambda-hM_\Lambda\) on \(D_\Lambda\)-off-diagonal terms.  This is
conceptually related to quasi-adiabatic and spectral-flow constructions
of quasi-local inverses of Liouvillians on suitable off-diagonal
subspaces; see, for example, Teufel's work on NEASS and linear response
\cite{Teufel2020}.

In the present paper, however, we do not use the general spectral-flow
machinery.  The inverse \(\mathcal I_h\) is constructed directly from the
strong-coupling \(D_\Lambda\)-grading of the half-filled Hubbard model
and is estimated in finite-volume interaction norms.  For related
operator-theoretic ideas in a different setting, see also
\cite{WesleMarcelliMiyaoMonacoTeufel2025}.
\end{rem}

\subsection{Quantitative convergence of the Lie--Schwinger iteration}
\label{subsec:LS-quant}

\paragraph{Convention on constants in the large-\(U\) regime.}
Throughout the LS/SW analysis, constants may depend on \(d,\kappa\), on
the fixed range of the hopping interaction, and on the fixed field window
size \(h_0\).  We fix once and for all a threshold
$
  U_\ast=U_\ast(d,\kappa,h_0)
$
and work under the standing assumption \(U\ge U_\ast\).  This threshold
is chosen large enough so that the bounds for the field-dependent
homological inverse \(\mathcal I_h\) are uniform for all \(|h|\le h_0\).
After this choice, constants depending on \(h_0/U\) are absorbed into
constants depending only on the fixed parameters, and we write them as
\(C(d,\kappa)\), \(q(d,\kappa)\), \(\varepsilon_\ast(d,\kappa)\), and so
on.
\begin{lem}[One Lie--Schwinger step: quantitative form]
\label{lem:spec-one-LS-step-quant}
Fix \(h_0>0\) and assume \(|h|\le h_0\).  Let
\[
  H=UD_\Lambda-hM_\Lambda+B
\]
on \(\cH_\Lambda^{\rm hf}\), where \(B\) is a self-adjoint finite-range
interaction.  Write
\[
  B=B^{\rm diag}+B^{\rm off}
\]
with respect to the \(D_\Lambda\)-grading, and set
$
  S:=\mathcal I_h(B^{\rm off}).
$
Then
\[
  \ad_S(UD_\Lambda-hM_\Lambda)=-B^{\rm off}.
\]
Moreover, \(S^\ast=-S\), and hence \(e^S\) is unitary.

Assume that, for some fixed \(\rho_0\in(0,1)\),
\begin{equation}\label{eq:spec-LS-smallness}
  2s(B)\|S\|_\kappa\le \rho_0 .
\end{equation}
Let
\[
  H^+:=e^SHe^{-S},
  \qquad
  B^+:=H^+-(UD_\Lambda-hM_\Lambda).
\]
Then
\[
  H^+
  =
  UD_\Lambda-hM_\Lambda+(B^+)^{\rm diag}+(B^+)^{\rm off},
\]
and one has the bounds
\begin{align}
  \|(B^+)^{\rm off}\|_\kappa
  &\le
  C_1(d,\kappa,\rho_0)\,
  \|S\|_\kappa
  \bigl(
    \|B^{\rm diag}\|_\kappa+\|B^{\rm off}\|_\kappa
  \bigr),
  \label{eq:spec-off-step-bound}
  \\
  \|(B^+)^{\rm diag}-B^{\rm diag}\|_\kappa
  &\le
  C_2(d,\kappa,\rho_0)
  \left[
    \|S\|_\kappa\|B^{\rm off}\|_\kappa
    +
    \|S\|_\kappa^2
    \bigl(
      \|B^{\rm diag}\|_\kappa+\|B^{\rm off}\|_\kappa
    \bigr)
  \right].
  \label{eq:spec-diag-step-bound}
\end{align}
Furthermore, if
$
  U>h_0s(B),
$
then
\begin{equation}\label{eq:spec-S-bound}
  \|S\|_\kappa
  \le
  \frac{2s(B)}{U-h_0s(B)}
  \|B^{\rm off}\|_\kappa.
\end{equation}
In particular, if
\begin{equation}\label{eq:CI-absorb-condition}
  \frac{h_0}{U}s(B)\le \frac12,
\end{equation}
then
\begin{equation}\label{eq:spec-S-bound-simplified}
  \|S\|_\kappa
  \le
  \frac{4s(B)}{U}\|B^{\rm off}\|_\kappa.
\end{equation}
\end{lem}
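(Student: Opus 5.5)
The plan has three parts: the homological identity and anti-self-adjointness, the bound on \(S\), and the BCH expansion of \(H^+\) sorted by \(D\)-grade.

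\textbf{Homological identity and anti-self-adjointness.} By construction \(S=\mathcal I_h(B^{\rm off})\) satisfies \(\ad_{UD_\Lambda-hM_\Lambda}(S)=B^{\rm off}\) termwise, by \eqref{eq:adUDh-Ih-local}. Hence \(\ad_S(UD_\Lambda-hM_\Lambda)=-\ad_{UD_\Lambda-hM_\Lambda}(S)=-B^{\rm off}\). Since \(B\) is self-adjoint, \((B_X^{(k)})^\ast=B_X^{(-k)}\), and \(M_\Lambda\) is self-adjoint and \(D\)-diagonal. The stated property of \(\mathcal I_h\) (anti-self-adjoint image for self-adjoint input) then gives \(S^\ast=-S\), so \(e^S\) is unitary.

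\textbf{Bound on \(S\).} This is proved locally on each graded piece. Since \(M_\Lambda\) commutes with \(D_\Lambda\), \(\ad_{hM_\Lambda}\) preserves grade and support. On \(A\in\frak A_X\) of grade \(k\), only \(\sum_{x\in X}\eta_xS^{(3)}_x\) acts nontrivially in the commutator, so \(\|\ad_{hM_\Lambda}\|\le h_0|X|\) on such \(A\). The Neumann series \(\mathcal I_h(A)=\sum_{n\ge0}(kU)^{-1}\bigl((kU)^{-1}\ad_{hM_\Lambda}\bigr)^n A\) therefore gives
\[
\|\mathcal I_h(A)\|\le \frac{\|A\|}{|k|U-h_0|X|}\le\frac{\|A\|}{U-h_0s(B)} .
\]
Summing over \(k\neq0\) with \(\|B_X^{(k)}\|\le\|B_X\|\) (Lemma~\ref{lem:D-grading-contraction}), and using that the number of nonzero grades is at most \(2|X|\le 2s(B)\), yields \eqref{eq:spec-S-bound}. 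Under \eqref{eq:CI-absorb-condition} the denominator is at least \(U/2\), which gives \eqref{eq:spec-S-bound-simplified}.

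\textbf{BCH expansion and sorting by grade.} This is where the actual work lies. I will use the standard rearrangement
\[
H^+=H+\ad_S(H_0)+\sum_{n\ge1}\frac1{n!}\ad_S^n(B)+\sum_{n\ge2}\frac1{n!}\ad_S^{n}(H_0),
\qquad H_0:=UD_\Lambda-hM_\Lambda .
\]
Using \(\ad_S(H_0)=-B^{\rm off}\), the last sum becomes \(-\sum_{n\ge2}\frac1{n!}\ad_S^{n-1}(B^{\rm off})\). Hence
\[
B^+=B^{\rm diag}+\ad_S(B^{\rm diag})+\sum_{n\ge1}\Bigl(\frac1{n!}-\frac1{(n+1)!}\Bigr)\ad_S^n(B^{\rm off})+\sum_{n\ge2}\frac1{n!}\ad_S^n(B^{\rm diag}).
\]
The crucial point is that \(H_0\) itself is never estimated in \(\|\cdot\|_\kappa\) (it is \(U\)-large); it only enters through the identity \(\ad_S(H_0)=-B^{\rm off}\).

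\textbf{Estimating the remainder.} Every term other than \(B^{\rm diag}\) is estimated by Lemma~\ref{lem:BCH-summability-kappa}, applied with \(A=S\). The hypothesis \eqref{eq:spec-LS-smallness} with \(s(S)\le s(B)\) gives the needed condition. This yields \(\|B^+-B^{\rm diag}\|_\kappa\le C\|S\|_\kappa(\|B^{\rm diag}\|_\kappa+\|B^{\rm off}\|_\kappa)\). One subtlety is that the support sizes of the \(\ad_S^n\) terms grow. Lemma~\ref{lem:BCH-summability-kappa} already absorbs this into \(C_{\rm BCH}\), via the \(e^{\kappa|X|}\) weights and \(\rho_0<1\), with constants depending on \(s(B)\), which is fixed by \(d\) for the hopping.

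Projecting \(B^+-B^{\rm diag}\) to the off-diagonal part with Lemma~\ref{lem:D-grading-contraction} (factor 2) gives \eqref{eq:spec-off-step-bound}.

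For \eqref{eq:spec-diag-step-bound}, I will take the diagonal part. The term \(\ad_S(B^{\rm diag})\) has no diagonal component, since \(S\) is purely off-diagonal and \(B^{\rm diag}\) has grade 0. The \(n=1\) term \(\tfrac12\ad_S(B^{\rm off})\) is bounded by \(C\|S\|_\kappa\|B^{\rm off}\|_\kappa\) using Lemma~\ref{lem:spec-com-bound}. All terms with \(n\ge2\) are of order \(\|S\|_\kappa^2(\|B^{\rm diag}\|_\kappa+\|B^{\rm off}\|_\kappa)\), by applying Lemma~\ref{lem:BCH-summability-kappa} to \(\ad_S(\cdot)\) of an already once-commutated interaction.

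The main obstacle is this last bookkeeping. One must make sure that the growing supports and the \(\kappa\)-weights still close with constants depending only on \(d,\kappa,\rho_0\). If the summability lemma's constant depends on \(s(\ad_S(B))\), I will instead use the nested form of Lemma~\ref{lem:spec-com-bound} directly, giving \(\|\ad_S^n(B)\|_\kappa\le n!\,(2s)^n\|S\|_\kappa^n\|B\|_\kappa\)-type bounds, and sum the resulting geometric series using \(\rho_0<1\).
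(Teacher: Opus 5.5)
Your proposal is correct and follows the paper's proof essentially step for step. The homological identity and $S^\ast=-S$ come from the properties of $\mathcal I_h$. The bound on $S$ comes from the local Neumann series summed over the at most $2|X|$ nonzero grades. The step bounds come from the BCH expansion, with the $H_0$-tail rewritten via $\ad_S^{\,n}(UD_\Lambda-hM_\Lambda)=-\ad_S^{\,n-1}(B^{\rm off})$ before applying Lemma~\ref{lem:BCH-summability-kappa} and extracting grades. The paper keeps $\ad_S(B^{\rm diag}+B^{\rm off})$ and the tails $R_0,R_V$ separate rather than merging coefficients into $\frac{n}{(n+1)!}$, but that difference is cosmetic.

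One small correction is needed in your bound on $S$. As written, the inequality $\|(B_X)^{(k)}\|\le\|B_X\|$ gives $\|B\|_\kappa$ on the right of \eqref{eq:spec-S-bound}, not $\|B^{\rm off}\|_\kappa$. You should instead use $\|(B_X)^{(k)}\|=\|((B_X)^{\rm off})^{(k)}\|\le\|(B_X)^{\rm off}\|$ for $k\neq0$, which is what the paper does by applying Lemma~\ref{lem:Ih-interaction-bound} to the interaction $B^{\rm off}$. The distinction matters downstream: a bound in terms of $\|B\|_\kappa$ would not yield the off-diagonal contraction in Proposition~\ref{prop:spec-LS-contraction}.
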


\begin{proof}
See Appendix~\ref{app:LS-quant-proofs},
\S\ref{app:LS-quant-proofs1}.
\end{proof}

\paragraph{Iteration: abstract LS/SW scheme.}
Fix \(U>0\) and \(|h|\le h_0\).  Let \(B_0=B_{0,\Lambda}(h)\) be a
self-adjoint finite-range interaction on \(\Lambda\), and set
\[
  H^{(0)}
  :=
  UD_\Lambda-hM_\Lambda+B_0 .
\]

By our standing convention, each interaction is identified with its
finite-volume sum whenever no confusion can arise.  The norm
\(\|\cdot\|_\kappa\), the \(D_\Lambda\)-grading, and the diagonal/off-
diagonal decompositions are all taken at the interaction level.

For \(n\ge0\), suppose that
\[
  H^{(n)}
  =
  UD_\Lambda-hM_\Lambda+B_n
\]
has been constructed, with \(B_n\) viewed as an interaction.  Define
$
  S_n
  :=
  \mathcal I_h\bigl((B_n)^{\rm off}\bigr),
$
and set
\[
  H^{(n+1)}
  :=
  e^{S_n}H^{(n)}e^{-S_n},
  \qquad
  B_{n+1}
  :=
  H^{(n+1)}-(UD_\Lambda-hM_\Lambda).
\]
By the defining property of \(\mathcal I_h\),
\[
  \mathrm{ad}_{S_n}(UD_\Lambda-hM_{\Lambda})=-\mathrm{ad}_{UD_{\Lambda}-hM_{\Lambda}}(S_n)
  =
  -(B_n)^{\rm off}.
\]
Thus the first-order \(D_\Lambda\)-off-diagonal part of \(B_n\) is
cancelled at the \(n\)-th step.

If \(B_0\) is self-adjoint, then each \(B_n\) is self-adjoint.  Moreover,
$
  S_n^\ast=-S_n,
$
so every step is implemented by a unitary conjugation.

\begin{prop}[Contraction and convergence]
\label{prop:spec-LS-contraction}
Fix \(h_0>0\) and \(\kappa>0\).  Assume \(U\ge U_\ast\), where
\(U_\ast=U_\ast(d,\kappa,h_0)\) is chosen so that the large-\(U\)
convention in Subsection~\ref{subsec:LS-quant} applies.  Then there exist
\[
  \varepsilon_\ast=\varepsilon_\ast(d,\kappa)>0,
  \qquad
  q=q(d,\kappa)\in(0,1),
  \qquad
  C_\ast=C_\ast(d,\kappa)<\infty,
\]
such that the following holds.

Let \(|h|\le h_0\), and let \(B_0=B_{0,\Lambda}(h)\) be a self-adjoint
finite-range interaction.  Set
\[
  H^{(0)}
  :=
  UD_\Lambda-hM_\Lambda+B_0 .
\]
Assume
\[
  \|(B_0)^{\rm off}\|_\kappa\le \varepsilon_\ast U,
  \qquad
  \|(B_0)^{\rm diag}\|_\kappa\le \varepsilon_\ast U.
\]
Then the LS/SW iteration defined above is well-defined for all \(n\ge0\),
and
\begin{align}
  \|(B_{n+1})^{\rm off}\|_\kappa
  &\le
  q\,\|(B_n)^{\rm off}\|_\kappa,
  \label{eq:spec-linear-contract}
  \\
  \|(B_{n+1})^{\rm diag}-(B_n)^{\rm diag}\|_\kappa
  &\le
  C_\ast\,
  \frac{\|(B_n)^{\rm off}\|_\kappa^2}{U}.
  \label{eq:spec-diag-increment}
\end{align}
Moreover,
\[
  \sum_{n\ge0}\|S_n\|_\kappa<\infty.
\]
Consequently, the product unitary
\[
  U_{\rm SW}
  :=
  \lim_{N\to\infty}
  e^{S_{N-1}}\cdots e^{S_1}e^{S_0}
\]
exists in finite volume, in operator norm, and
\[
  U_{\rm SW}H^{(0)}U_{\rm SW}^{\ast}
  =
  UD_\Lambda-hM_\Lambda+B_\infty(h),
\]
where \(B_\infty(h)\) is \(D_\Lambda\)-diagonal.  Furthermore,
\begin{equation}\label{eq:spec-B-infty-bound}
  \|B_\infty(h)-(B_0)^{\rm diag}\|_\kappa
  \le
  C_\ast\,
  \frac{\|(B_0)^{\rm off}\|_\kappa^2}{U}.
\end{equation}
More quantitatively, after enlarging \(C_\ast\) if necessary,
\begin{equation}\label{eq:spec-generator-sum-bound}
  \sum_{n\ge0}\|S_n\|_\kappa
  \le
  C_\ast\,
  \frac{\|(B_0)^{\rm off}\|_\kappa}{U}.
\end{equation}
All constants are independent of \(L\), \(h\), \(U\), and \(B_0\), subject
to the assumptions above.
\end{prop}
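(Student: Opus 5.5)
The proof is an induction on \(n\), driven by the one-step estimates of Lemma~\ref{lem:spec-one-LS-step-quant}. The inductive hypothesis will be
\[
  \|(B_n)^{\rm off}\|_\kappa\le q^n\|(B_0)^{\rm off}\|_\kappa,
  \qquad
  \|(B_n)^{\rm diag}\|_\kappa\le 2\varepsilon_\ast U,
\]
together with a uniform bound \(s(B_n)\le s_\ast\) on the support size. The support bound is the one point that needs care. Each conjugation \(e^{S_n}(\cdot)e^{-S_n}\) produces, through the BCH series, commutators of arbitrarily high order, and their supports grow. So \(s(B_{n+1})\) is not literally bounded by \(s(B_n)\).

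I would handle this in one of two ways.

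First option: note that the \(\|\cdot\|_\kappa\) norm, with its weight \(e^{\kappa|X|}\), already controls large supports. Then rerun the proofs of Lemmas~\ref{lem:spec-com-bound} and \ref{lem:BCH-summability-kappa} with a slightly smaller \(\kappa'<\kappa\). This replaces the factors \(s(A)\) by constants of the form \(C/(\kappa-\kappa')\).

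Second option, which is the more economical reading of the standing conventions: accept that the constants \(C_1,C_2\) of the lemma depend on \(s(B)\). Then use a sequence \(\kappa_n\downarrow\kappa/2\) with \(\sum_n(\kappa_n-\kappa_{n+1})<\infty\), so that the constants remain uniformly bounded along the iteration.

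I expect this bookkeeping of supports and decay rates to be the main obstacle. It is the only place where genuine uniformity in \(L\) and \(n\) must be secured. Everything else is algebra on top of the one-step lemma.

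Given uniform constants, the step itself runs as follows. By \eqref{eq:spec-S-bound-simplified},
\[
  \|S_n\|_\kappa\le \frac{C}{U}\|(B_n)^{\rm off}\|_\kappa\le C\varepsilon_\ast,
\]
so the smallness condition \eqref{eq:spec-LS-smallness} holds once \(\varepsilon_\ast\) is small. Then \eqref{eq:spec-off-step-bound} gives
\[
  \|(B_{n+1})^{\rm off}\|_\kappa
  \le
  C_1\frac{C}{U}\|(B_n)^{\rm off}\|_\kappa\,(4\varepsilon_\ast U)
  =
  4CC_1\varepsilon_\ast\|(B_n)^{\rm off}\|_\kappa,
\]
and choosing \(\varepsilon_\ast\) so that \(q:=4CC_1\varepsilon_\ast<1\) yields \eqref{eq:spec-linear-contract}. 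Next, \eqref{eq:spec-diag-step-bound} gives
\[
  \|(B_{n+1})^{\rm diag}-(B_n)^{\rm diag}\|_\kappa
  \le
  C_2\Bigl[\tfrac{C}{U}\|(B_n)^{\rm off}\|_\kappa^2
  +\tfrac{C^2}{U^2}\|(B_n)^{\rm off}\|_\kappa^2\cdot 4\varepsilon_\ast U\Bigr]
  \le C_\ast\frac{\|(B_n)^{\rm off}\|_\kappa^2}{U},
\]
which is \eqref{eq:spec-diag-increment}.

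Summing the increments gives
\[
  \sum_{m\le n}\|(B_{m+1})^{\rm diag}-(B_m)^{\rm diag}\|_\kappa
  \le
  \frac{C_\ast}{(1-q^2)U}\|(B_0)^{\rm off}\|_\kappa^2
  \le
  \frac{C_\ast\varepsilon_\ast^2}{1-q^2}U.
\]
For \(\varepsilon_\ast\) small this is at most \(\varepsilon_\ast U\), which closes the diagonal part of the induction. Similarly,
\[
  \sum_n\|S_n\|_\kappa\le \frac{C}{U(1-q)}\|(B_0)^{\rm off}\|_\kappa,
\]
which gives \eqref{eq:spec-generator-sum-bound}. The same telescoping sum, taken to \(n=\infty\), gives \eqref{eq:spec-B-infty-bound} once the limit is shown to exist.

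For convergence in finite volume, the operator norm of \(S_n\) as an operator is bounded by \(|\Lambda|\,\|S_n\|_\kappa\), which is summable in \(n\). Since each \(e^{S_n}\) is unitary,
\[
  \|e^{S_n}-\mathbbm 1\|\le\|S_n\|,
\]
so the partial products \(e^{S_{N-1}}\cdots e^{S_0}\) form a Cauchy sequence in operator norm. Their limit \(U_{\rm SW}\) is unitary.

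It follows that
\[
  H^{(N)}=e^{S_{N-1}}\cdots e^{S_0}\,H^{(0)}\,e^{-S_0}\cdots e^{-S_{N-1}}
  \to U_{\rm SW}H^{(0)}U_{\rm SW}^\ast .
\]
The off-diagonal parts \((B_N)^{\rm off}\) tend to zero in \(\|\cdot\|_\kappa\), and hence in finite-volume operator norm. The diagonal parts \((B_N)^{\rm diag}\) converge termwise, because their increments are summable in \(\|\cdot\|_\kappa\), to a \(D_\Lambda\)-diagonal interaction \(B_\infty(h)\). This gives the stated identity
\[
  U_{\rm SW}H^{(0)}U_{\rm SW}^{\ast}=UD_\Lambda-hM_\Lambda+B_\infty(h).
\]
All constants depend only on \(d\), \(\kappa\) and the threshold \(U_\ast\). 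This is the uniformity in \(L\), \(h\), \(U\) and \(B_0\) asserted in the statement.
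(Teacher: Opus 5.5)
Your bootstrap is essentially the paper's argument in Appendix~\ref{app:LS-quant-proofs2}: the choice of \(\varepsilon_\ast\) making \(q<1\), summing the diagonal increments to stay inside \(\|(B_n)^{\rm diag}\|_\kappa\le2\varepsilon_\ast U\), the generator sum, and the Cauchy argument for the product. The gap is at the point you single out, and neither of your options closes it.

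\emph{Second option.} Already \(B_1\) contains all iterated commutators of \(S_0\) with \(T_\Lambda\), so \(s(B_n)\) is comparable to \(|\Lambda|\) for \(n\ge1\). Accepting \(s(B)\)-dependent constants therefore makes \(C_1,C_2\), the smallness condition \eqref{eq:spec-LS-smallness}, and the requirement \(U>h_0s(B_n)\) in \eqref{eq:spec-S-bound} all depend on \(L\). Changing \(\kappa\) does nothing to \(s(B_n)\).

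\emph{First option.} Trading \(s\) for a loss of decay rate works for a single step: \(\|\mathrm{ad}_A(B)\|_{\kappa-\delta}\le\frac{C}{\delta}\|A\|_\kappa\|B\|_\kappa\), via \(|X|e^{-\delta|X|}\le(e\delta)^{-1}\). But the loss is paid at every step, and the scheme is only linearly convergent. Indeed, \((B_{n+1})^{\rm off}\) contains \(\mathrm{ad}_{S_n}\bigl((B_n)^{\rm diag}\bigr)\), and \(\|(B_n)^{\rm diag}\|_\kappa\) is of size \(\varepsilon_\ast U\), not small compared with \(U\). With \(\delta_n=\kappa_n-\kappa_{n+1}\), the one-step ratio becomes \(q_n\approx C\varepsilon_\ast/\delta_n\). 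Summability of \((\delta_n)\) forces \(q_n\to\infty\). By AM--GM, \(\prod_{j<n}q_j\ge\bigl(C\varepsilon_\ast n/\sum_j\delta_j\bigr)^n\), so the resulting bound on \(\|(B_n)^{\rm off}\|\) grows super-exponentially instead of decaying. The constants cannot ``remain uniformly bounded along the iteration''.

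\emph{The inverse \(\mathcal I_h\).} For \(h\neq0\), the weight \(e^{\kappa|X|}\) does not rescue \(\mathcal I_h\). Lemma~\ref{lem:def-Ih-Neumann} needs \(|k|U>h_0|X|\). For \(|X|\gtrsim|k|U/|h|\), the divisors \(kU-h\,\Delta M\) (with \(|\Delta M|\le|X|\)) can vanish or be arbitrarily small. This happens exactly on the large-support terms that the BCH series generates.

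The paper's proof does not supply the missing idea either. It posits a ``fixed local-size constant \(s_\ast\)'' and a step-independent \(C_{\mathcal I}\) with \(\|S_n\|_\kappa\le C_{\mathcal I}\|(B_n)^{\rm off}\|_\kappa/U\) for every step. That is precisely the uniform support bound you rightly note is false.

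What the \(\kappa\)-loss version of your argument actually gives is a finite number \(N\) of steps, with \(\delta=\kappa/(2N)\) and \(N\lesssim\kappa/(C\varepsilon_\ast)\). This leaves an off-diagonal remainder of order \((CN\varepsilon_\ast/\kappa)^N\|(B_0)^{\rm off}\|_\kappa\), which is exponentially small in \(U/|t|\) after optimising \(N\), but nonzero. An exactly \(D_\Lambda\)-diagonal \(B_\infty(h)\) with volume-uniform bounds would need a genuinely different mechanism. There is reason to doubt that one exists, since it would give an exactly conserved dressed doublon number.
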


\begin{proof}
See Appendix~\ref{app:LS-quant-proofs},
\S\ref{app:LS-quant-proofs2}.
\end{proof}

\begin{cor}[Small-\(|t|/U\) regime for the LS/SW iteration]
\label{cor:spec-LS-small-tU}
Fix \(h_0>0\) and \(\kappa>0\).  Assume \(U\ge U_\ast\), where
\(U_\ast=U_\ast(d,\kappa,h_0)\) is the standing large-\(U\) threshold in
Proposition~\ref{prop:spec-LS-contraction}.  Let
\[
  \varepsilon_\ast=\varepsilon_\ast(d,\kappa)>0,
  \qquad
  q=q(d,\kappa)\in(0,1),
  \qquad
  C_\ast=C_\ast(d,\kappa)<\infty
\]
be as in Proposition~\ref{prop:spec-LS-contraction}.

Let \(C_T(d,\kappa)\ge1\) be a constant such that, uniformly in
\(\Lambda\),
\[
  \|T_\Lambda\|_\kappa\le C_T(d,\kappa)|t|,
  \qquad
  \|(T_\Lambda)^{\rm off}\|_\kappa\le C_T(d,\kappa)|t|,
  \qquad
  \|T_\Lambda^{(0)}\|_\kappa\le C_T(d,\kappa)|t|.
\]
Here \(C_T(d,\kappa)\) is enlarged, if necessary, to absorb the fixed
finite-range constants coming from the \(D_\Lambda\)-grading.

Define
\[
  \varepsilon_{\rm SW}(d,\kappa)
  :=
  \min\left\{
    1,\,
    \frac{\varepsilon_\ast(d,\kappa)}{2C_T(d,\kappa)}
  \right\}.
\]
Assume
$
  |h|\le h_0
$
and 
$
  |t|/U\le \varepsilon_{\rm SW}(d,\kappa).
$
Set
\[
  B_{0,\Lambda}:=T_\Lambda,
  \qquad
  H_\Lambda^{(0)}(h)
  :=
  H_\Lambda^{\rm Hub}(h)
  =
  UD_\Lambda-hM_\Lambda+B_{0,\Lambda}.
\]
Then, uniformly in \(\Lambda\),
\[
  \|(B_{0,\Lambda})^{\rm off}\|_\kappa
  =
  \|(T_\Lambda)^{\rm off}\|_\kappa
  \le
  \varepsilon_\ast U,
\qquad
  \|(B_{0,\Lambda})^{\rm diag}\|_\kappa
  =
  \|T_\Lambda^{(0)}\|_\kappa
  \le
  \varepsilon_\ast U.
\]
Hence the hypotheses of Proposition~\ref{prop:spec-LS-contraction} hold
for the Hubbard initial datum
\[
  H_\Lambda^{(0)}(h)
  =
  H_\Lambda^{\rm Hub}(h)
  =
  UD_\Lambda-hM_\Lambda+T_\Lambda .
\]
Consequently, the LS/SW iteration is well-defined for all \(n\ge0\), and
\eqref{eq:spec-linear-contract}--\eqref{eq:spec-generator-sum-bound}
hold for the sequence \((B_n,S_n)\) generated from this initial datum.

In particular, the product unitary \(U_{\rm SW}=U_{\rm SW}(h)\) exists
and
\[
  U_{\rm SW}(h)H_\Lambda^{\rm Hub}(h)U_{\rm SW}(h)^\ast
  =
  UD_\Lambda-hM_\Lambda+B_\infty(h),
  \qquad
  [B_\infty(h),D_\Lambda]=0.
\]
Writing
\[
  \Delta_\Lambda(h)
  :=
  B_\infty(h)-T_\Lambda^{(0)},
\]
we obtain
\[
  U_{\rm SW}(h)H_\Lambda^{\rm Hub}(h)U_{\rm SW}(h)^\ast
  =
  UD_\Lambda-hM_\Lambda+T_\Lambda^{(0)}+\Delta_\Lambda(h),
  \qquad
  [\Delta_\Lambda(h),D_\Lambda]=0.
\]
Moreover,
\[
  \|\Delta_\Lambda(h)\|_\kappa
  \le
  \widetilde C(d,\kappa)\frac{t^2}{U},
  \qquad
  \widetilde C(d,\kappa):=C_\ast(d,\kappa)C_T(d,\kappa)^2.
\]
Finally, after enlarging \(C_\ast(d,\kappa)\) if necessary, the
generators satisfy
\[
  \sum_{n\ge0}\|S_n(h)\|_\kappa
  \le
  C_\ast(d,\kappa)\frac{|t|}{U}.
\]
\end{cor}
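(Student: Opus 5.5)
The plan is to verify the hypotheses of Proposition~\ref{prop:spec-LS-contraction} for the datum \(B_0=T_\Lambda\) and then read off the conclusions. First I would fix the constant \(C_T(d,\kappa)\). Write \(T_\Lambda\) as the interaction \(\{T_X\}\) supported on bonds \(X=\{x,y\}\), with \(T_{\{x,y\}}=-t\sum_\sigma(c_{x\sigma}^\ast c_{y\sigma}+{\rm h.c.})\). Each such term is even and has norm at most \(C|t|\) (in fact \(2|t|\)). Each site lies in exactly \(2d\) bonds. Hence
\[
  \|T_\Lambda\|_\kappa\le 2d\,e^{2\kappa}\cdot 2|t|,
\]
uniformly in \(L\). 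Lemma~\ref{lem:D-grading-contraction}, applied termwise, then gives \(\|T^{(0)}_\Lambda\|_\kappa\le\|T_\Lambda\|_\kappa\) and \(\|T^{\rm off}_\Lambda\|_\kappa\le 2\|T_\Lambda\|_\kappa\). So one can take \(C_T:=\max\{1,8d e^{2\kappa}\}\), which is independent of \(\Lambda\).

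Next I would check the smallness conditions. Assume \(|t|/U\le\varepsilon_{\rm SW}\le\varepsilon_\ast/(2C_T)\). Then
\[
  \|T^{\rm off}_\Lambda\|_\kappa\le C_T|t|\le \tfrac12\varepsilon_\ast U\le\varepsilon_\ast U,
\]
and the same bound holds for the diagonal part. Together with the standing assumption \(U\ge U_\ast\) and \(|h|\le h_0\), this is exactly the hypothesis of Proposition~\ref{prop:spec-LS-contraction}. Self-adjointness and finite range of \(T_\Lambda\) are immediate. The proposition therefore yields the following: the iteration is well defined, the bounds \eqref{eq:spec-linear-contract}--\eqref{eq:spec-generator-sum-bound} hold, the product unitary \(U_{\rm SW}(h)\) exists, and the conjugated Hamiltonian is \(UD_\Lambda-hM_\Lambda+B_\infty(h)\) with \(B_\infty\) \(D\)-diagonal, so that \([B_\infty(h),D_\Lambda]=0\).

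Finally I would read off the quantitative bounds. Since \((B_0)^{\rm diag}=T^{(0)}_\Lambda\), the operator \(\Delta_\Lambda(h)=B_\infty(h)-T^{(0)}_\Lambda\) commutes with \(D_\Lambda\), because both terms do. By \eqref{eq:spec-B-infty-bound},
\[
  \|\Delta_\Lambda(h)\|_\kappa\le C_\ast\|T^{\rm off}_\Lambda\|_\kappa^2/U\le C_\ast C_T^2 t^2/U.
\]
Similarly, \eqref{eq:spec-generator-sum-bound} gives
\[
  \sum_n\|S_n\|_\kappa\le C_\ast C_T|t|/U.
\]
Absorbing \(C_T\) into \(C_\ast\) gives the stated form of the generator bound.

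There is no real obstacle here; the only point needing care is uniformity. The constants \(C_T\), \(\varepsilon_\ast\) and \(C_\ast\) must not depend on \(L\), \(h\) or \(U\). This holds because the torus bond structure is translation invariant with coordination number \(2d\) for every even \(L\ge 4\). The degenerate small tori (for instance \(L=2\), where bonds may coincide or be counted specially) only change \(C_T\) by a fixed factor, which can be absorbed.
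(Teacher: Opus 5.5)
Your proposal is correct and follows the paper's proof essentially verbatim. The bound \(|t|/U\le\varepsilon_\ast/(2C_T)\) gives both smallness hypotheses of Proposition~\ref{prop:spec-LS-contraction}, and the bounds on \(\Delta_\Lambda(h)\) and on \(\sum_n\|S_n\|_\kappa\) are read off from \eqref{eq:spec-B-infty-bound} and \eqref{eq:spec-generator-sum-bound} using \(\|(T_\Lambda)^{\rm off}\|_\kappa\le C_T|t|\). Your only addition is the explicit choice \(C_T=\max\{1,8de^{2\kappa}\}\), obtained from \(\|T_{\{x,y\}}\|\le 2|t|\), at most \(2d\) bonds per site, and Lemma~\ref{lem:D-grading-contraction}; the paper simply takes this constant as given.
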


\begin{proof}
See Appendix~\ref{app:LS-quant-proofs}, \S\ref{app:proof-small-tU}.
\end{proof}

\begin{rem}\label{rem:tU-vs-J}
In the main comparison theorem, \(t\in\mathbb R\setminus\{0\}\) is fixed
and \(U\to\infty\).  Equivalently,
$
  J_0(U):=\frac{4t^2}{U}\downarrow0.
$
In particular,
$
  \frac{|t|}{U}
  =
  \frac{J_0(U)}{4|t|}
  \xrightarrow[U\to\infty]{}
  0.
$
Thus the perturbative smallness condition
$
  |t|/U\le\varepsilon_{\rm SW}
$
needed to initialise the LS/SW scheme imposes no additional restriction
beyond taking \(U\) sufficiently large for fixed \(t\).  This condition,
as well as the large-\(U\) uniformity required for the bounds on
\(\mathcal I_h\), is absorbed into the final threshold in the main
theorem.  All thresholds are independent of the volume \(\Lambda\).
\end{rem}

\paragraph{Notation for the LS/SW output.}
With \(\Delta_\Lambda(h)\) as in
Corollary~\ref{cor:spec-LS-small-tU}, set
\begin{equation}\label{eq:def-A-Lambda}
  A_\Lambda(h)
  :=
  -hM_\Lambda+T_\Lambda^{(0)}+\Delta_\Lambda(h),
  \qquad
  [A_\Lambda(h),D_\Lambda]=0.
\end{equation}
We also define the \(P\)-block Hamiltonian by
\[
  H_{P,\Lambda}(h):=P A_\Lambda(h)P.
\]
The following sections analyse \(A_\Lambda(h)\), its \(P\)-block, and
their comparison with the  effective Heisenberg reference  model.

\section{Two deformations and the second-order connection}
\label{sec:two-schemes}

\subsection{The \(\xi\)-scheme and the second-order effective term}
\label{subsec:xi-scheme-H2}

The second-order effective spin Hamiltonian is defined using the full
hopping-scale deformation
\[
  H_\Lambda(h;\xi)
  :=
  UD_\Lambda-hM_\Lambda+\xi T_\Lambda,
  \qquad
  \xi\in[0,1].
\]
Assume \(|h|\le h_0\), \(|\xi|\le1\), and that the hypotheses of
Corollary~\ref{cor:spec-LS-small-tU} hold for the \(\xi\)-deformed
Hamiltonian.  Equivalently, it suffices to impose the smallness condition
at \(\xi=1\).  The LS/SW iteration yields a product unitary
\(U_{\rm SW}(\xi)\) and a \(D_\Lambda\)-diagonal correction
\(\Delta_\Lambda(h;\xi)\) such that
\[
  U_{\rm SW}(\xi)H_\Lambda(h;\xi)U_{\rm SW}(\xi)^\ast
  =
  UD_\Lambda-hM_\Lambda+\xi T_\Lambda^{(0)}
  +\Delta_\Lambda(h;\xi),
  \qquad
  [\Delta_\Lambda(h;\xi),D_\Lambda]=0.
\]
Since \(H_\Lambda(h;0)=UD_\Lambda-hM_\Lambda\) is already
\(D_\Lambda\)-diagonal, one has
$
  \Delta_\Lambda(h;0)=0.
$
The first-order coefficient also vanishes, and we write
\begin{equation}\label{eq:xi-expansion-Binf}
  \Delta_\Lambda(h;\xi)
  =
  \sum_{n\ge2}\xi^n(\Delta_\Lambda)^{[n]}_{\xi}(h).
\end{equation}
We define the second-order effective term by
\begin{equation}\label{eq:def-H2}
  H^{(2)}_\Lambda(h)
  :=
  (\Delta_\Lambda)^{[2]}_{\xi}(h)
  =
  \frac12
  \frac{d^2}{d\xi^2}
  \Delta_\Lambda(h;\xi)\Big|_{\xi=0}.
\end{equation}
This is the term which will be identified with the antiferromagnetic
Heisenberg interaction in Section~\ref{sec:H2-Heis}.

For later use, we also record the endpoint remainder
\begin{equation}\label{eq:def-K-xi}
  K_\Lambda^{\xi}(h)
  :=
  \Delta_\Lambda(h;\xi)\big|_{\xi=1}
  -
  H^{(2)}_\Lambda(h).
\end{equation}
Thus, at the physical endpoint,
\[
  \Delta_\Lambda(h)
  =
  H^{(2)}_\Lambda(h)+K_\Lambda^{\xi}(h).
\]

\subsection{The auxiliary \(\lambda\)-deformation}
\label{subsec:lambda-auxiliary-deformation}

We shall also use the auxiliary deformation
\[
  H_\Lambda(h;\lambda)
  :=
  UD_\Lambda-hM_\Lambda+T_\Lambda^{(0)}
  +\lambda (T_\Lambda)^{\rm off},
  \qquad
  \lambda\in[0,1].
\]
This deformation has the same physical endpoint as the \(\xi\)-scheme,
but a different reference point at zero.  Indeed,
\[
  H_\Lambda(h;\lambda)\big|_{\lambda=1}
  =
  H_\Lambda(h;\xi)\big|_{\xi=1}
  =
  UD_\Lambda-hM_\Lambda+T_\Lambda,
\]
whereas
\[
  H_\Lambda(h;\lambda)\big|_{\lambda=0}
  =
  UD_\Lambda-hM_\Lambda+T_\Lambda^{(0)},
  \qquad
  H_\Lambda(h;\xi)\big|_{\xi=0}
  =
  UD_\Lambda-hM_\Lambda.
\]

Under the same smallness assumptions, the LS/SW iteration applied to
\(H_\Lambda(h;\lambda)\) yields a unitary \(U_{\rm SW}(\lambda)\) and a
\(D_\Lambda\)-diagonal correction \(\Delta_\Lambda(h;\lambda)\) such
that
\[
  U_{\rm SW}(\lambda)H_\Lambda(h;\lambda)U_{\rm SW}(\lambda)^\ast
  =
  UD_\Lambda-hM_\Lambda+T_\Lambda^{(0)}
  +\Delta_\Lambda(h;\lambda),
  \qquad
  [\Delta_\Lambda(h;\lambda),D_\Lambda]=0.
\]
Since the \(\lambda=0\) Hamiltonian is already \(D_\Lambda\)-diagonal,
$
  \Delta_\Lambda(h;0)=0.
$
The first-order coefficient vanishes, and we write
\begin{equation}\label{eq:lambda-expansion-Binf}
  \Delta_\Lambda(h;\lambda)
  =
  \sum_{n\ge2}\lambda^n(\Delta_\Lambda)^{[n]}_{\lambda}(h).
\end{equation}

The \(\lambda\)-deformation plays a specific technical role in this paper.
It is not used to define the second-order effective term; that role
belongs to the \(\xi\)-scheme.  Instead, the \(\lambda\)-scheme is used
to compare the full physical endpoint correction
\[
  \Delta_\Lambda(h)
  =
  \Delta_\Lambda(h;\lambda)\big|_{\lambda=1}
\]
with the second-order coefficient \(H_\Lambda^{(2)}(h)\) defined through
the \(\xi\)-scheme.  This comparison is one of the main technical inputs
for the control of the \(P\)-block remainder, and the required estimates
are carried out in Appendix~\ref{app:P-block-remainder}.

\begin{rem}[The two coefficient expansions are different]
\label{rem:lambda-xi-endpoint-no-mix}
The \(\lambda\)- and \(\xi\)-deformations have the same physical endpoint,
\[
  \Delta_\Lambda(h;\lambda)\big|_{\lambda=1}
  =
  \Delta_\Lambda(h;\xi)\big|_{\xi=1}
  =
  \Delta_\Lambda(h),
\]
but their Taylor coefficients at the origin are different in general.
In particular,
$
  H_\Lambda^{(2)}(h)
  =
  (\Delta_\Lambda)^{[2]}_{\xi}(h)
$
is defined through the \(\xi\)-scheme and should not be identified with
\((\Delta_\Lambda)^{[2]}_{\lambda}(h)\).
\end{rem}

\subsection{The second-order coefficient in the \(\xi\)-scheme}
\label{subsec:xi-second-order-coefficient}

\begin{lem}[Second-order diagonal correction in the \(\xi\)-scheme]
\label{lem:H2-from-USW-Hubbard-real-xi-fixed}
Let \(\Delta_\Lambda(h;\xi)\) be the \(D_\Lambda\)-diagonal LS/SW
correction in the \(\xi\)-scheme, defined by
\eqref{eq:xi-expansion-Binf}.  Then
\[
  (\Delta_\Lambda)^{[2]}_{\xi}(h)
  =
  \frac12 \left(
  \ad_{\mathcal I_h\left((T_\Lambda)^{\rm off}\right)}\left( (T_\Lambda)^{\rm off}\right)
  \right)^{\rm diag}.
\]
\end{lem}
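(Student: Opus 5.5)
Following the LS/SW iteration applied to \(H_\Lambda(h;\xi)=UD_\Lambda-hM_\Lambda+\xi T_\Lambda\), we track the \(\xi\)-order of every object and extract the coefficient of \(\xi^2\) in the limiting diagonal part. The initial datum is \(B_0(\xi)=\xi T_\Lambda\), so \((B_0)^{\rm off}=\xi (T_\Lambda)^{\rm off}\) and \((B_0)^{\rm diag}=\xi T_\Lambda^{(0)}\). The first generator is \(S_0=\mathcal I_h\bigl(\xi(T_\Lambda)^{\rm off}\bigr)=\xi\,\mathcal I_h\bigl((T_\Lambda)^{\rm off}\bigr)=:\xi S\). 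Here we use linearity of \(\mathcal I_h\), which is clear from its grade-wise Neumann-series definition in Appendix~\ref{app:Ih}. By \eqref{eq:adUDh-Ih-local} we have \(\ad_{S}(UD_\Lambda-hM_\Lambda)=-(T_\Lambda)^{\rm off}\).

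\textbf{First step, to second order.} We expand \(e^{\xi S}H(\xi)e^{-\xi S}\) by BCH; Lemma~\ref{lem:BCH-summability-kappa} justifies the expansion and shows the tail is \(O(\xi^3)\) in \(\|\cdot\|_\kappa\). This gives
\[
  B_1=\xi T_\Lambda+\xi\,\ad_S(UD_\Lambda-hM_\Lambda)+\xi^2\ad_S(T_\Lambda)+\tfrac{\xi^2}{2}\ad_S^{\,2}(UD_\Lambda-hM_\Lambda)+O(\xi^3).
\]
The second term cancels \(\xi(T_\Lambda)^{\rm off}\). The fourth equals \(-\tfrac{\xi^2}{2}\ad_S\bigl((T_\Lambda)^{\rm off}\bigr)\). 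Hence
\[
  B_1=\xi T_\Lambda^{(0)}+\xi^2\Bigl(\ad_S(T_\Lambda^{(0)})+\tfrac12\ad_S\bigl((T_\Lambda)^{\rm off}\bigr)\Bigr)+O(\xi^3).
\]
Its \(\xi^2\)-diagonal part is \(\bigl(\ad_S(T^{(0)}_\Lambda)\bigr)^{\rm diag}+\tfrac12\bigl(\ad_S((T_\Lambda)^{\rm off})\bigr)^{\rm diag}\).

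\textbf{The extra diagonal term vanishes.} Because \(S\) has only nonzero \(D\)-grades and \(T^{(0)}\) has grade \(0\), the commutator \(\ad_S(T^{(0)})\) has only nonzero grades. Grade additivity \(P_{m+k}\cdot P_m\) gives \(A^{(k)}B^{(0)}\in\) grade \(k\), so \(\bigl(\ad_S(T^{(0)})\bigr)^{\rm diag}=0\). We must also check that \(\mathcal I_h\) preserves the \(D\)-grade; this is stated in the description of \(\mathcal I_h\).

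\textbf{Later steps do not contribute at order \(\xi^2\).} At step \(n\ge1\), the off-diagonal part \((B_1)^{\rm off}\) is \(O(\xi^2)\), so \(S_1=O(\xi^2)\). Conjugating by \(e^{S_1}\) changes \(B_1\) by \(\ad_{S_1}(UD-hM)+O(\|S_1\|\,\|B_1\|)\). The first piece is purely off-diagonal (it equals \(-(B_1)^{\rm off}\)). The remainder is \(O(\xi^3)\), since \(\|B_1\|=O(\xi)\). By induction, each \(S_n=O(\xi^{n+1})\), and each subsequent step changes the diagonal part only at order \(\xi^3\) or higher. Uniformity in \(n\) is guaranteed by \eqref{eq:spec-diag-increment} together with the geometric contraction \eqref{eq:spec-linear-contract} applied to \(\xi T_\Lambda\). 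In particular, \(\|(B_n)^{\rm off}\|_\kappa\le q^{n-1}\|(B_1)^{\rm off}\|_\kappa=O(q^{n-1}\xi^2)\), so \(\sum_{n\ge1}\|B_{n+1}^{\rm diag}-B_n^{\rm diag}\|_\kappa=O(\xi^4/U)\). This yields \(\Delta_\Lambda(h;\xi)=\tfrac{\xi^2}{2}\bigl(\ad_S((T_\Lambda)^{\rm off})\bigr)^{\rm diag}+O(\xi^3)\) in norm.

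\textbf{Identifying the coefficient.} In finite volume, each \(B_n(\xi)\) is a polynomial (hence analytic) in \(\xi\), and the limit is uniform on a complex \(\xi\)-disk, by the same bounds with \(|\xi|\le1\). Therefore \(\Delta_\Lambda(h;\xi)\) is analytic, and its Taylor coefficient \((\Delta_\Lambda)^{[2]}_\xi\) coincides with the \(\xi^2\)-coefficient obtained above.

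The main obstacle is this last point together with the \(O(\xi^3)\) uniformity over infinitely many steps. Two things must be checked: that the contraction constants of Proposition~\ref{prop:spec-LS-contraction} remain valid for complex \(\xi\) (where self-adjointness is lost but the norm estimates are unaffected), and that the \(O(\xi^{n+1})\) order of \(S_n\) propagates. I would handle this by verifying, inductively, the bound \(\|(B_n)^{\rm off}\|_\kappa\le C q^{n-1}|\xi|^2\) for \(n\ge1\), directly from \eqref{eq:spec-off-step-bound}.
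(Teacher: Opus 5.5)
Your proposal is correct and follows the paper's proof essentially step by step. The shared ingredients are:
\begin{itemize}
\item a second-order BCH expansion of the first conjugation, using $\ad_S^{\,2}(UD_\Lambda-hM_\Lambda)=-\ad_S((T_\Lambda)^{\rm off})$;
\item the vanishing of $(\ad_S(T_\Lambda^{(0)}))^{\rm diag}$ by the $D$-grading;
\item the bound $\|(B_n)^{\rm off}\|_\kappa\le Cq^{n-1}|\xi|^2$ combined with \eqref{eq:spec-diag-increment}, which shows that the later steps contribute only $O(|\xi|^4)$ to the diagonal part.
\end{itemize}
Your explicit analyticity argument for identifying the Taylor coefficient is a useful addition that the paper leaves implicit. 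One small correction: $B_n(\xi)$ is entire in $\xi$ rather than polynomial, since it involves $e^{\xi S}$, but entireness is all the argument needs.
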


\begin{proof}
See Appendix~\ref{app:two-schemes-second-order},
\S\ref{app:proof-H2-xi}.
\end{proof}

\begin{lem}[Second-order comparison of the auxiliary and \(\xi\)-deformations]
\label{lem:bridge-lambda-xi-second-order}
Assume \(|h|\le h_0\) and \(U\ge U_\ast\).  Let
\(\Delta_\Lambda(h;\lambda)\) be the \(D_\Lambda\)-diagonal LS/SW
correction for the auxiliary \(\lambda\)-deformation
\eqref{eq:lambda-expansion-Binf}, and let
\(\Delta_\Lambda(h;\xi)\) be the \(D_\Lambda\)-diagonal LS/SW correction
for the \(\xi\)-deformation \eqref{eq:xi-expansion-Binf}.  Then
\[
  (\Delta_\Lambda)^{[2]}_{\lambda}(h)
  =
  (\Delta_\Lambda)^{[2]}_{\xi}(h)
  +
  \mathcal R^{(2)}_\Lambda(h)
  +
  \mathcal E^{(2)}_\Lambda(h),
\]
where
\[
  \mathcal R^{(2)}_\Lambda(h)
  :=
  \frac12 \left(
  \ad^2_{\mathcal I_h\bigl((T_\Lambda)^{\rm off}\bigr)}
  (T_{\Lambda}^{(0)})\right)^{\rm diag},
\]
and
\[
  \mathcal E^{(2)}_\Lambda(h)
  :=
  (\Delta_\Lambda)^{[2]}_{\lambda}(h)
  -
  (\Delta_\Lambda)^{[2]}_{\xi}(h)
  -
  \mathcal R^{(2)}_\Lambda(h).
\]
Thus \(\mathcal R^{(2)}_\Lambda(h)\) is the explicit third-order
commutator correction caused by the diagonal hopping
\(T_\Lambda^{(0)}\), while \(\mathcal E^{(2)}_\Lambda(h)\) is the
remaining higher-order residual in this second-order comparison.

Both \(\mathcal R^{(2)}_\Lambda(h)\) and
\(\mathcal E^{(2)}_\Lambda(h)\) are \(D_\Lambda\)-diagonal.  Moreover,
after increasing \(U_\ast\) if necessary, there exist constants
\(C_{R,0}(d)\ge1\) and \(C_{E,0}(d)\ge1\) such that, uniformly in
\(\Lambda\) and \(|h|\le h_0\),
\[
  \|\mathcal R^{(2)}_\Lambda(h)\|_0
  \le
  C_{R,0}(d)\frac{|t|^3}{U^2},
\qquad
  \|\mathcal E^{(2)}_\Lambda(h)\|_0
  \le
  C_{E,0}(d)\frac{|t|^4}{U^3}.
\]
\end{lem}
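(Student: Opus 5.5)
The decomposition in the statement is an identity by the definition of \(\mathcal E^{(2)}_\Lambda(h)\). What has to be proved is therefore threefold: both \(\mathcal R^{(2)}_\Lambda(h)\) and \(\mathcal E^{(2)}_\Lambda(h)\) are \(D_\Lambda\)-diagonal, \(\mathcal R^{(2)}_\Lambda(h)\) obeys the \(|t|^3/U^2\) bound, and \(\mathcal E^{(2)}_\Lambda(h)\) obeys the \(|t|^4/U^3\) bound.

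\emph{Diagonality and the bound on \(\mathcal R^{(2)}\).} Diagonality of \(\mathcal R^{(2)}_\Lambda(h)\) holds by construction. For \(\mathcal E^{(2)}_\Lambda(h)\), I would note that \(\Delta_\Lambda(h;\lambda)\) is diagonal for every \(\lambda\), so each of its Taylor coefficients is diagonal. Lemma~\ref{lem:H2-from-USW-Hubbard-real-xi-fixed} shows that \((\Delta_\Lambda)^{[2]}_\xi(h)\) is diagonal, hence the difference \(\mathcal E^{(2)}_\Lambda(h)\) is diagonal. The bound on \(\mathcal R^{(2)}_\Lambda(h)\) follows by applying Lemma~\ref{lem:spec-com-bound} twice with \(\kappa=0\), together with Lemma~\ref{lem:D-grading-contraction}. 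This uses \(\|\mathcal I_h((T_\Lambda)^{\rm off})\|_0\le 4s\,C_T|t|/U\) from \eqref{eq:spec-S-bound-simplified} and \(\|T^{(0)}_\Lambda\|_0\le C_T|t|\). Since the support sizes stay bounded (\(s\le 6\)), the constant depends only on \(d\).

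\emph{Order-by-order expansion of the \(\lambda\)-scheme.} For \(\mathcal E^{(2)}_\Lambda(h)\), I would expand the \(\lambda\)-scheme iteration explicitly to order \(\lambda^2\). Write \(I:=\mathcal I_h((T_\Lambda)^{\rm off})\).

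The first step uses \(S_0=\lambda I\). By BCH and \(\ad_{S_0}(UD_\Lambda-hM_\Lambda)=-\lambda (T_\Lambda)^{\rm off}\), one gets
\[
  B_1=T^{(0)}_\Lambda+\lambda W+\lambda^2\Bigl(\tfrac12[I,(T_\Lambda)^{\rm off}]+\tfrac12\ad_I^2(T^{(0)}_\Lambda)\Bigr)+O(\lambda^3),
  \qquad W:=[I,T^{(0)}_\Lambda].
\]
Here \(W\) has only nonzero grades, because \(I\) is off-diagonal and \(T^{(0)}_\Lambda\) has grade \(0\). The diagonal part of the \(\lambda^2\) term is exactly \((\Delta_\Lambda)^{[2]}_\xi(h)+\mathcal R^{(2)}_\Lambda(h)\), by Lemma~\ref{lem:H2-from-USW-Hubbard-real-xi-fixed}.

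Every later step \(n\ge1\) has generator \(S_n=\lambda\,\mathcal I_h(W_n)+O(\lambda^2)\), where \(W_n\) is the \(O(\lambda)\) off-diagonal part of \(B_n\). Its only \(O(\lambda)\) action is \([S_n,T^{(0)}_\Lambda]\), which is again off-diagonal and becomes \(W_{n+1}\). Diagonal \(\lambda^2\) contributions at step \(n\) arise only from \(\tfrac12[S_n,\lambda W_n]^{\rm diag}\). The other \(O(\lambda^2)\) pieces of \(S_n\) act on \(UD_\Lambda-hM_\Lambda\) and produce only terms that cancel, or terms that remain off-diagonal.

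Using \(\|W_n\|_0\le C|t|(C|t|/U)^n\) and \(\|\mathcal I_h(W_n)\|_0\le (C/U)\|W_n\|_0\), the step-\(n\) contribution is bounded by \(Ct^2(C|t|/U)^{2n-1}\). Summing over \(n\ge1\) gives a geometric series dominated by its \(n=1\) term, namely \(C t^4/U^3\). The \(O(\lambda^2)\) diagonal pieces generated inside step \(1\) beyond those listed (for example \(\tfrac12\ad_{S_0}^2\) acting on the \(\lambda\)-terms) are of order \(\lambda^3\) and do not contribute.

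\emph{Main obstacle.} The hardest part will be making this "track the \(\lambda\)-order in each step" argument rigorous, uniformly in \(L\), and justifying that the infinite product commutes with taking the \(\lambda^2\) coefficient. I would first try an analytic route. Every step is entire in \(\lambda\) in finite volume, and the smallness hypotheses of Proposition~\ref{prop:spec-LS-contraction} persist for complex \(|\lambda|\le r\) with \(r>1\) fixed, since \(U\ge U_\ast\) can be increased. The contraction estimates then give locally uniform convergence of the iteration, so the \(\lambda^2\) coefficient of the limit is the limit of the \(\lambda^2\) coefficients of the \(B_n\). Each of those is a finite sum of nested commutators that can be estimated in \(\|\cdot\|_0\) by Lemma~\ref{lem:spec-com-bound} and Lemma~\ref{lem:D-grading-contraction}. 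If the Cauchy-estimate route loses the precise power of \(U\), I would fall back on the explicit recursion for \(W_n\) above. That recursion gives the power \(|t|^4/U^3\) directly, and only needs analyticity to identify the coefficients.
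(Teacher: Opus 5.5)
The identity, the diagonality argument, the bound on $\mathcal R^{(2)}_\Lambda(h)$ and the first-step computation of the $\lambda^2$ coefficient all agree with the paper. The gap is in the bound on $\mathcal E^{(2)}_\Lambda(h)$, which you rest on the explicit $W_n$ recursion, or on estimating the $\lambda^2$-coefficients of $B_n$ term by term as nested commutators.

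First, your list of contributions is incomplete. Since $D_n=T^{(0)}_\Lambda+O(\lambda^2)$ and $S_n=\lambda\,\mathcal I_h(W_n)+O(\lambda^2)$, the BCH term $\frac12\ad_{S_n}^2(D_n)$ also contributes $\frac12\lambda^2\bigl(\ad^2_{\mathcal I_h(W_n)}T^{(0)}_\Lambda\bigr)^{\rm diag}$ at every step $n\ge1$. This is the step-$n$ analogue of $\mathcal R^{(2)}_\Lambda$. It is smaller than your term by a factor $|t|/U$, so it is harmless, but your claim that only $\frac12[S_n,\lambda W_n]^{\rm diag}$ contributes is false.

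Second, and more seriously, the bound $\|W_n\|_0\le C|t|(C|t|/U)^n$ with $C$ independent of $n$ does not follow from Lemma~\ref{lem:spec-com-bound} and Lemma~\ref{lem:Ih-interaction-bound}.
\begin{itemize}
\item Each commutator with $T^{(0)}_\Lambda$ enlarges the support, so $s(W_n)$ grows linearly in $n$.
\item The constants $2(s(A)+s(B))$ and $2s/(U-h_0s)$ then make the one-step ratio of order $n^2|t|/U$, not a fixed multiple of $|t|/U$.
\item The Neumann-series definition of $\mathcal I_h(W_n)$ is not even available once $h_0\,s(W_n)\ge U$.
\end{itemize}
So the recursion gives no summable bound uniform in $n$ and $\Lambda$. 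There is also an arithmetic slip: $(C/U)\|W_n\|_0^2$ is of order $|t|(C|t|/U)^{2n+1}$, whereas your $Ct^2(C|t|/U)^{2n-1}$ has first term $|t|^3/U$.

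The missing step is to take the $n$-uniform geometric decay from the contraction estimate \eqref{eq:spec-linear-contract} of Proposition~\ref{prop:spec-LS-contraction}, rather than from a hand-made recursion. This is what the paper does, and your worry that the Cauchy route loses a power of $U$ is unfounded.
\begin{itemize}
\item Because the first step cancels $\lambda(T_\Lambda)^{\rm off}$, one has $\|(B_1(h;\lambda))^{\rm off}\|_0\le C|\lambda|\,t^2/U$ on a small complex disc $|\lambda|\le r_0$. Indeed, $\lambda\,\ad_{\mathcal I_h((T_\Lambda)^{\rm off})}(T^{(0)}_\Lambda)$ and the $\lambda^{\ge2}$ BCH tail are both $O(t^2/U)$, once the $UD_\Lambda-hM_\Lambda$ tail is rewritten through $\ad_{\mathcal I_h((T_\Lambda)^{\rm off})}(UD_\Lambda-hM_\Lambda)=-(T_\Lambda)^{\rm off}$.
\item For $n\ge1$, \eqref{eq:spec-linear-contract} and \eqref{eq:spec-diag-increment} then give $\|\delta_n(h;\lambda)\|_0\le C_\ast\|(B_n(h;\lambda))^{\rm off}\|_0^2/U\le Cq^{2(n-1)}|t|^4/U^3$ uniformly on the disc.
\item Cauchy's estimate on $|\lambda|=r_0$ gives the same bound for $\delta_n^{[2]}(h)$ with no loss.
\item Uniform convergence on the disc justifies $\mathcal E^{(2)}_\Lambda(h)=\sum_{n\ge1}\delta_n^{[2]}(h)$, and summing the geometric series yields $C_{E,0}(d)|t|^4/U^3$.
\end{itemize}
This uses only the already established one-step and contraction estimates, and it never requires you to control the nested-commutator structure of the later steps.
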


\begin{proof}
See Appendix~\ref{app:two-schemes-second-order},
\S\ref{app:proof-bridge-2nd}.
\end{proof}

\section{The second-order \(P\)-block and its Heisenberg form}
\label{sec:H2-Heis}

\subsection{Spin representation of the \(D_\Lambda=0\) half-filled block}
\label{subsec:spin-rep}

\paragraph{Canonical identification with the spin Hilbert space.}
Recall the spin Hilbert space \(\cH^{\rm spin}_\Lambda\) introduced in
Subsection~\ref{subsec:ref-heis-pressure}.  In this subsection we fix
the explicit unitary identification between this spin space and the
singly occupied subspace of the half-filled fermionic Hilbert space.

Fix once and for all an ordering
$
  \Lambda=\{x_1,\dots,x_{|\Lambda|}\}.
$
We view
\[
  \cH^{\rm spin}_\Lambda
  =
  \bigotimes_{j=1}^{|\Lambda|}\mathbb C^2
\]
with the \(j\)-th tensor factor assigned to the site \(x_j\), and with
standard basis
$
  |\!\uparrow\, \rangle_{x_j}
  $
  and 
  $
  |\!\downarrow\, \rangle_{x_j}.
$
On the fermionic side we work on \(\cH^{\rm hf}_\Lambda\) and recall
$
  P
  :=
  \mathbbm 1_{\{D_\Lambda=0\}} \restriction _{\cH^{\rm hf}_\Lambda}.
$
For \(\boldsymbol\sigma\in\{\uparrow,\downarrow\}^{\Lambda}\), define
\[
  |\boldsymbol\sigma\rangle_{\rm f}
  :=
  c_{x_1\sigma_{x_1}}^\ast
  \cdots
  c_{x_{|\Lambda|}\sigma_{x_{|\Lambda|}}}^\ast
  |0\rangle .
\]
Then
$
  \{
    |\boldsymbol\sigma\rangle_{\rm f}
  \}_{\boldsymbol\sigma\in\{\uparrow,\downarrow\}^{\Lambda}}
$
is an orthonormal basis of \(P\cH^{\rm hf}_\Lambda\).  We define the
unitary identification
\begin{equation}\label{eq:def-U-spin-identification}
  \mathcal U_\Lambda:
  P\cH^{\rm hf}_\Lambda
  \longrightarrow
  \cH^{\rm spin}_\Lambda
\end{equation}
by
\[
  \mathcal U_\Lambda
  |\boldsymbol\sigma\rangle_{\rm f}
  :=
  \bigotimes_{j=1}^{|\Lambda|}
  |\sigma_{x_j}\rangle_{x_j}.
\]

\paragraph{Spin operators and bond projector.}
Let \(\sigma^{(1)},\sigma^{(2)},\sigma^{(3)}\) be the Pauli matrices.
On \(\cH^{\rm hf}_\Lambda\), define the fermionic local spin operators by
\[
  S_x^{(a)}
  :=
  \frac12
  \sum_{\alpha,\beta\in\{\uparrow,\downarrow\}}
  c_{x\alpha}^\ast
  (\sigma^{(a)})_{\alpha\beta}
  c_{x\beta},
  \qquad
  a\in\{1,2,3\},
\]
and set
$
  \boldsymbol S_x
  :=
  \bigl(S_x^{(1)},S_x^{(2)},S_x^{(3)}\bigr).
$

Let
$
  \mathcal U_\Lambda:
  P\cH_\Lambda^{\rm hf}
  \longrightarrow
  \cH_\Lambda^{\rm spin}
$
be the canonical identification of the singly occupied fermionic sector
with the spin Hilbert space.  To state the identification without
ambiguity, let \(S_{x,{\rm spin}}^{(a)}\) denote the usual spin-\(\frac12\)
operator on \(\cH_\Lambda^{\rm spin}\) acting on the tensor factor at
\(x\).  Then, with
$
  \boldsymbol S_{x,{\rm spin}}
  :=
  \bigl(
    S_{x,{\rm spin}}^{(1)},
    S_{x,{\rm spin}}^{(2)},
    S_{x,{\rm spin}}^{(3)}
  \bigr),
$
one has
\begin{equation}\label{eq:spin-identification}
  \mathcal U_\Lambda
  (P\boldsymbol S_xP)
  \mathcal U_\Lambda^\ast
  =
  \boldsymbol S_{x,{\rm spin}},
  \qquad
  x\in\Lambda .
\end{equation}
After this identification is fixed, we suppress the subscript
\({\rm spin}\) and write simply \(S_x^{(a)}\) and \(\boldsymbol S_x\) for
the corresponding spin-space operators whenever the ambient Hilbert space
is clear.

For a nearest-neighbour bond \(\langle x,y\rangle\), let \(P_{xy}\) be
the swap operator on \(\cH^{\rm spin}_\Lambda\), defined on the two tensor
factors at \(x\) and \(y\) by
\[
  P_{xy}
  \bigl(
    |\sigma\rangle_x\otimes|\tau\rangle_y
  \bigr)
  :=
  |\tau\rangle_x\otimes|\sigma\rangle_y
\]
and extended as the identity on all other tensor factors.  Define
\begin{equation}\label{eq:def-Bxy-spin}
  B_{xy}
  :=
  \frac12(\mathbbm 1-P_{xy}).
\end{equation}
The standard two-spin identities give
\begin{equation}\label{eq:bond-density-identities}
  P_{xy}
  =
  2\boldsymbol S_x\cdot\boldsymbol S_y+\frac12,
  \qquad
  B_{xy}
  =
  \frac14-\boldsymbol S_x\cdot\boldsymbol S_y,
\end{equation}
where
\[
  \boldsymbol S_x\cdot\boldsymbol S_y
  :=
  \sum_{a=1}^3 S_x^{(a)}S_y^{(a)}.
\]
In particular, on \(\mathbb C^2\otimes\mathbb C^2\), \(B_{xy}\) is the
orthogonal projection onto the singlet subspace.

\paragraph{Heisenberg Hamiltonian and the bond projector.}
Recall the Heisenberg reference Hamiltonian with staggered field from
Subsection~\ref{subsec:ref-heis-pressure}.  With
\(\mathscr B_\Lambda\) denoting the set of unordered nearest-neighbour
bonds in \(\Lambda\), it is given by
\[
  H_\Lambda^{\rm Heis}(J,h)
  =
  J\sum_{\{x,y\}\in\mathscr B_\Lambda}
  \left(
    \boldsymbol S_x\cdot\boldsymbol S_y-\frac14
  \right)
  -
  hM_\Lambda^{\rm spin}.
\]
The point used below is that the interaction term can be written in
terms of the singlet bond projector \(B_{xy}\).  Indeed, by
\eqref{eq:bond-density-identities},
\begin{align}
  H_\Lambda^{\rm Heis}(J,h)
  =
  -J\sum_{\{x,y\}\in\mathscr B_\Lambda}B_{xy}
  -
  hM_\Lambda^{\rm spin}. \label{eq:def-Heis-J}
\end{align}
\begin{prop}[Heisenberg form of the second-order \(P\)-block Hamiltonian]
\label{prop:H2-to-Heis}
Work on \(\cH^{\rm hf}_\Lambda\) and let
$
  P:=\mathbbm 1_{\{D_\Lambda=0\}}\restriction_{\cH_\Lambda^{\rm hf}}.
$
Let
$
  \mathcal U_\Lambda:
  P\cH^{\rm hf}_\Lambda
  \longrightarrow
  \cH^{\rm spin}_\Lambda
$
be the unitary identification in
\eqref{eq:def-U-spin-identification}.  Let
\[
  H^{(2)}_\Lambda(h):=(\Delta_\Lambda)^{[2]}_\xi(h)
\]
be the second-order diagonal coefficient in the \(\xi\)-scheme.  Then,
for every \(|h|\le h_0\), one has on \(\cH^{\rm spin}_\Lambda\)
\begin{equation}\label{eq:H2-to-Heis-prop}
  \mathcal U_\Lambda
  P\bigl(-hM_\Lambda+H^{(2)}_\Lambda(h)\bigr)P
  \mathcal U_\Lambda^\ast
  =
  H^{\rm Heis}_\Lambda\bigl(J(h),h_{\rm eff}(h)\bigr),
\end{equation}
where
\begin{equation}\label{eq:def-J-heff}
  J(h)
  :=
  \frac{4t^2U}{U^2-h^2},
  \qquad
  h_{\rm eff}(h)
  :=
  h-\frac{4dht^2}{U^2-h^2}.
\end{equation}

\end{prop}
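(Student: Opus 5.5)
The plan is to start from the explicit formula of Lemma~\ref{lem:H2-from-USW-Hubbard-real-xi-fixed},
\[
  H^{(2)}_\Lambda(h)=\tfrac12\bigl(\ad_{S}(T_\Lambda^{\rm off})\bigr)^{\rm diag},
  \qquad S:=\mathcal I_h\bigl((T_\Lambda)^{\rm off}\bigr),
\]
and compute its \(P\)-block directly. Since \(P=P_0\) and \(D_\Lambda\) takes only nonnegative values, only grade \(+1\) pieces act on \(P\cH^{\rm hf}_\Lambda\), and only grade \(-1\) pieces bring the result back to it. So I will first reduce to
\[
  P H^{(2)}_\Lambda(h) P=\tfrac12\Bigl(P S^{(-1)}T^{(+1)}P-P T^{(-1)}S^{(+1)}P\Bigr).
\]
Next I decompose \(T_\Lambda=\sum_{b\in\mathscr B_\Lambda}T_b\) bondwise. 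By the interaction-level convention, \(S=\sum_b S_b\) with \(S_b=\mathcal I_h(T_b^{\rm off})\) supported on \(b\).

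The first key step is to kill the cross terms \(b\neq b'\). A grade \(+1\) move \(T_{b'}^{(+1)}\) out of a singly occupied configuration creates exactly one empty site and one doubly occupied site, both on \(b'\). A grade \(-1\) operator supported on \(b\) can restore single occupancy only if both defects lie in \(b\). Hence \(b=b'\). The same argument applies with the roles of \(S\) and \(T\) exchanged. The \(P\)-block therefore becomes a sum of two-site terms \(\tfrac12 P[S_b,T_b]P\).

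The second step is to make \(\mathcal I_h\) explicit on a single bond. Terms of \(M_\Lambda\) supported off \(b\) are even and commute with \(T_b\) and \(S_b\). So \(S_b\) inverts \(\ad_{UD_b-hM_b}\), with \(M_b=\eta_xS^{(3)}_x+\eta_yS^{(3)}_y\), on the two-site space. There \(S_b^{(+1)}\) acts on the intermediate states as \(T_b^{(+1)}\) divided by the energy gap. Starting from \(|\sigma\rangle_x|\bar\sigma\rangle_y\) (hopping requires antiparallel spins), the initial field energy is \(-h\eta_x(\sigma-\bar\sigma)/2=\mp h\), while the intermediate state (one empty site, one doublon) has \(M_b=0\). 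The gap is therefore \(U\pm h\), with the sign fixed by \(\eta_x S^{(3)}_x=\pm\tfrac12\).

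The third step is the two-site computation (the content of Appendix~\ref{app:spin-two-site}). Here I would check that the two second-order paths (exchange and return) give, after applying \(\mathcal U_\Lambda\), the operator
\[
  -2t^2\Bigl(\tfrac1{U-h}+\tfrac1{U+h}\Bigr)\tfrac12 B_{xy}\;-\;\text{a staggered } S^{(3)} \text{ term with coefficient } \propto t^2\Bigl(\tfrac1{U-h}-\tfrac1{U+h}\Bigr).
\]
Writing the denominators as \(1/(U\mp h)\) and splitting into even and odd parts in \(h\) yields \(\tfrac{4t^2U}{U^2-h^2}B_{xy}\), which gives \(J(h)\). It also yields an odd part proportional to \(\frac{2ht^2}{U^2-h^2}\,\eta_x(S^{(3)}_x-S^{(3)}_y)\); note \(\eta_y=-\eta_x\), so \(\eta_x(S^{(3)}_x-S^{(3)}_y)=\eta_xS^{(3)}_x+\eta_yS^{(3)}_y\). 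The sign conventions and the factor \(\tfrac12\) must be tracked carefully here, together with the fermionic signs in the hopping, using \eqref{eq:bond-density-identities} and \eqref{eq:spin-identification}.

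Summing over bonds, each site lies in \(2d\) bonds. The odd parts therefore sum to \(\frac{4dht^2}{U^2-h^2}M^{\rm spin}_\Lambda\), and adding \(-hM_\Lambda\) produces \(-h_{\rm eff}(h)M_\Lambda^{\rm spin}\). The even parts combine via \eqref{eq:def-Heis-J} into \(-J(h)\sum_b B_{xy}\).

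The main obstacle is the explicit two-site calculation in step three: getting the exact coefficients and signs. Specifically, I need to confirm that the odd-in-\(h\) part is purely the staggered field term and carries no residual bond operator. My first approach would be to write all four antiparallel basis states explicitly and compute the \(2\times2\) matrix on the \(S^{(3)}_{\rm tot}=0\) subspace. Its diagonal entries will have the form \(-2t^2/(U\mp h)\) and its off-diagonal entries \(+t^2\bigl(\tfrac1{U-h}+\tfrac1{U+h}\bigr)\). I would then match this matrix to \(-J(h)B_{xy}\) plus the field term.
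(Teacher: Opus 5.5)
Your proposal is correct and follows the paper's proof essentially step for step. You start from Lemma~\ref{lem:H2-from-USW-Hubbard-real-xi-fixed}, decompose bondwise and kill the cross terms by the defect-location argument of Lemma~\ref{lem:cross-bond-vanish}, do the two-site computation with energy denominators $U\pm h$ as in Lemma~\ref{lem:two-site-computation}, and sum using that each site lies in $2d$ bonds.

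One bookkeeping slip: your intermediate display $-2t^2\bigl(\tfrac1{U-h}+\tfrac1{U+h}\bigr)\tfrac12 B_{xy}$ equals $-\tfrac12 J(h)B_{xy}$, not $-J(h)B_{xy}$. This is harmless, because the matrix entries you state at the end are correct. Matched against $B_{xy}$, whose off-diagonal entries are $-\tfrac12$, they give exactly $J(h)=\tfrac{4t^2U}{U^2-h^2}$ and the field coefficient $\tfrac{2ht^2}{U^2-h^2}$.
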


\begin{proof}
See Appendix~\ref{app:spin-two-site},
\S~\ref{app:proof-H2-to-Heis}.
\end{proof}

\begin{rem}[Intermediate effective parameters]
\label{rem:intermediate-effective-parameters}
The parameters \(J(h)\) and \(h_{\rm eff}(h)\) arise from the exact
second-order computation in the defect-free \(P\)-block.  They should be
viewed as renormalized parameters of the intermediate effective spin
Hamiltonian appearing in the proof.  The main comparison theorem, however,
is stated with the Heisenberg reference model
$
  H_\Lambda^{\rm Heis}(J_0(U),h)
 $
 with
 $
  J_0(U):=\frac{4t^2}{U}.
$
For \(|h|\le h_0\) and \(U>h_0\), one has the elementary bounds
\[
  |J(h)-J_0(U)|
  \le
  C(h_0)|t|^2\frac{1}{U^3},
\qquad
  |h_{\rm eff}(h)-h|
  \le
  C(d,h_0)|t|^2\frac{1}{U^2}.
\]
Thus the passage from the intermediate effective spin Hamiltonian
$
  H_\Lambda^{\rm Heis}(J(h),h_{\rm eff}(h))
$
to the Heisenberg reference model
$
  H_\Lambda^{\rm Heis}(J_0(U),h)
$
amounts to a small renormalized-parameter mismatch on fixed field windows.
This mismatch is controlled later in the \(P\)-block-to-reference pressure
comparison.  The final magnetisation estimate is then obtained from the
assembled pressure comparison by the fixed-window convexity argument.
\end{rem}

\section{Comparison of the \(P\)-block with the Heisenberg reference model}
\label{sec:P-block-Heisenberg}

The purpose of this section is to compare the effective Hamiltonian on
the \(P\)-block with the Heisenberg reference Hamiltonian used in the
pressure comparison theorem.  All estimates in this section are uniform
for
$
  |h|\le h_0.
$
No positivity of the field is used here; the restriction to fixed
positive field windows enters later, when pressure comparison is converted
into magnetisation comparison by convexity.

Recall from \eqref{eq:def-A-Lambda} that the LS/SW output is
\[
  A_\Lambda(h)
  =
  -hM_\Lambda+T_\Lambda^{(0)}+\Delta_\Lambda(h),
  \qquad
  [A_\Lambda(h),D_\Lambda]=0.
\]
The effective \(P\)-block Hamiltonian is
\[
  H_{P,\Lambda}(h)
  :=
  PA_\Lambda(h)P.
\]
Since
$
  PT_\Lambda^{(0)}P=0,
$
we have
\begin{equation}\label{eq:P-block-Hamiltonian-basic}
  H_{P,\Lambda}(h)
  =
  P\bigl(-hM_\Lambda+\Delta_\Lambda(h)\bigr)P.
\end{equation}

The second-order term used for the Heisenberg identification was computed
through the \(\xi\)-deformation.  In this section we compare the full
LS/SW endpoint correction \(\Delta_\Lambda(h)\) with this second-order
term, and absorb the remaining contributions into the \(P\)-block
remainder.  This gives a pressure comparison between the effective
\(P\)-block Hamiltonian and the Heisenberg reference model.

\subsection{Intermediate effective spin Hamiltonian and \(P\)-block remainder}
\label{subsec:P-block-intermediate-Heis}

Let
$
  H_\Lambda^{(2)}(h)
  =
  (\Delta_\Lambda)^{[2]}_{\xi}(h)
$
be the second-order coefficient defined through the \(\xi\)-scheme.  We
decompose the effective \(P\)-block Hamiltonian as
\begin{equation}\label{eq:P-block-intermediate-decomposition}
  H_{P,\Lambda}(h)
  =
  P\bigl(-hM_\Lambda+H_\Lambda^{(2)}(h)\bigr)P
  +
  R_{P,\Lambda}(h),
\end{equation}
where
\begin{equation}\label{eq:def-P-block-remainder}
  R_{P,\Lambda}(h)
  :=
  P\bigl(\Delta_\Lambda(h)-H_\Lambda^{(2)}(h)\bigr)P .
\end{equation}
Thus \(R_{P,\Lambda}(h)\) measures the part of the LS/SW endpoint
correction which is not captured by the \(\xi\)-second-order term.

The use of \(H_\Lambda^{(2)}(h)\) here is justified by the comparison
between the \(\xi\)-scheme and the auxiliary deformation.  In particular,
Lemma~\ref{lem:bridge-lambda-xi-second-order} identifies the explicit
second-order mismatch and shows that the remaining residual is of higher
order.  These contributions are absorbed into the \(P\)-block remainder
estimated below.

Under the spin identification
$
  \mathcal U_\Lambda:
  P\cH_\Lambda^{\rm hf}
  \longrightarrow
  \cH_\Lambda^{\rm spin},
$
define the spin representative of the \(P\)-block remainder by
\begin{equation}\label{eq:def-spin-P-block-remainder}
  \mathcal R_{P,\Lambda}(h)
  :=
  \mathcal U_\Lambda R_{P,\Lambda}(h)\mathcal U_\Lambda^\ast .
\end{equation}
By Proposition~\ref{prop:H2-to-Heis},
$
  \mathcal U_\Lambda
  P\bigl(-hM_\Lambda+H_\Lambda^{(2)}(h)\bigr)P
  \mathcal U_\Lambda^\ast
  =
  H_\Lambda^{\rm Heis}(J(h),h_{\rm eff}(h)).
$
Consequently,
\begin{equation}\label{eq:P-block-spin-representation-with-remainder}
  \mathcal U_\Lambda H_{P,\Lambda}(h)\mathcal U_\Lambda^\ast
  =
  H_\Lambda^{\rm Heis}(J(h),h_{\rm eff}(h))
  +
  \mathcal R_{P,\Lambda}(h).
\end{equation}

The rest of this section compares the right-hand side of
\eqref{eq:P-block-spin-representation-with-remainder} with the Heisenberg
reference model
$
  H_\Lambda^{\rm Heis}(J_0(U),h)
  $
  with
  $
  J_0(U)=\frac{4t^2}{U}.
$
There are two contributions to control:

\begin{enumerate}
\item
the explicit difference between the intermediate effective parameters and
the reference parameters,
$
  H_\Lambda^{\rm Heis}(J(h),h_{\rm eff}(h))
  -
  H_\Lambda^{\rm Heis}(J_0(U),h);
$

\item
the \(P\)-block remainder \(\mathcal R_{P,\Lambda}(h)\).
\end{enumerate}

Both are estimated uniformly for \(|h|\le h_0\).  These estimates give
the \(P\)-block-to-Heisenberg-reference pressure comparison used in the
final assembly.

\subsection{From the intermediate effective parameters to the reference model}
\label{subsec:intermediate-to-Heis-reference}

The main theorem is stated with the Heisenberg reference Hamiltonian
$
  H_\Lambda^{\rm Heis}(J_0(U),h),
$
Thus we compare the intermediate effective spin Hamiltonian
$
  H_\Lambda^{\rm Heis}(J(h),h_{\rm eff}(h))
$
with the reference model \(H_\Lambda^{\rm Heis}(J_0(U),h)\).

\begin{prop}[Parameter mismatch with the Heisenberg reference model]
\label{prop:intermediate-reference-parameter-mismatch}
Assume \(U>2h_0\).  Define
\[
  G_{\Lambda,U}(h)
  :=
  H_\Lambda^{\rm Heis}(J(h),h_{\rm eff}(h))
  -
  H_\Lambda^{\rm Heis}(J_0(U),h).
\]
Equivalently,
\begin{equation}\label{eq:parameter-mismatch-operator}
  G_{\Lambda,U}(h)
  =
  \bigl(J(h)-J_0(U)\bigr)
  \sum_{\{x,y\}\in\mathscr B_\Lambda}
  \left(
    \boldsymbol S_x\cdot\boldsymbol S_y-\frac14
  \right)
  -
  \bigl(h_{\rm eff}(h)-h\bigr)M_\Lambda^{\rm spin}.
\end{equation}
Viewed as an interaction on the spin system, \(G_{\Lambda,U}(h)\)
satisfies
\begin{equation}\label{eq:reference-parameter-mismatch-bound}
  \sup_{|h|\le h_0}
  \|G_{\Lambda,U}(h)\|_0
  \le
  C(d,h_0,t)\frac1{U^2},
\end{equation}
and
\begin{equation}\label{eq:reference-parameter-mismatch-derivative-bound}
  \sup_{|h|\le h_0}
  \|\partial_hG_{\Lambda,U}(h)\|_0
  \le
  C(d,h_0,t)\frac1{U^2}.
\end{equation}
In particular, both estimates hold uniformly for
\(|h| \le h_0\).
\end{prop}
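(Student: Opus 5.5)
The plan is to compute everything explicitly, since \(G_{\Lambda,U}(h)\) is a linear combination of two fixed translation-invariant interactions with scalar coefficients depending only on \(h\) and \(U\).

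\textbf{Step 1: the identity \eqref{eq:parameter-mismatch-operator}.} Subtracting the definitions of \(H_\Lambda^{\rm Heis}(J,h)\) at the two parameter pairs gives \eqref{eq:parameter-mismatch-operator} immediately, by linearity in \(J\) and in the field. Viewed as an interaction, \(G_{\Lambda,U}(h)\) has the following terms:
\begin{itemize}
\item on each bond \(X=\{x,y\}\), the term \((J(h)-J_0(U))(\boldsymbol S_x\cdot\boldsymbol S_y-\tfrac14)\);
\item on each singleton \(X=\{x\}\), the term \(-(h_{\rm eff}(h)-h)\eta_xS_x^{(3)}\).
\end{itemize}

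\textbf{Step 2: local norms.} We have \(\|\boldsymbol S_x\cdot\boldsymbol S_y-\tfrac14\|=\|B_{xy}\|=1\) by \eqref{eq:bond-density-identities}, and \(\|S_x^{(3)}\|=\tfrac12\). Each site lies in exactly \(2d\) bonds of the torus. Hence
\[
  \|G_{\Lambda,U}(h)\|_0\le 2d\,|J(h)-J_0(U)|+\tfrac12|h_{\rm eff}(h)-h|,
\]
and the same inequality holds for \(\partial_hG_{\Lambda,U}(h)\), with the coefficients replaced by their \(h\)-derivatives. The operators themselves do not depend on \(h\).

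\textbf{Step 3: scalar estimates.} Write
\[
  J(h)-J_0(U)=\frac{4t^2}{U}\Bigl(\frac{U^2}{U^2-h^2}-1\Bigr)=\frac{4t^2h^2}{U(U^2-h^2)}.
\]
For \(U>2h_0\) and \(|h|\le h_0\) we have \(U^2-h^2\ge\tfrac34U^2\). This gives \(|J(h)-J_0(U)|\le \tfrac{16}{3}t^2h_0^2U^{-3}\). Next,
\[
  h_{\rm eff}(h)-h=-\frac{4dht^2}{U^2-h^2},
\]
which is bounded by \(\tfrac{16}{3}dh_0t^2U^{-2}\). For the derivatives:
\[
  \partial_hJ(h)=\frac{8t^2Uh}{(U^2-h^2)^2}=O\bigl(h_0t^2U^{-3}\bigr),
\]
\[
  \partial_h(h_{\rm eff}-h)=-4dt^2\,\frac{U^2+h^2}{(U^2-h^2)^2}=O\bigl(dt^2U^{-2}\bigr),
\]
both using the same lower bound on \(U^2-h^2\). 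Since \(U^{-3}\le U^{-2}/(2h_0)\), every term is at most \(C(d,h_0,t)U^{-2}\). Taking the supremum over \(|h|\le h_0\) yields \eqref{eq:reference-parameter-mismatch-bound} and \eqref{eq:reference-parameter-mismatch-derivative-bound}.

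\textbf{Main obstacle.} There is no real obstacle. The only point needing care is the bookkeeping convention that \(\|\cdot\|_0\) is computed for the natural bond-plus-site interaction representation. Its value is independent of \(L\) because the torus degree is exactly \(2d\). This is what makes the estimate uniform in \(\Lambda\).
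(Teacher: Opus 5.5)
Your proposal is correct and follows the paper's proof essentially line by line. It uses the same closed forms for $J(h)-J_0(U)$, $h_{\rm eff}(h)-h$ and their $h$-derivatives, the same lower bound $U^2-h^2\ge\frac34U^2$, and the same termwise estimate $\|G_{\Lambda,U}(h)\|_0\le 2d\,|J(h)-J_0(U)|+\frac12|h_{\rm eff}(h)-h|$ (likewise for $\partial_hG_{\Lambda,U}(h)$). One cosmetic point: on the smallest torus $L=2$ a site lies in $d$ rather than $2d$ distinct bonds, so ``at most $2d$'' is the accurate phrasing, and that is all the uniform-in-$\Lambda$ bound needs.
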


\begin{proof}
See Appendix~\ref{app:P-block-remainder},
\S\ref{app:proof-intermediate-reference-parameter-mismatch}.
\end{proof}

\subsection{Bound on the \(P\)-block remainder}
\label{subsec:P-block-C1-remainder}

The next estimate controls the \(P\)-block remainder uniformly on the
bounded field window \(|h|\le h_0\).  Its \(C^0\)-part is used for the
\(P\)-block pressure comparison with the Heisenberg reference model; the
\(h\)-derivative bound records the corresponding regularity of the
effective \(P\)-block Hamiltonian.

\begin{prop}[Uniform bound on the \(P\)-block remainder]
\label{prop:P-block-remainder-C1-bound}
Under the hypotheses of Corollary~\ref{cor:spec-LS-small-tU}, there
exists an error function
$
  \varepsilon_P(U;h_0)\ge0
$
such that, uniformly in the even torus \(\Lambda=\Lambda_L\),
\begin{equation}\label{eq:P-block-remainder-C0-C1-bound}
  \sup_{|h|\le h_0}
  \|\mathcal R_{P,\Lambda}(h)\|_0
  +
  \sup_{|h|\le h_0}
  \|\partial_h\mathcal R_{P,\Lambda}(h)\|_0
  \le
  \varepsilon_P(U;h_0).
\end{equation}
Moreover,
\[
  \lim_{U\to\infty}\varepsilon_P(U;h_0)=0.
\]
\end{prop}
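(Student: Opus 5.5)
The plan is to split the remainder using the $\lambda$-scheme, which has the same physical endpoint as the $\xi$-scheme. By Remark~\ref{rem:lambda-xi-endpoint-no-mix},
\[
  \Delta_\Lambda(h)=\Delta_\Lambda(h;\lambda)\big|_{\lambda=1}
  =(\Delta_\Lambda)^{[2]}_\lambda(h)+\sum_{n\ge3}(\Delta_\Lambda)^{[n]}_\lambda(h).
\]
Lemma~\ref{lem:bridge-lambda-xi-second-order} then gives
\[
  \Delta_\Lambda(h)-H^{(2)}_\Lambda(h)
  =\mathcal R^{(2)}_\Lambda(h)+\mathcal E^{(2)}_\Lambda(h)
  +\sum_{n\ge3}(\Delta_\Lambda)^{[n]}_\lambda(h).
\]
Compression by $P$ and conjugation by $\mathcal U_\Lambda$ do not increase the $\|\cdot\|_0$ norm. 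The compression is harmless because all terms are $D_\Lambda$-diagonal even local operators, and $P\,O_X\,P$ for $O_X$ diagonal and supported in $X$ acts as a local spin operator on $X$. So it suffices to bound the three pieces in $\|\cdot\|_0$, uniformly in $|h|\le h_0$. The first two are already controlled by $C_{R,0}|t|^3/U^2$ and $C_{E,0}|t|^4/U^3$.

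For the tail $\sum_{n\ge3}(\Delta_\Lambda)^{[n]}_\lambda$, I would use analyticity in $\lambda$. The LS/SW construction of Proposition~\ref{prop:spec-LS-contraction} applies verbatim for complex $\lambda$ with $|\lambda|\le R$, say $R=2$, provided $R\,C_T|t|/U\le\varepsilon_\ast$. At complex $\lambda$ the generators are no longer anti-self-adjoint, but the norm estimates do not use that property. Each iterate is then a polynomial in $\lambda$, so $\Delta_\Lambda(h;\lambda)$ is a locally uniform limit of analytic interaction-valued functions, with $\|\Delta_\Lambda(h;\lambda)\|_\kappa\le C_\ast C_T^2R^2t^2/U$ by \eqref{eq:spec-B-infty-bound}. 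Cauchy's estimates, applied termwise in $X$ before taking the supremum over $x$, give
\[
  \|(\Delta_\Lambda)^{[n]}_\lambda\|_\kappa\le R^{-n}\,C\,t^2/U .
\]
This bound is too weak if used crudely, since the tail would only be $O(t^2/U)$. I would sharpen it by choosing $R\sim \varepsilon_\ast U/(C_T|t|)$, which is large. Then $\sup_{|\lambda|\le R}\|\Delta\|_\kappa\le C\varepsilon_\ast^2U$, so $\|(\Delta_\Lambda)^{[n]}_\lambda\|\le CU(C_T|t|/(\varepsilon_\ast U))^n$. Summing over $n\ge3$ yields $O(|t|^3/U^2)$.

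The $h$-derivative is handled the same way, via analyticity in $h$ on a complex disc $|h|\le 2h_0$. Here $\mathcal I_h$ is defined by a Neumann series that is analytic in $h$ for $U$ large. The full remainder $\mathcal R_{P,\Lambda}(h)$ is therefore analytic in $h$, with the sup bound just obtained. A Cauchy estimate on a disc of radius $h_0$ around each real $|h|\le h_0$ controls $\partial_h\mathcal R_{P,\Lambda}$ by $h_0^{-1}$ times the sup. This requires $H^{(2)}_\Lambda(h)$ and the $\lambda$-scheme bounds to hold for complex $h$. Explicitly, $H^{(2)}$ is a rational function of $h$ via Proposition~\ref{prop:H2-to-Heis}, with poles at $h=\pm U$, far away. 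One may then take
\[
  \varepsilon_P(U;h_0)=C(d,h_0)\bigl(|t|^3/U^2+|t|^4/U^3\bigr),
\]
which tends to $0$.

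The main obstacle I expect is the complex extension: checking that Lemma~\ref{lem:spec-one-LS-step-quant}, Proposition~\ref{prop:spec-LS-contraction}, and the $\mathcal I_h$ bounds of Appendix~\ref{app:Ih} hold with complex $\lambda$ and $h$, uniformly in $L$. If this fails, the fallback is to bound the $\lambda$-Taylor coefficients directly by a graded-word expansion of the iteration, obtaining a geometric bound $\|(\Delta_\Lambda)^{[n]}_\lambda\|_\kappa\le C U(c|t|/U)^n$. The $h$-derivative would then be estimated by differentiating each LS step and reusing the contraction \eqref{eq:spec-linear-contract}.
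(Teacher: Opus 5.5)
Your argument is correct. The decomposition $\Delta_\Lambda(h)-H^{(2)}_\Lambda(h)=\mathcal R^{(2)}_\Lambda(h)+\mathcal E^{(2)}_\Lambda(h)+\mathcal K^{(\ge3)}_\Lambda(h)$ is exactly the paper's Lemma~\ref{lem:P-block-remainder-decomposition}. The two routes differ in how the third piece and all $h$-derivatives are controlled.

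\textbf{The paper's route.} It splits $\mathcal K^{(\ge3)}_\Lambda$ into two parts:
\begin{itemize}
\item the first LS/SW step, whose $\lambda^{\ge3}$ part is an explicit BCH tail bounded by $\|S\|_\kappa^2\|V\|_\kappa+\|S\|_\kappa^3\|T^{(0)}_\Lambda\|_\kappa$, with $V=(T_\Lambda)^{\rm off}$ and $S=\mathcal I_h(V)$;
\item the later steps, bounded at $\lambda=1$ by $q^{2(n-1)}|t|^4/U^3$, together with small-disc Cauchy bounds for their $\lambda^2$-coefficients.
\end{itemize}
The $h$-derivatives are obtained by differentiating the iteration step by step (Lemmas~\ref{lem:diff-one-step-diag-increment}, \ref{lem:lambda-later-step-bounds}, \ref{lem:geometric-differentiated-diagonal-increment}). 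This includes a coupled bootstrap for $\partial_hD_n$ and $\partial_hF_n$.

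\textbf{Your route.} You observe that Proposition~\ref{prop:spec-LS-contraction} only needs $\|\lambda(T_\Lambda)^{\rm off}\|_\kappa\le\varepsilon_\ast U$. Hence the $\lambda$-scheme is holomorphic on a disc of radius $R\asymp\varepsilon_\ast U/(C_T|t|)$, and a single Cauchy estimate there gives the whole tail at the optimal order $|t|^3/U^2$. A second Cauchy estimate in complex $h$ then replaces all of the differentiated machinery.

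\textbf{Comparison.} Your route is much shorter and conceptually cleaner. Its cost is that the LS/SW bounds, the $\mathcal I_h$ bounds and the bridge lemma must be rechecked for complex $\lambda$ and for complex $h$ in a disc around $[-h_0,h_0]$. The $U$-threshold must also be adjusted to the enlarged field radius. This is pure norm bookkeeping, and the paper already does it for small complex $\lambda$ in the proof of Lemma~\ref{lem:bridge-lambda-xi-second-order}. The paper's route stays real in $h$ and tracks explicitly which LS/SW step produces which contribution.

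\textbf{Two small remarks.}
\begin{itemize}
\item The iterates are entire functions of $\lambda$, not polynomials, since the BCH series do not terminate. This is harmless, because locally uniform limits of holomorphic maps are holomorphic.
\item Your large-disc estimate could be applied directly to the $\xi$-scheme. Since $H^{(2)}_\Lambda(h)=(\Delta_\Lambda)^{[2]}_\xi(h)$, the difference $\Delta_\Lambda(h)-H^{(2)}_\Lambda(h)=K^{\xi}_\Lambda(h)$ (see \eqref{eq:def-K-xi}) is exactly the $\xi^{\ge3}$ Taylor tail. The same Cauchy bound on a disc of radius $\asymp U/|t|$ gives $\|K^\xi_\Lambda(h)\|_\kappa=O(|t|^3/U^2)$ directly. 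You would then not need the $\lambda$-scheme, Lemma~\ref{lem:bridge-lambda-xi-second-order}, or complex-$h$ versions of the $\mathcal R^{(2)}_\Lambda$ and $\mathcal E^{(2)}_\Lambda$ bounds.
\end{itemize}
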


\begin{proof}
See Appendix~\ref{app:P-block-remainder},
\S\ref{app:proof-P-block-remainder-C1-bound}.
\end{proof}

Combining \eqref{eq:P-block-spin-representation-with-remainder} with
Proposition~\ref{prop:intermediate-reference-parameter-mismatch}, we can
write the spin representative of the effective \(P\)-block Hamiltonian as
\begin{equation}\label{eq:P-block-Heis-reference-plus-perturbation}
  \mathcal U_\Lambda H_{P,\Lambda}(h)\mathcal U_\Lambda^\ast
  =
  H_\Lambda^{\rm Heis}(J_0(U),h)
  +
  W_{P,\Lambda}(h),
\end{equation}
where
\begin{equation}\label{eq:def-P-block-total-perturbation}
  W_{P,\Lambda}(h)
  :=
  G_{\Lambda,U}(h)+\mathcal R_{P,\Lambda}(h).
\end{equation}
Hence, by Proposition~\ref{prop:intermediate-reference-parameter-mismatch}
and Proposition~\ref{prop:P-block-remainder-C1-bound},
\begin{equation}\label{eq:P-block-total-perturbation-C1-bound}
  \sup_{|h|\le h_0}
  \|W_{P,\Lambda}(h)\|_0
  +
  \sup_{|h|\le h_0}
  \|\partial_hW_{P,\Lambda}(h)\|_0
  \le
  \varepsilon_P(U;h_0)+C(d,h_0,t)U^{-2}.
\end{equation}
In particular, the right-hand side tends to zero as \(U\to\infty\).
\subsection{\(P\)-block pressure comparison with the Heisenberg reference model}
\label{subsec:P-block-pressure-comparison}

Define the \(P\)-block pressure by
\[
  p_{\Lambda,\beta,U}^{P}(h)
  :=
  \frac1{\beta|\Lambda|}
  \log
  \Tr_{P\cH_\Lambda^{\rm hf}}
  e^{-\beta H_{P,\Lambda}(h)}.
\]
Under the spin identification \(\mathcal U_\Lambda\), the pressure is
unchanged.  Therefore, by
\eqref{eq:P-block-Heis-reference-plus-perturbation}, the \(P\)-block
pressure is the pressure of the Heisenberg reference Hamiltonian
perturbed by \(W_{P,\Lambda}(h)\).

\begin{prop}[\(P\)-block pressure comparison with the Heisenberg reference model]
\label{prop:P-block-Heis-reference-comparison}
Under the hypotheses of Corollary~\ref{cor:spec-LS-small-tU}, there
exists an error function
$
  \varepsilon_{P}^{\rm FE}(U;h_0)\ge0
$
such that, for all \(U\) sufficiently large, all \(\beta>0\), all even
tori \(\Lambda=\Lambda_L\), and all \(|h|\le h_0\),
\[
  \left|
  p_{\Lambda,\beta,U}^{P}(h)
  -
  p_{\Lambda,\beta,U}^{\rm Heis}(h)
  \right|
  \le
  \varepsilon_P^{\rm FE}(U;h_0).
\]
Moreover,
\[
  \lim_{U\to\infty}
  \varepsilon_P^{\rm FE}(U;h_0)
  =
  0.
\]
\end{prop}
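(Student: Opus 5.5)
The plan is to combine the operator decomposition \eqref{eq:P-block-Heis-reference-plus-perturbation} with the elementary Lipschitz property of the pressure under bounded perturbations. Since \(\mathcal U_\Lambda\) is unitary from \(P\cH_\Lambda^{\rm hf}\) onto \(\cH_\Lambda^{\rm spin}\), the \(P\)-block trace equals a spin trace:
\[
  \Tr_{P\cH_\Lambda^{\rm hf}} e^{-\beta H_{P,\Lambda}(h)}
  =
  \Tr_{\cH_\Lambda^{\rm spin}}
  e^{-\beta\left(H_\Lambda^{\rm Heis}(J_0(U),h)+W_{P,\Lambda}(h)\right)} .
\]
Hence \(p^P_{\Lambda,\beta,U}(h)\) is the pressure of \(H^{\rm Heis}_\Lambda(J_0(U),h)+W_{P,\Lambda}(h)\) on the spin space.

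Next I use the standard bound, valid for self-adjoint \(H,W\) on a finite-dimensional space,
\[
  \left|\log\Tr e^{-\beta(H+W)}-\log\Tr e^{-\beta H}\right|\le\beta\|W\|.
\]
This follows from the operator inequalities \(H-\|W\|\le H+W\le H+\|W\|\) together with the monotonicity of \(A\mapsto\Tr e^{-\beta A}\). Alternatively, it follows from Duhamel's formula, which gives \(\frac{d}{ds}\log\Tr e^{-\beta(H+sW)}=-\beta\,\omega_s(W)\). Dividing by \(\beta|\Lambda|\) gives
\[
  \left|p^P_{\Lambda,\beta,U}(h)-p^{\rm Heis}_{\Lambda,\beta,U}(h)\right|\le\frac{\|W_{P,\Lambda}(h)\|}{|\Lambda|}.
\]
The factor \(\beta\) cancels, which is why the estimate holds for all \(\beta>0\) with no Heisenberg-scale restriction.

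The remaining step is to convert the interaction-norm control into an extensive operator-norm bound. For any interaction \(\Phi\) on the spin system,
\[
  \|\Phi_\Lambda\|
  \le\sum_X\|\Phi_X\|
  \le\sum_{x\in\Lambda}\sum_{X\ni x}\|\Phi_X\|
  \le|\Lambda|\,\|\Phi\|_0 ,
\]
since each \(X\) is counted at least once. I apply this to \(W_{P,\Lambda}(h)=G_{\Lambda,U}(h)+\mathcal R_{P,\Lambda}(h)\). The bound \eqref{eq:P-block-total-perturbation-C1-bound} then yields
\[
  \varepsilon_P^{\rm FE}(U;h_0):=\varepsilon_P(U;h_0)+C(d,h_0,t)U^{-2}.
\]
This is uniform in \(L\), \(\beta\) and \(|h|\le h_0\), and it tends to zero as \(U\to\infty\) by Proposition~\ref{prop:P-block-remainder-C1-bound}.

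The only delicate point is interpretive. One must check that \(\mathcal R_{P,\Lambda}(h)\) really is represented as a spin interaction whose \(\|\cdot\|_0\)-norm controls its operator norm as above. The quantity \(\mathcal R_{P,\Lambda}(h)\) is obtained by compressing a fermionic \(D\)-diagonal interaction \(\Delta_\Lambda(h)-H^{(2)}_\Lambda(h)\) by \(P\) and transporting it by \(\mathcal U_\Lambda\). Compressing each even local term \(\Phi_X\) by the product of local single-occupancy projections gives a term localized in \(X\) with norm at most \(\|\Phi_X\|\), because \(P\) commutes with even local terms up to the local factor. So the spin interaction inherits the \(\|\cdot\|_0\) bound, and the estimate of Proposition~\ref{prop:P-block-remainder-C1-bound} is indeed a bound on this representation. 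Even if one only had the summed operator bound \(\|\mathcal R_{P,\Lambda}\|\le|\Lambda|\varepsilon_P\), the argument would go through unchanged. No derivative bounds are needed here; only the \(C^0\) parts are used.
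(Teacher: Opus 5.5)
Your proposal is correct and follows the paper's own proof: unitary equivalence via $\mathcal U_\Lambda$, then the pressure Lipschitz bound with the conversion $\|K_\Lambda\|\le|\Lambda|\,\|\mathsf K\|_0$ (Lemma~\ref{lem:pressure-Lipschitz-bound}; the paper proves it with the Gibbs variational principle rather than operator monotonicity), and finally \eqref{eq:P-block-total-perturbation-C1-bound}. This produces exactly the same error function $\varepsilon_P^{\rm FE}(U;h_0)=\varepsilon_P(U;h_0)+C(d,h_0,t)U^{-2}$.
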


\begin{proof}
By \eqref{eq:P-block-Heis-reference-plus-perturbation},
\(H_{P,\Lambda}(h)\) is unitarily equivalent to
$
  H_\Lambda^{\rm Heis}(J_0(U),h)+W_{P,\Lambda}(h).
$
Hence Lemma~\ref{lem:pressure-Lipschitz-bound} gives
\[
  \left|
  p_{\Lambda,\beta,U}^{P}(h)
  -
  p_{\Lambda,\beta,U}^{\rm Heis}(h)
  \right|
  \le
  \|W_{P,\Lambda}(h)\|_0.
\]
By \eqref{eq:P-block-total-perturbation-C1-bound}, in particular,
\[
  \sup_{|h|\le h_0}
  \|W_{P,\Lambda}(h)\|_0
  \le
  \varepsilon_P(U;h_0)+C(d,h_0,t)U^{-2}.
\]
Thus we may set
$
  \varepsilon_P^{\rm FE}(U;h_0)
  :=
  \varepsilon_P(U;h_0)+C(d,h_0,t)U^{-2}.
$
The right-hand side tends to zero as \(U\to\infty\), uniformly in
\(\Lambda\), \(\beta\), and \(|h|\le h_0\).  This proves the proposition.
\end{proof}

\begin{rem}[Role of the \(P\)-block comparison]
\label{rem:role-P-block-comparison}
Proposition~\ref{prop:P-block-Heis-reference-comparison} is the
spin-sector pressure comparison.  After the LS/SW reduction and
restriction to the \(P\)-block, the effective Hamiltonian is a small
fixed-field perturbation of the Heisenberg reference Hamiltonian in the
sense of pressure.  In the final assembly, this pressure estimate is
combined with the soft defect estimate to compare the full Hubbard
pressure with the Heisenberg reference pressure.  The magnetisation
comparison in the main theorem is then obtained from the fixed-window
pressure comparison by convexity.
\end{rem}

\section{Defects and comparison with the \(P\)-block}
\label{sec:defects}

This section compares the full half-filled Hubbard model with the
defect-free \(P\)-block system.  The basic mechanism is the defect decomposition of the
transformed Hamiltonian
\[
  H_{*,\Lambda}(h)
  =
  UD_\Lambda+A_\Lambda(h),
  \qquad
  [A_\Lambda(h),D_\Lambda]=0,
\]
obtained from the LS/SW construction.

The goal is twofold.  First, we prove a soft comparison between the full
Hubbard pressure and the \(P\)-block pressure.  Second, we derive
charge-sector estimates, including suppression of double occupancy and
absence of macroscopic staggered charge order. 
\subsection{Transformed partition function and the \(P\)-block pressure}
\label{subsec:defect-partition-P-block}

Assume the hypotheses of Corollary~\ref{cor:spec-LS-small-tU}.  Recall
that
\[
  U_{\rm SW}(h)H_\Lambda^{\rm Hub}(h)U_{\rm SW}(h)^\ast
  =
  UD_\Lambda+A_\Lambda(h),
\]
where
$
  A_\Lambda(h)
  =
  -hM_\Lambda+T_\Lambda^{(0)}+\Delta_\Lambda(h).
$
By unitary invariance of the trace,
\[
  Z_{\Lambda,\beta,U}^{\rm Hub}(h)
  =
  \Tr_{\cH_\Lambda^{\rm hf}}
  e^{-\beta(UD_\Lambda+A_\Lambda(h))}.
\]
The defect-free \(P\)-block partition function is
\[
  Z_{\Lambda,\beta,U}^{P}(h)
  :=
  \Tr_{P\cH_\Lambda^{\rm hf}}
  e^{-\beta H_{P,\Lambda}(h)},
  \qquad
  H_{P,\Lambda}(h):=P A_\Lambda(h)P.
\]
Equivalently, since \(P\) commutes with \(A_\Lambda(h)\),
\[
  Z_{\Lambda,\beta,U}^{P}(h)
  =
  \Tr_{\cH_\Lambda^{\rm hf}}
  \left(
    P e^{-\beta A_\Lambda(h)}
  \right).
\]
We define
\[
  p_{\Lambda,\beta,U}^{P}(h)
  :=
  \frac1{\beta|\Lambda|}
  \log Z_{\Lambda,\beta,U}^{P}(h).
\]

\subsection{Defect projectors and defect counting}
\label{subsec:defect-counting}

\paragraph{Local defect projectors.}
Define
\[
  p_x^{(1)}
  :=
  n_x-2n_{x\uparrow}n_{x\downarrow},
  \qquad
  q_x:=\mathbbm 1-p_x^{(1)}.
\]
Equivalently,
\[
  q_x=(n_x-1)^2
  =
  \mathbbm 1-n_x+2n_{x\uparrow}n_{x\downarrow}.
\]
Thus \(p_x^{(1)}\) projects onto the singly occupied states at \(x\),
while \(q_x\) projects onto the empty or doubly occupied states at \(x\).

For \(\mathcal D\subset\Lambda\), set
\[
  Q_{\mathcal D}
  :=
  \prod_{x\in\mathcal D}q_x
  \prod_{x\notin\mathcal D}p_x^{(1)}.
\]
Then \(\{Q_{\mathcal D}\}_{\mathcal D\subset\Lambda}\) are mutually
orthogonal projections and
\[
  \sum_{\mathcal D\subset\Lambda}Q_{\mathcal D}
  =
  \mathbbm 1
  \qquad
  \text{on }\cH_\Lambda^{\rm hf}.
\]
In particular,
$
  Q_\varnothing=P.
$

\begin{lem}[Defect counting at half filling]
\label{lem:defect-counting-hf}
On \(\cH_\Lambda^{\rm hf}\),
\[
  \sum_{x\in\Lambda}q_x=2D_\Lambda.
\]
Consequently, for any \(\mathcal D\subset\Lambda\),
\[
  Q_{\mathcal D}UD_\Lambda Q_{\mathcal D}
  =
  \frac{U}{2}|\mathcal D|Q_{\mathcal D},
\]
and
$
  Q_{\mathcal D}\cH_\Lambda^{\rm hf}=\{0\}
$
  if $|\mathcal D|$  is odd.

\end{lem}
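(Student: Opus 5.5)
The plan is to work site by site in the occupation-number basis and then use the half-filling constraint to pass from local to global identities.

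First, the local identity. At a single site $x$, the operators $n_{x\uparrow}$ and $n_{x\downarrow}$ are commuting projections. Since $q_x=(n_x-1)^2=\mathbbm 1-n_x+2n_{x\uparrow}n_{x\downarrow}$, I would check the four joint eigenvalues $(n_{x\uparrow},n_{x\downarrow})\in\{0,1\}^2$. One finds $q_x=1$ when $x$ is empty or doubly occupied, $q_x=0$ when $x$ is singly occupied, and the eigenvalue of $n_{x\uparrow}n_{x\downarrow}$ is $1$ only at double occupancy. Summing the algebraic form over $x$ gives
\[
\sum_{x\in\Lambda}q_x=|\Lambda|-N_\Lambda+2D_\Lambda .
\]
On $\cH_\Lambda^{\rm hf}$ we have $N_\Lambda=|\Lambda|$, so the first claim follows; this is just \eqref{eq:spec-qsum-2D} made explicit.

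Second, the block identity. All the operators $q_x$, $p_x^{(1)}$ and $D_\Lambda$ are functions of the commuting family $\{n_{x\sigma}\}$, and they preserve $\cH_\Lambda^{\rm hf}$, so they commute with $Q_{\mathcal D}$. On the range of $Q_{\mathcal D}$ we have $q_x=1$ for $x\in\mathcal D$ and $q_x=0$ otherwise. Hence $Q_{\mathcal D}\sum_x q_x=|\mathcal D|\,Q_{\mathcal D}$, and multiplying the first claim by $Q_{\mathcal D}$ on both sides gives
\[
Q_{\mathcal D}UD_\Lambda Q_{\mathcal D}=\tfrac U2|\mathcal D|\,Q_{\mathcal D}.
\]

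Third, the parity statement. I would argue in the joint eigenbasis of the $n_{x\sigma}$ (Fock basis vectors with exactly $|\Lambda|$ particles). A nonzero basis vector in the range of $Q_{\mathcal D}$ has $D_\Lambda$-eigenvalue $m\in\mathbb N_0$, and by the first claim $|\mathcal D|=2m$, which is even. Alternatively, a direct count works: if $e$ sites of $\mathcal D$ are empty and $f$ are doubly occupied, the particle number is $(|\Lambda|-|\mathcal D|)+2f=|\Lambda|$. This forces $|\mathcal D|=2f$ and also $e=f$.

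There is no real obstacle here. The only point needing care is that $Q_{\mathcal D}$ is diagonal in the occupation basis, so the range of $Q_{\mathcal D}$ is spanned by basis vectors. This justifies evaluating $D_\Lambda$ and $N_\Lambda$ on basis vectors.
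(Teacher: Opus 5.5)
Your proposal is correct and follows essentially the same route as the paper: you sum the identity $q_x=\mathbbm 1-n_x+2n_{x\uparrow}n_{x\downarrow}$ over sites and use $N_\Lambda=|\Lambda|$, then evaluate $\sum_x q_x=|\mathcal D|$ on $\operatorname{ran}(Q_{\mathcal D})$, and finally invoke integrality of the spectrum of $D_\Lambda$ for the parity claim. Your alternative particle count (equal numbers of empty and doubly occupied sites in $\mathcal D$) is a correct refinement that the paper does not need.
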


\begin{proof}
On the full Fock space,
$
  q_x
  =
  \mathbbm 1-n_x+2n_{x\uparrow}n_{x\downarrow}.
$
Hence
\[
  \sum_{x\in\Lambda}q_x
  =
  |\Lambda|-\sum_{x\in\Lambda}n_x
  +
  2\sum_{x\in\Lambda}n_{x\uparrow}n_{x\downarrow}.
\]
On \(\cH_\Lambda^{\rm hf}\), one has
$
  \sum_{x\in\Lambda}n_x=|\Lambda|.
$
Therefore
$
  \sum_{x\in\Lambda}q_x
  =
  2\sum_{x\in\Lambda}n_{x\uparrow}n_{x\downarrow}
  =
  2D_\Lambda.
$

On \(\operatorname{ran}(Q_{\mathcal D})\), we have \(q_x=1\) for
\(x\in\mathcal D\) and \(q_x=0\) for \(x\notin\mathcal D\).  Thus
$
  \sum_{x\in\Lambda}q_x=|\mathcal D|
$
on \(\operatorname{ran}(Q_{\mathcal D})\).  Combining this with
$
  \sum_x q_x=2D_\Lambda
$
gives
$
  D_\Lambda=\frac12|\mathcal D|
 $
on \(\operatorname{ran}(Q_{\mathcal D})\).  This proves the stated
identity for \(UD_\Lambda\).  Since \(D_\Lambda\) has integer spectrum,
\(\operatorname{ran}(Q_{\mathcal D})\) is zero whenever \(|\mathcal D|\)
is odd.
\end{proof}

\paragraph{Defect decomposition of the transformed partition function.}
By Lemma~\ref{lem:defect-counting-hf},
\[
  e^{-\beta UD_\Lambda}
  =
  \sum_{\mathcal D\subset\Lambda}
  e^{-\frac{\beta U}{2}|\mathcal D|}
  Q_{\mathcal D}
  \qquad
  \text{on }\cH_\Lambda^{\rm hf}.
\]
Since \([A_\Lambda(h),D_\Lambda]=0\), we have
$
  e^{-\beta(UD_\Lambda+A_\Lambda(h))}
  =
  e^{-\beta UD_\Lambda}e^{-\beta A_\Lambda(h)}.
$
Consequently,
\begin{equation}\label{eq:defect-decomposition-partition}
  Z_{\Lambda,\beta,U}^{\rm Hub}(h)
  =
  \sum_{\mathcal D\subset\Lambda}
  e^{-\frac{\beta U}{2}|\mathcal D|}
  \Tr_{\cH_\Lambda^{\rm hf}}
  \left(
    Q_{\mathcal D}e^{-\beta A_\Lambda(h)}
  \right).
\end{equation}
The term \(\mathcal D=\varnothing\) is exactly the \(P\)-block
partition function:
\[
  \Tr_{\cH_\Lambda^{\rm hf}}
  \left(
    Q_\varnothing e^{-\beta A_\Lambda(h)}
  \right)
  =
  Z_{\Lambda,\beta,U}^{P}(h).
\]

\subsection{Soft pressure comparison with the \(P\)-block}
\label{subsec:defect-soft-pressure-bound}

The defect decomposition gives a coarse but useful exponential estimate
on the difference between the full Hubbard pressure and the \(P\)-block
pressure. 

\begin{prop}[Soft pressure comparison with the \(P\)-block]
\label{prop:defect-pressure-soft-bound}
  Assume the hypotheses of
Corollary~\ref{cor:spec-LS-small-tU}, and let \(A_\Lambda(h)\) be defined
by \eqref{eq:def-A-Lambda}.  Then, for all \(|h|\le h_0\),
\begin{equation}\label{eq:defect-pressure-soft}
  0
  \le
  p_{\Lambda,\beta,U}^{\rm Hub}(h)
  -
  p_{\Lambda,\beta,U}^{P}(h)
  \le
  \frac{2}{\beta}
  \exp\left[
    -\beta
    \left(
      \frac{U}{2}-4\|A_\Lambda(h)\|_0
    \right)
  \right].
\end{equation}
\end{prop}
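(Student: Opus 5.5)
The lower bound is positivity, and the upper bound comes from showing that each defect sector in \eqref{eq:defect-decomposition-partition} costs at most \(e^{-\beta(U/2-4\|A_\Lambda(h)\|_0)}\) per defect site relative to \(Z^P_{\Lambda,\beta,U}(h)\).

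\emph{Lower bound.} Each term in \eqref{eq:defect-decomposition-partition} is nonnegative, since
\[
\Tr(Q_{\mathcal D}e^{-\beta A_\Lambda(h)})=\Tr(Q_{\mathcal D}e^{-\beta A_\Lambda(h)}Q_{\mathcal D})\ge0 .
\]
Hence \(Z^{\rm Hub}_{\Lambda,\beta,U}(h)\ge Z^P_{\Lambda,\beta,U}(h)\), which gives \(p^{\rm Hub}-p^P\ge0\).

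\emph{Upper bound: reduction.} We aim for the per-set estimate
\begin{equation*}
  \Tr_{\cH_\Lambda^{\rm hf}}\bigl(Q_{\mathcal D}e^{-\beta A_\Lambda(h)}\bigr)
  \le
  e^{4\beta\|A_\Lambda(h)\|_0|\mathcal D|}\,Z^P_{\Lambda,\beta,U}(h),
  \qquad \mathcal D\subset\Lambda .
\end{equation*}
Given this, \eqref{eq:defect-decomposition-partition} yields
\[
\frac{Z^{\rm Hub}}{Z^P}\le\sum_{\mathcal D\subset\Lambda}\epsilon^{|\mathcal D|}=(1+\epsilon)^{|\Lambda|},
\qquad
\epsilon:=e^{-\beta(U/2-4\|A_\Lambda(h)\|_0)} .
\]
Using \(\log(1+\epsilon)\le\epsilon\), we get \(p^{\rm Hub}-p^P\le\epsilon/\beta\). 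The stated factor \(2/\beta\) leaves room for a factor \(2^{|\mathcal D|}\) in the per-set estimate, which is where the empty/doubly-occupied choices at defect sites would be absorbed.

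\emph{Per-set estimate: decoupling.} Write \(A_\Lambda(h)=\sum_X A_X\) as an interaction, and split it as
\[
A=A^{\mathcal D}_{\rm out}+A^{\mathcal D}_{\rm touch},
\]
where \(A^{\mathcal D}_{\rm touch}\) collects all terms with \(X\cap\mathcal D\ne\varnothing\). Then \(\|A^{\mathcal D}_{\rm touch}\|\le|\mathcal D|\,\|A\|_0\). By the Lipschitz property of \(\log\Tr e^{-\beta(\cdot)}\) (Lemma~\ref{lem:pressure-Lipschitz-bound}, applied in its unnormalised form inside the \(D_\Lambda\)-sector \(P_{|\mathcal D|/2}\), which commutes with \(A\)), one may replace \(A\) by \(A^{\mathcal D}_{\rm out}\) at the price of a factor \(e^{\beta|\mathcal D|\|A\|_0}\). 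Terms supported in \(\Lambda\setminus\mathcal D\) are even, so they commute with the local charge projectors \(q_x\), \(x\in\mathcal D\). The trace over \(\operatorname{ran}Q_{\mathcal D}\) should then factorise as
\begin{itemize}
\item a charge configuration on \(\mathcal D\): each site empty or doubly occupied, with equal numbers of each by half filling, so at most \(2^{|\mathcal D|}\) configurations;
\item times the trace of the singly occupied block on \(\Lambda\setminus\mathcal D\) with a common outside single-occupancy projection. This is the ``restricted trace factorisation'' announced in the introduction.
\end{itemize}
Finally, we compare that outside spin trace with \(Z^P\). Reinserting the \(|\mathcal D|\) missing spin sites multiplies the dimension by \(2^{|\mathcal D|}\). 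Restoring the touching terms restricted to \(P\) costs another \(e^{\beta|\mathcal D|\|A\|_0}\) in each direction. Together these give the factor \(e^{4\beta|\mathcal D|\|A\|_0}\), with the powers of \(2\) cancelling or absorbed into the prefactor.

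\emph{Main obstacle.} The delicate step is the factorisation: showing that \(A^{\mathcal D}_{\rm out}\) really preserves \(\operatorname{ran}Q_{\mathcal D}\), or at least that its compression to that range is unitarily equivalent to the \(P\)-block operator on \(\Lambda\setminus\mathcal D\). An outside term commutes with \(D_\Lambda\) and with the \(q_x\) on \(\mathcal D\). However, it need not individually preserve single occupancy on \(\Lambda\setminus\mathcal D\): it could create a pair of defects outside \(\mathcal D\) while keeping \(D_\Lambda\) fixed only jointly with other sites. My first attempt is to avoid needing invariance. Instead I would use the compression inequality in the form
\[
\Tr(Q e^{-\beta A}Q)\le\Tr_{\operatorname{ran}Q}e^{-\beta QAQ}\cdot e^{\beta\|[Q,A]\text{-part}\|},
\]
with the commutator part supported on terms touching \(\mathcal D\) or its boundary. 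If that fails, I would sum over \(|\mathcal D|=2m\) and work with \(P_m\), which commutes with \(A\) exactly. The cost is then controlled by a Golden--Thompson/Peierls--Bogoliubov comparison of \(e^{-\beta A}P_m\) with the decoupled Hamiltonian.
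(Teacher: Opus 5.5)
Your outline is the paper's: the defect decomposition \eqref{eq:defect-decomposition-partition}, a split of $A_\Lambda(h)$ according to how its terms meet $\mathcal D$, restricted trace factorisation with a common outside projector, and resummation to $(1+2\epsilon)^{|\Lambda|}$. However, the obstacle you single out is not real, and the step you treat as routine is the one that fails.

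\textbf{The outside terms are not a problem.} Each term $(A_\Lambda(h))_X$ is termwise $D_\Lambda$-diagonal and number conserving. Being even, it commutes with $D_{X^c}$ and $N_{X^c}$, hence with $D_X$ and $N_X$. So a single term cannot create a doublon--holon pair inside its own support. For $X\subset\mathcal D^c$ it therefore preserves $\operatorname{Ran}Q^{\rm out}_{\mathcal D}=\ker D_{\mathcal D^c}\cap\ker(N_{\mathcal D^c}-|\mathcal D^c|)$, which is how the paper justifies using Lemma~\ref{lem:trace-factorisation-with-Qout}.

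\textbf{The gap is your first step:} replacing $A$ by $A^{\mathcal D}_{\rm out}$ inside $\Tr(Q_{\mathcal D}e^{-\beta\,\cdot\,})$.
\begin{itemize}
\item Lemma~\ref{lem:pressure-Lipschitz-bound} compares traces over a subspace invariant under both operators. Applied in $\operatorname{ran}P_{|\mathcal D|/2}$, it controls $\Tr(P_{|\mathcal D|/2}e^{-\beta A})$, not $\Tr(Q_{\mathcal D}e^{-\beta A})$.
\item $\operatorname{ran}Q_{\mathcal D}$ itself is not $A$-invariant. For $e=\{x,y\}$ with $x\in\mathcal D$ and $y\notin\mathcal D$, the term $T^{(0)}_e$ moves a doublon or holon from $x$ to $y$ at fixed $D_\Lambda$.
\item When $[Q,H]\neq0$, the map $H\mapsto\Tr(Qe^{-\beta H})$ is not monotone. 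Your compression inequality is false: take $A=\begin{pmatrix}E&\gamma\\ \gamma&0\end{pmatrix}$, $Q=|1\rangle\langle1|$, $E>\gamma>0$. As $\beta\to\infty$ the left side stays $\gtrsim\gamma^2/E^2$, while the right side is $e^{-\beta(E-\gamma)}$. Peierls gives only the reverse bound.
\end{itemize}
The paper removes $A_\partial$ by the same move, citing $X-\|Y\|\le X+Y\le X+\|Y\|$. Since $Q_{\mathcal D}$ does not commute with $A_\partial$, that line needs the same repair.

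\textbf{A repair.} Bound the whole sum by an operator inequality instead of bounding each set separately.
\begin{itemize}
\item Each term commutes with $s_X:=N_X-2D_X$, the number of singly occupied sites in $X$. Hence it commutes with $\Pi_X:=\mathbbm 1\{0<s_X<|X|\}$, which is the sum of $Q_{\mathcal D}$ over those $\mathcal D$ for which $X$ meets both $\mathcal D$ and $\mathcal D^c$.
\item If $X$ does not meet both $\mathcal D$ and $\mathcal D^c$, then $[A_X,Q_{\mathcal D}]=0$.
\item With the pinching $\mathcal P(B):=\sum_{\mathcal D}Q_{\mathcal D}BQ_{\mathcal D}$, this gives $A_X-\mathcal P(A_X)=\Pi_X\bigl(A_X-\mathcal P(A_X)\bigr)\Pi_X\ge-2\|A_X\|\Pi_X$.
\item Sum over $X$ and bound $Q_{\mathcal D}A^{\mathcal D}_{\rm touch}Q_{\mathcal D}$ by its norm. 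On $\cH^{\rm hf}_\Lambda$ this yields
\[
  UD_\Lambda+A_\Lambda(h)
  \ge
  \sum_{\mathcal D\subset\Lambda}
  Q_{\mathcal D}
  \Bigl(
    A^{\mathcal D}_{\rm out}
    +\bigl(\tfrac U2-3\|A_\Lambda(h)\|_0\bigr)|\mathcal D|
  \Bigr),
\]
and the right side commutes with every $Q_{\mathcal D}$.
\end{itemize}
Monotonicity of $\Tr e^{-\beta(\cdot)}$ on all of $\cH^{\rm hf}_\Lambda$ then gives
\[
  Z^{\rm Hub}\le\sum_{\mathcal D}e^{-\beta(U/2-3\|A\|_0)|\mathcal D|}\Tr\bigl(Q_{\mathcal D}e^{-\beta A^{\mathcal D}_{\rm out}}\bigr).
\]
The $\mathcal D=\varnothing$ term is exactly $Z^P$. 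From here the rest of your plan works:
\begin{itemize}
\item The factorisation gives $\Tr(Q_{\mathcal D}e^{-\beta A^{\mathcal D}_{\rm out}})=\binom{|\mathcal D|}{|\mathcal D|/2}2^{-|\mathcal D|}\Tr(Pe^{-\beta A^{\mathcal D}_{\rm out}})$.
\item $\operatorname{ran}P$ is invariant under both $A$ and $A^{\mathcal D}_{\rm out}$, so Lipschitz there costs a single factor $e^{\beta|\mathcal D|\|A\|_0}$.
\end{itemize}
The result is $\epsilon=e^{-\beta(U/2-4\|A\|_0)}$ per defect site. You even get the sharper bound $p^{\rm Hub}-p^P\le\epsilon/\beta$.
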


\begin{proof}
Set
$
  Z(h):=Z_{\Lambda,\beta,U}^{\rm Hub}(h)
 $
 and 
 $
  Z_0(h):=Z_{\Lambda,\beta,U}^{P}(h).
$
By \eqref{eq:defect-decomposition-partition}, 
\begin{equation}\label{eq:defect-Z-sum-pressure}
  Z(h)
  =
  \sum_{\mathcal D\subset\Lambda}
  e^{-\frac{\beta U}{2}|\mathcal D|}
  \Tr_{\cH_\Lambda^{\rm hf}}
  \left(
    Q_{\mathcal D}e^{-\beta A_\Lambda(h)}
  \right).
\end{equation}
The \(\mathcal D=\varnothing\) term is
$
  \Tr_{\cH_\Lambda^{\rm hf}}
  \left(
    Q_\varnothing e^{-\beta A_\Lambda(h)}
  \right)
  =
  Z_0(h),
$
because \(Q_\varnothing=P\).  Hence \(Z(h)\ge Z_0(h)\), and therefore
\[
  p_{\Lambda,\beta,U}^{\rm Hub}(h)
  -
  p_{\Lambda,\beta,U}^{P}(h)
  =
  \frac1{\beta|\Lambda|}
  \log\frac{Z(h)}{Z_0(h)}
  \ge0.
\]

For \(\mathcal D\subset\Lambda\), define
\[
  r_{\mathcal D}(h)
  :=
  \frac{
    \Tr\left(Q_{\mathcal D}e^{-\beta A_\Lambda(h)}\right)
  }{
    \Tr\left(Q_\varnothing e^{-\beta A_\Lambda(h)}\right)
  }
  =
  \frac{
    \Tr\left(Q_{\mathcal D}e^{-\beta A_\Lambda(h)}\right)
  }{
    Z_0(h)
  }.
\]
We estimate \(r_{\mathcal D}(h)\) for \(\mathcal D\neq\varnothing\).

Write \(\mathcal D^c:=\Lambda\setminus\mathcal D\).  Decompose
$
  A_\Lambda(h)
  =
  A_{\mathcal D}(h)
  +
  A_{\mathcal D^c}(h)
  +
  A_{\partial}(h),
$
where
\[
  A_{\mathcal D}(h)
  :=
  \sum_{X\subset\mathcal D}(A_\Lambda(h))_X,
  \qquad
  A_{\mathcal D^c}(h)
  :=
  \sum_{X\subset\mathcal D^c}(A_\Lambda(h))_X,
\]
and \(A_\partial(h)\) is the sum of those interaction terms
\((A_\Lambda(h))_X\) which meet both \(\mathcal D\) and
\(\mathcal D^c\).  By the definition of the \(\|\cdot\|_0\)-norm,
\[
  \|A_{\partial}(h)\|
  \le
  |\mathcal D|\|A_\Lambda(h)\|_0,
  \qquad
  \|A_{\mathcal D}(h)\|
  \le
  |\mathcal D|\|A_\Lambda(h)\|_0.
\]
Using the elementary operator inequalities
$
  X-\|Y\|\le X+Y\le X+\|Y\|
$
for self-adjoint \(X,Y\), we obtain
\[
  r_{\mathcal D}(h)
  \le
  e^{2\beta\|A_\partial(h)\|}
  \frac{
    \Tr\left(
      Q_{\mathcal D}
      e^{-\beta(A_{\mathcal D}(h)+A_{\mathcal D^c}(h))}
    \right)
  }{
    \Tr\left(
      Q_{\varnothing}
      e^{-\beta(A_{\mathcal D}(h)+A_{\mathcal D^c}(h))}
    \right)
  }.
\]

We split the
projectors into the inside part on \(\mathcal D\) and the common outside
single-occupancy projector on \(\mathcal D^c\).  Namely, set
\[
  Q_{\mathcal D}^{\rm in}
  :=
  \prod_{x\in\mathcal D}q_x,
  \qquad
  Q_{0,\mathcal D}^{\rm in}
  :=
  \prod_{x\in\mathcal D}p_x^{(1)},
  \qquad
  Q_{\mathcal D}^{\rm out}
  :=
  \prod_{x\in\mathcal D^c}p_x^{(1)}.
\]
Then
$
  Q_{\mathcal D}
  =
  Q_{\mathcal D}^{\rm in}Q_{\mathcal D}^{\rm out},
 $
 and 
 $
  Q_{\varnothing}
  =
  Q_{0,\mathcal D}^{\rm in}Q_{\mathcal D}^{\rm out}.
$
Since \(A_{\mathcal D^c}(h)\) is an even operator supported in
\(\mathcal D^c\), it commutes with the inside operators.  Moreover, the
\(D_\Lambda\)-diagonality and particle-number conservation of
\(A_\Lambda(h)\) imply, for the outside-supported part,
\[
  [A_{\mathcal D^c}(h),D_{\mathcal D^c}]=0,
  \qquad
  [A_{\mathcal D^c}(h),N_{\mathcal D^c}]=0.
\]
Hence \(A_{\mathcal D^c}(h)\) preserves
$
  \operatorname{Ran}Q_{\mathcal D}^{\rm out}
  =
  \ker(D_{\mathcal D^c})\cap \ker( N_{\mathcal D^c}-|\mathcal D^c|).
$ The restricted trace factorisation lemma,
Lemma~\ref{lem:trace-factorisation-with-Qout}, applied with
\(X=\mathcal D\) and \(Y=\mathcal D^c\), gives
\[
  \frac{
    \Tr_{\cH_\Lambda^{\rm hf}}\left(
      Q_{\mathcal D}
      e^{-\beta(A_{\mathcal D}(h)+A_{\mathcal D^c}(h))}
    \right)
  }{
    \Tr_{\cH_\Lambda^{\rm hf}}\left(
      Q_{\varnothing}
      e^{-\beta(A_{\mathcal D}(h)+A_{\mathcal D^c}(h))}
    \right)
  }
  =
  \frac{
    \Tr_{\cH_{\mathcal D}^{\rm hf}}
    \left(
      Q_{\mathcal D}^{\rm in}
      e^{-\beta A_{\mathcal D}(h)}
    \right)
  }{
    \Tr_{\cH_{\mathcal D}^{\rm hf}}
    \left(
      Q_{0,\mathcal D}^{\rm in}
      e^{-\beta A_{\mathcal D}(h)}
    \right)
  }.
\]
Moreover,
since 
$
  e^{-\beta A_{\mathcal D}(h)}
  \le
  e^{\beta\|A_{\mathcal D}(h)\|}\mathbbm 1,
 $
 and 
 $
  e^{-\beta A_{\mathcal D}(h)}
  \ge
  e^{-\beta\|A_{\mathcal D}(h)\|}\mathbbm 1,
$
we have
\[
  \frac{
    \Tr_{\cH_{\mathcal D}^{\rm hf}}
    \left(
      Q^{\rm in}_{\mathcal D}
      e^{-\beta A_{\mathcal D}(h)}
    \right)
  }{
    \Tr_{\cH_{\mathcal D}^{\rm hf}}
    \left(
      Q^{\rm in}_{\varnothing}
      e^{-\beta A_{\mathcal D}(h)}
    \right)
  }
  \le
  e^{2\beta\|A_{\mathcal D}(h)\|}
  \frac{
    \Tr_{\cH_{\mathcal D}^{\rm hf}}
    \left(
      Q^{\rm in}_{\mathcal D}
    \right)
  }{
    \Tr_{\cH_{\mathcal D}^{\rm hf}}
    (
      Q_{0,\mathcal D}^{\rm in}
    )
  }.
\]
The last ratio is bounded by \(2^{|\mathcal D|}\).  Therefore
\[
  r_{\mathcal D}(h)
  \le
  \exp\left[
    \left(
      \log 2+4\beta\|A_\Lambda(h)\|_0
    \right)
    |\mathcal D|
  \right].
\]

Using \eqref{eq:defect-Z-sum-pressure}, we obtain
$
  \frac{Z(h)}{Z_0(h)}
  =
  1+
  \sum_{\mathcal D\neq\varnothing}
  e^{-\frac{\beta U}{2}|\mathcal D|}
  r_{\mathcal D}(h).
$
The preceding bound gives
\[
  \frac{Z(h)}{Z_0(h)}
  \le
  1+
  \sum_{\mathcal D\neq\varnothing}
  \left(
    2
    e^{-\beta(\frac{U}{2}-4\|A_\Lambda(h)\|_0)}
  \right)^{|\mathcal D|}.
\]
Since there are at most \(\binom{|\Lambda|}{k}\) subsets of size \(k\),
we get
\[
  \frac{Z(h)}{Z_0(h)}
  \le
  \left(
    1+
    2
    e^{-\beta(\frac{U}{2}-4\|A_\Lambda(h)\|_0)}
  \right)^{|\Lambda|}.
\]
Taking \((\beta|\Lambda|)^{-1}\log\) and using
$
  \log(1+x)\le x
$
gives
\[
  p_{\Lambda,\beta,U}^{\rm Hub}(h)
  -
  p_{\Lambda,\beta,U}^{P}(h)
  \le
  \frac{2}{\beta}
  \exp\left[
    -\beta
    \left(
      \frac{U}{2}-4\|A_\Lambda(h)\|_0
    \right)
  \right].
\]
This proves \eqref{eq:defect-pressure-soft}.
\end{proof}

\begin{cor}[Defect pressure error on the bounded field window]
\label{cor:defect-pressure-error-window}
Under the assumptions of Proposition~\ref{prop:defect-pressure-soft-bound}, set
\[
  A_{h_0}(U)
  :=
  \sup_{\Lambda_L}
  \sup_{|h|\le h_0}
  \|A_{\Lambda_L}(h)\|_0,
  \qquad
  \Gamma_U(h_0)
  :=
  \frac{U}{2}-4A_{h_0}(U).
\]
Then, for all \(|h|\le h_0\),
\[
  0
  \le
  p_{\Lambda_L,\beta,U}^{\rm Hub}(h)
  -
  p_{\Lambda_L,\beta,U}^{P}(h)
  \le
  \varepsilon_{\rm def}^{\rm FE}(U,\beta;h_0),
\]
where
\[
  \varepsilon_{\rm def}^{\rm FE}(U,\beta;h_0)
  :=
  \frac{2}{\beta}e^{-\beta\Gamma_U(h_0)}.
\]
Moreover, for \(U\) sufficiently large,
$
  \Gamma_U(h_0)>0,
$
and, for every \(\ell_0>0\),
\[
  \lim_{U\to\infty}
  \sup_{\beta J_0(U)\ge \ell_0}
  \varepsilon_{\rm def}^{\rm FE}(U,\beta;h_0)
  =
  0.
\]
\end{cor}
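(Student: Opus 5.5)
The pointwise bound is immediate from Proposition~\ref{prop:defect-pressure-soft-bound}. For each torus \(\Lambda_L\) and each \(|h|\le h_0\) we have \(\|A_{\Lambda_L}(h)\|_0\le A_{h_0}(U)\). Hence
\[
  \frac{U}{2}-4\|A_{\Lambda_L}(h)\|_0\ge\Gamma_U(h_0),
\]
and since \(s\mapsto e^{-\beta s}\) is decreasing, the right-hand side of \eqref{eq:defect-pressure-soft} is at most \(\varepsilon_{\rm def}^{\rm FE}(U,\beta;h_0)\). The lower bound \(0\) is inherited directly. All that remains is to show that \(A_{h_0}(U)\) is finite and grows much more slowly than \(U\).

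\textbf{Bounding \(A_{h_0}(U)\).} I use \eqref{eq:def-A-Lambda}, \(A_\Lambda(h)=-hM_\Lambda+T_\Lambda^{(0)}+\Delta_\Lambda(h)\), and estimate each term in \(\|\cdot\|_0\).
\begin{itemize}
\item The field term \(-hM_\Lambda\) is a sum of on-site terms \(\eta_xS_x^{(3)}\) of norm \(\tfrac12\), so \(\|hM_\Lambda\|_0\le h_0/2\).
\item Since \(\|\cdot\|_0\le\|\cdot\|_\kappa\), Corollary~\ref{cor:spec-LS-small-tU} gives \(\|T_\Lambda^{(0)}\|_0\le C_T(d,\kappa)|t|\).
\item The same corollary gives \(\|\Delta_\Lambda(h)\|_0\le\widetilde C(d,\kappa)t^2/U\).
\end{itemize}
All three bounds are uniform in \(L\) and \(|h|\le h_0\), with \(\kappa\) fixed once and for all. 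Therefore
\[
  A_{h_0}(U)\le \tfrac{h_0}{2}+C_T|t|+\widetilde C\,\tfrac{t^2}{U}=:a_0,
\]
which stays bounded as \(U\to\infty\). It follows that \(\Gamma_U(h_0)\ge U/2-4a_0\ge U/4\) once \(U\ge 16a_0\). In particular \(\Gamma_U(h_0)>0\) for \(U\) large.

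\textbf{The limit.} Take \(U\) large enough that \(\Gamma_U(h_0)\ge U/4\). On the set \(\beta J_0(U)\ge\ell_0\) we have \(\beta\ge\ell_0U/(4t^2)\). The function \(\beta\mapsto (2/\beta)e^{-\beta\Gamma}\) is decreasing in \(\beta\), so its supremum over this set is attained at the smallest admissible \(\beta\). This gives
\[
  \sup_{\beta J_0(U)\ge\ell_0}\varepsilon_{\rm def}^{\rm FE}(U,\beta;h_0)
  \le \frac{8t^2}{\ell_0U}\exp\!\Bigl(-\frac{\ell_0U^2}{16t^2}\Bigr),
\]
and the right-hand side tends to \(0\) as \(U\to\infty\).

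The only step that needs any care is the second one. One must check that the three contributions to \(A_\Lambda(h)\) are controlled in \(\|\cdot\|_0\), uniformly in the volume and the field, by the LS/SW bounds already stated. In particular, the hopping part \(T^{(0)}_\Lambda\) has to be bounded by a constant independent of \(U\) rather than absorbed into \(U\). The rest is bookkeeping.
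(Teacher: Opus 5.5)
Your proposal is correct and follows the paper's proof essentially step for step. You take the supremum of $\|A_{\Lambda_L}(h)\|_0$ in Proposition~\ref{prop:defect-pressure-soft-bound}, bound the three terms of \eqref{eq:def-A-Lambda} uniformly via Corollary~\ref{cor:spec-LS-small-tU} to get $A_{h_0}(U)\le C(d,h_0,t)$, and then combine $\Gamma_U(h_0)\ge U/4$ with $\beta\ge \ell_0U/(4t^2)$. Your explicit bound $\frac{8t^2}{\ell_0U}\exp\bigl(-\ell_0U^2/(16t^2)\bigr)$, obtained from monotonicity in $\beta$, simply makes quantitative the uniform convergence that the paper asserts.
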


\begin{proof}
The pressure estimate is an immediate consequence of
Proposition~\ref{prop:defect-pressure-soft-bound}, after taking the
supremum of \(\|A_{\Lambda_L}(h)\|_0\) over \(\Lambda_L\) and
\(|h|\le h_0\).

It remains to check that \(\Gamma_U(h_0)>0\) for large \(U\).  By
$
  A_\Lambda(h)
  =
  -hM_\Lambda+T_\Lambda^{(0)}+\Delta_\Lambda(h)
$
and Corollary~\ref{cor:spec-LS-small-tU}, we have, uniformly in
\(\Lambda_L\) and \(|h|\le h_0\),
$
  \|A_{\Lambda_L}(h)\|_0
  \le
  C(d,h_0,t).
$
Indeed, the field and hopping terms have uniformly bounded
\(\|\cdot\|_0\)-norms, and
\[
  \|\Delta_{\Lambda_L}(h)\|_0
  \le
  \|\Delta_{\Lambda_L}(h)\|_\kappa
  \le
  \widetilde C(d,\kappa)\frac{|t|^2}{U}.
\]
Hence
$
  A_{h_0}(U)\le C(d,h_0,t)
$
for all sufficiently large \(U\).  Therefore
$
  \Gamma_U(h_0)
  =
  \frac{U}{2}-4A_{h_0}(U)
  \rightarrow\infty
  \ (U\to\infty),
$
and in particular \(\Gamma_U(h_0)>0\) for \(U\) sufficiently large.

Finally, if \(\beta J_0(U)\ge\ell_0\), then
$
  \beta
  \ge
  \frac{\ell_0}{J_0(U)}
  =
  \frac{\ell_0 U}{4t^2}.
$
Since \(\Gamma_U(h_0)\ge U/4\) for \(U\) sufficiently large, we get
$
  \frac{2}{\beta}e^{-\beta\Gamma_U(h_0)}
  \le
  \frac{2}{\beta}e^{-\beta U/4}
  \rightarrow0
$
uniformly under \(\beta J_0(U)\ge\ell_0\).  This proves the claimed
limit.
\end{proof}

\subsection{Charge-sector consequences of the soft defect bound}
\label{subsec:charge-sector-soft-consequences}

We record a charge-sector consequence of the soft defect bound in the
LS/SW-diagonal representation.  Throughout this subsection the estimates
are uniform for
$
  |h|\le h_0.
$

Define
\[
  H_{*,\Lambda_L}(h):=UD_{\Lambda_L}+A_{\Lambda_L}(h)
\]
on \(\cH_{\Lambda_L}^{\rm hf}\), where \(A_{\Lambda_L}(h)\) is defined in
\eqref{eq:def-A-Lambda}.  We write
\[
  \omega_{\Lambda_L,\beta,U,h}^{*}(O)
  :=
  \frac{
    \Tr_{\cH_{\Lambda_L}^{\rm hf}}
    \left(
      Oe^{-\beta H_{*,\Lambda_L}(h)}
    \right)
  }{
    \Tr_{\cH_{\Lambda_L}^{\rm hf}}
    e^{-\beta H_{*,\Lambda_L}(h)}
  }
\]
for the Gibbs state of the diagonal representative.

With \(\Gamma_U(h_0)\) as in
Corollary~\ref{cor:defect-pressure-error-window}, assume
$
  \Gamma_U(h_0)>0.
$
Set
\[
  \eta_{\rm def}^{\rm dens}(U,\beta;h_0)
  :=
  \frac{4}{\beta\Gamma_U(h_0)}
  e^{-\beta\Gamma_U(h_0)/2}.
\]
\begin{prop}[Defect-density bound in the diagonal representation]
\label{prop:charge-defect-density-star}
Assume the hypotheses of Proposition~\ref{prop:defect-pressure-soft-bound}
uniformly for \(|h|\le h_0\), and assume
$
  \Gamma_U(h_0)>0,
$
where \(\Gamma_U(h_0)\) is defined in
Corollary~\ref{cor:defect-pressure-error-window}.  Then, uniformly in
\(L\in2\mathbb N\) and \(|h|\le h_0\),
\begin{equation}\label{eq:star-defect-density-bound}
  \frac1{|\Lambda_L|}
  \sum_{x\in\Lambda_L}
  \omega_{\Lambda_L,\beta,U,h}^{*}(q_x)
  \le
  \eta_{\rm def}^{\rm dens}(U,\beta;h_0).
\end{equation}
\end{prop}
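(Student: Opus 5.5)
We reuse the defect decomposition \eqref{eq:defect-decomposition-partition} and the ratio bounds $r_{\mathcal D}(h)$ from the proof of Proposition~\ref{prop:defect-pressure-soft-bound}, rather than differentiating the pressure in $U$. The key identity is that, on $\operatorname{ran}Q_{\mathcal D}$, $\sum_x q_x=|\mathcal D|$. Moreover $q_x$ commutes with every $Q_{\mathcal D}$, and $e^{-\beta H_{*}}=\sum_{\mathcal D}e^{-\frac{\beta U}{2}|\mathcal D|}Q_{\mathcal D}e^{-\beta A_\Lambda(h)}$, since $A_\Lambda(h)$ commutes with $D_\Lambda$ and hence with each $Q_{\mathcal D}$ via $\sum_x q_x=2D_\Lambda$. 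Strictly, $A_\Lambda$ need not commute with each individual $Q_{\mathcal D}$, only with the spectral projections of $D_\Lambda$. We therefore work with the spectral projections $P_m=\sum_{|\mathcal D|=2m}Q_{\mathcal D}$ of $D_\Lambda$, which suffices because we only need the total count $\sum_x q_x=2D_\Lambda$. This gives
\[
  \sum_{x}\omega^{*}(q_x)
  =
  \frac{\sum_{\mathcal D}|\mathcal D|\,e^{-\frac{\beta U}{2}|\mathcal D|}\Tr(Q_{\mathcal D}e^{-\beta A_\Lambda(h)})}
  {\sum_{\mathcal D}e^{-\frac{\beta U}{2}|\mathcal D|}\Tr(Q_{\mathcal D}e^{-\beta A_\Lambda(h)})},
\]
where the traces $\Tr(Q_{\mathcal D}e^{-\beta A})=\Tr(Q_{\mathcal D}e^{-\beta A}Q_{\mathcal D})\ge0$ are exactly those whose ratios were already bounded.

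First, we bound the denominator below by its $\mathcal D=\varnothing$ term $Z_0(h)$. Dividing through, and using the bound $r_{\mathcal D}(h)\le (2e^{4\beta\|A_\Lambda(h)\|_0})^{|\mathcal D|}$ from the proof of Proposition~\ref{prop:defect-pressure-soft-bound} (replacing $\|A_\Lambda(h)\|_0$ by $A_{h_0}(U)$), we obtain with $a:=2e^{-\beta\Gamma_U(h_0)}$
\[
  \frac1{|\Lambda|}\sum_x\omega^{*}(q_x)
  \le
  \frac1{|\Lambda|}\sum_{k\ge1}k\binom{|\Lambda|}{k}a^k
  =
  a(1+a)^{|\Lambda|-1}.
\]
The factor $(1+a)^{|\Lambda|-1}$ is not volume-uniform, so this crude route fails. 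It is also the main obstacle.

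To fix it, we keep the full denominator, which dominates the sum. The plan is a convexity/tilting argument. Introduce the auxiliary parameter $s$ in $Z(s):=\sum_{\mathcal D}e^{-s|\mathcal D|}\Tr(Q_{\mathcal D}e^{-\beta A})$, so that the defect count is $-\partial_s\log Z(s)$ at $s=\beta U/2$. Since $\log Z(s)$ is convex and decreasing in $s$, for any $\delta>0$
\[
  -\partial_s\log Z\big|_{s=\beta U/2}
  \le
  \frac{\log Z(\beta U/2-\delta)-\log Z(\beta U/2)}{\delta}.
\]
The numerator is bounded by $\log\bigl(Z(\beta U/2-\delta)/Z_0\bigr)$, because $Z(\beta U/2)\ge Z_0$. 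The same computation as in Proposition~\ref{prop:defect-pressure-soft-bound} then bounds it by $|\Lambda|\,2e^{-(\beta\Gamma_U(h_0)-\delta)}$. Choosing $\delta=\beta\Gamma_U(h_0)/2$ gives
\[
  \frac1{|\Lambda|}\sum_x\omega^*(q_x)\le\frac{2}{\beta\Gamma_U(h_0)/2}e^{-\beta\Gamma_U(h_0)/2}=\eta^{\rm dens}_{\rm def}(U,\beta;h_0),
\]
which is exactly the constant in the statement. This confirms that the convexity route is the intended one.

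Finally, uniformity in $L$ and in $|h|\le h_0$ is inherited from $A_{h_0}(U)$ and $\Gamma_U(h_0)$, and positivity of $\Gamma_U(h_0)$ ensures $\delta>0$.
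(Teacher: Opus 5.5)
Your proposal is correct and is essentially the paper's argument. Your tilt $s=\beta U/2-\delta$ is the paper's defect source $\alpha$ with $\delta=\beta\alpha$. Both proofs bound the derivative at the physical point by a convexity secant, control the secant numerator by rerunning the soft defect bound of Proposition~\ref{prop:defect-pressure-soft-bound} together with $Z\ge Z_0$, and then choose half the gap ($\delta=\beta\Gamma_U(h_0)/2$, equivalently $\alpha=\Gamma_U(h_0)/2$). The crude termwise route you try first is not needed, but discarding it does no harm.
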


\begin{proof}
Introduce the defect-source deformation
\[
  H_{*,\Lambda_L}^{(\alpha)}(h)
  :=
  H_{*,\Lambda_L}(h)
  -
  \alpha\sum_{x\in\Lambda_L}q_x,
  \qquad
  \alpha\ge0.
\]
By Lemma~\ref{lem:defect-counting-hf},
$
  \sum_{x\in\Lambda_L}q_x=2D_{\Lambda_L}
$
on \(\cH_{\Lambda_L}^{\rm hf}\).  Hence
\[
  H_{*,\Lambda_L}^{(\alpha)}(h)
  =
  \left(
    \frac U2-\alpha
  \right)
  \sum_{x\in\Lambda_L}q_x
  +
  A_{\Lambda_L}(h).
\]
Let
\[
  p_{\Lambda_L,\beta,U}^{*,\alpha}(h)
  :=
  \frac1{\beta|\Lambda_L|}
  \log
  \Tr_{\cH_{\Lambda_L}^{\rm hf}}
  e^{-\beta H_{*,\Lambda_L}^{(\alpha)}(h)} .
\]
For \(\alpha=0\), the transformed Hamiltonian is unitarily equivalent to
the Hubbard Hamiltonian, so
$
  p_{\Lambda_L,\beta,U}^{*,0}(h)
  =
  p_{\Lambda_L,\beta,U}^{\rm Hub}(h).
$

The defect-free block is unchanged by the source.  Repeating the proof of
Proposition~\ref{prop:defect-pressure-soft-bound}, with \(U/2\) replaced
by \(U/2-\alpha\), gives, for
$
  0\le\alpha<\Gamma_U(h_0),
$
the bound
\begin{equation}\label{eq:source-soft-defect-pressure-bound}
  0
  \le
  p_{\Lambda_L,\beta,U}^{*,\alpha}(h)
  -
  p_{\Lambda_L,\beta,U}^{P}(h)
  \le
  \frac2\beta
  \exp\left(
    -\beta\bigl(\Gamma_U(h_0)-\alpha\bigr)
  \right),
\end{equation}
uniformly in \(L\) and \(|h|\le h_0\).  Since
$
  0
  \le
  p_{\Lambda_L,\beta,U}^{*,0}(h)
  -
  p_{\Lambda_L,\beta,U}^{P}(h),
$
we obtain
\[
  p_{\Lambda_L,\beta,U}^{*,\alpha}(h)
  -
  p_{\Lambda_L,\beta,U}^{*,0}(h)
  \le
  \frac2\beta
  \exp\left(
    -\beta\bigl(\Gamma_U(h_0)-\alpha\bigr)
  \right).
\]

The map
$
  \alpha\mapsto p_{\Lambda_L,\beta,U}^{*,\alpha}(h)
$
is convex, and its derivative at \(\alpha=0\) is
\[
  \left.
  \frac{\partial}{\partial\alpha}
  p_{\Lambda_L,\beta,U}^{*,\alpha}(h)
  \right|_{\alpha=0}
  =
  \frac1{|\Lambda_L|}
  \sum_{x\in\Lambda_L}
  \omega_{\Lambda_L,\beta,U,h}^{*}(q_x).
\]
Therefore, for \(0<\alpha<\Gamma_U(h_0)\),
\[
  \frac1{|\Lambda_L|}
  \sum_{x\in\Lambda_L}
  \omega_{\Lambda_L,\beta,U,h}^{*}(q_x)
  \le
  \frac{
    p_{\Lambda_L,\beta,U}^{*,\alpha}(h)
    -
    p_{\Lambda_L,\beta,U}^{*,0}(h)
  }{\alpha}.
\]
Thus
\[
  \frac1{|\Lambda_L|}
  \sum_{x\in\Lambda_L}
  \omega_{\Lambda_L,\beta,U,h}^{*}(q_x)
  \le
  \frac2{\beta\alpha}
  \exp\left(
    -\beta\bigl(\Gamma_U(h_0)-\alpha\bigr)
  \right).
\]
Choosing
$
  \alpha=\frac12\Gamma_U(h_0)
$
gives
\[
  \frac1{|\Lambda_L|}
  \sum_{x\in\Lambda_L}
  \omega_{\Lambda_L,\beta,U,h}^{*}(q_x)
  \le
  \frac{4}{\beta\Gamma_U(h_0)}
  \exp\left(
    -\frac{\beta\Gamma_U(h_0)}2
  \right)
  =
  \eta_{\rm def}^{\rm dens}(U,\beta;h_0).
\]
This proves the proposition.
\end{proof}

\section{Proof of the main comparison theorems}
\label{sec:proof-main-theorems}

This section assembles the estimates proved above.  We first prove the
Hubbard--Heisenberg pressure comparison on the bounded field window
\(|h|\le h_0\).  The magnetisation comparison is then obtained from this
pressure comparison on a fixed positive field window by convexity.

\subsection{Proof of the finite-volume pressure comparison}
\label{subsec:assembly-pressure-comparison}

\begin{proof}[Proof of Theorem~\ref{thm:pressure-comparison-bounded-window-face}]

Fix \(\ell_0>0\).  By
Proposition~\ref{prop:P-block-Heis-reference-comparison}, there is an
error function
$
  \varepsilon_P^{\rm FE}(U;h_0)\ge0
$
such that, for all sufficiently large \(U\), all \(\beta>0\), all even
tori \(\Lambda_L\), and all \(|h|\le h_0\),
\[
  \left|
  p_{\Lambda_L,\beta,U}^{P}(h)
  -
  p_{\Lambda_L,\beta,U}^{\rm Heis}(h)
  \right|
  \le
  \varepsilon_P^{\rm FE}(U;h_0),
\]
and
$
  \lim_{U\to\infty}
  \varepsilon_P^{\rm FE}(U;h_0)
  =
  0.
$

By Corollary~\ref{cor:defect-pressure-error-window}, there is an error
function
$
  \varepsilon_{\rm def}^{\rm FE}(U,\beta;h_0)\ge0
$
such that, for all sufficiently large \(U\), all \(\beta>0\) satisfying
$
  \beta J_0(U)\ge\ell_0,
$
all even tori \(\Lambda_L\), and all \(|h|\le h_0\),
\[
  0
  \le
  p_{\Lambda_L,\beta,U}^{\rm Hub}(h)
  -
  p_{\Lambda_L,\beta,U}^{P}(h)
  \le
  \varepsilon_{\rm def}^{\rm FE}(U,\beta;h_0),
\]
and
$
  \lim_{U\to\infty}
  \sup_{\beta J_0(U)\ge\ell_0}
  \varepsilon_{\rm def}^{\rm FE}(U,\beta;h_0)
  =
  0.
$

Define the total pressure comparison error by
\begin{equation}\label{eq:def-total-FE-error}
  \varepsilon_{\rm FE}(U,\beta;h_0)
  :=
  \varepsilon_P^{\rm FE}(U;h_0)
  +
  \varepsilon_{\rm def}^{\rm FE}(U,\beta;h_0).
\end{equation}
Then, by the triangle inequality,
\begin{equation}\label{eq:assembled-pressure-comparison}
  \sup_{|h|\le h_0}
  \left|
  p_{\Lambda_L,\beta,U}^{\rm Hub}(h)
  -
  p_{\Lambda_L,\beta,U}^{\rm Heis}(h)
  \right|
  \le
  \varepsilon_{\rm FE}(U,\beta;h_0),
\end{equation}
uniformly in \(\Lambda_L\).  Moreover,
\begin{equation}\label{eq:total-FE-error-vanishes}
  \lim_{U\to\infty}
  \sup_{\beta J_0(U)\ge\ell_0}
  \varepsilon_{\rm FE}(U,\beta;h_0)
  =
  0.
\end{equation}
This proves
Theorem~\ref{thm:pressure-comparison-bounded-window-face}.
\end{proof}
\subsection{From fixed-window pressure comparison to magnetisation comparison}
\label{subsec:convexity-pressure-to-magnetisation}

We next record the convexity step which converts
\eqref{eq:assembled-pressure-comparison} into a magnetisation comparison
on \(I\).

\begin{lem}[Fixed-window pressure comparison implies magnetisation comparison]
\label{lem:fixed-window-pressure-to-magnetisation}
For every \(h\in I\),
\[
  \left|
  m_{\Lambda,\beta,U}^{\rm Hub}(h)
  -
  m_{\Lambda,\beta,U}^{\rm Heis}(h)
  \right|
  \le
  \varepsilon_{\rm mag}^{\rm conv}(U,\beta;I),
\]
where
\begin{equation}\label{eq:def-convexity-mag-error}
  \varepsilon_{\rm mag}^{\rm conv}(U,\beta;I)
  :=
  \frac12
  \left(
    1-\tanh\left(\frac{\beta h_I}{4}\right)
  \right)
  +
  \frac{4}{h_I}
  \left(
    \varepsilon_{\rm FE}(U,\beta;h_0)
    +
    C_d J_0(U)
  \right).
\end{equation}
In particular, 
\[
  \lim_{U\to\infty}
  \sup_{\beta J_0(U)\ge\ell_0}
  \varepsilon_{\rm mag}^{\rm conv}(U,\beta;I)
  =
  0.
\]
\end{lem}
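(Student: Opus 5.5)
The plan is to avoid differentiating the pressure comparison directly. Instead I would pin both magnetisations into a short interval $[m^{\rm lb},\tfrac12]$ whose length is exactly $\varepsilon_{\rm mag}^{\rm conv}(U,\beta;I)$. The upper end is trivial: $\|M_\Lambda\|\le \tfrac12|\Lambda|$ on both Hilbert spaces, so $m^{\rm Hub}_{\Lambda,\beta,U}(h)\le\tfrac12$ and $m^{\rm Heis}_{\Lambda,\beta,U}(h)\le\tfrac12$. It therefore suffices to show, for every $h\in I$, that both magnetisations are at least
\[
  m^{\rm lb}:=\frac12\tanh\left(\frac{\beta h_I}{4}\right)-\frac{4}{h_I}\bigl(\varepsilon_{\rm FE}(U,\beta;h_0)+C_dJ_0(U)\bigr).
\]
The difference of two numbers in $[m^{\rm lb},\tfrac12]$ is at most $\tfrac12-m^{\rm lb}$, which is precisely \eqref{eq:def-convexity-mag-error}.

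For the lower bound I would use a one-sided difference quotient. Fix $h\in I$ and set $\delta:=h_I/2$, so that $h-\delta\in[h_I/2,h_0]\subset[-h_0,h_0]$. Convexity of $p^\sharp_{\Lambda,\beta,U}$ together with $\partial_hp^\sharp=m^\sharp$ gives
\[
  m^\sharp_{\Lambda,\beta,U}(h)\ge \frac{p^\sharp_{\Lambda,\beta,U}(h)-p^\sharp_{\Lambda,\beta,U}(h-\delta)}{\delta}
  \qquad(\sharp\in\{\rm Hub,\rm Heis\}).
\]
The pure staggered-field spin model has pressure $p_0(h)=\beta^{-1}\log\bigl(2\cosh(\beta h/2)\bigr)$ per site. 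The Heisenberg bond term $J_0(U)\sum_{\{x,y\}}(\boldsymbol S_x\cdot\boldsymbol S_y-\tfrac14)$ has $\|\cdot\|_0$-norm at most $C_dJ_0(U)$, so Lemma~\ref{lem:pressure-Lipschitz-bound} gives $|p^{\rm Heis}_{\Lambda,\beta,U}-p_0|\le C_dJ_0(U)$ uniformly in $|h|\le h_0$. This $C_d$ is the same constant as in Lemma~\ref{lem:Heis-reference-magnetisation-lower-bound}. For the Hubbard side, Theorem~\ref{thm:pressure-comparison-bounded-window-face} gives $|p^{\rm Hub}-p_0|\le \varepsilon_{\rm FE}+C_dJ_0$ at both $h$ and $h-\delta$.

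Next I would apply convexity of $p_0$ in the opposite direction:
\[
  p_0(h)-p_0(h-\delta)\ge \delta\,p_0'(h-\delta)=\frac{\delta}{2}\tanh\bigl(\beta(h-\delta)/2\bigr)\ge \frac{\delta}{2}\tanh\left(\frac{\beta h_I}{4}\right).
\]
Combining this with the two-point errors of size $2(\varepsilon_{\rm FE}+C_dJ_0)$ and dividing by $\delta=h_I/2$ yields $m^\sharp(h)\ge m^{\rm lb}$ for both models; the Heisenberg case is even better, since its error is only $2C_dJ_0$. All bounds are uniform in $L$ and $h\in I$ because every input is.

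For the limit statement, $\beta J_0(U)\ge\ell_0$ forces $\beta\ge \ell_0U/(4t^2)\to\infty$, so $\tanh(\beta h_I/4)\to1$ uniformly. Also $J_0(U)\to0$, and $\sup_{\beta J_0\ge\ell_0}\varepsilon_{\rm FE}\to0$ by Theorem~\ref{thm:pressure-comparison-bounded-window-face}. The only delicate point I anticipate is bookkeeping rather than analysis: keeping $h-\delta$ inside the window $|h|\le h_0$ where the pressure comparison holds, and checking that the constant $C_d$ controlling the $\|\cdot\|_0$-norm of the Heisenberg bond interaction matches the one used in Lemma~\ref{lem:Heis-reference-magnetisation-lower-bound}.
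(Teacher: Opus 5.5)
Your proposal is correct and follows essentially the same route as the paper. Both arguments compare each finite-volume pressure with the pure staggered-field pressure $p^0$ via Lemma~\ref{lem:pressure-Lipschitz-bound}, take a convexity secant bound from below, bound the secant of $p^0$ by $p_0'$ at the lower endpoint, and then squeeze both magnetisations into $[\tfrac12-\varepsilon_{\rm mag}^{\rm conv},\tfrac12]$.

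The only difference is cosmetic. You use the fixed step $\delta=h_I/2$ on $[h-h_I/2,h]$, whereas the paper uses the secant over $[h/2,h]$. Both choices keep the lower point in $[h_I/2,h_0]$ and give the identical error $\varepsilon_{\rm mag}^{\rm conv}(U,\beta;I)$.
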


\begin{proof}
Let
\[
  p_{\Lambda,\beta}^{0}(h)
  :=
  \frac1{\beta|\Lambda|}
  \log
  \Tr_{\cH_\Lambda^{\rm spin}}
  e^{\beta hM_\Lambda^{\rm spin}}.
\]
Since
$
  M_\Lambda^{\rm spin}
  =
  \sum_{x\in\Lambda}\eta_xS_x^{(3)}
$
is a product-field Hamiltonian and \(\eta_x=\pm1\), one has
\[
  p_{\Lambda,\beta}^{0}(h)
  =
  \frac1\beta
  \log\left(2\cosh\frac{\beta h}{2}\right),\qquad
  \partial_h p_{\Lambda,\beta}^{0}(h)
  =
  \frac12\tanh\left(\frac{\beta h}{2}\right).
\]

We first compare the  Heisenberg pressure with the pure-field
pressure.  Write
\[
  H_\Lambda^{\rm Heis}(J_0(U),h)
  =
  -hM_\Lambda^{\rm spin}
  +
  H_\Lambda^{\rm Heis},
\]
where
$
  H_\Lambda^{\rm Heis}
  :=
  J_0(U)
  \sum_{\{ x,y\}\in\mathscr{B}_\Lambda}
  \left(
    \boldsymbol S_x\cdot\boldsymbol S_y-\frac14
  \right).
$
As an interaction,
$
  \|H_\Lambda^{\rm Heis}\|_0
  \le
  C_dJ_0(U).
$
Therefore, applying the pressure Lipschitz bound
Lemma~\ref{lem:pressure-Lipschitz-bound} to the spin Hilbert space
\(\cH_\Lambda^{\rm spin}\), with
$
  H=-sM_\Lambda^{\rm spin}
$
and 
$
  K=H_\Lambda^{\rm Heis},
$
we get, uniformly for \(|s|\le h_0\),
\begin{align}
  \left|
  p_{\Lambda,\beta,U}^{\rm Heis}(s)
  -
  p_{\Lambda,\beta}^{0}(s)
  \right|
  \le
  C_dJ_0(U). \label{inq:Heis-0}
\end{align}
Combining this with 
\eqref{eq:assembled-pressure-comparison},
we obtain
\begin{equation}\label{eq:Hub-pure-field-pressure-window}
  \sup_{s\in I^\sharp}
  \left|
  p_{\Lambda,\beta,U}^{\rm Hub}(s)
  -
  p_{\Lambda,\beta}^{0}(s)
  \right|
  \le
  \varepsilon_{\rm FE}(U,\beta;h_0)
  +
  C_dJ_0(U).
\end{equation}
For shortness, set
$
  \delta_{\rm FE}(U,\beta;h_0)
  :=
  \varepsilon_{\rm FE}(U,\beta;h_0)
  +
  C_dJ_0(U).
$

We shall use the following elementary fact.  If \(f\) is a differentiable
convex function and \(a<b\), then
\[
  f'(b)
  \ge
  \frac{f(b)-f(a)}{b-a}
  \ge
  f'(a).
\]
The finite-volume pressures
\[
  h\mapsto p_{\Lambda,\beta,U}^{\rm Hub}(h),
  \qquad
  h\mapsto p_{\Lambda,\beta,U}^{\rm Heis}(h),
  \qquad
  h\mapsto p_{\Lambda,\beta}^{0}(h)
\]
are convex because they are logarithmic partition functions with a linear
field source.  Their first derivatives are the corresponding normalized
magnetisations by the pressure derivative formula
Lemma~\ref{lem:duhamel-derivative-pressure}.  In particular,
$
  \partial_h p_{\Lambda,\beta,U}^{\rm Hub}(h)
  =
  m_{\Lambda,\beta,U}^{\rm Hub}(h)
  $
  and 
  $
  \partial_h p_{\Lambda,\beta,U}^{\rm Heis}(h)
  =
  m_{\Lambda,\beta,U}^{\rm Heis}(h).
$

Fix \(h\in I\).  Since \(h/2\in I^\sharp\), convexity of
\(p_{\Lambda,\beta,U}^{\rm Hub}\) gives
\[
  m_{\Lambda,\beta,U}^{\rm Hub}(h)
  =
  \partial_h p_{\Lambda,\beta,U}^{\rm Hub}(h)
  \ge
  \frac{
    p_{\Lambda,\beta,U}^{\rm Hub}(h)
    -
    p_{\Lambda,\beta,U}^{\rm Hub}(h/2)
  }{h/2}.
\]
Using \eqref{eq:Hub-pure-field-pressure-window} at \(h\) and \(h/2\), we get
\[
  p_{\Lambda,\beta,U}^{\rm Hub}(h)
  -
  p_{\Lambda,\beta,U}^{\rm Hub}(h/2)
  \ge
  p_{\Lambda,\beta}^{0}(h)
  -
  p_{\Lambda,\beta}^{0}(h/2)
  -
  2\delta_{\rm FE}(U,\beta;h_0).
\]
Therefore
\[
  m_{\Lambda,\beta,U}^{\rm Hub}(h)
  \ge
  \frac{
    p_{\Lambda,\beta}^{0}(h)
    -
    p_{\Lambda,\beta}^{0}(h/2)
  }{h/2}
  -
  \frac{4}{h}
  \delta_{\rm FE}(U,\beta;h_0).
\]
Applying the same secant inequality to the convex function
\(p_{\Lambda,\beta}^{0}\), we obtain
\[
  \frac{
    p_{\Lambda,\beta}^{0}(h)
    -
    p_{\Lambda,\beta}^{0}(h/2)
  }{h/2}
  \ge
  \partial_h p_{\Lambda,\beta}^{0}(h/2)
  =
  \frac12\tanh\left(\frac{\beta h}{4}\right).
\]
Since \(h\ge h_I\), this implies
\begin{equation}\label{eq:Hub-mag-lower-convex}
  m_{\Lambda,\beta,U}^{\rm Hub}(h)
  \ge
  \frac12\tanh\left(\frac{\beta h_I}{4}\right)
  -
  \frac{4}{h_I}
   \delta_{\rm FE}(U,\beta;h_0).
\end{equation}
On the other hand,
$
  m_{\Lambda,\beta,U}^{\rm Hub}(h)
  =
  \frac1{|\Lambda|}
  \omega_{\Lambda,\beta,U,h}^{\rm Hub}(M_\Lambda)
  \le
  \frac12,
$
because
$
  \|M_\Lambda\|\le\frac{|\Lambda|}{2}.
$

The same argument applied to
\(p_{\Lambda,\beta,U}^{\rm Heis}\), using only \eqref{inq:Heis-0},
gives
\[
  m_{\Lambda,\beta,U}^{\rm Heis}(h)
  \ge
  \frac12\tanh\left(\frac{\beta h_I}{4}\right)
  -
  \frac{4C_d}{h_I}J_0(U).
\]
Also,
$
  m_{\Lambda,\beta,U}^{\rm Heis}(h)
  =
  \frac1{|\Lambda|}
  \omega_{\Lambda,\beta,U,h}^{\rm Heis}(M_\Lambda^{\rm spin})
  \le
  \frac12.
$
Thus both magnetisations lie in the interval
$
  \left[
    \frac12-\varepsilon_{\rm mag}^{\rm conv}(U,\beta;I),
    \frac12
  \right].
$
Consequently,
\[
  \left|
  m_{\Lambda,\beta,U}^{\rm Hub}(h)
  -
  m_{\Lambda,\beta,U}^{\rm Heis}(h)
  \right|
  \le
  \varepsilon_{\rm mag}^{\rm conv}(U,\beta;I).
\]

It remains to prove the limiting statement.  Since
$
  J_0(U)=\frac{4t^2}{U}\to0,
$
and
$
  \beta J_0(U)\ge\ell_0
  \Rightarrow
  \beta h_I
  \ge
  \frac{\ell_0 h_I}{J_0(U)}
  \to\infty,
$
we have
$
  1-\tanh\left(\frac{\beta h_I}{4}\right)\to0
$
uniformly under the condition \(\beta J_0(U)\ge\ell_0\).  The assumed
convergence of
$
  \varepsilon_{\rm FE}(U,\beta;I^\sharp)
$
and \(J_0(U)\to0\) imply
$
  \lim_{U\to\infty}
  \sup_{\beta J_0(U)\ge\ell_0}
  \varepsilon_{\rm mag}^{\rm conv}(U,\beta;I)
  =
  0.
$
This completes the proof.
\end{proof}

\subsection{Proof of the finite-volume fixed-field magnetisation comparison}
\label{subsec:proof-finite-volume-fixed-field-comparison}

\begin{proof}[Proof of Theorem~\ref{thm:fixed-field-magnetisation-comparison-face}]
Let
\[
  h_I:=\inf I,
  \qquad
  h_+:=\sup I,
  \qquad
  I^\sharp:=\left[\frac{h_I}{2},h_+\right].
\]
Since \(I\Subset(0,h_0]\), we have
$
  I^\sharp\subset[-h_0,h_0].
$
Therefore the pressure comparison
\eqref{eq:assembled-pressure-comparison} applies on \(I^\sharp\):
\[
  \sup_{s\in I^\sharp}
  \left|
  p_{\Lambda_L,\beta,U}^{\rm Hub}(s)
  -
  p_{\Lambda_L,\beta,U}^{\rm Heis}(s)
  \right|
  \le
  \varepsilon_{\rm FE}(U,\beta;h_0).
\]
Applying Lemma~\ref{lem:fixed-window-pressure-to-magnetisation}, we get,
for every \(h\in I\),
\[
  \left|
  m_{\Lambda_L,\beta,U}^{\rm Hub}(h)
  -
  m_{\Lambda_L,\beta,U}^{\rm Heis}(h)
  \right|
  \le
  \varepsilon_{\rm mag}^{\rm conv}(U,\beta;I).
\]
Set
$
  \varepsilon_{\rm mag}(U,\beta;I)
  :=
  \varepsilon_{\rm mag}^{\rm conv}(U,\beta;I).
$
The convergence
$
  \lim_{U\to\infty}
  \sup_{\beta J_0(U)\ge\ell_0}
  \varepsilon_{\rm mag}(U,\beta;I)
  =
  0
$
is part of Lemma~\ref{lem:fixed-window-pressure-to-magnetisation}, using
\eqref{eq:total-FE-error-vanishes}.  This proves the theorem.
\end{proof}

\subsection{Infinite-volume consequences}
\label{subsec:proof-infinite-volume-consequences}

\begin{proof}[Proof of Corollary~\ref{cor:infinite-volume-fixed-field-comparison-face}]
Assume Assumption~\ref{ass:thermodynamic-limit-pressure-face}.  Taking
\(L\to\infty\) in
Theorem~\ref{thm:pressure-comparison-bounded-window-face} gives, for
every \(|h|\le h_0\),
$
  \left|
  p_{\beta,U}^{\rm Hub}(h)
  -
  p_{\beta,U}^{\rm Heis}(h)
  \right|
  \le
  \varepsilon_{\rm FE}(U,\beta;h_0).
$

It remains to pass the fixed positive-field magnetisation comparison to
the thermodynamic limit.  Let \(I\Subset(0,h_0]\), and let
$
  h\in
  \operatorname{int}I
  \cap
  \mathcal D_{\beta,U}^{\rm Hub}
  \cap
  \mathcal D_{\beta,U}^{\rm Heis}.
$
By Lemma~\ref{lem:thermo-derivative-magnetisation-pressure}, the
thermodynamic magnetisations at such an \(h\) are the derivatives of the
limiting pressures and are also the limits of the corresponding
finite-volume magnetisations.  Therefore, taking \(L\to\infty\) in
Theorem~\ref{thm:fixed-field-magnetisation-comparison-face} gives
$
  \left|
  m_{\beta,U}^{\rm Hub}(h)
  -
  m_{\beta,U}^{\rm Heis}(h)
  \right|
  \le
  \varepsilon_{\rm mag}(U,\beta;I).
$
This proves the corollary.
\end{proof}

\subsection{Positive fixed-field magnetisation}
\label{subsec:appendix-positive-Hub-magnetisation}

\begin{proof}[Proof of Corollary~\ref{cor:positive-Hub-magnetisation-lower-bound}]
By Lemma~\ref{lem:Heis-reference-magnetisation-lower-bound}, for all
\(h\in I\),
\[
  m_{\Lambda_L,\beta,U}^{\rm Heis}(h)
  \ge
  m_{\rm Heis}^{\rm lb}(U,\beta;I),
\]
and
$
  \lim_{U\to\infty}
  \inf_{\beta J_0(U)\ge\ell_0}
  m_{\rm Heis}^{\rm lb}(U,\beta;I)
  =
  \frac12.
$
By Theorem~\ref{thm:fixed-field-magnetisation-comparison-face},
\[
  \left|
  m_{\Lambda_L,\beta,U}^{\rm Hub}(h)
  -
  m_{\Lambda_L,\beta,U}^{\rm Heis}(h)
  \right|
  \le
  \varepsilon_{\rm mag}(U,\beta;I).
\]
Therefore
\[
  m_{\Lambda_L,\beta,U}^{\rm Hub}(h)
  \ge
  m_{\rm Heis}^{\rm lb}(U,\beta;I)
  -
  \varepsilon_{\rm mag}(U,\beta;I)
  =
  m_{\rm Hub}^{\rm lb}(U,\beta;I).
\]

Taking the infimum over \(\beta J_0(U)\ge\ell_0\), we obtain
\[
  \inf_{\beta J_0(U)\ge\ell_0}
  m_{\rm Hub}^{\rm lb}(U,\beta;I)
  \ge
  \inf_{\beta J_0(U)\ge\ell_0}
  m_{\rm Heis}^{\rm lb}(U,\beta;I)
  -
  \sup_{\beta J_0(U)\ge\ell_0}
  \varepsilon_{\rm mag}(U,\beta;I).
\]
The first term converges to \(1/2\) by
Lemma~\ref{lem:Heis-reference-magnetisation-lower-bound}, while the
second term converges to \(0\) by
Theorem~\ref{thm:fixed-field-magnetisation-comparison-face}.  Hence
$
  \lim_{U\to\infty}
  \inf_{\beta J_0(U)\ge\ell_0}
  m_{\rm Hub}^{\rm lb}(U,\beta;I)
  =
  \frac12.
$
The final lower bound by \(1/4\) follows by increasing the
strong-coupling threshold.
\end{proof}

\subsection{Proof of the charge-sector theorem}
\label{subsec:proof-charge-sector-face}

We prove Theorem~\ref{thm:charge-sector-suppression-face}.  The only
additional input needed beyond the diagonal defect-density estimate is the
LS/SW dressing estimate for the normalized defect density.

For \(L\in2\mathbb N\), set
\[
  \bar q_{\Lambda_L}
  :=
  \frac1{|\Lambda_L|}
  \sum_{x\in\Lambda_L}q_x.
\]
Let \(U_{{\rm SW},\Lambda_L}(h)\) be the LS/SW product unitary from
Proposition~\ref{prop:spec-LS-contraction} and
Corollary~\ref{cor:spec-LS-small-tU}.

\begin{prop}[Dressing of the normalized defect density]
\label{prop:charge-dressing-normalized-defect}
Define
\[
  \rho_q(U;h_0)
  :=
  \sup_{L\in2\mathbb N}
  \sup_{|h|\le h_0}
  \left\|
    U_{{\rm SW},\Lambda_L}(h)
    \bar q_{\Lambda_L}
    U_{{\rm SW},\Lambda_L}(h)^{\ast}
    -
    \bar q_{\Lambda_L}
  \right\|.
\]
Then, after increasing the strong-coupling threshold if necessary,
\[
  \rho_q(U;h_0)\longrightarrow0
  \qquad
  (U\to\infty).
\]
\end{prop}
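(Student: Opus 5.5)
The plan is to exploit that \(\bar q_{\Lambda_L}\) is an average of bounded single-site observables. A commutator with a quasi-local anti-self-adjoint generator is then controlled by the per-site interaction norm \(\|\cdot\|_\kappa\), with no factor of the volume. The generator-sum bound of Corollary~\ref{cor:spec-LS-small-tU} should then close the argument.

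\emph{Step 1 (telescoping).} Fix \(L\) and \(|h|\le h_0\). Write \(U_N:=e^{S_{N-1}}\cdots e^{S_0}\) and \(W_n:=e^{S_{N-1}}\cdots e^{S_{n}}\), with \(W_N:=\mathbbm 1\). Then
\[
  U_N\bar q_{\Lambda_L}U_N^\ast-\bar q_{\Lambda_L}
  =
  \sum_{n=0}^{N-1}
  W_{n+1}\bigl(e^{S_n}\bar q_{\Lambda_L}e^{-S_n}-\bar q_{\Lambda_L}\bigr)W_{n+1}^\ast .
\]
Each \(W_{n+1}\) is unitary since \(S_k^\ast=-S_k\), so the norm is at most \(\sum_n\|e^{S_n}\bar q_{\Lambda_L}e^{-S_n}-\bar q_{\Lambda_L}\|\). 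Because \(S_n\) is anti-self-adjoint, Duhamel's formula \(e^{S}Ae^{-S}-A=\int_0^1e^{sS}[S,A]e^{-sS}\,ds\) gives
\[
  \|e^{S_n}\bar q_{\Lambda_L}e^{-S_n}-\bar q_{\Lambda_L}\|\le\|[S_n,\bar q_{\Lambda_L}]\| .
\]
This avoids the BCH series entirely.

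\emph{Step 2 (volume-free commutator bound).} Write \(S_n=\sum_X(S_n)_X\). Each \(q_x\) is even and supported at \(x\). Hence \([(S_n)_X,q_x]=0\) for \(x\notin X\), and \(\|[(S_n)_X,\bar q_{\Lambda_L}]\|\le 2|X|\,\|(S_n)_X\|/|\Lambda_L|\) since \(\|q_x\|\le1\). Summing over \(X\), and using \(\sum_X|X|a_X=\sum_x\sum_{X\ni x}a_X\) together with \(|X|\le \kappa^{-1}e^{\kappa|X|}\), gives
\[
  \|[S_n,\bar q_{\Lambda_L}]\|
  \le\frac{2}{|\Lambda_L|}\sum_{x}\sum_{X\ni x}\|(S_n)_X\|
  \le 2\|S_n\|_0\le\frac{2}{\kappa}\|S_n\|_\kappa .
\]
Using \(\|\cdot\|_\kappa\) rather than the support size \(s(S_n)\) matters here: supports grow along the iteration, and the \(\kappa\)-norm absorbs that growth uniformly.

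\emph{Step 3 (summation and limit).} By Corollary~\ref{cor:spec-LS-small-tU}, \(\sum_n\|S_n(h)\|_\kappa\le C_\ast(d,\kappa)|t|/U\), uniformly in \(L\) and \(|h|\le h_0\). Therefore
\[
  \|U_N\bar q_{\Lambda_L}U_N^\ast-\bar q_{\Lambda_L}\|\le \frac{2C_\ast(d,\kappa)}{\kappa}\frac{|t|}{U}
\]
for all \(N\). Since \(U_N\to U_{{\rm SW},\Lambda_L}(h)\) in operator norm in finite volume, the same bound holds for the limit. This gives \(\rho_q(U;h_0)\le 2C_\ast\kappa^{-1}|t|/U\to0\), valid for \(U\) above the LS/SW threshold of Corollary~\ref{cor:spec-LS-small-tU}.

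The main obstacle I expect is Step 2, namely obtaining a commutator bound that is uniform in the volume despite the unbounded growth of \(\supp S_n\). The first thing I would try is exactly the weighted sum over sets \(X\ni x\) with the factor \(|X|\le\kappa^{-1}e^{\kappa|X|}\). This works only if the \(\|\cdot\|_\kappa\)-control of the generators in Proposition~\ref{prop:spec-LS-contraction} holds for the full iteration with a fixed \(\kappa>0\), and the paper asserts that it does.
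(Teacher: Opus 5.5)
Your proposal is correct and follows the paper's own proof: you telescope the product of unitary conjugations, bound each factor by Duhamel's formula, use the volume-free averaged commutator bound \(\|[S_n,\bar q_{\Lambda_L}]\|\le 2\|S_n\|_0\) (the paper's Lemma~\ref{lem:averaged-defect-commutator-bound}), and sum with the generator estimate of Corollary~\ref{cor:spec-LS-small-tU} before passing to the norm limit. The only difference is cosmetic: your factor \(\kappa^{-1}\) is unnecessary, since \(\|S_n\|_0\le\|S_n\|_\kappa\) holds trivially and the identity \(\sum_X|X|a_X=\sum_x\sum_{X\ni x}a_X\) already absorbs the support growth at the level of \(\|\cdot\|_0\), so \(\kappa\) enters only through the generator-sum bound.
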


\begin{proof}
See Appendix~\ref{subsec:appendix-charge-dressing}.
\end{proof}

The proof of Proposition~\ref{prop:charge-dressing-normalized-defect} is a
pure LS/SW conjugation estimate for a normalized extensive observable.  It is
based on the summability of the generators \(S_n(h)\) and is placed in the
appendix.

With \(\Gamma_U(h_0)\) as in
Corollary~\ref{cor:defect-pressure-error-window}, recall
$
  \eta_{\rm def}^{\rm dens}(U,\beta;h_0)
  =
  \frac{4}{\beta\Gamma_U(h_0)}
  e^{-\beta\Gamma_U(h_0)/2}, 
$
whenever \(\Gamma_U(h_0)>0\).  Define
\[
  \varepsilon_{\rm ch}(U,\beta;h_0)
  :=
  \eta_{\rm def}^{\rm dens}(U,\beta;h_0)
  +
  \rho_q(U;h_0).
\]

\begin{proof}[Proof of Theorem~\ref{thm:charge-sector-suppression-face}]
Choose \(U_0=U_0(\ell_0,d,h_0,t)\) sufficiently large so that the LS/SW
construction, Propositions~\ref{prop:charge-defect-density-star} and \ref{prop:charge-dressing-normalized-defect}, and
\(\Gamma_U(h_0)>0\) all hold for \(U\ge U_0\).

Recall the Gibbs state \(\omega_{\Lambda_L,\beta,U,h}^{*}\) of the
diagonal representative.  Since
\[
  H_{*,\Lambda_L}(h)
  =
  U_{{\rm SW},\Lambda_L}(h)
  H_{\Lambda_L}^{\rm Hub}(h)
  U_{{\rm SW},\Lambda_L}(h)^{\ast},
\]
we have, for every observable \(B\) on \(\cH_{\Lambda_L}^{\rm hf}\),
\[
  \omega_{\Lambda_L,\beta,U,h}^{\rm Hub}(B)
  =
  \omega_{\Lambda_L,\beta,U,h}^{*}
  \left(
    U_{{\rm SW},\Lambda_L}(h)
    B
    U_{{\rm SW},\Lambda_L}(h)^{\ast}
  \right).
\]
Applying this with \(B=\bar q_{\Lambda_L}\), we obtain
\[
  \omega_{\Lambda_L,\beta,U,h}^{\rm Hub}
  \left(
    \bar q_{\Lambda_L}
  \right)
  \le
  \omega_{\Lambda_L,\beta,U,h}^{*}
  \left(
    \bar q_{\Lambda_L}
  \right)
  +
  \rho_q(U;h_0).
\]
By Proposition~\ref{prop:charge-defect-density-star},
we have $
  \omega_{\Lambda_L,\beta,U,h}^{*}
  \left(
    \bar q_{\Lambda_L}
  \right)
  \le
  \eta_{\rm def}^{\rm dens}(U,\beta;h_0),
  $ for $|h|\le h_0$.
Hence
\[
  \omega_{\Lambda_L,\beta,U,h}^{\rm Hub}
  \left(
    \bar q_{\Lambda_L}
  \right)
  \le
  \varepsilon_{\rm ch}(U,\beta;h_0).
\]
Since
\[
  \omega_{\Lambda_L,\beta,U,h}^{\rm Hub}
  \left(
    \bar q_{\Lambda_L}
  \right)
  =
  \frac1{|\Lambda_L|}
  \sum_{x\in\Lambda_L}
  \omega_{\Lambda_L,\beta,U,h}^{\rm Hub}(q_x),
\]
this proves the first estimate in
Theorem~\ref{thm:charge-sector-suppression-face}.

Next, Lemma~\ref{lem:defect-counting-hf} gives, on
\(\cH_{\Lambda_L}^{\rm hf}\),
$
  \sum_{x\in\Lambda_L}q_x
  =
  2D_{\Lambda_L}
  =
  2\sum_{x\in\Lambda_L}n_{x\uparrow}n_{x\downarrow}.
$
Therefore
\[
  \frac1{|\Lambda_L|}
  \sum_{x\in\Lambda_L}
  \omega_{\Lambda_L,\beta,U,h}^{\rm Hub}
  (n_{x\uparrow}n_{x\downarrow})
  =
  \frac12
  \omega_{\Lambda_L,\beta,U,h}^{\rm Hub}
  \left(
    \bar q_{\Lambda_L}
  \right)
  \le
  \frac12
  \varepsilon_{\rm ch}(U,\beta;h_0).
\]

It remains to prove the two staggered charge estimates.  Recall
$
  C_{\Lambda_L}^{\rm ch}
  =
  \sum_{x\in\Lambda_L}\eta_x(n_x-1).
$
Since the number operators \(n_x\) commute, the operators
\(\{n_x-1\}_{x\in\Lambda_L}\) can be treated by joint functional
calculus.  On each site, \(n_x-1\) has spectrum contained in
\(\{-1,0,1\}\), and hence
$
  |n_x-1|=(n_x-1)^2=q_x.
$
Therefore, by the scalar triangle inequality in the joint spectral
representation,
$
  |C_{\Lambda_L}^{\rm ch}|
  =
  \left|
    \sum_{x\in\Lambda_L}\eta_x(n_x-1)
  \right|
  \le
  \sum_{x\in\Lambda_L}|n_x-1|
  =
  \sum_{x\in\Lambda_L}q_x .
$
Thus
\[
  \left|
  \frac1{|\Lambda_L|}
  \omega_{\Lambda_L,\beta,U,h}^{\rm Hub}
  (C_{\Lambda_L}^{\rm ch})
  \right|
  \le
  \frac1{|\Lambda_L|}
  \omega_{\Lambda_L,\beta,U,h}^{\rm Hub}
  \left(
    \sum_{x\in\Lambda_L}q_x
  \right)
  \le
  \varepsilon_{\rm ch}(U,\beta;h_0).
\]
Similarly,
$
  (C_{\Lambda_L}^{\rm ch})^2
  \le
  \left(
    \sum_{x\in\Lambda_L}q_x
  \right)^2
  \le
  |\Lambda_L|
  \sum_{x\in\Lambda_L}q_x,
$
because the \(q_x\)'s are commuting projections.  Hence
\[
  \frac1{|\Lambda_L|^2}
  \omega_{\Lambda_L,\beta,U,h}^{\rm Hub}
  \left(
    (C_{\Lambda_L}^{\rm ch})^2
  \right)
  \le
  \frac1{|\Lambda_L|}
  \omega_{\Lambda_L,\beta,U,h}^{\rm Hub}
  \left(
    \sum_{x\in\Lambda_L}q_x
  \right)
  \le
  \varepsilon_{\rm ch}(U,\beta;h_0).
\]

Finally, we check the strong-coupling limit of the error.  By
Corollary~\ref{cor:defect-pressure-error-window}, after increasing
\(U_0\) if necessary,
$
  \Gamma_U(h_0)\ge \frac U4,
  \, U\ge U_0.
$
If \(t\ne0\), the condition \(\beta J_0(U)\ge\ell_0\) implies
$
  \beta\ge\frac{\ell_0 U}{4t^2}.
$
Therefore
\[
  \sup_{\beta J_0(U)\ge\ell_0}
  \eta_{\rm def}^{\rm dens}(U,\beta;h_0)
  \le
  \frac{64t^2}{\ell_0 U^2}
  \exp\left(
    -\frac{\ell_0 U^2}{32t^2}
  \right)
  \longrightarrow0.
\]
Together with Proposition~\ref{prop:charge-dressing-normalized-defect},
this gives
$
  \lim_{U\to\infty}
  \sup_{\beta J_0(U)\ge\ell_0}
  \varepsilon_{\rm ch}(U,\beta;h_0)
  =
  0.
$
This completes the proof.
\end{proof}

\appendix

\section{BCH/LS estimates and \(D\)-graded interactions}
\label{app:LS-BCH}

\subsection{Proofs deferred from Section~\ref{subsec:grading-norms}}\label{app:LS-BCH-proofs}
\begin{proof}[Proof of Lemma~\ref{lem:spec-com-bound}]
Fix \(x\in\Lambda\).  Since the interactions considered here are even,
local terms with disjoint supports commute.  Hence only pairs
\((X,Y)\) with \(X\cap Y\neq\varnothing\) contribute to
\((\ad_A(B))_Z\).

Using
$
  \|[S,T]\|\le 2\|S\|\,\|T\|
$
and
$
  e^{\kappa|X\cup Y|}
  \le
  e^{\kappa|X|}e^{\kappa|Y|},
$
we obtain
\begin{align*}
  \sum_{Z\ni x} e^{\kappa|Z|}
  \|(\ad_A(B))_Z\|
  &\le
  \sum_{\substack{X,Y\subset\Lambda:\\
                  x\in X\cup Y,\ X\cap Y\neq\varnothing}}
  e^{\kappa|X\cup Y|}
  \|[A_X,B_Y]\|                                      \\
  &\le
  2
  \sum_{\substack{X,Y\subset\Lambda:\\
                  x\in X\cup Y,\ X\cap Y\neq\varnothing}}
  e^{\kappa|X|}\|A_X\|\,
  e^{\kappa|Y|}\|B_Y\|.
\end{align*}
Splitting the condition \(x\in X\cup Y\) into the two cases
\(x\in X\) and \(x\in Y\), we get
\begin{align*}
  \sum_{Z\ni x} e^{\kappa|Z|}
  \|(\ad_A(B))_Z\|
  &\le
  2\sum_{X\ni x}e^{\kappa|X|}\|A_X\|
  \sum_{\substack{Y\subset\Lambda:\\Y\cap X\neq\varnothing}}
  e^{\kappa|Y|}\|B_Y\|                                \\
  &\quad+
  2\sum_{Y\ni x}e^{\kappa|Y|}\|B_Y\|
  \sum_{\substack{X\subset\Lambda:\\X\cap Y\neq\varnothing}}
  e^{\kappa|X|}\|A_X\|.
\end{align*}
For a fixed \(X\) with \(A_X\neq0\), one has \(|X|\le s(A)\), and hence
\[
  \sum_{\substack{Y\subset\Lambda:\\Y\cap X\neq\varnothing}}
  e^{\kappa|Y|}\|B_Y\|
  \le
  \sum_{y\in X}
  \sum_{Y\ni y}e^{\kappa|Y|}\|B_Y\|
  \le
  s(A)\|B\|_\kappa .
\]
Similarly, for a fixed \(Y\) with \(B_Y\neq0\),
$
  \sum_{\substack{X\subset\Lambda:\\X\cap Y\neq\varnothing}}
  e^{\kappa|X|}\|A_X\|
  \le
  s(B)\|A\|_\kappa .
$
Therefore
\[
  \sum_{Z\ni x} e^{\kappa|Z|}
  \|(\ad_A(B))_Z\|
  \le
  2(s(A)+s(B))\|A\|_\kappa\|B\|_\kappa .
\]
Taking the supremum over \(x\in\Lambda\) proves
$
  \|\ad_A(B)\|_\kappa
  \le
  2(s(A)+s(B))\|A\|_\kappa\|B\|_\kappa, 
$
which is the stated bound.
\end{proof}

\begin{proof}[Proof of Lemma \ref{lem:BCH-summability-kappa}]
If \(A=0\), the assertion is trivial.  Hence we assume \(A\neq0\), so
\(s_A\ge1\).  Set
$
  B_n:=\ad_A^{\,n}(B),
  \, n\ge0.
$
Then \(B_0=B\).  Since every commutator with \(A\) replaces a support
\(Y\) by a union \(X\cup Y\) with \(|X|\le s_A\), the interaction \(B_n\)
has support size bounded by
$
  s(B_n)
  \le
  s(B)+n s(A) .
$
Applying Lemma~\ref{lem:spec-com-bound} to \(A\) and \(B_n\), we get
\[
  \|B_{n+1}\|_\kappa
  =
  \|\ad_A(B_n)\|_\kappa
  \le
  2\bigl(s(A)+s(B_n)\bigr)
  \|A\|_\kappa\|B_n\|_\kappa .
\]
Using \(s(B_n)\le s(B)+n s(A)\), this gives
$
  \|B_{n+1}\|_\kappa
  \le
  2\bigl(s(B)+(n+1)s(A)\bigr)
  \|A\|_\kappa\|B_n\|_\kappa .
$
Iterating from \(B_0=B\), we obtain, for \(n\ge1\),
\[
  \|\ad_A^{\,n}(B)\|_\kappa
  \le
  (2\|A\|_\kappa)^n
  \prod_{j=1}^{n}
  (s(B)+j s(A))\,
  \|B\|_\kappa .
\]
Therefore
\[
  \frac1{n!}
  \|\ad_A^{\,n}(B)\|_\kappa
  \le
  (2s(A)\|A\|_\kappa)^n
  \prod_{j=1}^{n}
  \left(1+\frac{s(B)}{j s(A)}\right)
  \|B\|_\kappa .
\]
Let
$
  a:=\frac{s(B)}{s(A)}.
$
Since
\[
  \prod_{j=1}^{n}
  \left(1+\frac{a}{j}\right)
  \le
  \exp\left(
    a\sum_{j=1}^{n}\frac1j
  \right)
  \le
  e^a(n+1)^a,
\]
we have
$
  \frac1{n!}
  \|\ad_A^{\,n}(B)\|_\kappa
  \le
  e^a
  (2s(A)\|A\|_\kappa)^n
  (n+1)^a
  \|B\|_\kappa .
$
By assumption,
$
  r:=2s(A)\|A\|_\kappa\le \rho<1.
$
Hence
\[
  \sum_{n\ge1}\frac1{n!}
  \|\ad_A^{\,n}(B)\|_\kappa
  \le
  e^a
  \sum_{n\ge1} r^n (n+1)^a
  \|B\|_\kappa .
\]
Factoring out \(r\) and using \(r\le \rho\), we get
$
  \sum_{n\ge1} r^n (n+1)^a
  \le
  r
  \sum_{n\ge1}\rho^{n-1}(n+1)^a .
$
Since \(\rho<1\), the series
$
  \sum_{n\ge1}\rho^{n-1}(n+1)^a
$
is finite.  Therefore
\[
  \sum_{n\ge1}\frac1{n!}
  \|\ad_A^{\,n}(B)\|_\kappa
  \le
  C_{\rm BCH}\,
  \|A\|_\kappa\,\|B\|_\kappa,
\]
where one may take
$
  C_{\rm BCH}
  :=
  2s(A) e^{s(B)/s(A)}
  \sum_{n\ge1}\rho^{n-1}(n+1)^{s(B)/s(A)}.
$
This constant depends only on \(s(A),s(B),\rho\).

Finally, the BCH expansion gives
$
  e^A B e^{-A}-B
  =
  \sum_{n\ge1}\frac1{n!}\ad_A^{\,n}(B),
$
with convergence in \(\|\cdot\|_\kappa\) by the estimate just proved.
Taking the \(\|\cdot\|_\kappa\)-norm yields
$
  \left\|
  e^A B e^{-A}-B
  \right\|_\kappa
  \le
  C_{\rm BCH}\,
  \|A\|_\kappa\,\|B\|_\kappa .
$
\end{proof}

\begin{proof}[Proof of Lemma~\ref{lem:D-grading-contraction}]
Write
$
  \cH_m:=\operatorname{ran}(P_m),
$
so that
$
  \cH_\Lambda^{\rm hf}
  =
  \bigoplus_{m\ge0}\cH_m
$
orthogonally.  

For
\[
  \psi=\sum_{m\ge0}\psi_m,
  \qquad
  \psi_m:=P_m\psi\in\cH_m,
\]
we have
\[
  A^{(k)}\psi
  =
  \sum_{m\ge0}P_{m+k}A\psi_m,
  \qquad
  P_{m+k}A\psi_m\in\cH_{m+k}.
\]
Since the spaces \(\cH_{m+k}\) are mutually orthogonal for different
\(m\), it follows that
\[
  \|A^{(k)}\psi\|^2
  =
  \sum_{m\ge0}\|P_{m+k}A\psi_m\|^2
  \le
  \left(
    \sup_{m\ge0}\|P_{m+k}AP_m\|^2
  \right)
  \sum_{m\ge0}\|\psi_m\|^2 .
\]
Hence
$
  \|A^{(k)}\psi\|^2
  \le
  \left(
    \sup_{m\ge0}\|P_{m+k}AP_m\|^2
  \right)
  \|\psi\|^2.
$
Taking the supremum over \(\|\psi\|=1\) gives
$
  \|A^{(k)}\|
  \le
  \sup_{m\ge0}\|P_{m+k}AP_m\|.
$
The reverse inequality follows from
$
  P_{m+k}AP_m
  =
  P_{m+k}A^{(k)}P_m .
$
Thus
$
  \|A^{(k)}\|
  =
  \sup_{m\ge0}\|P_{m+k}AP_m\|
  \le
  \|A\|.
$

For \(k=0\), this gives the contractivity of the diagonal part:
$
  \|A^{\rm diag}\|=\|A^{(0)}\|\le\|A\|.
$
Since
$
  A^{\rm off}=A-A^{\rm diag},
$
we also have
$
  \|A^{\rm off}\|\le2\|A\|.
$

Applying these estimates termwise to an interaction
\(C=\{C_X\}_{X\subset\Lambda}\), we obtain
\[
  \|C^{(k)}\|_\kappa\le\|C\|_\kappa,
  \qquad
  \|C^{\rm diag}\|_\kappa\le\|C\|_\kappa,
\qquad
  \|C^{\rm off}\|_\kappa\le2\|C\|_\kappa.
\]

It remains to show the grade-summation estimate.  If
\(C_X\in\frak A_X\), then \(C_X\) commutes with \(q_y\) for
\(y\notin X\).  Since
$
  \sum_{y\in\Lambda}q_y=2D_\Lambda
$
on \(\cH_\Lambda^{\rm hf}\), the operator \(C_X\) can change the
\(D_\Lambda\)-eigenvalue only through the sites in \(X\).  Hence
$
  (C_X)^{(k)}=0
$
  whenever  $|k|>|X|$.
Therefore, if
$
  s(C)<\infty,
$
then, for every \(X\) with \(C_X\neq0\), only the values
$
  0<|k|\le s(C)
$
can contribute to \((C_X)^{(k)}\).  Using
$
  \|(C_X)^{(k)}\|\le\|C_X\|,
$
we get
$
  \sum_{k\neq0}\|(C_X)^{(k)}\|
  \le
  2s(C)\|C_X\|.
$
Multiplying by \(e^{\kappa|X|}\), summing over \(X\ni x\), and taking the
supremum over \(x\in\Lambda\), we obtain
$
  \sum_{k\neq0}\|C^{(k)}\|_\kappa
  \le
  2s(C)\|C\|_\kappa.
$
This completes the proof.
\end{proof}
\subsection{Well-definedness and bounds for \(\mathcal I_h\)}
\label{app:Ih}

Throughout this subsection we fix \(h\) with
$
  |h|\le h_0.
$

\begin{lem}[Local commutator bound with the staggered field]
\label{lem:adM-local-bound}
For every local operator \(A_X\in\frak A_X\), one has
\begin{equation}\label{eq:adM-local-bound}
  \|\mathrm{ad}_{M_\Lambda}(A_X)\|
  \le
  |X|\|A_X\|.
\end{equation}
Consequently, for all \(n\ge0\),
\begin{equation}\label{eq:adM-power-local-bound}
  \|\ad_{M_\Lambda}^{\,n}(A_X)\|
  \le
  |X|^{n}\|A_X\|.
\end{equation}
\end{lem}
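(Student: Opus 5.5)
The plan is to exploit the fact that \(M_\Lambda=\sum_{y\in\Lambda}\eta_yS_y^{(3)}\) is a sum of commuting, even, single-site terms. Only those sites lying in \(X\) can contribute to the commutator with \(A_X\).

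\emph{Step 1: reduce to sites in \(X\).} Each \(S_y^{(3)}=\frac12(n_{y\uparrow}-n_{y\downarrow})\) belongs to \(\frak A_{\{y\}}^{\rm even}\). For \(y\notin X\), the elementary fact recorded in Subsection~\ref{subsec:grading-norms} (even local observables commute with anything supported on a disjoint set) gives \([S_y^{(3)},A_X]=0\). Hence
\[
  \ad_{M_\Lambda}(A_X)=\ad_{M_X}(A_X),
  \qquad
  M_X:=\sum_{y\in X}\eta_yS_y^{(3)}.
\]

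\emph{Step 2: sharp norm bound.} I would not use the crude estimate \(\|[S,T]\|\le2\|S\|\|T\|\) with \(\|M_X\|=|X|/2\). That already gives exactly \(|X|\|A_X\|\), but it does not iterate to the second claim, because \(\ad_{M_\Lambda}(A_X)\) is again supported in \(X\) and we need a bound on the operator \(\ad_{M_X}\) itself.

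Instead, I would write \(M_X=\frac12 K\) with \(K:=\sum_{y\in X}\eta_y\bigl(n_{y\uparrow}-n_{y\downarrow}\bigr)\). This \(K\) is self-adjoint, and its spectrum lies in the integer interval \([-|X|,|X|]\). Decompose a bounded operator \(A\) into spectral matrix blocks \(E_aAE_b\) of \(K\). The map \(\ad_K\) acts on the block \(E_aAE_b\) as multiplication by \(a-b\), so formally \(\|\ad_{M_X}\|\le\frac12(\lambda_{\max}-\lambda_{\min})=|X|\).

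To justify this as a norm bound without Schur-multiplier subtleties, I would use the shift \(K':=K+|X|\mathbbm 1\), which has spectrum in \([0,2|X|]\) and satisfies \(\ad_K=\ad_{K'}\). Then I would bound
\[
  \|[M_X,A]\|
  =\tfrac12\|[K',A]\|
  =\tfrac12\bigl\|[K'-|X|\mathbbm 1,A]\bigr\|
  \le \|K'-|X|\mathbbm 1\|\,\|A\|
  =\|K\|\,\|A\|/1\cdot\tfrac12\cdot 2 .
\]
The cleanest statement is
\[
  \|[M_X,A]\|=\|[M_X-c,A]\|\le 2\|M_X-c\|\,\|A\|,
\]
and the choice \(c=0\) gives \(2\cdot\frac{|X|}{2}\|A\|=|X|\|A\|\), because the spectrum of \(M_X\) is symmetric and contained in \([-|X|/2,|X|/2]\). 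This bound holds for \emph{every} operator \(A\), not only those supported in \(X\).

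\emph{Step 3: iterate.} Since \(M_X\in\frak A_X\), each \(\ad^{\,n-1}_{M_\Lambda}(A_X)\) stays supported in \(X\). Moreover, \(\ad_{M_\Lambda}\) agrees with \(\ad_{M_X}\) on every operator supported in \(X\), because Step 1 applies to any element of \(\frak A_X\) and not only to even ones. Induction on \(n\) together with Step 2 then gives \(\|\ad^{\,n}_{M_\Lambda}(A_X)\|\le|X|^n\|A_X\|\).

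The only point needing care is Step 1 when \(A_X\) is odd. Commutation there relies on \(S_y^{(3)}\) being even, which holds because it is quadratic in the fermion operators. So no real obstacle is expected.
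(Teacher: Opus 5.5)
Your proof is correct and is essentially the paper's argument. Both proofs reduce \(\ad_{M_\Lambda}(A_X)\) to the sites of \(X\), bound the commutator via \(\|[S,T]\|\le 2\|S\|\,\|T\|\) (the paper site by site with \(\|S_x^{(3)}\|=\frac12\), you for \(M_X\) as a whole with \(\|M_X\|\le |X|/2\)), and iterate because \(\ad_{M_\Lambda}(A_X)\) again lies in \(\mathfrak A_X\). Your claim that this crude estimate ``does not iterate'' is mistaken: it holds for every operator, which is exactly what the induction needs. The spectral detour through \(K\) and \(K'\) is therefore superfluous, and your Step 2 in the end just reproduces that estimate.
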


\begin{proof}
Recall that
$
  M_\Lambda
  =
  \sum_{x\in\Lambda}\eta_x S_x^{(3)},
  \
  \eta_x\in\{\pm1\},
  \
  \|S_x^{(3)}\|=\frac12.
$
Since \(S_x^{(3)}\) is supported at \(x\), only sites in \(X\) contribute
to the commutator with \(A_X\).  Hence
$
  \mathrm{ad}_{M_\Lambda}(A_X)
  =
  \sum_{x\in X}\eta_x [S_x^{(3)},A_X].
$
Using
$
  \|\mathrm{ad}_{S_x^{(3)}}(A_X)\|
  \le
  2\|S_x^{(3)}\|\|A_X\|
  =
  \|A_X\|,
$
we obtain
$
  \|\mathrm{ad}_{M_\Lambda}(A_X)\|
  \le
  \sum_{x\in X}\|A_X\|
  =
  |X|\|A_X\|.
$
This proves \eqref{eq:adM-local-bound}.  Iterating the same estimate
gives
$
  \|\ad_{M_\Lambda}^{\,n}(A_X)\|
  \le
  |X|^{n}\|A_X\|,
$
which is \eqref{eq:adM-power-local-bound}.
\end{proof}

\begin{lem}[Definition of \(\mathcal I_h\) on graded local pieces]
\label{lem:def-Ih-Neumann}
Assume \(|h|\le h_0\).  Let \(k\neq0\), and let
\(A\in\frak A_X\) be a local operator supported in a finite set
\(X\subset\Lambda\) and of \(D_\Lambda\)-grade \(k\), namely
\[
  A=A^{(k)},
  \qquad
  \mathrm{ad}_{D_\Lambda}(A)=kA.
\]
Assume moreover that
\begin{equation}\label{eq:Ih-smallness-condition}
  |k|U>h_0|X|.
\end{equation}
Define
\begin{equation}\label{eq:def-Ih-Neumann}
  \mathcal I_h(A)
  :=
  \frac1{kU}
  \sum_{n\ge0}
  \left(\frac{h}{kU}\right)^n
  \ad_{M_\Lambda}^{\,n}(A).
\end{equation}
Then the series converges absolutely in operator norm and the following
hold.

\begin{enumerate}
\item[\rm (i)]
\emph{Inverse property on the \(k\)-graded sector.}
\begin{equation}\label{eq:Ih-inverse-property}
  \ad_{UD_\Lambda-hM_\Lambda}
  \bigl(
    \mathcal I_h(A)
  \bigr)
  =
  A.
\end{equation}

\item[\rm (ii)]
\emph{\(D_\Lambda\)-grading preservation.}
The operator \(\mathcal I_h(A)\) has the same \(D_\Lambda\)-grade as
\(A\):
\[
  \mathrm{ad}_{D_\Lambda}(\mathcal I_h(A))
  =
  k\mathcal I_h(A).
\]
In particular, since \(k\neq0\),
\[
  (\mathcal I_h(A))^{\rm diag}=0,
  \qquad
  (\mathcal I_h(A))^{\rm off}=\mathcal I_h(A).
\]

\item[\rm (iii)]
\emph{Support preservation.}
\[
  A\in\frak A_X
  \quad\Longrightarrow\quad
  \mathcal I_h(A)\in\frak A_X .
\]

\item[\rm (iv)]
\emph{Local norm bound.}
For all \(|h|\le h_0\),
\begin{equation}\label{eq:Ih-local-bound}
  \|\mathcal I_h(A)\|
  \le
  \frac1{|k|U-|h|\,|X|}\|A\|
  \le
  \frac1{|k|U-h_0|X|}\|A\|.
\end{equation}

\item[\rm (v)]
\emph{Adjoint compatibility.}
\[
  \mathcal I_h(A^\ast)
  =
  -(\mathcal I_h(A))^\ast .
\]
\end{enumerate}
\end{lem}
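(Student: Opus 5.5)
The whole argument rests on two facts about \(\ad_{M_\Lambda}\). First, it commutes with \(\ad_{D_\Lambda}\). Second, it preserves locality. Both hold because \(M_\Lambda\) and \(D_\Lambda\) are commuting sums of on-site, mutually commuting even operators: \([M_\Lambda,D_\Lambda]=0\), and each \(S_x^{(3)}\) commutes with \(n_{y\uparrow}n_{y\downarrow}\). The Jacobi identity therefore gives \(\ad_{D_\Lambda}\ad_{M_\Lambda}=\ad_{M_\Lambda}\ad_{D_\Lambda}\).

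\textbf{Convergence, (iv), (iii).} Lemma~\ref{lem:adM-local-bound} gives \(\|\ad_{M_\Lambda}^{\,n}(A)\|\le|X|^n\|A\|\). Hence the \(n\)-th term of \eqref{eq:def-Ih-Neumann} is bounded by \((|k|U)^{-1}(|h||X|/(|k|U))^n\|A\|\). Since \(|h|\,|X|\le h_0|X|<|k|U\), the series converges absolutely. Summing the geometric series gives exactly \(\frac{1}{|k|U-|h||X|}\|A\|\le\frac1{|k|U-h_0|X|}\|A\|\), which is (iv). For (iii), observe that \(\ad_{M_\Lambda}(A)=\sum_{x\in X}\eta_x[S_x^{(3)},A]\in\frak A_X\), because \(S_x^{(3)}\) is even and commutes with \(\frak A_X\) for \(x\notin X\). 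By induction every term lies in \(\frak A_X\), and \(\frak A_X\) is finite-dimensional, hence closed, so the limit lies in \(\frak A_X\) as well.

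\textbf{(ii) and (i).} By the commutation of derivations, \(\ad_{D_\Lambda}\ad_{M_\Lambda}^{\,n}(A)=\ad_{M_\Lambda}^{\,n}(kA)=k\,\ad_{M_\Lambda}^{\,n}(A)\). So each term has grade \(k\), and continuity carries this to the sum, proving (ii). For (i), write \(\mathcal I_h(A)=\sum_n c_n\ad_M^n(A)\) with \(c_n=(kU)^{-1}(h/(kU))^n\). Using (ii),
\[
  \ad_{UD_\Lambda-hM_\Lambda}\bigl(\mathcal I_h(A)\bigr)
  =kU\,\mathcal I_h(A)-h\,\ad_{M_\Lambda}\bigl(\mathcal I_h(A)\bigr)
  =\sum_{n\ge0}\Bigl(\tfrac{h}{kU}\Bigr)^n\ad_M^n(A)-\sum_{n\ge0}\Bigl(\tfrac{h}{kU}\Bigr)^{n+1}\ad_M^{n+1}(A)=A.
\]
This is a telescoping sum, and the interchange of \(\ad_M\) with the series is justified by absolute convergence, since \(\ad_M\) is bounded on \(\frak A_X\). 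Equivalently, one can note that \(\ad_{UD-hM}=kU(1-\tfrac{h}{kU}\ad_M)\) on grade-\(k\) elements of \(\frak A_X\) and apply the Neumann series.

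\textbf{(v).} First, \(A^\ast\) has grade \(-k\): taking adjoints in \([D,A]=kA\) gives \([D,A^\ast]=-kA^\ast\). Its support is \(X\), and the smallness condition is the same for \(|{-k}|\). Hence \(\mathcal I_h(A^\ast)=\frac1{-kU}\sum_n(\frac{h}{-kU})^n\ad_M^n(A^\ast)\). Since \(M\) is self-adjoint, \((\ad_M^n(A))^\ast=(-1)^n\ad_M^n(A^\ast)\). Taking the adjoint of \eqref{eq:def-Ih-Neumann} with \(h,k,U\) real therefore gives \((\mathcal I_h(A))^\ast=\frac1{kU}\sum_n(\frac{h}{kU})^n(-1)^n\ad_M^n(A^\ast)=-\mathcal I_h(A^\ast)\).

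The only genuinely nontrivial point is the commutation of \(\ad_{M_\Lambda}\) with the \(D_\Lambda\)-grading, which underlies both (ii) and (i). Everything else is a geometric-series estimate.
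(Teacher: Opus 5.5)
Your proof is correct and follows the paper's argument. Both use the geometric bound from Lemma~\ref{lem:adM-local-bound} for convergence and (iv), the fact that each term $\ad_{M_\Lambda}^{\,n}(A)$ has grade $k$ because $[M_\Lambda,D_\Lambda]=0$, a telescoping identity for (i), and the termwise adjoint computation $(\ad_{M_\Lambda}^{\,n}(A))^\ast=(-1)^n\ad_{M_\Lambda}^{\,n}(A^\ast)$ for (v). The only cosmetic difference is that you telescope on the full series after proving (ii), whereas the paper telescopes on partial sums and passes to the limit.
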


\begin{proof}
Since \(M_\Lambda\) commutes with \(D_\Lambda\), the operator
\(\ad_{M_\Lambda}^{\,n}(A)\) has \(D_\Lambda\)-grade \(k\) for every
\(n\ge0\).  Hence, for every operator \(X\) of \(D_\Lambda\)-grade \(k\),
$
  \ad_{UD_\Lambda-hM_\Lambda}(X)
  =
  kU X-h\ad_{M_\Lambda}(X).
$
For the partial sums
\[
  S_N
  :=
  \frac1{kU}
  \sum_{n=0}^{N}
  \left(\frac{h}{kU}\right)^n
  \ad_{M_\Lambda}^{\,n}(A),
\]
we therefore obtain the telescoping identity
\begin{align*}
  \ad_{UD_\Lambda-hM_\Lambda}(S_N)
  &=
  \sum_{n=0}^{N}
  \left(\frac{h}{kU}\right)^n
  \ad_{M_\Lambda}^{\,n}(A)
  -
  \sum_{n=0}^{N}
  \left(\frac{h}{kU}\right)^{n+1}
  \ad_{M_\Lambda}^{\,n+1}(A) \\
  &=
  A
  -
  \left(\frac{h}{kU}\right)^{N+1}
  \ad_{M_\Lambda}^{\,N+1}(A).
\end{align*}
By Lemma~\ref{lem:adM-local-bound},
\[
  \left\|
  \left(\frac{h}{kU}\right)^{N+1}
  \ad_{M_\Lambda}^{\,N+1}(A)
  \right\|
  \le
  \left(
    \frac{|h|\,|X|}{|k|U}
  \right)^{N+1}
  \|A\|.
\]
The right-hand side tends to zero by
\eqref{eq:Ih-smallness-condition}.  The same estimate also gives
absolute convergence of the series defining \(\mathcal I_h(A)\).
Letting \(N\to\infty\) in the telescoping identity proves
\eqref{eq:Ih-inverse-property}.

The grading preservation follows because every term
\(\ad_{M_\Lambda}^{\,n}(A)\) has \(D_\Lambda\)-grade \(k\).  Since
\(k\neq0\), the diagonal part vanishes.  This proves (ii).

For support preservation, only the terms in \(M_\Lambda\) supported in
\(X\) contribute to commutators with \(A\).  Thus repeated commutators
with \(M_\Lambda\) do not enlarge the support, proving (iii).

The local norm bound follows by summing the geometric series:
\[
  \|\mathcal I_h(A)\|
  \le
  \frac1{|k|U}
  \sum_{n\ge0}
  \left(
    \frac{|h|\,|X|}{|k|U}
  \right)^n
  \|A\|
  =
  \frac1{|k|U-|h|\,|X|}\|A\|.
\]
Since \(|h|\le h_0\), this implies \eqref{eq:Ih-local-bound}.

Finally, \(A^\ast\) has \(D_\Lambda\)-grade \(-k\), and
$
  \left(\mathrm{ad}_{M_\Lambda}(C)\right)^\ast=-\mathrm{ad}_{M_\Lambda}(C^\ast).
$
Hence
$
  \bigl(\ad_{M_\Lambda}^{\,n}(A)\bigr)^\ast
  =
  (-1)^n\ad_{M_\Lambda}^{\,n}(A^\ast).
$
Using the definition of \(\mathcal I_h\), we compute
\begin{align*}
  (\mathcal I_h(A))^\ast
  &=
  \frac1{kU}
  \sum_{n\ge0}
  \left(\frac{h}{kU}\right)^n
  (-1)^n
  \ad_{M_\Lambda}^{\,n}(A^\ast) \\
  &=
  -
  \frac1{(-k)U}
  \sum_{n\ge0}
  \left(\frac{h}{(-k)U}\right)^n
  \ad_{M_\Lambda}^{\,n}(A^\ast) \\
  &=
  -\mathcal I_h(A^\ast).
\end{align*}
This proves (v).
\end{proof}

\begin{lem}[Interaction-level definition, support preservation, and \(\|\cdot\|_\kappa\) bound]
\label{lem:Ih-interaction-bound}
Let \(B=\{B_X\}_{X\subset\Lambda}\) be a finite-range interaction.
Assume
\begin{equation}\label{eq:Ih-smallness-condition-interaction}
  U>h_0 s(B).
\end{equation}
Then the following hold.

\begin{enumerate}
\item[\rm (a)]
\emph{Local definition on the off-diagonal sector.}
Let \(A\in\frak A_X\) be \(D_\Lambda\)-off-diagonal.  Write
\[
  A=\sum_{k\neq0}A^{(k)}.
\]
Then
\[
  \mathcal I_h(A)
  :=
  \sum_{k\neq0}\mathcal I_h(A^{(k)})
\]
is well-defined, belongs to \(\frak A_X\), and satisfies
\[
  \ad_{UD_\Lambda-hM_\Lambda}
  \bigl(
    \mathcal I_h(A)
  \bigr)
  =
  A,
  \qquad
  (\mathcal I_h(A))^{\rm diag}=0.
\]
In particular, for a nearest-neighbour bond \(e=\langle x,y\rangle\),
\[
  \operatorname{supp}(T_e^{\rm off})\subset\{x,y\}
  \quad\Longrightarrow\quad
  \operatorname{supp}\bigl(\mathcal I_h(T_e^{\rm off})\bigr)
  \subset\{x,y\}.
\]

\item[\rm (b)]
\emph{Interaction-level definition.}
Define \(\mathcal I_h(B^{\rm off})\) termwise by
\[
  \bigl(\mathcal I_h(B^{\rm off})\bigr)_X
  :=
  \sum_{k\neq0}
  \mathcal I_h\bigl((B_X)^{(k)}\bigr),
  \qquad
  X\subset\Lambda .
\]
Then \(\mathcal I_h(B^{\rm off})\) is a finite-range interaction with
the same range as \(B\).

\item[\rm (c)]
\emph{\(\kappa\)-norm bound.}
One has
\begin{equation}\label{eq:Ih-kappa-bound}
  \|\mathcal I_h(B^{\rm off})\|_\kappa
  \le
  C_{\mathcal I_h}(U,h_0,s(B))\|B\|_\kappa,
  \qquad
  C_{\mathcal I_h}(U,h_0,s)
  :=
  \frac{2s}{U-h_0s}.
\end{equation}

\item[\rm (d)]
\emph{Grading preservation.}
The map \(\mathcal I_h\) preserves the \(D_\Lambda\)-grading termwise.
In particular,
\[
  \bigl(\mathcal I_h(B^{\rm off})\bigr)^{\rm diag}=0,
  \qquad
  \bigl(\mathcal I_h(B^{\rm off})\bigr)^{\rm off}
  =
  \mathcal I_h(B^{\rm off}).
\]

\item[\rm (e)]
\emph{Right-inverse identity.}
As finite-volume operators on \(\cH_\Lambda^{\rm hf}\),
\[
  \ad_{UD_\Lambda-hM_\Lambda}
  \left(
    \sum_{X\subset\Lambda}
    \bigl(\mathcal I_h(B^{\rm off})\bigr)_X
  \right)
  =
  \sum_{X\subset\Lambda}(B_X)^{\rm off}.
\]

\item[\rm (f)]
\emph{Adjoint compatibility.}
If \(B_\Lambda=\sum_X B_X\) is self-adjoint, then the associated
finite-volume operator
\[
  \bigl(\mathcal I_h(B^{\rm off})\bigr)_\Lambda
  :=
  \sum_{X\subset\Lambda}
  \bigl(\mathcal I_h(B^{\rm off})\bigr)_X
\]
is anti-self-adjoint:
\[
  \bigl(\mathcal I_h(B^{\rm off})\bigr)_\Lambda^\ast
  =
  -
  \bigl(\mathcal I_h(B^{\rm off})\bigr)_\Lambda .
\]
\end{enumerate}
\end{lem}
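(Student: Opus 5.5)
The plan is to reduce everything to Lemma~\ref{lem:def-Ih-Neumann}, applied separately to each graded piece of each local term, and then reassemble termwise. For (a), let \(A\in\frak A_X\) be off-diagonal. By the grade-summation argument in the proof of Lemma~\ref{lem:D-grading-contraction}, \((A)^{(k)}=0\) unless \(|k|\le|X|\). For the nonvanishing pieces with \(k\neq0\) we have \(|k|U\ge U>h_0 s(B)\ge h_0|X|\) whenever \(|X|\le s(B)\). Hence \eqref{eq:Ih-smallness-condition} holds for every relevant \(k\), and \(\mathcal I_h(A^{(k)})\) is defined. The sum over \(k\) is finite.

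One point needs checking: \(A^{(k)}\) must still lie in \(\frak A_X\). I would argue this from the fact that \(D_\Lambda=D_X+D_{X^c}\) on the half-filled space via \(\sum_y q_y=2D_\Lambda\), together with the observation that \(A\) commutes with \(q_y\) for \(y\notin X\). Then the grade-\(k\) piece is obtained from a spectral projection of \(\ad_{D_X}\) acting on \(A\), more precisely from \(\ad\) of \(\frac12\sum_{x\in X}q_x\). This is a polynomial in \(\ad\) of local operators, so it keeps \(A^{(k)}\) in \(\frak A_X\), and evenness is preserved as well. I expect this locality of the graded pieces to be the main subtle step. The natural tool is the identity \(\ad_{D_\Lambda}=\ad_{\frac12\sum_{x\in X}q_x}\) on \(\frak A_X\) restricted to \(\cH^{\rm hf}_\Lambda\), followed by Lagrange interpolation over the finitely many eigenvalues \(k\in\{-|X|,\dots,|X|\}\).

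Given this, items (i)–(iii) of Lemma~\ref{lem:def-Ih-Neumann} give support in \(X\), grade preservation (so the diagonal part vanishes), and the identity \(\ad_{UD_\Lambda-hM_\Lambda}\mathcal I_h(A)=\sum_k A^{(k)}=A\). The bond statement is the special case \(X=\{x,y\}\). Items (b) and (d) then follow immediately by applying (a) to each \(B_X\): supports are unchanged, so the range and \(s\) are unchanged. Item (e) is linearity of \(\ad\) summed over \(X\).

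For (c), the local bound \eqref{eq:Ih-local-bound} gives \(\|\mathcal I_h((B_X)^{(k)})\|\le (|k|U-h_0|X|)^{-1}\|(B_X)^{(k)}\|\le (U-h_0 s(B))^{-1}\|B_X\|\), using \(|k|\ge1\) and the contractivity of Lemma~\ref{lem:D-grading-contraction}. Summing over the at most \(2s(B)\) nonzero values of \(k\) gives \(\|(\mathcal I_h(B^{\rm off}))_X\|\le \frac{2s(B)}{U-h_0s(B)}\|B_X\|\). Multiplying by \(e^{\kappa|X|}\), summing over \(X\ni x\), and taking the supremum over \(x\) yields \eqref{eq:Ih-kappa-bound}.

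For (f), I would work at the operator level rather than termwise. Since \(B_\Lambda^\ast=B_\Lambda\), we have \(((B_\Lambda)^{(k)})^\ast=(B_\Lambda)^{(-k)}\). Applying item (v) of Lemma~\ref{lem:def-Ih-Neumann} term by term, and reorganising the sum over \(X\), gives \(\sum_X\mathcal I_h((B_X)^{(k)})^\ast=-\sum_X\mathcal I_h(((B_X)^{(k)})^\ast)\). Now \(\sum_X((B_X)^{(k)})^\ast=\sum_X (B_X^\ast)^{(-k)}\), and since \(\sum_X B_X^\ast=B_\Lambda\), summing over \(k\neq0\) recovers \(-(\mathcal I_h(B^{\rm off}))_\Lambda\).

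This last step is not purely termwise. It needs \(\mathcal I_h\) applied to \(\sum_X B_X^\ast\) and to \(\sum_X B_X\) to agree. That holds because \(\ad_{UD_\Lambda-hM_\Lambda}\) is injective on off-diagonal operators: its grade-\(k\) block is \(kU-h\ad_M\), which is invertible by the Neumann series. Both candidate sums are solutions of the same equation, so they coincide.
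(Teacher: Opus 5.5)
Your parts (a)--(e) are essentially the paper's argument. Your Lagrange-interpolation proof that \((A)^{(k)}\in\mathfrak A_X\), via \(\ad_{D_\Lambda}=\ad_{D_X}\) on \(\mathfrak A_X\), correctly fills in a step the paper only asserts. You are also right that (f) is not purely termwise when only \(B_\Lambda\), and not each \(B_X\), is self-adjoint: for each grade \(j\neq0\) one needs \(\sum_X\mathcal I_h((B_X^\ast)^{(j)})=\sum_X\mathcal I_h((B_X)^{(j)})\). The paper's proof works ``at the operator level'' with \(\mathcal I_h(B_\Lambda^{(k)})\), and so tacitly assumes this independence of the decomposition. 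Lemma~\ref{lem:def-Ih-Neumann} does not define \(\mathcal I_h\) on the global operator unless \(|k|U>h_0|\Lambda|\).

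The gap is in how you close this. You claim \(\ad_{UD_\Lambda-hM_\Lambda}\) is injective on all off-diagonal operators, with the grade-\(k\) block \(kU-h\ad_{M_\Lambda}\) inverted by a Neumann series. That is false. On global operators \(\|\ad_{M_\Lambda}\|\) is of order \(|\Lambda|\), and there is a genuine kernel. Take \(h\neq0\) with \(U/h=r\in\mathbb N\) and \(r>s(B)\), which \eqref{eq:Ih-smallness-condition-interaction} allows. For \(|\Lambda|\) large there are \(M_\Lambda\)-eigenvectors \(\psi'\in\operatorname{ran}P_0\) and \(\psi\in\operatorname{ran}P_1\) whose eigenvalues differ by \(r\). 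Then \(Y:=|\psi\rangle\langle\psi'|\) has grade \(1\) and \(\ad_{UD_\Lambda-hM_\Lambda}(Y)=(U-hr)Y=0\). So ``both sums solve the same equation'' does not by itself give equality.

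The fix is to use injectivity only on \(\mathcal W:=\operatorname{span}\bigcup_{|X|\le s(B)}\mathfrak A_X\), which contains the difference of your two candidates.
\begin{itemize}
\item Since \(M_\Lambda\) is self-adjoint, \(\ad_{M_\Lambda}\) is diagonalisable. Its spectral projections are polynomials in \(\ad_{M_\Lambda}\), so they map each \(\mathfrak A_X\) into itself, because \(\ad_{M_\Lambda}=\ad_{M_X}\) there.
\item By Lemma~\ref{lem:adM-local-bound}, these projections annihilate \(\mathfrak A_X\) for eigenvalues of modulus \(>|X|\).
\item Now let \(Y\in\mathcal W\) have grade \(j\neq0\) and \(\ad_{UD_\Lambda-hM_\Lambda}(Y)=0\). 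Then \(\ad_{M_\Lambda}(Y)=(jU/h)Y\), or \(jUY=0\) if \(h=0\). Since \(|jU/h|\ge U/h_0>s(B)\), it follows that \(Y=0\).
\end{itemize}
With this restriction your argument for (f) goes through, and this is exactly where \eqref{eq:Ih-smallness-condition-interaction} enters.
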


\begin{proof}
For (a), let \(A\in\frak A_X\) be \(D_\Lambda\)-off-diagonal and write
$
  A=\sum_{k\neq0}A^{(k)}.
$
Since \(A\) is supported in \(X\), each nonzero graded piece \(A^{(k)}\)
is supported in \(X\), and \(|k|\le |X|\).  If \(A^{(k)}\neq0\), then
$
  |k|U\ge U>h_0 s(B)\ge h_0|X|,
$
whenever this construction is applied to local pieces with
\(|X|\le s(B)\).  Thus Lemma~\ref{lem:def-Ih-Neumann} applies to each
graded piece.  Summing the identities
$
  \ad_{UD_\Lambda-hM_\Lambda}
  \bigl(
    \mathcal I_h(A^{(k)})
  \bigr)
  =
  A^{(k)}
$
over \(k\neq0\) gives
$
  \ad_{UD_\Lambda-hM_\Lambda}
  \bigl(
    \mathcal I_h(A)
  \bigr)
  =
  A.
$
The support preservation and off-diagonality also follow by summing the
corresponding conclusions of Lemma~\ref{lem:def-Ih-Neumann}.  The bond
statement is the special case \(X=\{x,y\}\).

For (b), apply (a) to each local term \(B_X\).  Since
\(|X|\le s(B)\) whenever \(B_X\neq0\), the smallness condition
\eqref{eq:Ih-smallness-condition-interaction} ensures that all graded
pieces are covered by Lemma~\ref{lem:def-Ih-Neumann}.  Support
preservation gives
$
  \operatorname{supp}
  \bigl(
    \mathcal I_h((B_X)^{(k)})
  \bigr)
  \subset X,
$
and hence \(\mathcal I_h(B^{\rm off})\) has the same range as \(B\).

For (c), fix \(X\subset\Lambda\) and \(k\neq0\).  By
Lemma~\ref{lem:def-Ih-Neumann},
\[
  \left\|
    \mathcal I_h\bigl((B_X)^{(k)}\bigr)
  \right\|
  \le
  \frac1{|k|U-h_0|X|}
  \left\|(B_X)^{(k)}\right\|
  \le
  \frac1{U-h_0s(B)}
  \|B_X\|.
\]
Moreover, \((B_X)^{(k)}=0\) for \(|k|>|X|\), so there are at most
\(2|X|\le2s(B)\) nonzero values of \(k\neq0\).  Therefore
\[
  \left\|
    \bigl(\mathcal I_h(B^{\rm off})\bigr)_X
  \right\|
  \le
  \frac{2s(B)}{U-h_0s(B)}
  \|B_X\|.
\]
Multiplying by \(e^{\kappa|X|}\), summing over \(X\ni x\), and taking
the supremum over \(x\in\Lambda\) yields
\eqref{eq:Ih-kappa-bound}.

Part (d) follows termwise from the grading preservation in
Lemma~\ref{lem:def-Ih-Neumann}.  Part (e) follows by summing the
termwise identities
$
  \ad_{UD_\Lambda-hM_\Lambda}
  \bigl(
    \mathcal I_h((B_X)^{(k)})
  \bigr)
  =
  (B_X)^{(k)}
$
over \(X\subset\Lambda\) and \(k\neq0\).

Finally, assume \(B_\Lambda\) is self-adjoint.  Then
$
  (B_\Lambda^{(k)})^\ast=B_\Lambda^{(-k)}.
$
Using the adjoint compatibility from
Lemma~\ref{lem:def-Ih-Neumann}, we get, at the operator level,
\[
  \left(
    \mathcal I_h(B_\Lambda^{\rm off})
  \right)^\ast
  =
  \sum_{k\neq0}
  \bigl(
    \mathcal I_h(B_\Lambda^{(k)})
  \bigr)^\ast
  =
  -\sum_{k\neq0}
  \mathcal I_h\bigl((B_\Lambda^{(k)})^\ast\bigr)
  =
  -\sum_{k\neq0}
  \mathcal I_h(B_\Lambda^{(-k)})
  =
  -\mathcal I_h(B_\Lambda^{\rm off}).
\]
This is precisely (f).
\end{proof}

\begin{lem}[\(h\)-derivative bound for \(\mathcal I_h\) on graded local pieces]
\label{lem:dIh-local-bound}
Let \(k\neq0\), and let \(A\in\frak A_X\) be local of
\(D_\Lambda\)-grade \(k\).  Assume
$
  |k|U>h_0|X|.
$
Then the map
$
  h\mapsto \mathcal I_h(A)
$
is analytic in a neighbourhood of the interval \([-h_0,h_0]\), and for
\(|h|\le h_0\),
\[
  \partial_h\mathcal I_h(A)
  =
  \mathcal I_h\bigl(\mathrm{ad}_{M_\Lambda}(\mathcal I_h(A))\bigr).
\]
Moreover,
\[
  \|\partial_h\mathcal I_h(A)\|
  \le
  \frac{|X|}{(|k|U-|h|\,|X|)^2}\|A\|
  \le
  \frac{|X|}{(|k|U-h_0|X|)^2}\|A\|.
\]
\end{lem}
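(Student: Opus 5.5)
Differentiate the Neumann series \eqref{eq:def-Ih-Neumann} termwise in \(h\), identify the result with the nested expression \(\mathcal I_h(\ad_{M_\Lambda}(\mathcal I_h(A)))\), and bound it by a geometric-series computation.

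\emph{Analyticity.} By Lemma~\ref{lem:adM-local-bound}, the \(n\)-th term of \eqref{eq:def-Ih-Neumann} has norm at most \((|k|U)^{-1}(|h||X|/(|k|U))^n\|A\|\). So for complex \(h\) with \(|h|<|k|U/|X|\) the series is a norm-convergent power series in \(h\), uniformly on compact subdisks. Since \(|k|U>h_0|X|\), this disk contains a complex neighbourhood of \([-h_0,h_0]\), and \(h\mapsto\mathcal I_h(A)\) is analytic there. Termwise differentiation gives
\[
  \partial_h\mathcal I_h(A)
  =
  \sum_{n\ge1}\frac{n\,h^{n-1}}{(kU)^{n+1}}\ad_{M_\Lambda}^{\,n}(A).
\]

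\emph{Identity.} I would use the inverse property rather than re-summing the double series. Support preservation (Lemma~\ref{lem:def-Ih-Neumann}(iii)) and grade preservation (ii) show that \(C:=\ad_{M_\Lambda}(\mathcal I_h(A))\) lies in \(\frak A_X\) and has grade \(k\), because \(M_\Lambda\) commutes with \(D_\Lambda\). Hence \(\mathcal I_h(C)\) is defined. Differentiate \(\ad_{UD_\Lambda-hM_\Lambda}(\mathcal I_h(A))=A\) in \(h\):
\[
  \ad_{UD_\Lambda-hM_\Lambda}\bigl(\partial_h\mathcal I_h(A)\bigr)=\ad_{M_\Lambda}(\mathcal I_h(A))=C.
\]
On grade-\(k\) operators supported in \(X\), \(\ad_{UD_\Lambda-hM_\Lambda}=kU-h\ad_{M_\Lambda}\). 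By Lemma~\ref{lem:adM-local-bound}, this map is invertible on that finite-dimensional space when \(|h||X|<|k|U\), and its inverse is exactly the Neumann series \(\mathcal I_h\). The derivative \(\partial_h\mathcal I_h(A)\) is itself grade \(k\) and supported in \(X\), since every term of the series is. Injectivity therefore gives \(\partial_h\mathcal I_h(A)=\mathcal I_h(C)\).

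\emph{Bound.} Apply \eqref{eq:Ih-local-bound} twice together with Lemma~\ref{lem:adM-local-bound}:
\[
  \|\partial_h\mathcal I_h(A)\|
  \le\frac{\|C\|}{|k|U-|h||X|}
  \le\frac{|X|\,\|\mathcal I_h(A)\|}{|k|U-|h||X|}
  \le\frac{|X|\,\|A\|}{(|k|U-|h||X|)^2}.
\]
As a cross-check, the termwise series gives the same bound, since \(\sum_n n r^{n-1}=(1-r)^{-2}\) with \(r=|h||X|/(|k|U)\). Replacing \(|h|\) by \(h_0\) yields the second inequality.

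The main obstacle is the injectivity step. It requires restricting \(\ad_{UD_\Lambda-hM_\Lambda}\) to grade-\(k\) operators supported in \(X\), so that the bound \(\|\ad_{M_\Lambda}\|\le|X|\) holds there. Without that restriction the invertibility would not follow from Lemma~\ref{lem:adM-local-bound}.
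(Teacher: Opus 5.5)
Your proof is correct and follows the paper's own argument. Both obtain analyticity from the Neumann series, differentiate \(\ad_{UD_\Lambda-hM_\Lambda}(\mathcal I_h(A))=A\), invert on grade-\(k\) operators supported in \(X\), and then apply the local bound for \(\mathcal I_h\) twice together with \(\|\ad_{M_\Lambda}(\cdot)\|\le|X|\,\|\cdot\|\). Your explicit injectivity argument, which the paper leaves implicit in ``applying the local inverse'', and your termwise-series cross-check are sound additions.
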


\begin{proof}
Analyticity follows directly from the Neumann series definition
\[
  \mathcal I_h(A)
  =
  \frac1{kU}
  \sum_{n\ge0}
  \left(\frac{h}{kU}\right)^n
  \ad_{M_\Lambda}^{\,n}(A),
\]
because the series converges absolutely and uniformly for \(h\) in a
neighbourhood of \([-h_0,h_0]\).

Next differentiate the identity
$
  \ad_{UD_\Lambda-hM_\Lambda}
  \bigl(
    \mathcal I_h(A)
  \bigr)
  =
  A.
$
Since
$
  \partial_h
  \ad_{UD_\Lambda-hM_\Lambda}
  =
  -\ad_{M_\Lambda},
$
we obtain
$
  \ad_{UD_\Lambda-hM_\Lambda}
  \bigl(
    \partial_h\mathcal I_h(A)
  \bigr)
  =
  \ad_{M_\Lambda}
  \bigl(
    \mathcal I_h(A)
  \bigr).
$
The operator \(\mathrm{ad}_{M_\Lambda}(\mathcal I_h(A))\) is supported in \(X\) and
has the same \(D_\Lambda\)-grade \(k\).  Applying the local inverse
\(\mathcal I_h\) gives
$
  \partial_h\mathcal I_h(A)
  =
  \mathcal I_h\bigl(\mathrm{ad}_{M_\Lambda}(\mathcal I_h(A))\bigr).
$

For the norm bound, Lemma~\ref{lem:adM-local-bound} gives
$
  \|\mathrm{ad}_{M_\Lambda}(\mathcal I_h(A))\|
  \le
  |X|\|\mathcal I_h(A)\|.
$
Using Lemma~\ref{lem:def-Ih-Neumann} twice, we get
\[
  \|\partial_h\mathcal I_h(A)\|
  \le
  \frac1{|k|U-|h|\,|X|}
  \|\mathrm{ad}_{M_\Lambda}(\mathcal I_h(A))\|
\]
and hence
\[
  \|\partial_h\mathcal I_h(A)\|
  \le
  \frac{|X|}{|k|U-|h|\,|X|}
  \|\mathcal I_h(A)\|
  \le
  \frac{|X|}{(|k|U-|h|\,|X|)^2}\|A\|.
\]
The final bound follows from \(|h|\le h_0\).
\end{proof}

\subsection{Quantitative LS/SW iteration: deferred proofs}\label{app:LS-quant-proofs}
\subsubsection{Proof of Lemma~\ref{lem:spec-one-LS-step-quant}}
\label{app:LS-quant-proofs1}

Write
\[
  H=H_0+V,
  \qquad
  H_0:=UD_\Lambda-hM_\Lambda,
  \qquad
  V:=B^{\rm diag}+B^{\rm off}.
\]
By the BCH expansion,
\[
  H^+
  :=
  e^SHe^{-S}
  =
  H+\ad_S(H)+\sum_{n\ge2}\frac1{n!}\ad_S^{\,n}(H).
\]
Since
$
  S=\mathcal I_h(B^{\rm off}),
$
the defining property of \(\mathcal I_h\) gives
$
  \ad_S(H_0)
 =
  -B^{\rm off}.
$
Therefore the first-order \(D\)-off-diagonal term cancels:
$
  B^{\rm off}+\ad_S(H_0)=0.
$
Thus
\[
  H^+
  =
  H_0
  +
  B^{\rm diag}
  +
  \ad_S(B^{\rm diag}+B^{\rm off})
  +
  R,
\]
where the remainder may first be written as
\[
  R
  :=
  \sum_{n\ge2}\frac1{n!}\ad_S^{\,n}(H_0+V).
\]

For the estimates below, we rewrite this remainder so that \(H_0\) no
longer appears.  Since
$
  \ad_S(H_0)=-B^{\rm off},
$
we have, for \(n\ge2\),
$
  \ad_S^{\,n}(H_0)
  =
  -\ad_S^{\,n-1}(B^{\rm off}).
$
Hence
\[
  R
  =
  -
  \sum_{m\ge1}\frac1{(m+1)!}\ad_S^{\,m}(B^{\rm off})
  +
  \sum_{n\ge2}\frac1{n!}
  \ad_S^{\,n}(B^{\rm diag}+B^{\rm off}).
\]
In particular, \(R\) is expressed entirely in terms of
\(S\), \(B^{\rm diag}\), and \(B^{\rm off}\).

Set
$
  B^+
  :=
  B^{\rm diag}
  +
  \ad_S(B^{\rm diag}+B^{\rm off})
  +
  R.
$
Then
$
  H^+=UD_\Lambda-hM_\Lambda+B^+.
$

\medskip
\noindent
\textbf{BCH remainder bounds.}
We split
$
  R=R_0+R_V,
$
where
\[
  R_0
  :=
  -\sum_{m\ge1}\frac1{(m+1)!}\ad_S^{\,m}(B^{\rm off}),
  \qquad
  R_V
  :=
  \sum_{n\ge2}\frac1{n!}\ad_S^{\,n}(V).
\]

By Lemma~\ref{lem:Ih-interaction-bound}, the map \(\mathcal I_h\)
preserves supports.  Hence
$
  s(S)\le s(B).
$
Together with the smallness condition
$
  2s(B)\|S\|_\kappa\le \rho_0<1,
$
this implies the hypothesis of
Lemma~\ref{lem:BCH-summability-kappa} for repeated commutators with
\(S\).  Applying that lemma to the preceding series gives
$
  \|R_0\|_\kappa
  \le
  C(d,\kappa,\rho_0)
  \|S\|_\kappa\|B^{\rm off}\|_\kappa .
$

For \(R_V\), write
\[
  R_V
  =
  \sum_{n\ge2}\frac1{n!}\ad_S^{\,n}(V)
  =
  \sum_{m\ge1}\frac1{(m+1)!}
  \ad_S^{\,m}\bigl(\ad_S(V)\bigr).
\]
By Lemma~\ref{lem:spec-com-bound},
$
  \|\ad_S(V)\|_\kappa
  \le
  C(d,\kappa)
  \|S\|_\kappa\|V\|_\kappa .
$
Applying Lemma~\ref{lem:BCH-summability-kappa} once more gives
$
  \|R_V\|_\kappa
  \le
  C(d,\kappa,\rho_0)
  \|S\|_\kappa
  \|\ad_S(V)\|_\kappa .
$
Consequently,
$
  \|R_V\|_\kappa
  \le
  C(d,\kappa,\rho_0)
  \|S\|_\kappa^2
  \|V\|_\kappa .
$
Since
$
  V=B^{\rm diag}+B^{\rm off},
$
we have
$
  \|V\|_\kappa
  \le
  \|B^{\rm diag}\|_\kappa+\|B^{\rm off}\|_\kappa.
$
Combining the estimates for \(R_0\) and \(R_V\), and increasing the
constant if necessary, we obtain
\begin{align}
  \|R\|_\kappa
  \le
  C(d,\kappa,\rho_0)
  \left[
    \|S\|_\kappa\|B^{\rm off}\|_\kappa
    +
    \|S\|_\kappa^2
    \bigl(
      \|B^{\rm diag}\|_\kappa+\|B^{\rm off}\|_\kappa
    \bigr)
  \right]. \label{Inq:R_kappa}
\end{align}

\medskip
\noindent
\textbf{Extracting \(D\)-diagonal and off-diagonal parts.}
Since \(S\) is \(D\)-off-diagonal and \(B^{\rm diag}\) is
\(D\)-diagonal, one has
$
  \bigl(\ad_S(B^{\rm diag})\bigr)^{\rm diag}=0.
$
Thus
\[
  (B^+)^{\rm off}
  =
  \ad_S(B^{\rm diag})
  +
  \bigl(\ad_S(B^{\rm off})\bigr)^{\rm off}
  +
  R^{\rm off},
\]
and
\[
  (B^+)^{\rm diag}-B^{\rm diag}
  =
  \bigl(\ad_S(B^{\rm off})\bigr)^{\rm diag}
  +
  R^{\rm diag}.
\]

By Lemma~\ref{lem:D-grading-contraction}, taking the \(D\)-diagonal part
is contractive in \(\|\cdot\|_\kappa\), and the \(D\)-off-diagonal
extraction satisfies
$
  \|\Phi^{\rm off}\|_\kappa
  \le
  2\|\Phi\|_\kappa
$
for every interaction \(\Phi\).  Hence
\[
  \|(B^+)^{\rm off}\|_\kappa
  \le
  \|\ad_S(B^{\rm diag})\|_\kappa
  +
  2\|\ad_S(B^{\rm off})\|_\kappa
  +
  2\|R\|_\kappa,
\]
and
\[
  \|(B^+)^{\rm diag}-B^{\rm diag}\|_\kappa
  \le
  \|\ad_S(B^{\rm off})\|_\kappa
  +
  \|R\|_\kappa.
\]

By Lemma~\ref{lem:spec-com-bound},
\[
  \|\ad_S(B^{\rm diag})\|_\kappa
  \le
  C(d,\kappa)\|S\|_\kappa\|B^{\rm diag}\|_\kappa,
\qquad
  \|\ad_S(B^{\rm off})\|_\kappa
  \le
  C(d,\kappa)\|S\|_\kappa\|B^{\rm off}\|_\kappa.
\]
Combining these estimates and \eqref{Inq:R_kappa}, we obtain
\[
  \|(B^+)^{\rm diag}-B^{\rm diag}\|_\kappa
  \le
  C_2(d,\kappa,\rho_0)
  \left[
    \|S\|_\kappa\|B^{\rm off}\|_\kappa
    +
    \|S\|_\kappa^2
    \bigl(
      \|B^{\rm diag}\|_\kappa+\|B^{\rm off}\|_\kappa
    \bigr)
  \right],
\]
which proves \eqref{eq:spec-diag-step-bound}.

For the off-diagonal part, the same estimates give
\[
  \|(B^+)^{\rm off}\|_\kappa
  \le
  C(d,\kappa,\rho_0)
  \left[
    \|S\|_\kappa\|B^{\rm diag}\|_\kappa
    +
    \|S\|_\kappa\|B^{\rm off}\|_\kappa
    +
    \|S\|_\kappa^2
    \bigl(
      \|B^{\rm diag}\|_\kappa+\|B^{\rm off}\|_\kappa
    \bigr)
  \right].
\]
If \(S=0\), the desired estimate is immediate.  Otherwise the smallness
assumption
$
  2s(B)\|S\|_\kappa\le \rho_0<1
$
implies \(\|S\|_\kappa\le1\), since \(s(B)\ge1\).  Therefore the
quadratic term is bounded by the corresponding linear term.  Enlarging
the constant gives
\[
  \|(B^+)^{\rm off}\|_\kappa
  \le
  C_1(d,\kappa,\rho_0)
  \|S\|_\kappa
  \bigl(
    \|B^{\rm diag}\|_\kappa+\|B^{\rm off}\|_\kappa
  \bigr),
\]
which proves \eqref{eq:spec-off-step-bound}.

\medskip
\noindent
\textbf{Bound on the generator.}
Since
$
  S=\mathcal I_h(B^{\rm off}),
$
Lemma~\ref{lem:Ih-interaction-bound} gives, whenever
\(U>h_0s(B)\),
\[
  \|S\|_\kappa
  \le
  \frac{2s(B)}{U-h_0s(B)}
  \|B^{\rm off}\|_\kappa.
\]
This is \eqref{eq:spec-S-bound}.  If in addition
$
  \frac{h_0}{U}s(B)\le\frac12,
$
then
$
  U-h_0s(B)\ge \frac{U}{2},
$
and hence
$
  \|S\|_\kappa
  \le
  \frac{4s(B)}{U}\|B^{\rm off}\|_\kappa,
$
which is \eqref{eq:spec-S-bound-simplified}. \qed

\subsubsection{Proof of Proposition~\ref{prop:spec-LS-contraction}}
\label{app:LS-quant-proofs2}

Assume \(U\ge U_\ast\).  By the large-\(U\) convention and
Lemma~\ref{lem:spec-one-LS-step-quant}, there is a constant
\(C_{\mathcal I}=C_{\mathcal I}(d,\kappa)\) such that, for every step of
the iteration,
\begin{equation}\label{eq:proof-LS-Sn-bound}
  \|S_n\|_\kappa
  \le
  \frac{C_{\mathcal I}}{U}
  \|(B_n)^{\rm off}\|_\kappa .
\end{equation}
The constants \(C_1,C_2\) below are those in
Lemma~\ref{lem:spec-one-LS-step-quant}, with a fixed choice of
\(\rho_0\in(0,1)\).

\paragraph{Step 1: choice of \(\varepsilon_\ast\) and bootstrap region.}
We use the bootstrap conditions
\begin{equation}\label{eq:spec-bootstrap}
  \|(B_n)^{\rm off}\|_\kappa\le \varepsilon_\ast U,
  \qquad
  \|(B_n)^{\rm diag}\|_\kappa\le 2\varepsilon_\ast U .
\end{equation}
Choose \(\varepsilon_\ast>0\) sufficiently small so that
\begin{align}
  &2s_\ast C_{\mathcal I}\varepsilon_\ast\le \rho_0,
  \label{eq:spec-eps-choice-1}
  \\
  &q:=
  3C_1(d,\kappa,\rho_0)C_{\mathcal I}\varepsilon_\ast
  \le \frac12.
  \label{eq:spec-eps-choice-2}
\end{align}
Here \(s_\ast\) denotes the fixed local-size constant entering the
one-step BCH estimate.  Finally, enlarge \(C_\ast=C_\ast(d,\kappa)\), if
necessary, and shrink \(\varepsilon_\ast\) further so that
\begin{equation}\label{eq:spec-eps-choice-3}
  C_\ast\frac{\varepsilon_\ast}{1-q^2}\le 1.
\end{equation}
This last condition will keep the diagonal part inside the bootstrap
region.

\paragraph{Step 2: bootstrap invariance.}
Assume that \eqref{eq:spec-bootstrap} holds at step \(n\).  Then
\eqref{eq:proof-LS-Sn-bound} gives
$
  2s_\ast\|S_n\|_\kappa
  \le
  2s_\ast C_{\mathcal I}\varepsilon_\ast
  \le
  \rho_0,
$
so Lemma~\ref{lem:spec-one-LS-step-quant} applies at step \(n\).

For the off-diagonal part, Lemma~\ref{lem:spec-one-LS-step-quant} and
\eqref{eq:proof-LS-Sn-bound} give
\begin{align*}
  \|(B_{n+1})^{\rm off}\|_\kappa
  &\le
  C_1(d,\kappa,\rho_0)
  \|S_n\|_\kappa
  \bigl(
    \|(B_n)^{\rm diag}\|_\kappa
    +
    \|(B_n)^{\rm off}\|_\kappa
  \bigr)                                      \\
  &\le
  C_1(d,\kappa,\rho_0)
  \frac{C_{\mathcal I}}{U}
  \|(B_n)^{\rm off}\|_\kappa
  \bigl(
    2\varepsilon_\ast U+\varepsilon_\ast U
  \bigr)                                      \\
  &=
  3C_1(d,\kappa,\rho_0)C_{\mathcal I}
  \varepsilon_\ast
  \|(B_n)^{\rm off}\|_\kappa .
\end{align*}
By the definition of \(q\) in \eqref{eq:spec-eps-choice-2}, this yields
$
  \|(B_{n+1})^{\rm off}\|_\kappa
  \le
  q\|(B_n)^{\rm off}\|_\kappa.
$
This proves \eqref{eq:spec-linear-contract} and keeps
$
  \|(B_{n+1})^{\rm off}\|_\kappa\le \varepsilon_\ast U .
$

For the diagonal increment, Lemma~\ref{lem:spec-one-LS-step-quant} gives
\begin{align*}
  \|(B_{n+1})^{\rm diag}-(B_n)^{\rm diag}\|_\kappa
  \le
  C_2(d,\kappa,\rho_0)
  \Bigl[
    \|S_n\|_\kappa\|(B_n)^{\rm off}\|_\kappa   
    +
    \|S_n\|_\kappa^2
    \bigl(
      \|(B_n)^{\rm diag}\|_\kappa
      +
      \|(B_n)^{\rm off}\|_\kappa
    \bigr)
  \Bigr].
\end{align*}
Using \eqref{eq:proof-LS-Sn-bound} and the bootstrap bounds, we obtain
\begin{align*}
  \|(B_{n+1})^{\rm diag}-(B_n)^{\rm diag}\|_\kappa
  &\le
  C_2(d,\kappa,\rho_0)
  \left[
    C_{\mathcal I}
    +
    3C_{\mathcal I}^2\varepsilon_\ast
  \right]
  \frac{\|(B_n)^{\rm off}\|_\kappa^2}{U}.
\end{align*}
After enlarging \(C_\ast(d,\kappa)\), this gives
$
  \|(B_{n+1})^{\rm diag}-(B_n)^{\rm diag}\|_\kappa
  \le
  C_\ast
  \frac{\|(B_n)^{\rm off}\|_\kappa^2}{U}.
$
This is \eqref{eq:spec-diag-increment}.

Summing the diagonal increments and using the contraction of the
off-diagonal part gives
\begin{align*}
  \|(B_n)^{\rm diag}\|_\kappa
  &\le
  \|(B_0)^{\rm diag}\|_\kappa
  +
  C_\ast\frac1U
  \sum_{j=0}^{n-1}
  \|(B_j)^{\rm off}\|_\kappa^2                         \\
  &\le
  \varepsilon_\ast U
  +
  C_\ast
  \frac{\|(B_0)^{\rm off}\|_\kappa^2}{U}
  \sum_{j=0}^{\infty}q^{2j}                             \\
  &\le
  \varepsilon_\ast U
  +
  C_\ast
  \frac{\varepsilon_\ast^2U}{1-q^2}.
\end{align*}
By \eqref{eq:spec-eps-choice-3}, the last term is at most
\(\varepsilon_\ast U\).  Hence
$
  \|(B_n)^{\rm diag}\|_\kappa\le 2\varepsilon_\ast U.
$
Thus the bootstrap region \eqref{eq:spec-bootstrap} is invariant, and
the same \(\varepsilon_\ast\) works for all \(n\).

\paragraph{Step 3: convergence and the \(B_\infty\)-bound.}
From \eqref{eq:spec-linear-contract} and
\eqref{eq:proof-LS-Sn-bound}, we get
\[
  \sum_{n\ge0}\|S_n\|_\kappa
  \le
  \frac{C_{\mathcal I}}{U}
  \sum_{n\ge0}\|(B_n)^{\rm off}\|_\kappa
  \le
  \frac{C_{\mathcal I}}{U}
  \frac1{1-q}
  \|(B_0)^{\rm off}\|_\kappa .
\]
After enlarging \(C_\ast\), this gives
\begin{equation}\label{eq:proof-generator-summability}
  \sum_{n\ge0}\|S_n\|_\kappa
  \le
  C_\ast
  \frac{\|(B_0)^{\rm off}\|_\kappa}{U}.
\end{equation}

In finite volume, this summability implies the existence of the product
unitary.  Indeed, by the elementary bound from interaction norm to
finite-volume operator norm,
$
  \|S_{n,\Lambda}\|
  \le
  |\Lambda|\|S_n\|_\kappa,
$
and hence, for each fixed finite \(\Lambda\),
$
  \sum_{n\ge0}\|S_{n,\Lambda}\|<\infty.
$
For notational simplicity we write \(S_n\) for the corresponding
finite-volume operator in the following estimate.

Set
$
  U_N:=e^{S_{N-1}}\cdots e^{S_0}.
$
Since each \(S_n\) is anti-self-adjoint, \(e^{S_n}\) is unitary.  Moreover,
$
  U_{N+1}=e^{S_N}U_N,
$
and therefore
$
  \|U_{N+1}-U_N\|
  =
  \|(e^{S_N}-\mathbbm 1)U_N\|
  =
  \|e^{S_N}-\mathbbm 1\|.
$
Using
$
  \|e^{S_N}-\mathbbm 1\|
  \le
  e^{\|S_N\|}\|S_N\|,
$
we obtain, for \(M>N\),
\[
  \|U_M-U_N\|
  \le
  \sum_{k=N}^{M-1}\|U_{k+1}-U_k\|
  \le
  \exp\left(
    \sum_{j\ge0}\|S_j\|
  \right)
  \sum_{k=N}^{M-1}\|S_k\|.
\]
Since
$
  \sum_{n\ge0}\|S_n\|<\infty
$
in finite volume, the right-hand side tends to \(0\) as
\(N\to\infty\).  Thus \((U_N)_N\) is Cauchy in operator norm, and we may
define
\[
  U_{\rm SW}
  :=
  \lim_{N\to\infty}U_N
  =
  \lim_{N\to\infty}e^{S_{N-1}}\cdots e^{S_0}.
\]
Since \(U_N\) is unitary for every \(N\), the norm limit \(U_{\rm SW}\) is
also unitary.

Next, \eqref{eq:spec-diag-increment} and
\eqref{eq:spec-linear-contract} imply
\[
  \sum_{n\ge0}
  \|(B_{n+1})^{\rm diag}-(B_n)^{\rm diag}\|_\kappa
  \le
  C_\ast
  \frac1U
  \sum_{n\ge0}
  \|(B_n)^{\rm off}\|_\kappa^2
  <\infty.
\]
Hence \((B_n)^{\rm diag}\) converges in \(\|\cdot\|_\kappa\) to a
\(D_\Lambda\)-diagonal limit, which we denote by \(B_\infty(h)\).  Also,
by \eqref{eq:spec-linear-contract},
$
  \|(B_n)^{\rm off}\|_\kappa
  \le
  q^n\|(B_0)^{\rm off}\|_\kappa
  \longrightarrow0.
$
Therefore
$
  B_n
  =
  (B_n)^{\rm diag}+(B_n)^{\rm off}
  \longrightarrow
  B_\infty(h)
$
in \(\|\cdot\|_\kappa\).  For each fixed finite volume, this also implies
operator-norm convergence of the corresponding finite-volume sums.

By construction,
$
  H^{(N)}
  =
  U_NH^{(0)}U_N^\ast
  =
  UD_\Lambda-hM_\Lambda+B_N .
$
Passing to the limit in finite-volume operator norm gives
$
  U_{\rm SW}H^{(0)}U_{\rm SW}^{\ast}
  =
  UD_\Lambda-hM_\Lambda+B_\infty(h).
$
Since each \((B_n)^{\rm diag}\) is \(D_\Lambda\)-diagonal and the
diagonal subspace is closed under the \(\|\cdot\|_\kappa\)-limit,
\(B_\infty(h)\) is \(D_\Lambda\)-diagonal.

Finally, summing \eqref{eq:spec-diag-increment} over \(n\ge0\) and using
$
  \|(B_n)^{\rm off}\|_\kappa
  \le
  q^n\|(B_0)^{\rm off}\|_\kappa,
$
we obtain
\[
  \|B_\infty(h)-(B_0)^{\rm diag}\|_\kappa
  \le
  C_\ast
  \frac1U
  \sum_{n\ge0}
  \|(B_n)^{\rm off}\|_\kappa^2
  \le
  C_\ast
  \frac{\|(B_0)^{\rm off}\|_\kappa^2}{U}
  \sum_{n\ge0}q^{2n}.
\]
After enlarging \(C_\ast\), this gives
$
  \|B_\infty(h)-(B_0)^{\rm diag}\|_\kappa
  \le
  C_\ast
  \frac{\|(B_0)^{\rm off}\|_\kappa^2}{U}.
$
This is \eqref{eq:spec-B-infty-bound}. \qed

\subsubsection{Proof of Corollary~\ref{cor:spec-LS-small-tU}}
\label{app:proof-small-tU}

By the choice of \(C_T(d,\kappa)\), one has
$
  \|(T_\Lambda)^{\rm off}\|_\kappa\le C_T(d,\kappa)|t|,
 $
 and 
 $
  \|T_\Lambda^{(0)}\|_\kappa\le C_T(d,\kappa)|t|.
$
Since
$
  \frac{|t|}{U}\le \varepsilon_{\rm SW}(d,\kappa)
  \le
  \frac{\varepsilon_\ast(d,\kappa)}{2C_T(d,\kappa)},
$
we have
\[
  \|(B_{0,\Lambda})^{\rm off}\|_\kappa
  =
  \|(T_\Lambda)^{\rm off}\|_\kappa
  \le
  \varepsilon_\ast U,
\qquad
  \|(B_{0,\Lambda})^{\rm diag}\|_\kappa
  =
  \|T_\Lambda^{(0)}\|_\kappa
  \le
  \varepsilon_\ast U.
\]
Thus the hypotheses of Proposition~\ref{prop:spec-LS-contraction} hold
for the initial datum
\[
  B_{0,\Lambda}=T_\Lambda,
  \qquad
  H_\Lambda^{(0)}(h)
  =
  H_\Lambda^{\rm Hub}(h)
  =
  UD_\Lambda-hM_\Lambda+T_\Lambda .
\]
Hence the LS/SW iteration is well-defined for all \(n\ge0\), and the
bounds
  \eqref{eq:spec-linear-contract}--\eqref{eq:spec-generator-sum-bound}
hold for the sequence generated from \(B_{0,\Lambda}=T_\Lambda\).

In particular, Proposition~\ref{prop:spec-LS-contraction} gives a product
unitary \(U_{\rm SW}(h)\) and a \(D_\Lambda\)-diagonal interaction
\(B_\infty(h)\) such that
$
  U_{\rm SW}(h)H_\Lambda^{\rm Hub}(h)U_{\rm SW}(h)^\ast
  =
  UD_\Lambda-hM_\Lambda+B_\infty(h).
$
Define
$
  \Delta_\Lambda(h):=B_\infty(h)-T_\Lambda^{(0)}.
$
Since both \(B_\infty(h)\) and \(T_\Lambda^{(0)}\) are
\(D_\Lambda\)-diagonal, one has
$
  [\Delta_\Lambda(h),D_\Lambda]=0.
$
Therefore
\[
  U_{\rm SW}(h)H_\Lambda^{\rm Hub}(h)U_{\rm SW}(h)^\ast
  =
  UD_\Lambda-hM_\Lambda+T_\Lambda^{(0)}+\Delta_\Lambda(h).
\]

Moreover, by \eqref{eq:spec-B-infty-bound},
\[
  \|\Delta_\Lambda(h)\|_\kappa
  =
  \|B_\infty(h)-T_\Lambda^{(0)}\|_\kappa
  \le
  C_\ast(d,\kappa)
  \frac{\|(T_\Lambda)^{\rm off}\|_\kappa^2}{U}.
\]
Using
$
  \|(T_\Lambda)^{\rm off}\|_\kappa\le C_T(d,\kappa)|t|,
$
we obtain
$
  \|\Delta_\Lambda(h)\|_\kappa
  \le
  C_\ast(d,\kappa)C_T(d,\kappa)^2
  \frac{t^2}{U}.
$
This gives the stated estimate with
$
  \widetilde C(d,\kappa)
  :=
  C_\ast(d,\kappa)C_T(d,\kappa)^2.
$

Finally, \eqref{eq:spec-generator-sum-bound} gives
\[
  \sum_{n\ge0}\|S_n(h)\|_\kappa
  \le
  C_\ast(d,\kappa)
  \frac{\|(T_\Lambda)^{\rm off}\|_\kappa}{U}
  \le
  C_\ast(d,\kappa)C_T(d,\kappa)
  \frac{|t|}{U}.
\]
After enlarging \(C_\ast(d,\kappa)\) once more, this becomes
$
  \sum_{n\ge0}\|S_n(h)\|_\kappa
  \le
  C_\ast(d,\kappa)
  \frac{|t|}{U}.
$
This completes the proof. \qed

\subsection{Dressing of the normalized defect density}
\label{subsec:appendix-charge-dressing}

We prove Proposition~\ref{prop:charge-dressing-normalized-defect}.  The
only point is that the observable is normalized by the volume.  This
normalization turns the commutator with a short-range interaction into a
volume-uniform quantity.

For a finite torus \(\Lambda_L\), set
\[
  \bar q_{\Lambda_L}
  :=
  \frac1{|\Lambda_L|}
  \sum_{x\in\Lambda_L}q_x .
\]

\begin{lem}[Averaged onsite commutator bound]
\label{lem:averaged-defect-commutator-bound}
Let \(K=\{K_X\}_{X\subset\Lambda_L}\) be an even interaction on
\(\Lambda_L\), and write
\[
  K_{\Lambda_L}:=\sum_{X\subset\Lambda_L}K_X .
\]
Then
\[
  \left\|
    [K_{\Lambda_L},\bar q_{\Lambda_L}]
  \right\|
  \le
  2\|K\|_0
  \le
  2\|K\|_\kappa .
\]
\end{lem}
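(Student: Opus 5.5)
Since $q_x$ is supported at the single site $x$ and each $K_X$ is even, the plan is to expand the commutator termwise and keep only the pairs $(X,x)$ with $x\in X$. The key observation is that $q_x=(n_x-1)^2$ is an even element of $\frak A_{\{x\}}$. Hence, by the commutation rule for even observables with disjoint supports recorded in Subsection~\ref{subsec:grading-norms}, $[K_X,q_x]=0$ whenever $x\notin X$. This gives
\[
  [K_{\Lambda_L},\bar q_{\Lambda_L}]
  =
  \frac1{|\Lambda_L|}
  \sum_{X\subset\Lambda_L}\sum_{x\in X}[K_X,q_x].
\]

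Next, each surviving commutator is bounded using $\|[K_X,q_x]\|\le 2\|K_X\|\,\|q_x\|=2\|K_X\|$, since $q_x$ is a projection. The triangle inequality, followed by exchanging the order of summation, then yields
\[
  \bigl\|[K_{\Lambda_L},\bar q_{\Lambda_L}]\bigr\|
  \le
  \frac{2}{|\Lambda_L|}\sum_{x\in\Lambda_L}\sum_{X\ni x}\|K_X\|
  \le
  \frac{2}{|\Lambda_L|}\,|\Lambda_L|\,\|K\|_0
  =2\|K\|_0 .
\]
The second inequality is the definition \eqref{eq:spec-kappa-norm} with $\kappa=0$, because the inner sum is at most $\sup_x\sum_{X\ni x}\|K_X\|$. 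Finally, $\|K\|_0\le\|K\|_\kappa$ holds trivially since $e^{\kappa|X|}\ge1$ for $\kappa\ge0$.

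No step is a real obstacle. The only point needing care is the vanishing of $[K_X,q_x]$ for $x\notin X$, which relies on evenness; $q_x$ is even in any case, so even a possibly odd $K_X$ would commute with it. The role of the $1/|\Lambda_L|$ normalisation is worth stating explicitly: it exactly cancels the sum over sites $x$. This is what makes the bound volume-uniform, and it is the form in which the lemma will be fed, through a Duhamel/BCH expansion in the generators $S_n(h)$ using the summability bound of Corollary~\ref{cor:spec-LS-small-tU}, into the proof of Proposition~\ref{prop:charge-dressing-normalized-defect}.
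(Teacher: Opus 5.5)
Your proposal is correct and follows essentially the same route as the paper's proof: you discard the terms with $x\notin X$ by the even/disjoint-support commutation rule, bound each remaining commutator by $2\|K_X\|$ using $\|q_x\|\le 1$, and let the $1/|\Lambda_L|$ normalisation cancel the sum over sites after applying the definition of $\|K\|_0$. Your side remark that $q_x$ is even, so the vanishing of $[K_X,q_x]$ for $x\notin X$ would hold even for an odd $K_X$, is also correct.
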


\begin{proof}
Since \(q_x\) is even and supported at \(x\), an even local term \(K_X\)
commutes with \(q_x\) whenever \(x\notin X\).  Hence
\[
  [K_{\Lambda_L},\bar q_{\Lambda_L}]
  =
  \frac1{|\Lambda_L|}
  \sum_{x\in\Lambda_L}
  \sum_{X\ni x}
  [K_X,q_x].
\]
Using \(\|q_x\|\le1\), we obtain
\[
  \left\|
    [K_{\Lambda_L},\bar q_{\Lambda_L}]
  \right\|
  \le
  \frac2{|\Lambda_L|}
  \sum_{x\in\Lambda_L}
  \sum_{X\ni x}
  \|K_X\|.
\]
By the definition of the interaction norm,
$
  \sum_{X\ni x}\|K_X\|\le \|K\|_0
$
for every \(x\).  Therefore
$
  \left\|
    [K_{\Lambda_L},\bar q_{\Lambda_L}]
  \right\|
  \le
  2\|K\|_0.
$
The inequality \(\|K\|_0\le\|K\|_\kappa\) is immediate.
\end{proof}

\begin{proof}[Proof of Proposition~\ref{prop:charge-dressing-normalized-defect}]
Choose the LS/SW norm parameter \(\kappa>0\) used in the construction.
Recall
\[
  \rho_q(U;h_0)
  :=
  \sup_{L\in2\mathbb N}
  \sup_{|h|\le h_0}
  \left\|
    U_{{\rm SW},\Lambda_L}(h)
    \bar q_{\Lambda_L}
    U_{{\rm SW},\Lambda_L}(h)^\ast
    -
    \bar q_{\Lambda_L}
  \right\|.
\]
By Proposition~\ref{prop:spec-LS-contraction} and
Corollary~\ref{cor:spec-LS-small-tU}, after increasing the
strong-coupling threshold if necessary, the LS/SW unitary is the norm
limit of finite products of unitary conjugations generated by even
anti-self-adjoint interactions \(S_n(h)\), and
\begin{equation}\label{eq:appendix-SW-generator-sum-for-charge}
  \sup_{L\in2\mathbb N}
  \sup_{|h|\le h_0}
  \sum_{n\ge0}\|S_n(h)\|_\kappa
  \le
  C_{\rm SW}(h_0,d,\kappa)\frac{|t|}{U}.
\end{equation}
Here \(C_{\rm SW}(h_0,d,\kappa)<\infty\) is independent of \(L\), \(h\),
and \(U\).

Let
$
  V_{n,\Lambda_L}(h):=e^{S_n(h)}.
$
For each \(n\), the integral form of the Duhamel formula gives
\[
  V_{n,\Lambda_L}(h)\bar q_{\Lambda_L}
  V_{n,\Lambda_L}(h)^\ast
  -
  \bar q_{\Lambda_L}
  =
  \int_0^1
  e^{sS_n(h)}
  \ad_{S_n(h)}(\bar q_{\Lambda_L})
  e^{-sS_n(h)}
  \,ds .
\]
Thus, by unitarity and Lemma~\ref{lem:averaged-defect-commutator-bound},
$
  \left\|
    V_{n,\Lambda_L}(h)\bar q_{\Lambda_L}
    V_{n,\Lambda_L}(h)^\ast
    -
    \bar q_{\Lambda_L}
  \right\|
  \le
  2\|S_n(h)\|_\kappa .
$

Let
$
  U_{{\rm SW},\Lambda_L}^{(N)}(h)
  :=
  e^{S_{N-1}(h)}\cdots e^{S_1(h)}e^{S_0(h)}
$
be the finite LS/SW product after \(N\) steps.  A telescoping expansion
of the product of unitary conjugations yields
\[
  \left\|
    U_{{\rm SW},\Lambda_L}^{(N)}(h)
    \bar q_{\Lambda_L}
    U_{{\rm SW},\Lambda_L}^{(N)}(h)^\ast
    -
    \bar q_{\Lambda_L}
  \right\|
  \le
  \sum_{n=0}^{N-1}
  \left\|
    V_{n,\Lambda_L}(h)\bar q_{\Lambda_L}
    V_{n,\Lambda_L}(h)^\ast
    -
    \bar q_{\Lambda_L}
  \right\|.
\]
Consequently,
$
  \left\|
    U_{{\rm SW},\Lambda_L}^{(N)}(h)
    \bar q_{\Lambda_L}
    U_{{\rm SW},\Lambda_L}^{(N)}(h)^\ast
    -
    \bar q_{\Lambda_L}
  \right\|
  \le
  2\sum_{n=0}^{N-1}\|S_n(h)\|_\kappa .
$
Passing to the norm limit \(N\to\infty\), we obtain
\[
  \left\|
    U_{{\rm SW},\Lambda_L}(h)
    \bar q_{\Lambda_L}
    U_{{\rm SW},\Lambda_L}(h)^\ast
    -
    \bar q_{\Lambda_L}
  \right\|
  \le
  2\sum_{n\ge0}\|S_n(h)\|_\kappa .
\]
Taking the supremum over \(L\in2\mathbb N\) and \(|h|\le h_0\), and using
\eqref{eq:appendix-SW-generator-sum-for-charge}, gives
$
  \rho_q(U;h_0)
  \le
  2C_{\rm SW}(h_0,d,\kappa)\frac{|t|}{U}.
$
Since \(t\) is fixed in the strong-coupling limit, the right-hand side
tends to zero as \(U\to\infty\).  Hence
$
  \rho_q(U;h_0)\rightarrow0.
$
\end{proof}

\subsection{\(h\)-derivative estimates for LS/SW increments}
The estimates in this subsection are used to control the \(h\)-dependence
of the \(\lambda\)-expansion remainders in Appendix~\ref{app:P-block-remainder}.
They are not used to prove a direct \(C^1\) comparison between the full
Hubbard pressure and the \(P\)-block pressure.
\begin{lem}[Differentiated one-step diagonal increment]
\label{lem:diff-one-step-diag-increment}
Let
\[
  H(h)=H_{\rm d}(h)+D(h)+F(h),
  \qquad
  H_{\rm d}(h):=UD_\Lambda-hM_\Lambda,
\]
where \(D(h)\) is \(D_\Lambda\)-diagonal and \(F(h)\) is
\(D_\Lambda\)-off-diagonal.  Assume that \(D(h)\) and \(F(h)\) are
\(C^1\) in \(h\) in the interaction norm.  Let
$
  S(h):=\mathcal I_h(F(h)),
$
and let
\[
  H^+(h):=e^{S(h)}H(h)e^{-S(h)}
  =
  H_{\rm d}(h)+D^+(h)+F^+(h)
\]
be one LS/SW step, with \(D^+(h)\) the \(D_\Lambda\)-diagonal part.
Set
\[
  \delta(h):=D^+(h)-D(h).
\]
Assume the one-step smallness hypotheses of
Lemma~\ref{lem:spec-one-LS-step-quant}, and assume the local
\(h\)-derivative bounds for \(\mathcal I_h\) from
Lemmas~\ref{lem:def-Ih-Neumann} and~\ref{lem:dIh-local-bound}.  Then
there exists a constant
$
  C=C(d,\kappa,h_0/U)<\infty
  $
independent of \(\Lambda\) such that
\[
  \|\partial_h\delta(h)\|_\kappa
  \le
  C
  \left(
    \frac{\|F(h)\|_\kappa}{U}
    \|\partial_hF(h)\|_\kappa
    +
    \frac{\|F(h)\|_\kappa^2}{U^2}
    \bigl(1+\|\partial_hD(h)\|_\kappa\bigr)
  \right).
\]
\end{lem}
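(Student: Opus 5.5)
We start from the explicit formula for the one-step diagonal increment derived in the proof of Lemma~\ref{lem:spec-one-LS-step-quant}. With $B^{\rm diag}=D(h)$ and $B^{\rm off}=F(h)$,
\[
  \delta(h)
  =
  \bigl(\ad_{S(h)}(F(h))\bigr)^{\rm diag}+R(h)^{\rm diag}.
\]
The remainder is $R=R_0+R_V$, where
\[
  R_0=-\sum_{m\ge1}\tfrac1{(m+1)!}\ad_S^{\,m}(F),
  \qquad
  R_V=\sum_{m\ge1}\tfrac1{(m+1)!}\ad_S^{\,m}\bigl(\ad_S(D+F)\bigr).
\]
The point of this representation is that $H_{\rm d}(h)$ no longer appears, so the $h$-dependence enters only through $S(h)$, $D(h)$, $F(h)$. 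Taking the diagonal part commutes with $\partial_h$, because the projections $P_m$ do not depend on $h$. Also $\|(\cdot)^{\rm diag}\|_\kappa$ is contractive by Lemma~\ref{lem:D-grading-contraction}. It therefore suffices to bound $\|\partial_h(\ad_S F)\|_\kappa$ and $\|\partial_h R\|_\kappa$.

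The first step is a bound on $\partial_h S$. We differentiate $S(h)=\mathcal I_h(F(h))$ termwise and gradewise:
\[
  \partial_h S
  =
  (\partial_h\mathcal I_h)(F)+\mathcal I_h(\partial_h F).
\]
Lemma~\ref{lem:dIh-local-bound} controls the first term, and the same counting of at most $2s$ grades as in Lemma~\ref{lem:Ih-interaction-bound}(c) gives
\[
  \|(\partial_h\mathcal I_h)(F)\|_\kappa
  \le
  \frac{2s^2}{(U-h_0s)^2}\|F\|_\kappa .
\]
Lemma~\ref{lem:Ih-interaction-bound}(c) controls the second term. Altogether
\[
  \|\partial_hS\|_\kappa
  \le
  C\Bigl(\frac{\|\partial_hF\|_\kappa}{U}+\frac{\|F\|_\kappa}{U^2}\Bigr),
  \qquad
  \|S\|_\kappa\le C\frac{\|F\|_\kappa}{U},
\]
with $C$ depending on $s$ and $h_0/U$.

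Next we differentiate the leading term via the Leibniz rule:
\[
  \partial_h\ad_S(F)=\ad_{\partial_hS}(F)+\ad_S(\partial_hF).
\]
Lemma~\ref{lem:spec-com-bound} bounds this by
\[
  C\bigl(\|\partial_hS\|_\kappa\|F\|_\kappa+\|S\|_\kappa\|\partial_hF\|_\kappa\bigr)
  \le
  C\Bigl(\frac{\|F\|\,\|\partial_hF\|}{U}+\frac{\|F\|^2}{U^2}\Bigr),
\]
which is of the claimed form.

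For the series we differentiate $\ad_S^{\,m}(X)$ by the Leibniz rule. This gives $m$ terms of the form $\ad_S^{\,j}\ad_{\partial_hS}\ad_S^{\,m-1-j}(X)$, plus $\ad_S^{\,m}(\partial_hX)$. We estimate each term by iterating Lemma~\ref{lem:spec-com-bound}, with support sizes growing linearly exactly as in the proof of Lemma~\ref{lem:BCH-summability-kappa}. The resulting polynomial factor $m\,(n+1)^a$ is summable against $\rho_0^{m}$, and the factorials $1/(m+1)!$ absorb the combinatorics.

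We then apply this to $R_0$ and $R_V$. For $R_0$ the derivative is bounded by
\[
  C\bigl(\|\partial_hS\|\,\|F\|+\|S\|\,\|\partial_hF\|\bigr),
\]
using at least one factor of $S$ in each term. For $R_V$ each term carries at least two commutators with $S$, so the derivative is bounded by
\[
  C\Bigl(\|S\|\,\|\partial_hS\|\,\|D+F\|+\|S\|^2\bigl(\|\partial_hD\|+\|\partial_hF\|\bigr)\Bigr).
\]
Here the bootstrap-type size assumptions implicit in the one-step hypotheses are needed: $\|D\|_\kappa,\|F\|_\kappa\lesssim U$, and $\|S\|_\kappa\le1$, which follows from $2s\|S\|_\kappa\le\rho_0$. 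With these, $\|S\|\,\|\partial_hS\|\,\|D\|\lesssim \|F\|\,\|\partial_hF\|/U+\|F\|^2/U^2$, $\|S\|^2\|\partial_hD\|\le C\|F\|^2\|\partial_hD\|/U^2$, and $\|S\|^2\|\partial_hF\|\le C\|F\|\,\|\partial_hF\|/U$.

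The main obstacle is this step: justifying termwise differentiation of the BCH series in $\|\cdot\|_\kappa$ uniformly in $\Lambda$, and tracking the role of $\|D\|_\kappa$. The claimed bound contains no $\|D\|_\kappa$ factor, so $\|D\|_\kappa/U$ must be absorbed into the constant through the smallness regime ($\|D\|_\kappa\le2\varepsilon_\ast U$). I would make this explicit by allowing $C$ to depend on the bootstrap constant. Uniform convergence of the differentiated series then follows from the geometric bounds, which justifies the interchange of $\partial_h$ and $\sum_m$.
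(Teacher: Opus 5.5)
Your proposal is correct and follows essentially the same route as the paper. There too the $H_{\rm d}$-tail is eliminated via $\ad_S(H_{\rm d})=-F$, giving $\delta=\bigl(\sum_{m\ge1}\frac{m}{(m+1)!}\ad_S^{\,m}(F)+\sum_{m\ge2}\frac{1}{m!}\ad_S^{\,m}(D)\bigr)^{\rm diag}$, which is your $\ad_S(F)+R_0+R_V$ regrouped. This is differentiated termwise by the Leibniz rule and closed with $\|S\|_\kappa\le C\|F\|_\kappa/U$, $\|\partial_hS\|_\kappa\le C\bigl(\|\partial_hF\|_\kappa/U+\|F\|_\kappa/U^2\bigr)$ and the bound $\|D\|_\kappa\le CU$, which, as you say, comes from the iteration's bootstrap region rather than the literal hypotheses of Lemma~\ref{lem:spec-one-LS-step-quant}.
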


\begin{proof}
For readability write
\[
  a:=\|F(h)\|_\kappa,
  \qquad
  \dot a:=\|\partial_hF(h)\|_\kappa,
  \qquad
  d:=\|D(h)\|_\kappa,
  \qquad
  \dot d:=\|\partial_hD(h)\|_\kappa.
\]
The one-step hypotheses include the usual bootstrap bound
$
  d\le C U,
$
with \(C\) independent of \(\Lambda\).  This constant will be absorbed
below.

We first record the bounds on the generator.  Since
$
  S(h)=\mathcal I_h(F(h)),
$
Lemma~\ref{lem:Ih-interaction-bound} gives
$
  \|S(h)\|_\kappa
  \le
  C\frac{a}{U}.
$
Differentiating \(S(h)=\mathcal I_h(F(h))\) gives
$
  \partial_hS(h)
  =
  \mathcal I_h(\partial_hF(h))
  +
  (\partial_h\mathcal I_h)(F(h)).
$
By Lemmas~\ref{lem:Ih-interaction-bound} and
\ref{lem:dIh-local-bound},
$
  \|\partial_hS(h)\|_\kappa
  \le
  C
  \left(
    \frac{\dot a}{U}
    +
    \frac{a}{U^2}
  \right).
$

Next we write the diagonal increment explicitly.  Since
$
  \ad_{S(h)}(H_{\rm d}(h))=-F(h),
$
the BCH expansion gives
\[
  e^{S}H_{\rm d}e^{-S}
  +
  e^{S}Fe^{-S}
  -
  H_{\rm d}
  =
  \sum_{m\ge1}
  \frac{m}{(m+1)!}\,
  \ad_S^{\,m}(F),
\]
where \(S=S(h)\) and \(F=F(h)\).  Also,
\[
  e^{S}D e^{-S}-D
  =
  \sum_{m\ge1}
  \frac1{m!}\ad_S^{\,m}(D).
\]
The first term \(\ad_{S}(D)\) is \(D_\Lambda\)-off-diagonal, because \(S\) is
off-diagonal and \(D\) is diagonal.  Hence it does not contribute to the
diagonal increment.  Therefore
\begin{equation}\label{eq:one-step-diag-increment-BCH}
  \delta(h)
  =
  \left(
  \sum_{m\ge1}
  \frac{m}{(m+1)!}\ad_S^{\,m}(F)
  +
  \sum_{m\ge2}
  \frac1{m!}\ad_S^{\,m}(D)
  \right)^{\rm diag}.
\end{equation}
The series are absolutely convergent in \(\|\cdot\|_\kappa\) by
Lemma~\ref{lem:BCH-summability-kappa} and the one-step smallness
assumption.

We now differentiate \eqref{eq:one-step-diag-increment-BCH}.  Termwise
differentiation is justified by the same BCH summability bound, applied
uniformly in the one-step smallness regime.  For any \(X=X(h)\),
\[
  \partial_h\ad_S^{\,m}(X)
  =
  \sum_{j=0}^{m-1}
  \ad_S^{\,j}
  \ad_{\partial_hS}
  \ad_S^{\,m-1-j}(X)
  +
  \ad_S^{\,m}(\partial_hX).
\]
Applying this identity to the first series in
\eqref{eq:one-step-diag-increment-BCH}, we estimate the two types of
terms separately.  For the terms in which the derivative hits one of the
\(S\)'s, Lemma~\ref{lem:spec-com-bound} gives, for each \(m\ge1\),
\[
  \left\|
  \sum_{j=0}^{m-1}
  \ad_S^{\,j}
  \ad_{\partial_hS}
  \ad_S^{\,m-1-j}(F)
  \right\|_\kappa
  \le
  C m
  \bigl(C\|S\|_\kappa\bigr)^{m-1}
  \|\partial_hS\|_\kappa\,\|F\|_\kappa .
\]
The case \(m=1\) is important here: it gives the term
\(\ad_{\partial_hS}(F)\), and hence no factor \(\|S\|_\kappa\) is present
in the leading contribution.  Using the one-step smallness assumption, we may assume
$
  C_{\rm com}(d,\kappa)\|S\|_\kappa\le \rho<1 .
$
Hence
\[
  \sum_{m\ge1}
  \frac{m}{(m+1)!}
  m
  \bigl(C_{\rm com}(d,\kappa)\|S\|_\kappa\bigr)^{m-1}
  \le
  \sum_{m\ge1}
  \frac{m^2}{(m+1)!}\rho^{m-1}
  =:C_\rho<\infty .
\]
Therefore
\[
  \sum_{m\ge1}
  \frac{m}{(m+1)!}
  \left\|
  \sum_{j=0}^{m-1}
  \ad_S^{\,j}
  \ad_{\partial_hS}
  \ad_S^{\,m-1-j}(F)
  \right\|_\kappa
  \le
  C_\rho\,
  \|\partial_hS\|_\kappa\,a .
\]

For the terms in which the derivative hits \(F\), every term contains at
least one factor \(S\), since the first series starts with \(m=1\).  Thus
$
  \left\|
    \ad_S^{\,m}(\partial_hF)
  \right\|_\kappa
  \le
  C
  \bigl(C\|S\|_\kappa\bigr)^m
  \|\partial_hF\|_\kappa ,
$
and the same summability gives
$
  \sum_{m\ge1}
  \frac{m}{(m+1)!}
  \left\|
    \ad_S^{\,m}(\partial_hF)
  \right\|_\kappa
  \le
  C\|S\|_\kappa\,\dot a .
$
Consequently,
\[
  \left\|
  \partial_h
  \left(
  \sum_{m\ge1}
  \frac{m}{(m+1)!}\ad_S^{\,m}(F)
  \right)
  \right\|_\kappa
  \le
  C
  \left(
    \|\partial_hS\|_\kappa\,a
    +
    \|S\|_\kappa\,\dot a
  \right).
\]

We next estimate the second series.  Since this series starts at
\(m=2\), a term in which the derivative hits one occurrence of \(S\)
still contains at least one further occurrence of \(S\).  Hence, for
\(m\ge2\),
\[
  \left\|
  \sum_{j=0}^{m-1}
  \ad_S^{\,j}
  \ad_{\partial_hS}
  \ad_S^{\,m-1-j}(D)
  \right\|_\kappa
  \le
  C m
  \bigl(C_{\rm com}(d,\kappa)\|S\|_\kappa\bigr)^{m-1}
  \|\partial_hS\|_\kappa\,\|D\|_\kappa .
\]
Using again the one-step smallness assumption
$
  C_{\rm com}(d,\kappa)\|S\|_\kappa\le \rho<1,
$
we have
\[
  \sum_{m\ge2}
  \frac{m}{m!}
  \bigl(C_{\rm com}(d,\kappa)\|S\|_\kappa\bigr)^{m-1}
  \le
  C_\rho\,\|S\|_\kappa ,
\]
after increasing \(C_\rho\) if necessary.  Therefore
\[
  \sum_{m\ge2}
  \frac1{m!}
  \left\|
  \sum_{j=0}^{m-1}
  \ad_S^{\,j}
  \ad_{\partial_hS}
  \ad_S^{\,m-1-j}(D)
  \right\|_\kappa
  \le
  C\,\|S\|_\kappa\,\|\partial_hS\|_\kappa\,d .
\]

If the derivative hits the factor \(D\), then every term contains at
least two factors \(S\).  Thus, by the same smallness assumption,
$
  \sum_{m\ge2}
  \frac1{m!}
  \left\|
    \ad_S^{\,m}(\partial_hD)
  \right\|_\kappa
  \le
  C\,\|S\|_\kappa^2\,\dot d .
$
Combining the two estimates gives
\[
  \left\|
  \partial_h
  \left(
  \sum_{m\ge2}
  \frac1{m!}\ad_S^{\,m}(D)
  \right)
  \right\|_\kappa
  \le
  C
  \left(
    \|S\|_\kappa\,\|\partial_hS\|_\kappa\,d
    +
    \|S\|_\kappa^2\,\dot d
  \right).
\]

The \(D_\Lambda\)-diagonal projection is contractive in the interaction
norm, so the same estimates apply after taking the diagonal part.
Combining the two series, we get
\[
  \|\partial_h\delta(h)\|_\kappa
  \le
  C
  \left(
    \|\partial_hS\|_\kappa\,a
    +
    \|S\|_\kappa\,\dot a
    +
    \|S\|_\kappa\,\|\partial_hS\|_\kappa\,d
    +
    \|S\|_\kappa^2\,\dot d
  \right).
\]

Finally, we substitute
\[
  \|S\|_\kappa\le C\frac{a}{U},
  \qquad
  \|\partial_hS\|_\kappa
  \le
  C\left(\frac{\dot a}{U}+\frac{a}{U^2}\right),
  \qquad
  d\le CU.
\]
The first two terms give
$
  \|\partial_hS\|_\kappa\,a+\|S\|_\kappa\,\dot a
  \le
  C\left(
    \frac{a\dot a}{U}
    +
    \frac{a^2}{U^2}
  \right).
$
The third term gives
$
  \|S\|_\kappa\,\|\partial_hS\|_\kappa\,d
  \le
  C
  \frac{a}{U}
  \left(
    \frac{\dot a}{U}+\frac{a}{U^2}
  \right)U
  \le
  C\left(
    \frac{a\dot a}{U}
    +
    \frac{a^2}{U^2}
  \right),
$
and the last term gives
$
  \|S\|_\kappa^2\,\dot d
  \le
  C\frac{a^2}{U^2}\dot d.
$
Hence
$
  \|\partial_h\delta(h)\|_\kappa
  \le
  C
  \left(
    \frac{a\dot a}{U}
    +
    \frac{a^2}{U^2}
    +
    \frac{a^2}{U^2}\dot d
  \right).
$
The proof is complete.
\end{proof}

\subsection{Second-order computations for the \(\xi\)- and \(\lambda\)-schemes}
\label{app:two-schemes-second-order}

\subsubsection{Proof of Lemma~\ref{lem:H2-from-USW-Hubbard-real-xi-fixed}}
\label{app:proof-H2-xi}

Set
\[
  H_{\rm d}:=UD_\Lambda-hM_\Lambda,
  \qquad
  T^{(0)}:=T_\Lambda^{(0)},
  \qquad
  V:=(T_\Lambda)^{\rm off}.
\]
Thus
\[
  H_\Lambda(h;\xi)=H_{\rm d}+\xi T^{(0)}+\xi V,
  \qquad
  [H_{\rm d},D_\Lambda]=0.
\]

\medskip
\noindent
\textbf{Step 1. Expansion at \(\xi=0\).}
We use the \(\xi\)-expansion introduced in
Subsection~\ref{subsec:xi-scheme-H2}.  In particular,
\[
  B_\infty(h;\xi)
  =
  \xi T^{(0)}+\Delta_\Lambda(h;\xi),
  \qquad
  \Delta_\Lambda(h;\xi)
  =
  \sum_{m\ge2}\xi^m(\Delta_\Lambda)^{[m]}_{\xi}(h),
\]
and
\[
  H_\Lambda^{(2)}(h)
  =
  (\Delta_\Lambda)^{[2]}_{\xi}(h).
\]
Thus it remains only to compute the \(\xi^2\)-coefficient of the
diagonal correction.  This coefficient is determined by the first
LS/SW step, since the off-diagonal part left after that step is already
of order \(\xi^2\), and later diagonal increments are at least quadratic
in that remaining off-diagonal part.

\medskip
\noindent
\textbf{Step 2. The first LS/SW step and the \(\xi^2\)-diagonal term.}
Let
\[
  S^{[1]}:=\mathcal I_h(V),
  \qquad
  S_0(\xi):=\mathcal I_h\bigl((\xi T_\Lambda)^{\rm off}\bigr)
  =
  \xi S^{[1]}.
\]
By the defining property of \(\mathcal I_h\),
$
  \ad_{H_{\rm d}}(S^{[1]})=V,
$
and 
$
  \ad_{S^{[1]}}(H_{\rm d})=-V.
$
Set
\[
  H_0(\xi):=H_\Lambda(h;\xi),
  \qquad
  H_1(\xi):=e^{S_0(\xi)}H_0(\xi)e^{-S_0(\xi)}.
\]
By the BCH expansion, with the remainder controlled in
\(\|\cdot\|_\kappa\) by Lemma~\ref{lem:BCH-summability-kappa} and
Lemma~\ref{lem:spec-com-bound}, we have
\[
  H_1(\xi)
  =
  H_0(\xi)
  +
  \ad_{S_0(\xi)}(H_0(\xi))
  +
  \frac12
  \ad_{S_0(\xi)}^2(H_0(\xi))
  +
  O(|\xi|^3)
\]
in \(\|\cdot\|_\kappa\).  Substituting
$
  H_0(\xi)=H_{\rm d}+\xi T^{(0)}+\xi V
$
and 
$
  S_0(\xi)=\xi S^{[1]},
$
we obtain
\[
  H_1(\xi)
  =
  H_{\rm d}
  +
  \xi T^{(0)}
  +
  \xi V
  +
  \xi \ad_{S^{[1]}}(H_{\rm d})
  +
  \xi^2\ad_{S^{[1]}}(T^{(0)})
  +
  \xi^2\ad_{S^{[1]}}(V)
  +
  \frac12\xi^2\ad_{S^{[1]}}^2(H_{\rm d})
  +
  O(|\xi|^3).
\]
The \(\xi\)-linear off-diagonal terms cancel, because
$
  \xi V+\xi\ad_{S^{[1]}}(H_{\rm d})=0.
$
Moreover,
$
  \ad_{S^{[1]}}^2(H_{\rm d})
  =
  -\ad_{S^{[1]}}(V).
$
Hence
\[
  H_1(\xi)
  =
  H_{\rm d}
  +
  \xi T^{(0)}
  +
  \xi^2\ad_{S^{[1]}}(T^{(0)})
  +
  \frac12\xi^2\ad_{S^{[1]}}(V)
  +
  O(|\xi|^3)
\]
in \(\|\cdot\|_\kappa\).

Since \(T^{(0)}\) is \(D_\Lambda\)-diagonal and \(S^{[1]}\) is
\(D_\Lambda\)-off-diagonal, the \(D\)-grading rule gives
$
  \bigl(\ad_{S^{[1]}}(T^{(0)})\bigr)^{\rm diag}=0.
$
Therefore
\begin{equation}\label{eq:first-step-xi2-diag}
  H_1(\xi)^{\rm diag}
  =
  H_{\rm d}
  +
  \xi T^{(0)}
  +
  \frac12\xi^2
  \bigl(\ad_{\mathcal I_h(V)}(V)\bigr)^{\rm diag}
  +
  O(|\xi|^3).
\end{equation}
Also,
\begin{equation}\label{eq:first-step-off-Oxi2}
  H_1(\xi)^{\rm off}=O(\xi^2)
  \qquad
  \text{in }\|\cdot\|_\kappa,
\end{equation}
because the \(\xi\)-linear off-diagonal term is cancelled in the first
step.

\medskip
\noindent
\textbf{Step 3. Later LS/SW steps do not change the \(\xi^2\)-diagonal coefficient.}
Let \(H_n(\xi)\) be the Hamiltonian after \(n\) LS/SW steps, and write
$
  H_n(\xi)=H_{\rm d}+B_n(\xi).
$
By \eqref{eq:first-step-off-Oxi2},
$
  B_1(\xi)^{\rm off}=O(|\xi|^2)
$
in \(\|\cdot\|_\kappa\).  Applying the contraction estimate
\eqref{eq:spec-linear-contract} from
Proposition~\ref{prop:spec-LS-contraction} from step \(1\) onward, we get
$
  \|B_n(\xi)^{\rm off}\|_\kappa
  \le
  C q^{n-1}|\xi|^2,
  \, n\ge1,
  $
for \(|\xi|\) sufficiently small.  Here \(C\) is uniform in \(n\),
\(\Lambda\), and \(|h|\le h_0\).

By the diagonal increment estimate
\eqref{eq:spec-diag-increment},
$
  \|B_{n+1}(\xi)^{\rm diag}-B_n(\xi)^{\rm diag}\|_\kappa
  \le
  C_\ast
  \frac{\|B_n(\xi)^{\rm off}\|_\kappa^2}{U}.
$
Therefore
\[
  \sum_{n\ge1}
  \|B_{n+1}(\xi)^{\rm diag}-B_n(\xi)^{\rm diag}\|_\kappa
  \le
  C|\xi|^4
  \sum_{n\ge1}q^{2(n-1)}
  =
  O(|\xi|^4)
\]
in \(\|\cdot\|_\kappa\).  Hence the total contribution of all LS/SW steps
after the first one to the diagonal part is \(O(|\xi|^4)\).  In
particular, these later steps do not change the \(\xi^2\)-coefficient of
the limiting diagonal correction.

\medskip
\noindent
\textbf{Step 4. Identification of the second-order coefficient.}
By \eqref{eq:first-step-xi2-diag},
\[
  B_1(\xi)^{\rm diag}
  =
  \xi T^{(0)}
  +
  \frac12\xi^2
  \bigl(\ad_{\mathcal I_h(V)}(V)\bigr)^{\rm diag}
  +
  O(|\xi|^3)
\]
in $\|\cdot \|_{\kappa}$.
By Step 3, the diagonal increments from all later steps are
\(O(|\xi|^4)\).  Hence
\[
  B_\infty(h;\xi)
  =
  \xi T^{(0)}
  +
  \frac12\xi^2
  \bigl(\ad_{\mathcal I_h(V)}(V)\bigr)^{\rm diag}
  +
  O(|\xi|^3).
\]
Since, by definition,
$
  B_\infty(h;\xi)
  =
  \xi T^{(0)}+\Delta_\Lambda(h;\xi),
$
we obtain
\[
  \Delta_\Lambda(h;\xi)
  =
  \frac12\xi^2
  \bigl(\ad_{\mathcal I_h(V)}(V)\bigr)^{\rm diag}
  +
  O(|\xi|^3).
\]
Therefore
$
  (\Delta_\Lambda)^{[2]}_{\xi}(h)
  =
  \frac12
  \bigl(\ad_{\mathcal I_h(V)}(V)\bigr)^{\rm diag}.
$
Since \(V=(T_\Lambda)^{\rm off}\), 
this proves the asserted formula. \qed

\subsubsection{Proof of Lemma~\ref{lem:bridge-lambda-xi-second-order}}
\label{app:proof-bridge-2nd}

Write
$
  T_\Lambda=T_\Lambda^{(0)}+V
$
 with 
$
  V:=(T_\Lambda)^{\rm off},
$
and set
$
  H_0(h):=UD_\Lambda-hM_\Lambda.
$

\medskip
\noindent
\textbf{Second-order coefficient at the first LS/SW step.}
By Lemma~\ref{lem:H2-from-USW-Hubbard-real-xi-fixed}, we have
$
  (\Delta_\Lambda)^{[2]}_{\xi}(h)
  =
  \frac12
  \bigl(\ad_{\mathcal I_h(V)}(V)\bigr)^{\rm diag}.
$
For the \(\lambda\)-scheme,
$
  H_\Lambda(h;\lambda)
  =
  H_0(h)+T_\Lambda^{(0)}+\lambda V.
$
The first generator is
$
  S_0(\lambda)=\lambda\mathcal I_h(V),
$
and the defining property of \(\mathcal I_h\) gives
$
  \ad_{\mathcal I_h(V)}(H_0(h))=-V.
$

By the BCH expansion, with the remainder controlled in
\(\|\cdot\|_0\) by Lemma~\ref{lem:BCH-summability-kappa} and
Lemma~\ref{lem:spec-com-bound}, we have
\[
  e^{S_0(\lambda)}
  H_\Lambda(h;\lambda)
  e^{-S_0(\lambda)}
  =
  H_\Lambda(h;\lambda)
  +
  \ad_{S_0(\lambda)}(H_\Lambda(h;\lambda))
  +
  \frac12
  \ad_{S_0(\lambda)}^2(H_\Lambda(h;\lambda))
  +
  O(|\lambda|^3)
\]
in \(\|\cdot\|_0\).  Substituting
$
  S_0(\lambda)=\lambda\mathcal I_h(V)
$
and using
$
  \ad_{\mathcal I_h(V)}(H_0(h))=-V,
$
we obtain
\[
\begin{aligned}
  e^{S_0(\lambda)}
  H_\Lambda(h;\lambda)
  e^{-S_0(\lambda)}
  &=
  H_0(h)+T_\Lambda^{(0)}
  +\lambda\,\ad_{\mathcal I_h(V)}(T_\Lambda^{(0)})
  \\
  &\quad
  +\lambda^2
  \left[
    \ad_{\mathcal I_h(V)}(V)
    +
    \frac12
    \ad_{\mathcal I_h(V)}^2(H_0(h))
    +
    \frac12
    \ad_{\mathcal I_h(V)}^2(T_\Lambda^{(0)})
  \right]
  +
  O(|\lambda|^3).
\end{aligned}
\]
Since
$
  \ad_{\mathcal I_h(V)}^2(H_0(h))
  =
  -\ad_{\mathcal I_h(V)}(V),
$
the \(\lambda^2\)-coefficient becomes
$
  \frac12\ad_{\mathcal I_h(V)}(V)
  +
  \frac12\ad_{\mathcal I_h(V)}^2(T_\Lambda^{(0)}).
$
Taking the \(D_\Lambda\)-diagonal part gives
$
  \frac12
  \bigl(\ad_{\mathcal I_h(V)}(V)\bigr)^{\rm diag}
  +
  \frac12
  \bigl(\ad_{\mathcal I_h(V)}^2(T_\Lambda^{(0)})\bigr)^{\rm diag}.
$
The first term is exactly
$
  (\Delta_\Lambda)^{[2]}_{\xi}(h),
$
and the second term is
$
  \mathcal R^{(2)}_\Lambda(h).
$

Thus, after the first LS/SW step, the \(\lambda^2\)-diagonal coefficient is
$
  (\Delta_\Lambda)^{[2]}_{\xi}(h)
  +
  \mathcal R^{(2)}_\Lambda(h).
$
However, unlike the \(\xi\)-scheme, the first transformed Hamiltonian still
has a \(\lambda\)-linear off-diagonal term,
$
  \lambda\,
  \ad_{\mathcal I_h(V)}(T_\Lambda^{(0)}).
$
The remaining contribution to the \(\lambda^2\)-diagonal coefficient comes
from the subsequent LS/SW steps.  We denote this total later contribution
by \(\mathcal E^{(2)}_\Lambda(h)\).  Equivalently,
\[
  \mathcal E^{(2)}_\Lambda(h)
  =
  (\Delta_\Lambda)^{[2]}_{\lambda}(h)
  -
  (\Delta_\Lambda)^{[2]}_{\xi}(h)
  -
  \mathcal R^{(2)}_\Lambda(h).
\]
This gives the asserted decomposition.

\medskip
\noindent
\textbf{Definition and estimate of \(\mathcal E^{(2)}_\Lambda(h)\).}
Let \(B_n(h;\lambda)\) be the interaction produced by the LS/SW iteration
applied to
$
  H_\Lambda(h;\lambda)
  =
  H_0(h)+T_\Lambda^{(0)}+\lambda V .
$
Thus the initial interaction is
\[
  B_0(h;\lambda)
  :=
  T_\Lambda^{(0)}+\lambda V,
  \qquad
  B_0(h;\lambda)^{\rm diag}=T_\Lambda^{(0)},
  \qquad
  B_0(h;\lambda)^{\rm off}=\lambda V.
\]
After \(n\) steps we write
\[
  H_n(h;\lambda)
  =
  H_0(h)
  +
  B_n(h;\lambda)
  =
  H_0(h)
  +
  (B_n(h;\lambda))^{\rm diag}
  +
  (B_n(h;\lambda))^{\rm off}.
\]
Set
\[
  \delta_n(h;\lambda)
  :=
  (B_{n+1}(h;\lambda))^{\rm diag}
  -
  (B_n(h;\lambda))^{\rm diag}.
\]
Since the limiting interaction \(B_\infty(h;\lambda)\) is
\(D_\Lambda\)-diagonal and, by definition,
$
  B_\infty(h;\lambda)
  =
  T_\Lambda^{(0)}+\Delta_\Lambda(h;\lambda),
$
we have
\[
  \Delta_\Lambda(h;\lambda)
  =
  B_\infty(h;\lambda)-T_\Lambda^{(0)}
  =
  \sum_{n\ge0}\delta_n(h;\lambda),
\]
with convergence in the interaction norm, for \(|\lambda|\) sufficiently
small.

By the analytic dependence of the LS/SW iteration on \(\lambda\), each
\(\delta_n(h;\lambda)\) is analytic near \(\lambda=0\).  Moreover,
\(\delta_n(h;\lambda)\) has no linear term, because the diagonal
increment is quadratic in the off-diagonal input and
$
  B_0(h;\lambda)^{\rm off}=\lambda V.
$
We therefore write
\[
  \delta_n(h;\lambda)
  =
  \sum_{m\ge2}\lambda^m\delta_n^{[m]}(h).
\]
By the computation of the first step above, the \(\lambda^2\)-coefficient
of \(\delta_0(h;\lambda)\) is
$
  (\Delta_\Lambda)^{[2]}_{\xi}(h)
  +
  \mathcal R^{(2)}_\Lambda(h).
$
Hence the remaining contribution to the \(\lambda^2\)-coefficient is
\[
  \mathcal E^{(2)}_\Lambda(h)
  :=
  \sum_{n\ge1}\delta_n^{[2]}(h).
\]
Each \(\delta_n^{[2]}(h)\) is \(D_\Lambda\)-diagonal, and therefore so is
\(\mathcal E^{(2)}_\Lambda(h)\).
We now estimate this series.  We use the one-step estimates with
\(\kappa=0\).  Choose \(r_0\in(0,1]\) so small that the LS/SW contraction
estimates apply uniformly for \(|\lambda|\le r_0\).  By
\eqref{eq:spec-diag-increment},
\[
  \|\delta_n(h;\lambda)\|_0
  \le
  C_\ast(d)
  \frac{\|(B_n(h;\lambda))^{\rm off}\|_0^2}{U}.
\]
For \(n\ge1\), the contraction estimate
\eqref{eq:spec-linear-contract} gives
\[
  \|(B_n(h;\lambda))^{\rm off}\|_0
  \le
  q^{n-1}
  \|(B_1(h;\lambda))^{\rm off}\|_0 .
\]

It remains to bound the first off-diagonal remainder.  Put
$
  S:=\mathcal I_h(V).
$
Since
$
  S_0(\lambda)=\lambda S
$
and
$
  \ad_S(H_0(h))=-V,
$
the first LS/SW step cancels the leading off-diagonal term
\(\lambda V\).  More precisely, after expanding by BCH and extracting the
terms of order at most \(\lambda^2\), we get
\[
  (B_1(h;\lambda))^{\rm off}
  =
  \lambda
  \bigl(
    \ad_S(T_\Lambda^{(0)})
  \bigr)
  +
  R_1(h;\lambda),
\]
where
\[
  R_1(h;\lambda)
  =
  \left[
    \lambda^2
    \left(
      \frac12\ad_S(V)
      +
      \frac12\ad_S^2(T_\Lambda^{(0)})
    \right)
    +
    R_{\ge3}(h;\lambda)
  \right]^{\rm off}.
\]
Here \(R_{\ge3}(h;\lambda)\) denotes the part of the BCH expansion of
order at least \(\lambda^3\).  To see this, the \(H_0(h)\)-tail is first
rewritten using
$
  \ad_S(H_0(h))=-V.
$
Thus, for the \(H_0(h)\)-part,
we have
$
  \ad_S^r(H_0(h))
  =
  -\ad_S^{r-1}(V)\, (r\ge1),
$
so no norm of \(H_0(h)\) appears in the tail estimates.

By Lemma~\ref{lem:Ih-interaction-bound} and the hopping bounds,
\[
  \|S\|_0
  =
  \|\mathcal I_h(V)\|_0
  \le
  C(d)\frac{|t|}{U},
  \qquad
  \|V\|_0+\|T_\Lambda^{(0)}\|_0
  \le
  C(d)|t|.
\]
Using the interaction commutator bound
Lemma~\ref{lem:spec-com-bound}, we get
\[
  \|\ad_S(T_\Lambda^{(0)})\|_0
  \le
  C(d)\frac{|t|^2}{U},
\qquad
  \|\ad_S(V)\|_0
  \le
  C(d)\frac{|t|^2}{U},
  \qquad
  \|\ad_S^2(T_\Lambda^{(0)})\|_0
  \le
  C(d)\frac{|t|^3}{U^2}.
\]
After increasing the strong-coupling threshold if necessary,
$
  \|\ad_S^2(T_\Lambda^{(0)})\|_0
  \le
  C(d)\frac{|t|^2}{U}.
$

It remains to estimate \(R_{\ge3}(h;\lambda)\).  Since the terms up to
order \(\lambda^2\) have already been extracted, the BCH tail is a
shifted tail beginning at order \(\lambda^3\).  Applying
Lemma~\ref{lem:BCH-summability-kappa} in this shifted form, together with
Lemma~\ref{lem:spec-com-bound}, gives
\[
  \|R_{\ge3}(h;\lambda)\|_0
  \le
  C(d)|\lambda|^3
  \left(
    \|S\|_0^2\|V\|_0
    +
    \|S\|_0^3\|T_\Lambda^{(0)}\|_0
  \right).
\]
Using the bounds above and \(|\lambda|\le r_0\le1\), this implies
$
  \|R_{\ge3}(h;\lambda)\|_0
  \le
  C(d)|\lambda|^2\frac{|t|^2}{U}.
$
Therefore
\[
  \|R_1(h;\lambda)\|_0
  \le
  C(d)|\lambda|^2\frac{|t|^2}{U}
\]
uniformly for \(|h|\le h_0\) and \(|\lambda|\le r_0\).

Combining the leading off-diagonal term with this remainder, we obtain
\[
  \|(B_1(h;\lambda))^{\rm off}\|_0
  \le
  C(d)|\lambda|\frac{|t|^2}{U}.
\]
Consequently, for \(n\ge1\),
\[
  \|\delta_n(h;\lambda)\|_0
  \le
  C(d)q^{2(n-1)}
  |\lambda|^2
  \frac{|t|^4}{U^3}.
\]

We now use the analytic dependence on the auxiliary parameter.  Although
\(\lambda\) is real in the original deformation, we temporarily regard
\(\lambda\) as a complex parameter in order to extract Taylor
coefficients.  The operations entering the LS/SW construction in this
argument, namely the \(D\)-grading, the homological map \(\mathcal I_h\),
commutators, and the BCH series, are complex-linear or multilinear
operations on the complex Banach space of interactions.  Moreover, the
convergence estimates used above are norm estimates depending only on
\(|\lambda|\).

Therefore, after decreasing \(r_0>0\) if necessary, the interactions
\(\delta_n(h;\lambda)\) extend to Banach-valued holomorphic functions of
\(\lambda\) on \(|\lambda|<r_0\), with the same bounds uniformly on
\(|\lambda|\le r_0\).  In this coefficient estimate we do not use
self-adjointness of \(H_\Lambda(h;\lambda)\) or unitarity of the
conjugations.

By Cauchy's estimate for Banach-valued holomorphic functions, applied on
the circle \(|\lambda|=r_0\), we obtain
\[
  \|\delta_n^{[2]}(h)\|_0
  \le
  r_0^{-2}
  \sup_{|\lambda|=r_0}
  \|\delta_n(h;\lambda)\|_0
  \le
  C(d)q^{2(n-1)}
  \frac{|t|^4}{U^3}.
\]
Summing over \(n\ge1\), we get
\[
  \|\mathcal E^{(2)}_\Lambda(h)\|_0
  \le
  C(d)
  \frac{|t|^4}{U^3}
  \sum_{n\ge1}q^{2(n-1)}
  \le
  C_{E,0}(d)\frac{|t|^4}{U^3}.
\]
This proves the desired estimate for
\(\mathcal E^{(2)}_\Lambda(h)\), uniformly in \(\Lambda\) and
\(|h|\le h_0\).

\medskip
\noindent
\textbf{Soft \(0\)-norm bound for \(\mathcal R^{(2)}_\Lambda(h)\).}
By Lemma~\ref{lem:D-grading-contraction}, the \(D_\Lambda\)-diagonal
extraction is contractive in \(\|\cdot\|_0\).  Hence
\[
  \|\mathcal R^{(2)}_\Lambda(h)\|_0
  \le
  \frac12
  \left\|
    \ad_{\mathcal I_h(V)}
    \left(
      \ad_{\mathcal I_h(V)}(T_\Lambda^{(0)})
    \right)
  \right\|_0 .
\]
Applying the interaction commutator bound
Lemma~\ref{lem:spec-com-bound} twice, with \(\kappa=0\), gives
\[
  \|\mathcal R^{(2)}_\Lambda(h)\|_0
  \le
  C(d)
  \|\mathcal I_h(V)\|_0^2
  \|T_\Lambda^{(0)}\|_0 .
\]
By Lemma~\ref{lem:Ih-interaction-bound}, together with the large-\(U\)
convention and the hopping bounds, we have uniformly in \(\Lambda\) and
\(|h|\le h_0\),
$
  \|\mathcal I_h(V)\|_0
  \le
  C(d)\frac{|t|}{U}
 $
 and 
 $
  \|T_\Lambda^{(0)}\|_0
  \le
  C(d)|t|.
$
Therefore
$
  \|\mathcal R^{(2)}_\Lambda(h)\|_0
  \le
  C_{R,0}(d)\frac{|t|^3}{U^2},
$
after increasing \(U_\ast\) if necessary.  This bound is uniform in
\(\Lambda\) and \(|h|\le h_0\).

Combining the decomposition above with the estimates for
\(\mathcal R^{(2)}_\Lambda(h)\) and
\(\mathcal E^{(2)}_\Lambda(h)\) completes the proof.
\qed

\section{Spin representation and two-site computation}
\label{app:spin-two-site}

\subsection{Cross-bond terms vanish on the \(P\)-block}
\label{app:cross-bond-vanish}

\paragraph{Bond hopping terms.}
For a nearest-neighbour bond \(e=\langle x,y\rangle\in\mathscr B_\Lambda\),
we set
\[
  T_e
  :=
  -t
  \sum_{\sigma\in\{\uparrow,\downarrow\}}
  \left(
    c_{x\sigma}^{*}c_{y\sigma}
    +
    c_{y\sigma}^{*}c_{x\sigma}
  \right).
\]
Thus
$
  T_\Lambda=\sum_{e\in\mathscr B_\Lambda}T_e.
$
We use the \(D_\Lambda\)-grading convention from
Subsection~\ref{subsec:grading-norms}, and write
$
  (T_e)^{\rm off}
  :=
  \sum_{k\neq0}(T_e)^{(k)}.
$

\begin{lem}[Cross-bond terms vanish after diagonal projection and \(P\)-compression]
\label{lem:cross-bond-vanish}
Let
$
  P:=\mathbbm 1_{\{D_\Lambda=0\}}\restriction_{\cH_\Lambda^{\rm hf}},
$
and let \(e,e'\in\mathscr B_\Lambda\) be distinct nearest-neighbour
bonds.  Then, for every \(|h|\le h_0\),
\begin{equation}\label{eq:cross-bond-vanish}
  P
  \left(
  \ad_{
    \mathcal I_h((T_{e})^{\rm off})}
    \left(
    (T_{e'})^{\rm off}
    \right)
  \right)^{\rm diag}
  P
  =
  0.
\end{equation}
\end{lem}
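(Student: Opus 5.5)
The plan is to reduce \eqref{eq:cross-bond-vanish} to a statement about matrix elements in the occupation-number basis, and then to use the fact that a single hop along \(e\) cannot remove a defect pair that sits on a different bond \(e'\).

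\textbf{Step 1: remove the diagonal projection.} Since \(P=P_0\) and \(X^{\rm diag}=\sum_m P_mXP_m\), we have \(PX^{\rm diag}P=P_0XP_0=PXP\) for every operator \(X\). Write \(S_e:=\mathcal I_h((T_e)^{\rm off})\) and \(V_{e'}:=(T_{e'})^{\rm off}\). Then it suffices to show
\[
  PS_eV_{e'}P=0
  \qquad\text{and}\qquad
  PV_{e'}S_eP=0 .
\]

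\textbf{Step 2: matrix-element support of \(S_e\) and \(V_{e'}\).} Work in the orthonormal basis of \(\cH_\Lambda^{\rm hf}\) given by occupation configurations \(|a\rangle\). In this basis \(D_\Lambda\), every \(P_m\), \(M_\Lambda\) and every \(q_x\) are diagonal. Hence \((T_e)^{(k)}=\sum_m P_{m+k}T_eP_m\) has a nonzero element \(\langle a|\cdot|b\rangle\) only if \(\langle a|T_e|b\rangle\neq0\), that is, only if \(a\) arises from \(b\) by moving one electron between the endpoints of \(e\), with all other sites unchanged.

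Next, \(\ad_{M_\Lambda}^{\,n}\) multiplies matrix elements: \(\langle a|\ad_{M_\Lambda}^{\,n}(A)|b\rangle=(M_a-M_b)^n\langle a|A|b\rangle\). By the Neumann series \eqref{eq:def-Ih-Neumann} and Lemma~\ref{lem:Ih-interaction-bound}(a), \(S_e\) therefore has the same matrix-element support as \((T_e)^{\rm off}\). The same holds for \(V_{e'}\) with \(e'\) in place of \(e\). In particular, both operators change occupations only at the two endpoints of their bond.

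\textbf{Step 3: track the defect set.} Let \(b\) be singly occupied, so \(|b\rangle\in\operatorname{ran}P\). If \(\langle c|V_{e'}|b\rangle\neq0\), then \(c\) is obtained by one hop along \(e'=\langle x',y'\rangle\). Such a hop creates an empty site and a doubly occupied site, so the defect set of \(c\) is exactly \(\{x',y'\}\).

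Now suppose \(\langle a|S_e|c\rangle\neq0\). Then \(a\) differs from \(c\) only at the endpoints of \(e=\langle x,y\rangle\). Because \(e\neq e'\), at least one of \(x',y'\) lies outside \(\{x,y\}\). That site keeps occupation \(0\) or \(2\) in \(a\), so \(q\) equals \(1\) there and \(Pa=0\). This gives \(PS_eV_{e'}P=0\).

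The term \(PV_{e'}S_eP\) vanishes by the identical argument with the roles of \(e\) and \(e'\) exchanged: \(S_eP\) produces a defect pair on \(\{x,y\}\), and \(V_{e'}\) cannot clear both of those sites.

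The only point needing care is Step 2, namely that the field dressing in \(\mathcal I_h\) does not enlarge the set of configurations connected by \(S_e\). The explicit Neumann series, together with the diagonality of \(M_\Lambda\) in the occupation basis, handles this directly.
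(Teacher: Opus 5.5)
Your proof is correct and follows essentially the same route as the paper: you reduce via \(PX^{\rm diag}P=PXP\) to the two products \(PS_eV_{e'}P\) and \(PV_{e'}S_eP\), then use that a hole--doublon pair created on one bond cannot be fully removed by a hop on a different bond. Your Step~2 makes explicit a point the paper states only briefly: in the occupation basis, the Neumann series for \(\mathcal I_h\) multiplies each nonzero matrix element of \((T_e)^{(k)}\) by \((kU-h(M_a-M_b))^{-1}\), so it does not enlarge the set of connected configurations.
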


\begin{proof}
Since \(P=P_0\) is the \(D_\Lambda\)-spectral projection at eigenvalue
\(0\), one has
$
  PX^{\rm diag}P=PXP
$
for every operator \(X\).  Hence it is enough to prove
\begin{equation}\label{eq:cross-bond-vanish-reduced}
  P
   \left(
  \ad_{
    \mathcal I_h((T_{e})^{\rm off})}
    \left(
    (T_{e'})^{\rm off}
    \right)
  \right)
  P
  =
  0.
\end{equation}

Write
$
  e=\{x,y\}
$
and
$
  e'=\{u,v\}.
$
By Lemma~\ref{lem:Ih-interaction-bound}, the operator
\(\mathcal I_h((T_e)^{\rm off})\) is supported on the bond \(e\), and it is
\(D_\Lambda\)-off-diagonal.  In particular,
$
  P\mathcal I_h((T_e)^{\rm off})P=0.
$

We first prove
\begin{equation}\label{eq:cross-term-1}
  P (T_{e'})^{\rm off}\mathcal I_h((T_e)^{\rm off})P=0.
\end{equation}
Let \(\psi\in\operatorname{ran}(P)\).  Since \(\psi\) is in the singly
occupied sector, an off-diagonal hop on the bond \(e\) creates a
hole--doublon pair on the two endpoints of \(e\).  The map
\(\mathcal I_h\) only changes the coefficients of the corresponding
\(D_\Lambda\)-graded components and preserves the support.  Hence every
occupation-number basis component of
$
  \mathcal I_h((T_e)^{\rm off})\psi
$
has charge defects at the two endpoints of \(e\).

Because \(e\neq e'\), there exists an endpoint \(w\in e\) which is not an
endpoint of \(e'\).  The operator \((T_{e'})^{\rm off}\) acts only on the
endpoints of \(e'\), and therefore it cannot change the occupation at
\(w\).  Thus the charge defect at \(w\) remains after applying
\((T_{e'})^{\rm off}\).  Consequently
$
 ( T_{e'})^{\rm off}\mathcal I_h((T_e)^{\rm off})\psi
  \perp
  \operatorname{ran}(P),
$
which proves \eqref{eq:cross-term-1}.

The same argument with the two bonds interchanged gives
\begin{equation}\label{eq:cross-term-2}
  P\mathcal I_h((T_e)^{\rm off})(T_{e'})^{\rm off}P=0.
\end{equation}
Indeed, starting from \(\operatorname{ran}(P)\), the operator
\((T_{e'})^{\rm off}\) creates a hole--doublon pair on \(e'\), and since
\(e\neq e'\), at least one endpoint of \(e'\) is not touched by
\(\mathcal I_h((T_e)^{\rm off})\).

Subtracting \eqref{eq:cross-term-2} from \eqref{eq:cross-term-1} gives
\eqref{eq:cross-bond-vanish-reduced}, and hence
\eqref{eq:cross-bond-vanish}.
\end{proof}

\subsection{Two-site computation}\label{app:two-site}

\begin{lem}[Two-site computation on a bond]
\label{lem:two-site-computation}
Assume \(|h|\le h_0\) and \(U>h_0\).  Fix a nearest-neighbour bond
\(e=\{x,y\}\in\mathscr B_\Lambda\), and label its endpoints so that
$
  \eta_x=+1
 $ 
  and 
$
  \eta_y=-1.
$
On \(P\cH^{\rm hf}_\Lambda\), with
$
  P=\mathbbm 1_{\{D_\Lambda=0\}}\restriction_{\cH_\Lambda^{\rm hf}},
$
define
\[
  X_{xy}(h)
  :=
  P\frac12
  \left(
  \ad_{
    \mathcal I_h((T_{e})^{\rm off})
    }(
    (T_e)^{\rm off}
    )
  \right)^{\rm diag}
  P.
\]
Then \(X_{xy}(h)\) is supported on \(\{x,y\}\).  Under the spin
identification
$
  \mathcal U_\Lambda:P\cH^{\rm hf}_\Lambda\to\cH^{\rm spin}_\Lambda,
$
one has
\begin{equation}\label{eq:two-site-computation-formula}
  \mathcal U_\Lambda X_{xy}(h)\mathcal U_\Lambda^\ast
  =
  -J_{xy}(h)B_{xy}
  +
  b_{xy}(h)M_e^{\rm spin},
\end{equation}
where
$
  M_e^{\rm spin}:=\eta_xS_x^{(3)}+\eta_yS_y^{(3)}
  =
  S_x^{(3)}-S_y^{(3)}
$
on \(\cH_\Lambda^{\rm spin}\), and
\begin{equation}\label{eq:two-site-coeffs}
  J_{xy}(h):=\frac{4t^2U}{U^2-h^2},
  \qquad
  b_{xy}(h):=\frac{2ht^2}{U^2-h^2}.
\end{equation}
In particular,
\[
  J_{xy}(0)=\frac{4t^2}{U},
  \qquad
  b_{xy}(0)=0.
\]
\end{lem}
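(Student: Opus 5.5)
The plan is to compute $X_{xy}(h)$ directly as a second-order, Schrieffer--Wolff-type matrix element in the occupation basis of the two sites $x,y$. Write $V:=(T_e)^{\rm off}$ and $S:=\mathcal I_h(V)$.

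\textbf{Support.} By Lemma~\ref{lem:Ih-interaction-bound}(a), $S$ is supported on $\{x,y\}$, so $\ad_S(V)$ is supported there too. On $\operatorname{ran}P$ the only relevant graded pieces are $V^{(+1)}$, which maps $\operatorname{ran}P$ into the sector $D_\Lambda=1$, and $V^{(-1)}=(V^{(+1)})^\ast$. As in the proof of Lemma~\ref{lem:cross-bond-vanish}, $PX^{\rm diag}P=PXP$. Hence
\[
  X_{xy}(h)=\tfrac12\,P\bigl(S^{(-1)}V^{(+1)}-V^{(-1)}S^{(+1)}\bigr)P .
\]

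\textbf{Computing $S^{(\pm1)}$ explicitly.} The operator $H_{\rm d}=UD_\Lambda-hM_\Lambda$ is diagonal in the occupation-number basis, with eigenvalue $E(\cdot)$. The defining relation $\ad_{H_{\rm d}}(S^{(k)})=V^{(k)}$ from Lemma~\ref{lem:def-Ih-Neumann}(i) then gives the matrix elements
\[
  \langle a|S^{(k)}|b\rangle=\frac{\langle a|V^{(k)}|b\rangle}{E(a)-E(b)} .
\]
Inserting this, for initial and final spin states $i,i'\in\operatorname{ran}P$ and intermediate hole--doublon states $f$, yields
\[
  \langle i'|X_{xy}|i\rangle
  =-\frac12\sum_f \langle i'|V|f\rangle\langle f|V|i\rangle
  \left(\frac1{E(f)-E(i')}+\frac1{E(f)-E(i)}\right),
\]
a symmetrised second-order formula.

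\textbf{Energies and the two-site algebra.} Parallel spins $|\sigma\sigma\rangle$ are annihilated by $V$ by the Pauli principle. For antiparallel spins, the intermediate states (doublon at $x$ or at $y$) have $E(f)=U$ plus the field contribution of the other sites, which cancels in all differences. On $\operatorname{ran}P$ one has $E(i)=-h\,m_e(i)$, where $m_e=\eta_xS^{(3)}_x+\eta_yS^{(3)}_y=S^{(3)}_x-S^{(3)}_y$ takes the value $+1$ on $|\!\uparrow\downarrow\rangle$ and $-1$ on $|\!\downarrow\uparrow\rangle$. So the denominators are $U+h$ and $U-h$ respectively.

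A direct computation of the four matrix elements of $PVP_1VP$ on $\{|\!\uparrow\downarrow\rangle,|\!\downarrow\uparrow\rangle\}$ should reproduce the $h=0$ structure $2t^2(\mathbbm 1-P_{xy})=4t^2B_{xy}$, restricted to antiparallel states and with sign fixed by the anticommutation relations. Weighting by the symmetrised denominators then gives:
\begin{itemize}
\item diagonal entries $-2t^2/(U\pm h)$, according to the initial state;
\item off-diagonal entries $+t^2\bigl(\tfrac1{U+h}+\tfrac1{U-h}\bigr)=2t^2U/(U^2-h^2)$.
\end{itemize}

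\textbf{Matching to the claimed form.} Write
\[
  \frac{1}{U\pm h}=\frac{U}{U^2-h^2}\mp\frac{h}{U^2-h^2}.
\]
The $\frac{U}{U^2-h^2}$ parts assemble to $-\frac{4t^2U}{U^2-h^2}B_{xy}=-J_{xy}(h)B_{xy}$. The $\mp$ parts are diagonal and proportional to $m_e$ on antiparallel states, hence equal to
\[
  \frac{2ht^2}{U^2-h^2}\,M_e^{\rm spin}=b_{xy}(h)M_e^{\rm spin}.
\]
The parallel states contribute zero to both sides, since $B_{xy}$ and $M_e^{\rm spin}$ both vanish on them. The hypothesis $U>h_0$ guarantees that all the denominators $U\pm h$ are positive.

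\textbf{Main obstacle.} I expect the main obstacle to be fermionic sign bookkeeping. One must check, through the ordering convention in $|\boldsymbol\sigma\rangle_{\rm f}$ and $\mathcal U_\Lambda$, that the exchange process $|\!\uparrow\downarrow\rangle\to|\!\downarrow\uparrow\rangle$ comes out with the sign that produces $-P_{xy}$ inside $B_{xy}$, rather than $+P_{xy}$. I would settle this first by explicitly writing out $c^\ast_{x\sigma}c_{y\sigma}$ acting on $c^\ast_{x\uparrow}c^\ast_{y\downarrow}|0\rangle$ with $x$ before $y$ in the ordering. The other sites contribute only a common sign, since $V$ is even.
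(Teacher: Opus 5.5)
Your proposal is correct and follows essentially the paper's route. The paper likewise reduces to the two-site occupation basis, inverts $\ad_{UD_\Lambda-hM_\Lambda}$ by dividing matrix elements by the energy differences $U\pm h$, and reads off the $4\times 4$ matrix against $B_{xy}$ and $M_e^{\rm spin}$. Your symmetrised second-order formula and partial-fraction split are a compact packaging of the paper's entry-by-entry computation. The CAR sign you flag does come out as $\langle\downarrow\uparrow|VP_1V|\uparrow\downarrow\rangle=-2t^2$, which gives the required $-P_{xy}$ structure.
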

\begin{proof}
By Lemma~\ref{lem:Ih-interaction-bound}, the operator
\(\mathcal I_h((T_e)^{\rm off})\) is supported in \(\{x,y\}\).  Hence
\(X_{xy}(h)\) is also supported in \(\{x,y\}\).

Write
$
  H_0(h):=UD_\Lambda-hM_\Lambda.
$
On operators supported in \(\{x,y\}\), the commutator with \(H_0(h)\)
reduces to the commutator with the two-site operator
\[
  H_{0,e}(h):=UD_e-hM_e,
  \qquad
  D_e:=n_{x\uparrow}n_{x\downarrow}
      +n_{y\uparrow}n_{y\downarrow},\qquad
      M_e:=S_x^{(3)}-S_y^{(3)}.
\]
Indeed, the part of \(H_0(h)\) supported outside \(e\) commutes with
\((T_e)^{\rm off}\) and with the local inverse generated from it.  Thus the
calculation may be performed on the two-site \(N=2\) subspace.

\smallskip
\noindent
\emph{Two-site basis and fermionic sign convention.}
We use
\[
  |\sigma\tau\rangle
  :=
  c_{x\sigma}^\ast c_{y\tau}^\ast|0\rangle,
  \qquad
  \sigma,\tau\in\{\uparrow,\downarrow\},
\]
and
\[
  |d_x\rangle
  :=
  c_{x\uparrow}^\ast c_{x\downarrow}^\ast|0\rangle,
  \qquad
  |d_y\rangle
  :=
  c_{y\uparrow}^\ast c_{y\downarrow}^\ast|0\rangle.
\]
Then \(\operatorname{ran}(P)\) is spanned by
$
  |\!\uparrow\uparrow \rangle,\,
  |\!\downarrow\uparrow\rangle,\, 
  |\!\uparrow\downarrow\rangle,\, 
  |\!\downarrow\downarrow\rangle,
$
and the orthogonal complement of \(\operatorname{ran}(P)\) in the
two-site \(N=2\) sector is
$
  \operatorname{span}\{|d_x\rangle,|d_y\rangle\}.
$
The signs below are with respect to this convention.

On \(\operatorname{ran}(P)\), one has \(D_e=0\), hence
\(H_{0,e}(h)=-hM_e\).  On
\(\operatorname{span}\{|d_x\rangle,|d_y\rangle\}\), one has \(D_e=1\)
and \(M_e=0\), hence \(H_{0,e}(h)=U\).

Moreover, \((T_e)^{\rm off}\) maps \(\operatorname{ran}(P)\) into
\(\operatorname{span}\{|d_x\rangle,|d_y\rangle\}\) and annihilates
\(|\!\uparrow\uparrow\rangle\) and \(|\!\downarrow\downarrow\rangle\).  A
direct CAR computation gives
\begin{equation}\label{eq:two-site-hop-action}
  (T_e)^{\rm off}|\!\uparrow\downarrow\rangle
  =
  -t(|d_x\rangle+|d_y\rangle),
  \qquad
  (T_e)^{\rm off}|\!\downarrow\uparrow\rangle
  =
  t(|d_x\rangle+|d_y\rangle),
\end{equation}
and
$
  (T_e)^{\rm off}|\!\uparrow\uparrow\rangle
  =
  (T_e)^{\rm off}|\!\downarrow\downarrow\rangle
  =
  0.
$
The reverse actions are
\begin{equation}\label{eq:two-site-hop-action-reverse}
  (T_e)^{\rm off}|d_x\rangle
  =
  t\bigl(|\!\downarrow\uparrow\rangle-|\!\uparrow\downarrow\rangle\bigr),
  \qquad
  (T_e)^{\rm off}|d_y\rangle
  =
  t\bigl(|\!\downarrow\uparrow\rangle-|\!\uparrow\downarrow\rangle\bigr).
\end{equation}

Since \(H_{0,e}(h)\) is diagonal in the above basis, the defining
relation
$
  \ad_{H_{0,e}(h)}(\mathcal I_h(A))=A
$
gives, for eigenvectors \(|\phi\rangle,|\psi\rangle\) with eigenvalues
\(E_\phi,E_\psi\),
\begin{equation}\label{eq:energy-difference-division}
  \langle\psi|\mathcal I_h(A)|\phi\rangle
  =
  \frac{\langle\psi|A|\phi\rangle}{E_\psi-E_\phi},
  \qquad E_\psi\neq E_\phi.
\end{equation}
Applying this to \(A=(T_e)^{\rm off}\), we use
$
  M_e|\!\uparrow\downarrow\rangle=|\!\uparrow\downarrow\rangle
$
and 
$
  M_e|\!\downarrow\uparrow\rangle=-|\!\downarrow\uparrow\rangle.
$
Thus
\[
  H_{0,e}(h)|\!\uparrow\downarrow\rangle
  =
  -h|\!\uparrow\downarrow\rangle,
  \qquad
  H_{0,e}(h)|\!\downarrow\uparrow\rangle
  =
  h|\!\downarrow\uparrow\rangle,
\]
whereas
\[
  H_{0,e}(h)|d_x\rangle=U|d_x\rangle,
  \qquad
  H_{0,e}(h)|d_y\rangle=U|d_y\rangle.
\]
Hence the relevant energy differences are \(U+h\) and \(U-h\), and
\eqref{eq:two-site-hop-action}--\eqref{eq:energy-difference-division}
give
\begin{equation}\label{eq:two-site-Ih-action}
  \mathcal I_h((T_e)^{\rm off})|\!\uparrow\downarrow\rangle
  =
  -\frac{t}{U+h}(|d_x\rangle+|d_y\rangle),
  \qquad
  \mathcal I_h((T_e)^{\rm off})|\!\downarrow\uparrow\rangle
  =
  \frac{t}{U-h}(|d_x\rangle+|d_y\rangle).
\end{equation}
Similarly, using \eqref{eq:two-site-hop-action-reverse},
\begin{equation}\label{eq:two-site-Ih-action-reverse}
  \mathcal I_h((T_e)^{\rm off})|d_x\rangle
  =
  \frac{t}{U+h}|\!\uparrow\downarrow\rangle
  -
  \frac{t}{U-h}|\!\downarrow\uparrow\rangle,
\end{equation}
and the same formula holds with \(|d_x\rangle\) replaced by
\(|d_y\rangle\).  Also,
$
  \mathcal I_h((T_e)^{\rm off})|\!\uparrow\uparrow\rangle
  =
  \mathcal I_h((T_e)^{\rm off})|\!\downarrow\downarrow\rangle
  =
  0,
$
because \((T_e)^{\rm off}\) has no off-diagonal matrix elements from these
two states.

Using \eqref{eq:two-site-hop-action}--\eqref{eq:two-site-Ih-action} and
\(PX^{\rm diag}P=PXP\), we compute the restriction of
\[
  X_{xy}(h)
  =
  P\frac12\ad_{\mathcal I_h((T_e)^{\rm off})}((T_e)^{\rm off})P
\]
to \(\operatorname{ran}(P)\).  In the ordered basis
$
  (|\!\uparrow\uparrow\rangle,
   |\!\downarrow\uparrow\rangle,
   |\!\uparrow\downarrow\rangle,
   |\!\downarrow\downarrow\rangle),
$
this yields
\[
  \mathcal U_\Lambda X_{xy}(h)\mathcal U_\Lambda^\ast
  =
  \begin{pmatrix}
    0 & 0 & 0 & 0\\
    0 & -\frac{2t^2}{U-h} &
        \frac{2Ut^2}{(U-h)(U+h)} & 0\\
    0 & \frac{2Ut^2}{(U-h)(U+h)} &
        -\frac{2t^2}{U+h} & 0\\
    0 & 0 & 0 & 0
  \end{pmatrix}.
\]
On the other hand, in the same basis,
\[
  B_{xy}
  =
  \begin{pmatrix}
    0 & 0 & 0 & 0\\
    0 & \frac12 & -\frac12 & 0\\
    0 & -\frac12 & \frac12 & 0\\
    0 & 0 & 0 & 0
  \end{pmatrix},
\qquad
  M_e^{\rm spin}
  =
  \begin{pmatrix}
    0 & 0 & 0 & 0\\
    0 & -1 & 0 & 0\\
    0 & 0 & 1 & 0\\
    0 & 0 & 0 & 0
  \end{pmatrix}.
\]
Comparing the off-diagonal entries gives
$
  J_{xy}(h)
  =
  \frac{4t^2U}{U^2-h^2}.
$
Comparing, for instance, the \((2,2)\)-entries gives
$
  -\frac{J_{xy}(h)}{2}-b_{xy}(h)
  =
  -\frac{2t^2}{U-h},
$
and hence
$
  b_{xy}(h)
  =
  \frac{2ht^2}{U^2-h^2}.
$
Finally, since both \(B_{xy}\) and \(M_e^{\rm spin}\) have vanishing
\((1,1)\)- and \((4,4)\)-entries, no multiple of the identity appears.
This proves \eqref{eq:two-site-computation-formula}.
\end{proof}

\subsection{Proof of Proposition~\ref{prop:H2-to-Heis}}
\label{app:proof-H2-to-Heis}

\begin{proof}[Proof of Proposition~\ref{prop:H2-to-Heis}]
Recall from Lemma~\ref{lem:H2-from-USW-Hubbard-real-xi-fixed} that
\[
  H_\Lambda^{(2)}(h)
  =
   \frac12\left(
   \ad_{
   \mathcal I_h\bigl((T_\Lambda)^{\rm off}\bigr)
   }
   \left(
   (T_\Lambda)^{\rm off}
   \right)
   \right)^{\rm diag}
  \qquad (|h|\le h_0).
\]
Note 
$
  (T_\Lambda)^{\rm off}
  =
  \sum_{e\in\mathscr B_\Lambda}(T_e)^{\rm off}.
$
By the support preservation and linearity of \(\mathcal I_h\),
\[
  \mathcal I_h\bigl((T_\Lambda)^{\rm off}\bigr)
  =
  \sum_{e\in\mathscr B_\Lambda}
  \mathcal I_h((T_e)^{\rm off}).
\]
Expanding the commutator gives
\[
 \ad_{
   \mathcal I_h\left((T_\Lambda)^{\rm off}\right)
   }
   \left(
   (T_\Lambda)^{\rm off}
   \right)
   =
     \sum_{e\in\mathscr B_\Lambda}\ad_{
   \mathcal I_h\left((T_e)^{\rm off}\right)
   }
   \left(
   (T_e)^{\rm off}
   \right)
+
  \sum_{\substack{e,e'\in\mathscr B_\Lambda\\ e\neq e'}}
\ad_{
   \mathcal I_h\left((T_e)^{\rm off}\right)
   }
   \left(
   (T_{e\rq{}})^{\rm off}
   \right).
\]
Hence, by Lemma~\ref{lem:cross-bond-vanish}, 
\begin{equation}\label{eq:P-H2-sum-bonds}
  P H_\Lambda^{(2)}(h)P
  =
  \sum_{e\in\mathscr B_\Lambda}
  P
  \frac12
  \left(
  \ad_{
    \mathcal I_h((T_e)^{\rm off})}(
    (T_e)^{\rm off})
    \right)^{\rm diag}
  P .
\end{equation}

For each bond \(e=\{x,y\}\in\mathscr B_\Lambda\), label its endpoints so
that
$
  \eta_x=+1
$
and 
$
  \eta_y=-1.
$
By Lemma~\ref{lem:two-site-computation},
\[
  \mathcal U_\Lambda
  P
  \frac12
  \left(
  \ad_{
    \mathcal I_h((T_e)^{\rm off})}(
    (T_e)^{\rm off})
    \right)^{\rm diag}
  P
  \mathcal U_\Lambda^\ast
  =
  -J(h)B_{xy}
  +
  b(h)\bigl(\eta_x S_x^{(3)}+\eta_y S_y^{(3)}\bigr),
\]
where
$
  J(h)=\frac{4t^2U}{U^2-h^2}
$
and 
$
  b(h)=\frac{2ht^2}{U^2-h^2}.
$
Summing over bonds in \eqref{eq:P-H2-sum-bonds}, with this endpoint
labelling on each bond, we obtain
\[
  \mathcal U_\Lambda
  P H_\Lambda^{(2)}(h)P
  \mathcal U_\Lambda^\ast
  =
  -J(h)
  \sum_{\{x,y\}\in\mathscr B_\Lambda}B_{xy}
  +
  b(h)
  \sum_{\{x,y\}\in\mathscr B_\Lambda}
  \bigl(\eta_x S_x^{(3)}+\eta_y S_y^{(3)}\bigr).
\]
Since each site \(x\) appears as an endpoint of exactly \(2d\) bonds,
$
  \sum_{\{x,y\}\in\mathscr B_\Lambda}
  \bigl(\eta_x S_x^{(3)}+\eta_y S_y^{(3)}\bigr)
  =
  2d M_\Lambda^{\rm spin}.
$
Therefore, using \eqref{eq:def-Heis-J}, we have
$
  \mathcal U_\Lambda
  P H_\Lambda^{(2)}(h)P
  \mathcal U_\Lambda^\ast
  =
  H_\Lambda^{\rm Heis}(J(h))
  +
  2d b(h) M_\Lambda^{\rm spin}.
$

Finally, using
$
  \mathcal U_\Lambda P M_\Lambda P\mathcal U_\Lambda^\ast
  =
  M_\Lambda^{\rm spin},
$
we get
\[
  \mathcal U_\Lambda
  P\bigl(-hM_\Lambda+H_\Lambda^{(2)}(h)\bigr)P
  \mathcal U_\Lambda^\ast
  =
  H_\Lambda^{\rm Heis}(J(h))
  -
  \bigl(h-2d b(h)\bigr)M_\Lambda^{\rm spin}.
\]
Thus
$
  \mathcal U_\Lambda
  P\bigl(-hM_\Lambda+H_\Lambda^{(2)}(h)\bigr)P
  \mathcal U_\Lambda^\ast
  =
  H_\Lambda^{\rm Heis}\bigl(J(h),h_{\rm eff}(h)\bigr),
$
where
$
  h_{\rm eff}(h)
  :=
  h-2d b(h)
  =
  h-\frac{4dht^2}{U^2-h^2}.
$
This proves \eqref{eq:H2-to-Heis-prop}.
\end{proof}

\section{\(P\)-block remainder estimates}
\label{app:P-block-remainder}

This appendix contains the estimates used in
Section~\ref{sec:P-block-Heisenberg}.  The goal is to control, on fixed
positive-field windows, the difference between the effective \(P\)-block
Hamiltonian and the Heisenberg reference Hamiltonian.

\subsection{Parameter mismatch with the Heisenberg reference model}
\label{app:proof-intermediate-reference-parameter-mismatch}

\begin{proof}[Proof of Proposition~\ref{prop:intermediate-reference-parameter-mismatch}]
Recall that
$
  J(h)=\frac{4t^2U}{U^2-h^2}
$
and 
$
  J_0(U)=\frac{4t^2}{U}.
$
Hence
$
  J(h)-J_0(U)
  =
  \frac{4t^2h^2}{U(U^2-h^2)}.
$
If \(U>2h_0\) and \(|h|\le h_0\), then
$
  U^2-h^2\ge \frac34 U^2.
$
Therefore
\begin{equation}\label{eq:J-J0-bound}
  |J(h)-J_0(U)|
  \le
  C(h_0,t)\frac1{U^3}.
\end{equation}
Similarly,
$
  h_{\rm eff}(h)-h
  =
  -\frac{4dht^2}{U^2-h^2},
$
and hence
\begin{equation}\label{eq:heff-h-bound}
  |h_{\rm eff}(h)-h|
  \le
  C(d,h_0,t)\frac1{U^2}.
\end{equation}

By definition,
$
  G_{\Lambda,U}(h)
  =
  \bigl(J(h)-J_0(U)\bigr)
  \sum_{\{x,y\}\in\mathscr B_\Lambda}
  \left(
    \boldsymbol S_x\cdot\boldsymbol S_y-\frac14
  \right)
  -
  \bigl(h_{\rm eff}(h)-h\bigr)M_\Lambda^{\rm spin}.
$
We regard this as an interaction on the spin system.  Since
$
  \left\|
  \boldsymbol S_x\cdot\boldsymbol S_y-\frac14
  \right\|
  \le 1
$
and each site belongs to \(2d\) nearest-neighbour bonds, the bond part
satisfies
$
  \left\|
  \bigl(J(h)-J_0(U)\bigr)
  \sum_{\{x,y\}\in\mathscr B_\Lambda}
  \left(
    \boldsymbol S_x\cdot\boldsymbol S_y-\frac14
  \right)
  \right\|_0
  \le
  2d\,|J(h)-J_0(U)|.
$
Moreover,
$
  \|S_x^{(3)}\|=\frac12,
$
and therefore the field part satisfies
$
  \left\|
  \bigl(h_{\rm eff}(h)-h\bigr)M_\Lambda^{\rm spin}
  \right\|_0
  \le
  \frac12 |h_{\rm eff}(h)-h|.
$
Combining this with \eqref{eq:J-J0-bound} and
\eqref{eq:heff-h-bound}, we obtain
\[
  \sup_{|h|\le h_0}
  \|G_{\Lambda,U}(h)\|_0
  \le
  C(d,h_0,t)\frac1{U^2}.
\]

It remains to prove the \(h\)-derivative estimate.  Differentiating
\(J(h)\), we get
$
  J'(h)
  =
  \frac{8t^2Uh}{(U^2-h^2)^2}.
$
Thus, for \(U>2h_0\),
\begin{equation}\label{eq:J-prime-bound}
  \sup_{|h|\le h_0}|J'(h)|
  \le
  C(h_0,t)\frac1{U^3}.
\end{equation}
Next,
$
  \partial_h(h_{\rm eff}(h)-h)
  =
  -4dt^2
  \frac{U^2+h^2}{(U^2-h^2)^2}.
$
Hence
\begin{equation}\label{eq:heff-minus-h-prime-bound}
  \sup_{|h|\le h_0}
  |\partial_h(h_{\rm eff}(h)-h)|
  \le
  C(d,h_0,t)\frac1{U^2}.
\end{equation}
Applying the same interaction-norm estimates to
\(\partial_hG_{\Lambda,U}(h)\), and using
\eqref{eq:J-prime-bound} and
\eqref{eq:heff-minus-h-prime-bound}, gives
\[
  \sup_{|h|\le h_0}
  \|\partial_hG_{\Lambda,U}(h)\|_0
  \le
  C(d,h_0,t)\frac1{U^2}.
\]
This proves the proposition.
\end{proof}

\subsection{Uniform \(C^1\) bound on the \(P\)-block remainder}
\label{app:proof-P-block-remainder-C1-bound}

We prove Proposition~\ref{prop:P-block-remainder-C1-bound}.  Recall that
\[
  R_{P,\Lambda}(h)
  =
  P\bigl(\Delta_\Lambda(h)-H_\Lambda^{(2)}(h)\bigr)P,
  \qquad 
  \mathcal R_{P,\Lambda}(h)
  =
  \mathcal U_\Lambda
  R_{P,\Lambda}(h)
  \mathcal U_\Lambda^\ast .
\]
Since \(\mathcal U_\Lambda\) is unitary, it suffices to estimate
\(R_{P,\Lambda}(h)\).

\begin{lem}[Decomposition of the \(P\)-block remainder]
\label{lem:P-block-remainder-decomposition}
Recall the second-order comparison terms
\(\mathcal R_\Lambda^{(2)}(h)\) and
\(\mathcal E_\Lambda^{(2)}(h)\) from
Lemma~\ref{lem:bridge-lambda-xi-second-order}.
Define
\[
  \mathcal K_\Lambda^{(\ge3)}(h)
  :=
  \Delta_\Lambda(h;\lambda)\big|_{\lambda=1}
  -
  (\Delta_\Lambda)^{[2]}_\lambda(h).
\]
Then
\[
  R_{P,\Lambda}(h)
  =
  P\mathcal R_\Lambda^{(2)}(h)P
  +
  P\mathcal E_\Lambda^{(2)}(h)P
  +
  P\mathcal K_\Lambda^{(\ge3)}(h)P.
\]
\end{lem}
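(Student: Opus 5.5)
This is a purely algebraic identity, so I will prove it by rewriting \(\Delta_\Lambda(h)-H_\Lambda^{(2)}(h)\) as a telescoping sum and then compressing by \(P\).

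The first step is to identify the endpoint. By Remark~\ref{rem:lambda-xi-endpoint-no-mix}, the physical LS/SW correction is the \(\lambda=1\) endpoint of the auxiliary scheme:
\[
  \Delta_\Lambda(h)=\Delta_\Lambda(h;\lambda)\big|_{\lambda=1}.
\]
Both deformations end at the same Hamiltonian \(UD_\Lambda-hM_\Lambda+T_\Lambda\), and both are processed by the same LS/SW iteration with \(B_0=T_\Lambda\). So this is literally the same interaction \(B_\infty(h)-T_\Lambda^{(0)}\), and I will note that explicitly. The definition of \(\mathcal K^{(\ge3)}_\Lambda(h)\) then gives
\[
  \Delta_\Lambda(h)=(\Delta_\Lambda)^{[2]}_\lambda(h)+\mathcal K^{(\ge3)}_\Lambda(h).
\]

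The second step inserts the second-order bridge. Lemma~\ref{lem:bridge-lambda-xi-second-order} gives
\[
  (\Delta_\Lambda)^{[2]}_\lambda(h)=(\Delta_\Lambda)^{[2]}_\xi(h)+\mathcal R^{(2)}_\Lambda(h)+\mathcal E^{(2)}_\Lambda(h),
\]
and this holds tautologically, since \(\mathcal E^{(2)}_\Lambda\) is defined as the difference. Recalling \(H^{(2)}_\Lambda(h)=(\Delta_\Lambda)^{[2]}_\xi(h)\) from \eqref{eq:def-H2}, subtraction yields the operator identity
\[
  \Delta_\Lambda(h)-H_\Lambda^{(2)}(h)=\mathcal R^{(2)}_\Lambda(h)+\mathcal E^{(2)}_\Lambda(h)+\mathcal K^{(\ge3)}_\Lambda(h).
\]

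The final step compresses both sides by \(P\) and uses linearity together with the definition \eqref{eq:def-P-block-remainder} of \(R_{P,\Lambda}(h)\).

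The only point needing care, which I expect to be the main obstacle, is well-definedness. The \(\lambda\)-Taylor coefficient \((\Delta_\Lambda)^{[2]}_\lambda(h)\) must exist, and the \(\lambda=1\) endpoint must be attainable within the LS/SW convergence regime. Existence of the coefficient follows from the analytic dependence on \(\lambda\) already used in the proof of Lemma~\ref{lem:bridge-lambda-xi-second-order}. The endpoint is covered because the hypotheses of Corollary~\ref{cor:spec-LS-small-tU} hold uniformly for \(\lambda\in[0,1]\): the off-diagonal input \(\lambda(T_\Lambda)^{\rm off}\) only shrinks, and the diagonal input \(T_\Lambda^{(0)}\) is unchanged. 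No power-series convergence at \(\lambda=1\) is needed, since \(\mathcal K^{(\ge3)}_\Lambda\) is defined directly as the difference and not as a series.
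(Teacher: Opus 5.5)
Your proof is correct and follows the paper's argument: identify $\Delta_\Lambda(h)$ with the $\lambda=1$ endpoint, split off $(\Delta_\Lambda)^{[2]}_\lambda(h)$ using the definition of $\mathcal K_\Lambda^{(\ge3)}(h)$, substitute the bridge identity of Lemma~\ref{lem:bridge-lambda-xi-second-order} (which holds by definition of $\mathcal E^{(2)}_\Lambda$), and compress by $P$. Your well-definedness remarks are accurate and make explicit what the paper leaves implicit: both endpoints share the initial datum $B_0=T_\Lambda$, the hypotheses of Corollary~\ref{cor:spec-LS-small-tU} hold uniformly for $\lambda\in[0,1]$, and $\mathcal K_\Lambda^{(\ge3)}$ is defined as a difference rather than as a series at $\lambda=1$.
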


\begin{proof}
At the physical endpoint the two deformations give the same LS/SW
correction:
$
  \Delta_\Lambda(h)
  =
  \Delta_\Lambda(h;\lambda)\big|_{\lambda=1}
  =
  \Delta_\Lambda(h;\xi)\big|_{\xi=1}.
$
By definition,
$
  H_\Lambda^{(2)}(h)
  =
  (\Delta_\Lambda)^{[2]}_\xi(h),
$
and hence
\[
  R_{P,\Lambda}(h)
  =
  P
  \left(
    \Delta_\Lambda(h)
    -
    (\Delta_\Lambda)^{[2]}_\xi(h)
  \right)
  P.
\]
Using the definition of \(\mathcal K_\Lambda^{(\ge3)}(h)\), we write
$
  \Delta_\Lambda(h)
  =
  \Delta_\Lambda(h;\lambda)\big|_{\lambda=1}
  =
  (\Delta_\Lambda)^{[2]}_\lambda(h)
  +
  \mathcal K_\Lambda^{(\ge3)}(h).
$
Here the notation \((\ge3)\) refers to the Taylor expansion in the
auxiliary parameter \(\lambda\): after subtracting the second-order
coefficient, the remaining terms start at order three.

By Lemma~\ref{lem:bridge-lambda-xi-second-order},
$
  (\Delta_\Lambda)^{[2]}_\lambda(h)
  =
  (\Delta_\Lambda)^{[2]}_\xi(h)
  +
  \mathcal R_\Lambda^{(2)}(h)
  +
  \mathcal E_\Lambda^{(2)}(h).
$
Substituting this identity into the previous display gives
\[
  \Delta_\Lambda(h)
  -
  (\Delta_\Lambda)^{[2]}_\xi(h)
  =
  \mathcal R_\Lambda^{(2)}(h)
  +
  \mathcal E_\Lambda^{(2)}(h)
  +
  \mathcal K_\Lambda^{(\ge3)}(h).
\]
Compressing by \(P\) on both sides proves the claimed decomposition.
\end{proof}

\paragraph{Later-step notation in the \(\lambda\)-scheme.}
For the auxiliary \(\lambda\)-deformation, write the LS/SW iterates as
\[
  H^{(n)}(h;\lambda)
  =
  H_{\rm d}(h)+B_n(h;\lambda),
  \qquad
  H_{\rm d}(h):=UD_\Lambda-hM_\Lambda .
\]
We decompose
\[
  B_n(h;\lambda)
  =
  D_n(h;\lambda)+F_n(h;\lambda),
  \qquad
  D_n(h;\lambda):=B_n(h;\lambda)^{\rm diag},
  \qquad
  F_n(h;\lambda):=B_n(h;\lambda)^{\rm off}.
\]
Thus \(D_n(h;\lambda)\) is \(D_\Lambda\)-diagonal and
\(F_n(h;\lambda)\) is \(D_\Lambda\)-off-diagonal.  The \(n\)-th
diagonal increment is
\[
  \delta_n(h;\lambda)
  :=
  D_{n+1}(h;\lambda)-D_n(h;\lambda)
  =
  B_{n+1}(h;\lambda)^{\rm diag}
  -
  B_n(h;\lambda)^{\rm diag}.
\]

\begin{lem}[Later-step bounds in the \(\lambda\)-scheme]
\label{lem:lambda-later-step-bounds}
Assume the hypotheses of Corollary~\ref{cor:spec-LS-small-tU} and the
local \(h\)-derivative bounds for \(\mathcal I_h\) from
Lemmas~\ref{lem:Ih-interaction-bound} and~\ref{lem:dIh-local-bound}.
After increasing the strong-coupling threshold if necessary, there exist
\(q\in(0,1)\) and \(C<\infty\), independent of \(n,\Lambda,h,\lambda\),
such that, for all \(n\ge1\), \(|h|\le h_0\), and \(|\lambda|\le1\),
\begin{align}
  \|D_n(h;\lambda)\|_\kappa
  &\le C, \label{eq:lambda-Dn-bound}\\
  \|F_n(h;\lambda)\|_\kappa
  &\le
  Cq^{n-1}|\lambda|\frac{|t|^2}{U}, \label{eq:lambda-Fn-bound}\\
  \|\partial_hD_n(h;\lambda)\|_\kappa
  &\le C, \label{eq:lambda-dDn-bound}\\
  \|\partial_hF_n(h;\lambda)\|_\kappa
  &\le
  Cq^{n-1}|\lambda|\frac{|t|^2}{U^2}. \label{eq:lambda-dFn-bound}
\end{align}
\end{lem}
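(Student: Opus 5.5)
We argue by induction on \(n\), running the LS/SW iteration for the \(\lambda\)-deformation with initial data \(D_0=T_\Lambda^{(0)}\) and \(F_0=\lambda V\), where \(V=(T_\Lambda)^{\rm off}\).

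\textbf{Base step \(n=1\).} The appendix proof of Lemma~\ref{lem:bridge-lambda-xi-second-order} already gives
\[
  \|F_1(h;\lambda)\|_0\le C|\lambda|\,t^2/U .
\]
The same BCH bookkeeping works verbatim in \(\|\cdot\|_\kappa\), because Lemmas~\ref{lem:spec-com-bound} and~\ref{lem:BCH-summability-kappa} are stated for general \(\kappa\). Hence
\[
  \|F_1\|_\kappa\le C|\lambda|\,t^2/U .
\]
For the diagonal part, \(\|D_1\|_\kappa\le\|T^{(0)}\|_\kappa+C\|V\|_\kappa^2/U\le C\) by \eqref{eq:spec-diag-step-bound}.

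For the \(h\)-derivatives at step one, write \(S_0=\lambda\mathcal I_h(V)\). Lemma~\ref{lem:dIh-local-bound} gives
\[
  \|\partial_hS_0\|_\kappa\le C|\lambda|\,|t|/U^2 .
\]
Now \(F_1\) is a convergent BCH series in which every term contains \(S_0\), with the \(H_0\)-tail rewritten via \(\ad_{S_0}(H_0)=-\lambda V\), where \(V\) is \(h\)-independent. Differentiating termwise, as in the proof of Lemma~\ref{lem:diff-one-step-diag-increment}, every term picks up one factor of \(\partial_hS_0\) in place of \(S_0\). The leading term \(\lambda\ad_{\mathcal I_h(V)}(T^{(0)})\) then yields
\[
  \|\partial_hF_1\|_\kappa\le C|\lambda|\,t^2/U^2 .
\]
The same argument gives \(\|\partial_hD_1\|_\kappa\le C\).

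\textbf{Inductive step for \(F_n\).} The bound \eqref{eq:lambda-Fn-bound} follows from the contraction \eqref{eq:spec-linear-contract} applied from step 1 onward. The bootstrap region is preserved, since \(\|D_n\|_\kappa\le C\le\varepsilon_\ast U\) for large \(U\).

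\textbf{Bound on \(D_n\).} For \eqref{eq:lambda-Dn-bound}, we sum \eqref{eq:spec-diag-increment}:
\[
  \|D_n-D_1\|_\kappa\le C\sum_j q^{2j}\,\lambda^2\,t^4/U^3\le C .
\]

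\textbf{Derivative bounds.} These require a differentiated one-step estimate, and this is the main obstacle. Lemma~\ref{lem:diff-one-step-diag-increment} controls only \(\partial_h\delta\). We need its off-diagonal analogue: with \(S=\mathcal I_h(F)\),
\[
  \|\partial_hF^+\|_\kappa\le C\Bigl(\frac{\|D\|_\kappa+\|F\|_\kappa}{U}\,\|\partial_hF\|_\kappa+\frac{\|F\|_\kappa}{U^2}\bigl(1+\|\partial_hD\|_\kappa\bigr)\Bigr).
\]
We would prove this by copying the proof of Lemma~\ref{lem:diff-one-step-diag-increment}, applied to \(F^+=\ad_S(D)+(\ad_S F)^{\rm off}+R^{\rm off}\). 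The derivative either hits \(S\), costing \(\|\partial_hS\|_\kappa\le C(\|\partial_hF\|_\kappa/U+\|F\|_\kappa/U^2)\), or hits \(D\) or \(F\). With \(\|D_n\|_\kappa\le C\), the coefficient in front of \(\|\partial_hF_n\|_\kappa\) is \(C/U\), which is at most \(q'<1\) for large \(U\). This yields the recursion
\[
  a_{n+1}\le q' a_n+C\,\frac{\|F_n\|_\kappa}{U^2}\bigl(1+\|\partial_hD_n\|_\kappa\bigr),
\]
where \(a_n:=\|\partial_hF_n\|_\kappa\).

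The derivative of the diagonal part is handled in parallel. Lemma~\ref{lem:diff-one-step-diag-increment} gives
\[
  \|\partial_hD_{n+1}-\partial_hD_n\|_\kappa\le C\Bigl(\frac{\|F_n\|_\kappa}{U}\,a_n+\frac{\|F_n\|_\kappa^2}{U^2}\bigl(1+\|\partial_hD_n\|_\kappa\bigr)\Bigr).
\]

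We then run a joint bootstrap on the two hypotheses
\[
  \|\partial_hD_n\|_\kappa\le 2C_1,\qquad a_n\le C_2\,q''^{\,n-1}|\lambda|\,t^2/U^2 ,
\]
with \(q''\in(\max(q,q'),1)\). The diagonal increments are then summable, of order \(\lambda^2 t^4/U^4\), so \(\|\partial_hD_n\|_\kappa\) stays below \(2C_1\). The forcing term in the \(a_n\) recursion is \(Cq^{n-1}|\lambda|\,t^2/U^3\), which is geometrically decaying and small, so the discrete Gronwall recursion closes the bound on \(a_n\). Finally, we rename \(q:=q''\) to match the statement.
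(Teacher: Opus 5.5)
Your proposal follows essentially the same route as the paper: an explicit first-step BCH expansion with the $H_{\rm d}$-tail rewritten through $V$, the contraction estimate for $F_n$, summed diagonal increments for $D_n$, and a coupled recursion for $(\|\partial_hD_n\|_\kappa,\|\partial_hF_n\|_\kappa)$ closed by a bootstrap, with the factor $nq^{n-1}$ absorbed into a slightly larger ratio. One small correction: in your differentiated off-diagonal one-step estimate, the $\partial_hD$ contribution comes from $\ad_S(\partial_hD)$ and therefore carries the factor $\|F\|_\kappa/U$, not $\|F\|_\kappa/U^2$. Under your hypothesis $\|\partial_hD_n\|_\kappa\le 2C_1$, the forcing in the $a_n$ recursion is then of order $q^{n-1}|\lambda|t^2/U^2$, as in the paper, rather than $t^2/U^3$; this is still at the target scale, so your bootstrap closes unchanged.
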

\begin{proof}
We first estimate the first LS/SW step.  Set
$
  V:=(T_\Lambda)^{\rm off}
  $
  and 
  $
  S:=\mathcal I_h(V).
$
In the auxiliary \(\lambda\)-scheme,
$
  B_0(h;\lambda)=T_\Lambda^{(0)}+\lambda V
  $
  and 
  $
  S_0(h;\lambda)=\lambda S.
$
Moreover,
$
  \ad_S(H_{\rm d}(h))=-V
  $
  with 
  $
  H_{\rm d}(h):=UD_\Lambda-hM_\Lambda.
$
We expand the first conjugation:
\[
\begin{aligned}
  e^{\lambda S}
  \bigl(
    H_{\rm d}(h)+T_\Lambda^{(0)}+\lambda V
  \bigr)
  e^{-\lambda S}
  &=
  H_{\rm d}(h)+T_\Lambda^{(0)}+\lambda V        
  +
  \sum_{r\ge1}\frac{\lambda^r}{r!}
  \ad_S^{\,r}(H_{\rm d}(h))                     \\
  &\quad+
  \sum_{r\ge1}\frac{\lambda^r}{r!}
  \ad_S^{\,r}(T_\Lambda^{(0)})                  
  +
  \sum_{r\ge1}\frac{\lambda^{r+1}}{r!}
  \ad_S^{\,r}(V).
\end{aligned}
\]
Since
$
  \ad_S^{\,r}(H_{\rm d}(h))
  =
  -\ad_S^{\,r-1}(V)\, (r\ge1),
$
the term \(\lambda V\) is cancelled by the \(r=1\) contribution from
\(H_{\rm d}(h)\).  Thus the remaining part after the first step is
\[
\begin{aligned}
  B_1(h;\lambda)
  &=
  T_\Lambda^{(0)}
  +
  \sum_{r\ge1}\frac{\lambda^r}{r!}
  \ad_S^{\,r}(T_\Lambda^{(0)})                  
  +
  \sum_{m\ge1}
  \frac{m}{(m+1)!}\lambda^{m+1}
  \ad_S^{\,m}(V).
\end{aligned}
\]
Here the first off-diagonal contribution is
$
  \lambda\,\ad_S(T_\Lambda^{(0)}),
$
because \(T_\Lambda^{(0)}\) is \(D_\Lambda\)-diagonal and \(S\) is
\(D_\Lambda\)-off-diagonal.  The remaining terms are BCH tails containing
at least two powers of \(\lambda\), or at least two occurrences of
\(S\).  More precisely, using Lemmas~\ref{lem:spec-com-bound} and
\ref{lem:BCH-summability-kappa}, together with
Lemma~\ref{lem:Ih-interaction-bound}, we have
$
  \|S\|_\kappa\le C\frac{|t|}{U}
  $
  and 
  $
  \|V\|_\kappa+\|T_\Lambda^{(0)}\|_\kappa\le C|t|.
$
Therefore, for \(|\lambda|\le1\),
\[
  \|F_1(h;\lambda)\|_\kappa
  =
  \|B_1(h;\lambda)^{\rm off}\|_\kappa
  \le
  C|\lambda|\,\|S\|_\kappa\|T_\Lambda^{(0)}\|_\kappa
  +
  C|\lambda|\,\|S\|_\kappa\|V\|_\kappa
  \le
  C|\lambda|\frac{|t|^2}{U}.
\]
In the second term we used \(|\lambda|^{m+1}\le |\lambda|\) for
\(|\lambda|\le1\), and absorbed the remaining summable BCH series into
the constant.

The \(h\)-derivative of \(F_1\) is estimated in the same way as in the
proof of Lemma~\ref{lem:diff-one-step-diag-increment}.  The operators
\(V\) and \(T_\Lambda^{(0)}\) are independent of \(h\), and the
\(H_{\rm d}\)-tail has already been rewritten in terms of \(V\).  Thus
the derivative only hits \(S=\mathcal I_h(V)\).  By
Lemma~\ref{lem:dIh-local-bound},
$
  \|\partial_hS\|_\kappa
  \le
  C\frac{|t|}{U^2}.
$
Differentiating the displayed BCH expression for \(B_1(h;\lambda)\), and
using the same BCH summability bounds as in
Lemma~\ref{lem:diff-one-step-diag-increment}, gives
\[
  \|\partial_hF_1(h;\lambda)\|_\kappa
  \le
  C|\lambda|\,\|\partial_hS\|_\kappa
  \left(
    \|T_\Lambda^{(0)}\|_\kappa+\|V\|_\kappa
  \right)
  \le
  C|\lambda|\frac{|t|^2}{U^2}.
\]
The diagonal part \(D_1(h;\lambda)=B_1(h;\lambda)^{\rm diag}\) is
bounded uniformly because \(T_\Lambda^{(0)}\) is bounded in
\(\|\cdot\|_\kappa\), and the diagonal increment of the first step is
bounded by the one-step diagonal estimate.  The differentiated bound
follows from Lemma~\ref{lem:diff-one-step-diag-increment}.  Hence
$
  \|D_1(h;\lambda)\|_\kappa\le C,
  $
  and 
  $
  \|\partial_hD_1(h;\lambda)\|_\kappa\le C.
$

We now propagate the estimates.  The off-diagonal contraction estimate
\eqref{eq:spec-linear-contract}, applied from step \(1\) onward, gives
\[
  \|F_n(h;\lambda)\|_\kappa
  \le
  q^{n-1}\|F_1(h;\lambda)\|_\kappa
  \le
  Cq^{n-1}|\lambda|\frac{|t|^2}{U},
  \qquad n\ge1.
\]
This proves \eqref{eq:lambda-Fn-bound}.

For the derivative of the off-diagonal part, we use the same rewritten
one-step expansion as in the first step.  Namely, at the \(n\)-th step
we have
\[
  H^{(n)}(h;\lambda)
  =
  H_{\rm d}(h)+D_n(h;\lambda)+F_n(h;\lambda),
  \qquad
  S_n(h;\lambda)=\mathcal I_h(F_n(h;\lambda)),
\]
and hence
$
  \ad_{S_n(h;\lambda)}(H_{\rm d}(h))
  =
  -F_n(h;\lambda).
$
Expanding the conjugation and rewriting the \(H_{\rm d}\)-tail as in the
first step gives
\[
  B_{n+1}(h;\lambda)
  =
  D_n(h;\lambda)
  +
  \sum_{r\ge1}\frac1{r!}
  \ad_{S_n}^{\,r}\bigl(D_n(h;\lambda)\bigr)
  +
  \sum_{m\ge1}
  \frac{m}{(m+1)!}
  \ad_{S_n}^{\,m}\bigl(F_n(h;\lambda)\bigr),
\]
where, for readability, \(S_n=S_n(h;\lambda)\).  Therefore
\[
  F_{n+1}(h;\lambda)
  =
  \left[
  \sum_{r\ge1}\frac1{r!}
  \ad_{S_n}^{\,r}\bigl(D_n(h;\lambda)\bigr)
  +
  \sum_{m\ge1}
  \frac{m}{(m+1)!}
  \ad_{S_n}^{\,m}\bigl(F_n(h;\lambda)\bigr)
  \right]^{\rm off}.
\]

We differentiate this expression.  Since
$
  \partial_hS_n
  =
  \mathcal I_h(\partial_hF_n)
  +
  (\partial_h\mathcal I_h)(F_n),
$
Lemmas~\ref{lem:Ih-interaction-bound} and
\ref{lem:dIh-local-bound} give
$
  \|\partial_hS_n\|_\kappa
  \le
  C\left(
    \frac{\|\partial_hF_n\|_\kappa}{U}
    +
    \frac{\|F_n\|_\kappa}{U^2}
  \right).
$
The differentiated terms containing
\(\mathcal I_h(\partial_hF_n)\) give the differentiated contraction
term, and, after increasing the strong-coupling threshold if necessary,
are bounded by
$
  q\|\partial_hF_n(h;\lambda)\|_\kappa .
$
All remaining differentiated terms contain at least one factor
\(S_n\) or \(F_n\), together with either
\((\partial_h\mathcal I_h)(F_n)\) or \(\partial_hD_n\).  Using
Lemma~\ref{lem:spec-com-bound}, Lemma~\ref{lem:BCH-summability-kappa},
and the already obtained bound
$
  \|F_n(h;\lambda)\|_\kappa
  \le
  Cq^{n-1}|\lambda|\frac{|t|^2}{U},
$
these terms are bounded by
$
  Cq^{n-1}|\lambda|\frac{|t|^2}{U^2}
  \bigl(1+\|\partial_hD_n(h;\lambda)\|_\kappa\bigr).
$
Consequently,
\[
  \|\partial_hF_{n+1}(h;\lambda)\|_\kappa
  \le
  q\|\partial_hF_n(h;\lambda)\|_\kappa
  +
  Cq^{n-1}|\lambda|\frac{|t|^2}{U^2}
  \bigl(1+\|\partial_hD_n(h;\lambda)\|_\kappa\bigr).
\]

We next close the estimate for \(\partial_hD_n\) simultaneously.  Since
$
  D_{n+1}(h;\lambda)=D_n(h;\lambda)+\delta_n(h;\lambda),
$
Lemma~\ref{lem:diff-one-step-diag-increment} gives
\begin{align}
  \|\partial_h\delta_n(h;\lambda)\|_\kappa
  \le
  C\left(
    \frac{\|F_n(h;\lambda)\|_\kappa}{U}
    \|\partial_hF_n(h;\lambda)\|_\kappa 
    +
    \frac{\|F_n(h;\lambda)\|_\kappa^2}{U^2}
    \bigl(1+\|\partial_hD_n(h;\lambda)\|_\kappa\bigr)
  \right).
\end{align}
Put
$
  x_n:=\|\partial_hD_n(h;\lambda)\|_\kappa
  $
  and 
  $
  y_n:=\|\partial_hF_n(h;\lambda)\|_\kappa.
$
Using the bound for \(F_n\), the two preceding estimates imply
\begin{align}
  y_{n+1}
  &\le
  qy_n
  +
  Cq^{n-1}|\lambda|\frac{|t|^2}{U^2}(1+x_n),
  \label{eq:lambda-later-proof-y-rec}\\
  x_{n+1}
  &\le
  x_n
  +
  Cq^{n-1}|\lambda|\frac{|t|^2}{U^2}y_n
  +
  Cq^{2(n-1)}|\lambda|^2\frac{|t|^4}{U^4}(1+x_n).
  \label{eq:lambda-later-proof-x-rec}
\end{align}
We claim that \(x_n\) is uniformly bounded and
$
  y_n\le Cnq^{n-1}|\lambda|\frac{|t|^2}{U^2}.
$
Indeed, assume first that \(x_j\le K\) for \(1\le j\le n\).  Then
\eqref{eq:lambda-later-proof-y-rec}, together with the first-step bound
on \(y_1\), gives
$
  y_j\le C_K j q^{j-1}|\lambda|\frac{|t|^2}{U^2}$
  for  $1\le j\le n+1$.

Substituting this into \eqref{eq:lambda-later-proof-x-rec}, we obtain
\[
  x_{n+1}
  \le
  x_1
  +
  C_K
  \sum_{j=1}^{n}
  j q^{2(j-1)}
  |\lambda|^2\frac{|t|^4}{U^4}
  +
  C
  \sum_{j=1}^{n}
  q^{2(j-1)}
  |\lambda|^2\frac{|t|^4}{U^4}
  (1+x_j).
\]
The sums are bounded uniformly in \(n\), and their prefactor is
\(O(|t|^4/U^4)\).  After increasing the strong-coupling threshold, this
contribution is smaller than, say, \(1\).  Choosing \(K\) larger than
\(x_1+1\), the bootstrap closes and gives
$
  \sup_{n\ge1}x_n\le C.
$
Returning to \eqref{eq:lambda-later-proof-y-rec}, we then get
$
  y_n
  \le
  C n q^{n-1}|\lambda|\frac{|t|^2}{U^2}.
$
Finally, choose \(q_1\in(q,1)\).  Since
$
  n q^{n-1}\le C(q,q_1)q_1^{n-1},
$
we may enlarge the constant and rename \(q_1\) as \(q\).  Thus
\[
  \|\partial_hF_n(h;\lambda)\|_\kappa
  \le
  Cq^{n-1}|\lambda|\frac{|t|^2}{U^2},\qquad
  \|\partial_hD_n(h;\lambda)\|_\kappa\le C.
\]
This proves \eqref{eq:lambda-dFn-bound} and
\eqref{eq:lambda-dDn-bound}.

It remains only to record the non-differentiated diagonal bound.  From
$
  D_{n+1}=D_n+\delta_n
$
and the diagonal increment estimate
$
  \|\delta_n(h;\lambda)\|_\kappa
  \le
  C\frac{\|F_n(h;\lambda)\|_\kappa^2}{U},
$
we get
$
  \|\delta_n(h;\lambda)\|_\kappa
  \le
  Cq^{2(n-1)}|\lambda|^2\frac{|t|^4}{U^3}.
$
This is summable in \(n\), and \(D_1\) is already uniformly bounded.
Therefore
$
  \|D_n(h;\lambda)\|_\kappa\le C
$ for $n\ge1$.
The proof is complete.
\end{proof}

\begin{lem}[Geometric bounds for diagonal increments]
\label{lem:geometric-differentiated-diagonal-increment}
In the setting and notation of
Lemma~\ref{lem:lambda-later-step-bounds}, one has, uniformly in
\(n\ge1\), \(\Lambda\), \(|h| \le h_0\), and \(|\lambda|\le1\),
\begin{equation}\label{eq:geom-delta-bound}
  \|\delta_n(h;\lambda)\|_\kappa
  \le
  Cq^{2(n-1)}|\lambda|^2\frac{|t|^4}{U^3},
\end{equation}
and
\begin{equation}\label{eq:geom-d-delta-bound}
  \|\partial_h\delta_n(h;\lambda)\|_\kappa
  \le
  Cq^{2(n-1)}|\lambda|^2\frac{|t|^4}{U^4}.
\end{equation}
In particular, for \(U\ge1\),
\begin{equation}\label{eq:geom-d-delta-bound-weaker}
  \|\partial_h\delta_n(h;\lambda)\|_\kappa
  \le
  Cq^{2(n-1)}|\lambda|^2\frac{|t|^4}{U^3}.
\end{equation}
\end{lem}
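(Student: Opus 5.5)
The plan is to substitute the bounds of Lemma~\ref{lem:lambda-later-step-bounds} into the two one-step estimates already available for the diagonal increment. Fix $n\ge1$, $|h|\le h_0$, $|\lambda|\le1$, and abbreviate $a_n:=\|F_n(h;\lambda)\|_\kappa$, $\dot a_n:=\|\partial_hF_n(h;\lambda)\|_\kappa$, $\dot d_n:=\|\partial_hD_n(h;\lambda)\|_\kappa$.

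\emph{Undifferentiated bound.} I will apply the diagonal increment estimate \eqref{eq:spec-diag-increment} of Proposition~\ref{prop:spec-LS-contraction} to the $\lambda$-scheme from step $n$ onward; its bootstrap hypotheses hold by Corollary~\ref{cor:spec-LS-small-tU}. This gives $\|\delta_n\|_\kappa\le C_\ast a_n^2/U$. Inserting \eqref{eq:lambda-Fn-bound}, i.e.\ $a_n\le Cq^{n-1}|\lambda|\,|t|^2/U$, yields $\|\delta_n\|_\kappa\le Cq^{2(n-1)}|\lambda|^2|t|^4/U^3$, which is \eqref{eq:geom-delta-bound}.

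\emph{Differentiated bound.} I will use Lemma~\ref{lem:diff-one-step-diag-increment} with $D=D_n$ and $F=F_n$. Its hypotheses are satisfied: the one-step smallness holds, and $D_n,F_n$ are $C^1$ in $h$ (analyticity of $\mathcal I_h$ from Lemma~\ref{lem:dIh-local-bound} propagates through the BCH series). The lemma gives
\[
  \|\partial_h\delta_n\|_\kappa
  \le
  C\Bigl(\frac{a_n\dot a_n}{U}+\frac{a_n^2}{U^2}(1+\dot d_n)\Bigr).
\]
Into this I substitute the following bounds:
\begin{itemize}
  \item \eqref{eq:lambda-Fn-bound} for $a_n$;
  \item \eqref{eq:lambda-dFn-bound}, namely $\dot a_n\le Cq^{n-1}|\lambda|\,|t|^2/U^2$;
  \item \eqref{eq:lambda-dDn-bound}, namely $\dot d_n\le C$.
\end{itemize}
The first term is then bounded by $Cq^{2(n-1)}|\lambda|^2|t|^4/U^4$. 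The second term is bounded by $Cq^{2(n-1)}|\lambda|^2|t|^4/U^4$ as well. This proves \eqref{eq:geom-d-delta-bound}, and \eqref{eq:geom-d-delta-bound-weaker} follows trivially since $U^{-4}\le U^{-3}$ for $U\ge1$.

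\emph{Uniformity of constants.} The main point to check is that the constant in Lemma~\ref{lem:diff-one-step-diag-increment}, which may depend on $h_0/U$, and the bootstrap bound $d\le CU$ used there are uniform in $n$, $\Lambda$, $h$ and $\lambda$. Lemma~\ref{lem:lambda-later-step-bounds} supplies the stronger uniform bound $\|D_n\|_\kappa\le C$. The smallness $2s\|S_n\|_\kappa\le\rho_0$ holds uniformly because $\|S_n\|_\kappa\le Ca_n/U$. Once the strong-coupling threshold is fixed, all constants are therefore uniform and no further obstacle is expected.
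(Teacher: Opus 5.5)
Your proposal is correct and follows the paper's proof essentially verbatim: the undifferentiated bound comes from \eqref{eq:spec-diag-increment} combined with \eqref{eq:lambda-Fn-bound}, and the differentiated bound comes from Lemma~\ref{lem:diff-one-step-diag-increment} with $D=D_n$, $F=F_n$, fed with \eqref{eq:lambda-Fn-bound}, \eqref{eq:lambda-dFn-bound} and \eqref{eq:lambda-dDn-bound}, where each of the two terms yields $q^{2(n-1)}|\lambda|^2|t|^4/U^4$. Your extra remark on uniformity of the constants (using $\|D_n\|_\kappa\le C$ in place of the bootstrap bound $d\le CU$, and $\|S_n\|_\kappa\le C\|F_n\|_\kappa/U$ for the smallness condition) is a harmless supplement that the paper leaves implicit.
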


\begin{proof}
By the diagonal increment estimate \eqref{eq:spec-diag-increment},
applied to the \(n\)-th LS/SW step,
$
  \|\delta_n(h;\lambda)\|_\kappa
  \le
  C\frac{\|F_n(h;\lambda)\|_\kappa^2}{U}.
$
Using Lemma~\ref{lem:lambda-later-step-bounds},
$
  \|F_n(h;\lambda)\|_\kappa
  \le
  Cq^{n-1}|\lambda|\frac{|t|^2}{U},
$
we obtain
\[
  \|\delta_n(h;\lambda)\|_\kappa
  \le
  C\frac1U
  \left(
    q^{n-1}|\lambda|\frac{|t|^2}{U}
  \right)^2
  =
  Cq^{2(n-1)}|\lambda|^2\frac{|t|^4}{U^3}.
\]
This proves \eqref{eq:geom-delta-bound}.

For the \(h\)-derivative, apply
Lemma~\ref{lem:diff-one-step-diag-increment} to the \(n\)-th LS/SW step,
with
$
  D=D_n(h;\lambda)
  $
  and 
  $
  F=F_n(h;\lambda).
$
It gives
\[
  \|\partial_h\delta_n(h;\lambda)\|_\kappa
  \le
  C
  \left(
    \frac{\|F_n(h;\lambda)\|_\kappa}{U}
    \|\partial_hF_n(h;\lambda)\|_\kappa
    +
    \frac{\|F_n(h;\lambda)\|_\kappa^2}{U^2}
    \bigl(1+\|\partial_hD_n(h;\lambda)\|_\kappa\bigr)
  \right).
\]
By Lemma~\ref{lem:lambda-later-step-bounds},
\[
  \|\partial_hF_n(h;\lambda)\|_\kappa
  \le
  Cq^{n-1}|\lambda|\frac{|t|^2}{U^2},
  \qquad
  \|\partial_hD_n(h;\lambda)\|_\kappa\le C.
\]
Therefore the first term is bounded by
$
  C
  \frac1U
  \left(
    q^{n-1}|\lambda|\frac{|t|^2}{U}
  \right)
  \left(
    q^{n-1}|\lambda|\frac{|t|^2}{U^2}
  \right)
  =
  Cq^{2(n-1)}|\lambda|^2\frac{|t|^4}{U^4}.
$
The second term is bounded by
$
  C
  \frac1{U^2}
  \left(
    q^{n-1}|\lambda|\frac{|t|^2}{U}
  \right)^2
  =
  Cq^{2(n-1)}|\lambda|^2\frac{|t|^4}{U^4}.
$
Combining the two estimates proves \eqref{eq:geom-d-delta-bound}.  The
weaker bound \eqref{eq:geom-d-delta-bound-weaker} follows immediately
for \(U\ge1\).
\end{proof}

The next estimate is the only use of the auxiliary \(\lambda\)-expansion
beyond the second-order bridge.  

\begin{lem}[Higher-order endpoint remainder bound]
\label{lem:high-order-endpoint-remainder-C1}
Under the hypotheses of Corollary~\ref{cor:spec-LS-small-tU}, there
exists a constant \(C_{\ge3}=C_{\ge3}(d,h_0,t)\) such that, uniformly in
\(\Lambda\),
\[
  \sup_{|h|\le h_0}
  \|P\mathcal K_\Lambda^{(\ge3)}(h)P\|_0
  +
  \sup_{|h|\le h_0}
  \|\partial_h(P\mathcal K_\Lambda^{(\ge3)}(h)P)\|_0
  \le
  C_{\ge3}\frac{|t|^3}{U^2}.
\]
\end{lem}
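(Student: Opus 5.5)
The plan is to write $\Delta_\Lambda(h;\lambda)$ as the sum of the diagonal increments of the $\lambda$-scheme, $\Delta_\Lambda(h;\lambda)=\sum_{n\ge0}\delta_n(h;\lambda)$, exactly as in the proof of Lemma~\ref{lem:bridge-lambda-xi-second-order}. Subtracting the $\lambda^2$-coefficient termwise and using that coefficient's decomposition $(\Delta_\Lambda)^{[2]}_\lambda=\bigl((\Delta_\Lambda)^{[2]}_\xi+\mathcal R^{(2)}_\Lambda\bigr)+\mathcal E^{(2)}_\Lambda$, with $\mathcal E^{(2)}_\Lambda=\sum_{n\ge1}\delta_n^{[2]}$, gives
\[
  \mathcal K_\Lambda^{(\ge3)}(h)
  =
  \Bigl(\delta_0(h;1)-\delta_0^{[2]}(h)\Bigr)
  +
  \sum_{n\ge1}\delta_n(h;1)
  -
  \sum_{n\ge1}\delta_n^{[2]}(h).
\]
I will bound each of the three pieces, together with its $h$-derivative, in $\|\cdot\|_\kappa$ and hence in $\|\cdot\|_0$. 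The $P$-compression is then handled by locality. Each diagonal local term $X_Z$ satisfies $PX_ZP=P_ZX_ZP_ZP$ with $P_Z:=\prod_{x\in Z}p^{(1)}_x$, so we may take $\{P_ZX_ZP_Z\}_Z$ as the interaction representing $PXP$; its $\|\cdot\|_0$ norm is at most $\|X\|_0$. The $h$-derivative commutes with this compression.

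\emph{First step.} Put $V=(T_\Lambda)^{\rm off}$ and $S=\mathcal I_h(V)$. The explicit BCH form of $B_1(h;\lambda)$ in the proof of Lemma~\ref{lem:lambda-later-step-bounds} gives
\[
  \delta_0(h;\lambda)
  =
  \Bigl(\sum_{r\ge2}\tfrac{\lambda^r}{r!}\ad_S^{\,r}(T_\Lambda^{(0)})
  +\sum_{m\ge1}\tfrac{m\lambda^{m+1}}{(m+1)!}\ad_S^{\,m}(V)\Bigr)^{\rm diag}.
\]
Hence $\delta_0(h;1)-\delta_0^{[2]}(h)$ consists of the terms with $r\ge3$ and $m\ge2$. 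Lemmas~\ref{lem:spec-com-bound} and~\ref{lem:BCH-summability-kappa}, together with $\|S\|_\kappa\le C|t|/U$ and $\|V\|_\kappa+\|T^{(0)}_\Lambda\|_\kappa\le C|t|$, bound these by $C(\|S\|_\kappa^3|t|+\|S\|_\kappa^2|t|)\le C|t|^3/U^2$. For the derivative, $\partial_h$ only hits $S$, and Lemma~\ref{lem:dIh-local-bound} gives $\|\partial_hS\|_\kappa\le C|t|/U^2$. Distributing the derivative as in Lemma~\ref{lem:diff-one-step-diag-increment} then yields $C|t|^3/U^3$, which is even better.

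\emph{Later steps.} Lemma~\ref{lem:geometric-differentiated-diagonal-increment} at $\lambda=1$ gives $\sum_{n\ge1}\|\delta_n(h;1)\|_\kappa+\|\partial_h\delta_n(h;1)\|_\kappa\le C|t|^4/U^3$, after summing the geometric series. For the coefficients $\delta_n^{[2]}$ I will repeat the Cauchy-estimate argument from the proof of Lemma~\ref{lem:bridge-lambda-xi-second-order}: extend $\lambda$ to a complex disc of radius $r_0$ and use the bounds of Lemmas~\ref{lem:lambda-later-step-bounds}--\ref{lem:geometric-differentiated-diagonal-increment} on $|\lambda|=r_0$. This gives $\|\delta_n^{[2]}\|_\kappa\le r_0^{-2}Cq^{2(n-1)}|t|^4/U^3$. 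Because $\partial_h$ commutes with extracting the Taylor coefficient in $\lambda$ (both are continuous linear operations on a jointly analytic family), the same estimate applied to $\partial_h\delta_n$ bounds $\partial_h\delta_n^{[2]}$. Since $|t|^4/U^3\le |t|^3/U^2$ once $U\ge|t|$, summing the three pieces gives the claimed bound with $C_{\ge3}(d,h_0,t)$.

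The main obstacle is the complexification step for the derivative bounds. Lemmas~\ref{lem:lambda-later-step-bounds} and~\ref{lem:diff-one-step-diag-increment} are stated for real $\lambda$, and I will need them on a complex disc $|\lambda|\le r_0$, uniformly, for the $h$-differentiated quantities. I would first check that every operation in those proofs is complex-(multi)linear: the $D$-grading, $\mathcal I_h$ (analytic in $h$ by its Neumann series), commutators and the BCH series. I would also check that the bootstrap arguments only use $|\lambda|$. Then the real-$\lambda$ proofs transfer verbatim and the family is jointly analytic in $(h,\lambda)$. If the argument for $\partial_h$ and $\lambda$-coefficients commuting proves delicate, a fallback is to apply Cauchy's estimate in $\lambda$ directly to $\partial_h\delta_n(h;\lambda)$, which is itself holomorphic in $\lambda$.
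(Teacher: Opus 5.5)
Your proposal is correct and follows the paper's proof essentially step for step. The paper uses the same split $\mathcal K_\Lambda^{(\ge3)}=\rho_0^{(\ge3)}+\rho_{\ge1}^{(\ge3)}$ into two pieces. The first is the first-step BCH tail (the terms $r\ge3$, $m\ge2$), bounded through $\|S\|_\kappa\le C|t|/U$ and $\|\partial_hS\|_\kappa\le C|t|/U^2$. The second is the later-step increments minus their $\lambda^2$-coefficients, bounded through Lemma~\ref{lem:geometric-differentiated-diagonal-increment} together with a Cauchy estimate, which rests on the same holomorphic extension in $\lambda$ that you flag as the main obstacle. Your explicit locality argument for the $P$-compression in $\|\cdot\|_0$ is a small extra justification that the paper leaves implicit.
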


\begin{proof}
Set
$
  V:=(T_\Lambda)^{\rm off}.
$
For the auxiliary \(\lambda\)-deformation, write the diagonal increments
as
$
  \delta_n(h;\lambda)
  :=
  B_{n+1}(h;\lambda)^{\rm diag}
  -
  B_n(h;\lambda)^{\rm diag}.
$
Then
$
  \Delta_\Lambda(h;\lambda)
  =
  \sum_{n\ge0}\delta_n(h;\lambda)
$
with convergence in \(\|\cdot\|_\kappa\).  Near \(\lambda=0\), each
\(\delta_n(h;\lambda)\) is analytic and has the expansion
\[
  \delta_n(h;\lambda)
  =
  \sum_{m\ge2}\lambda^m\delta_n^{[m]}(h).
\]

We split the higher-order endpoint remainder into the first LS/SW step
and the later steps:
\[
  \mathcal K_\Lambda^{(\ge3)}(h)
  =
  \rho_0^{(\ge3)}(h)
  +
  \rho_{\ge1}^{(\ge3)}(h),
\]
where
\[
  \rho_0^{(\ge3)}(h)
  :=
  \delta_0(h;1)-\delta_0^{[2]}(h),
  \qquad
  \rho_{\ge1}^{(\ge3)}(h)
  :=
  \sum_{n\ge1}
  \left(
    \delta_n(h;1)-\delta_n^{[2]}(h)
  \right).
\]

\smallskip
\noindent
\emph{Step 1: the first LS/SW step.}
For the \(\lambda\)-deformation,
\[
  H_\Lambda(h;\lambda)
  =
  H_{\rm d}+T_\Lambda^{(0)}+\lambda V,
  \qquad
  H_{\rm d}:=UD_\Lambda-hM_\Lambda.
\]
Set
$
  S:=\mathcal I_h(V).
$
Then
$
  S_0(h;\lambda)=\lambda S
  $
  and 
  $
  \ad_S(H_{\rm d})=-V.
$
Expanding the first conjugation gives
\[
\begin{aligned}
  e^{\lambda S}
  H_\Lambda(h;\lambda)
  e^{-\lambda S}
  &=
  H_{\rm d}+T_\Lambda^{(0)}
  +\lambda V                                      \\
  &\quad+
  \sum_{r\ge1}\frac{\lambda^r}{r!}\ad_S^{\,r}(H_{\rm d})
  +
  \sum_{r\ge1}\frac{\lambda^r}{r!}\ad_S^{\,r}(T_\Lambda^{(0)})
  +
  \sum_{r\ge1}\frac{\lambda^{r+1}}{r!}\ad_S^{\,r}(V).
\end{aligned}
\]
Using
$
  \ad_S^{\,r}(H_{\rm d})
  =
  -\ad_S^{\,r-1}(V),
  \ (r\ge1),
$
the term \(\lambda V\) is cancelled by the \(r=1\) contribution from
\(H_{\rm d}\).  Therefore
\[
  e^{\lambda S}
  H_\Lambda(h;\lambda)
  e^{-\lambda S}
  =
  H_{\rm d}+T_\Lambda^{(0)}
  +
  \sum_{r\ge1}\frac{\lambda^r}{r!}
  \ad_S^{\,r}(T_\Lambda^{(0)})
  +
  \sum_{m\ge1}
  \frac{m}{(m+1)!}\lambda^{m+1}\ad_S^{\,m}(V).
\]
Taking the \(D_\Lambda\)-diagonal part and subtracting
\(T_\Lambda^{(0)}\), we get
\[
  \delta_0(h;\lambda)
  =
  \left[
    \sum_{r\ge1}\frac{\lambda^r}{r!}
    \ad_S^{\,r}(T_\Lambda^{(0)})
    +
    \sum_{m\ge1}
    \frac{m}{(m+1)!}\lambda^{m+1}\ad_S^{\,m}(V)
  \right]^{\rm diag}.
\]
Since \(T_\Lambda^{(0)}\) is \(D_\Lambda\)-diagonal and \(S\) is
\(D_\Lambda\)-off-diagonal,
$
  \left(\ad_S(T_\Lambda^{(0)})\right)^{\rm diag}=0.
$
Thus
\[
  \delta_0^{[2]}(h)
  =
  \frac12\left(\ad_S(V)\right)^{\rm diag}
  +
  \frac12\left(\ad_S^2(T_\Lambda^{(0)})\right)^{\rm diag},
\]
and \(\rho_0^{(\ge3)}(h)\) is the diagonal part of
$
  \sum_{r\ge3}\frac1{r!}\ad_S^{\,r}(T_\Lambda^{(0)})
  +
  \sum_{m\ge2}
  \frac{m}{(m+1)!}\ad_S^{\,m}(V).
$
In this expression the \(H_{\rm d}\)-tail has been rewritten in terms of
\(V\)-commutators, so no norm of \(H_{\rm d}\) enters the estimate.

By Lemmas~\ref{lem:spec-com-bound},
\ref{lem:BCH-summability-kappa}, and
\ref{lem:Ih-interaction-bound},
$
  \|S\|_\kappa\le C\frac{|t|}{U}
  $
  and 
  $
  \|V\|_\kappa+\|T_\Lambda^{(0)}\|_\kappa\le C|t|.
$
Hence
\[
  \|P\rho_0^{(\ge3)}(h)P\|_0
  \le
  C(d,h_0,t)
  \left(
    \|S\|_\kappa^2\|V\|_\kappa
    +
    \|S\|_\kappa^3\|T_\Lambda^{(0)}\|_\kappa
  \right)
  \le
  C(d,h_0,t)\frac{|t|^3}{U^2}.
\]

For the \(h\)-derivative, we differentiate the same rewritten expression.
Since \(V\) and \(T_\Lambda^{(0)}\) are independent of \(h\), the
derivative only hits the factors \(S=\mathcal I_h(V)\).  By
Lemma~\ref{lem:dIh-local-bound},
$
  \|\partial_hS\|_\kappa
  =
  \|\partial_h\mathcal I_h(V)\|_\kappa
  \le
  C(d,\kappa,h_0)\frac{|t|}{U^2}.
$
Differentiating the two BCH tails and using
Lemmas~\ref{lem:spec-com-bound} and~\ref{lem:BCH-summability-kappa},
we obtain
\[
\begin{aligned}
  \|\partial_h(P\rho_0^{(\ge3)}(h)P)\|_0
  &\le
  C(d,h_0,t)
  \left(
    \|\partial_hS\|_\kappa\|S\|_\kappa\|V\|_\kappa
    +
    \|\partial_hS\|_\kappa\|S\|_\kappa^2
    \|T_\Lambda^{(0)}\|_\kappa
  \right)                                      
  \le
  C(d,h_0,t)\frac{|t|^3}{U^2}.
\end{aligned}
\]

\smallskip
\noindent
\emph{Step 2: later LS/SW steps.}
We now estimate the contribution from the steps \(n\ge1\).  By
Lemma~\ref{lem:geometric-differentiated-diagonal-increment}, applied in
the auxiliary \(\lambda\)-scheme, we have
\begin{align}
  \|\delta_n(h;\lambda)\|_\kappa
  &\le
  Cq^{2(n-1)}|\lambda|^2\frac{|t|^4}{U^3},
  \label{eq:Kge3-delta-bound}\\
  \|\partial_h\delta_n(h;\lambda)\|_\kappa
  &\le
  Cq^{2(n-1)}|\lambda|^2\frac{|t|^4}{U^4}
  \le
  Cq^{2(n-1)}|\lambda|^2\frac{|t|^4}{U^3}.
  \label{eq:Kge3-d-delta-bound}
\end{align}

We extract the \(\lambda^2\)-coefficient by Cauchy's estimate.  As in
the proof of Lemma~\ref{lem:bridge-lambda-xi-second-order}, the LS/SW
maps are holomorphic in \(\lambda\), and the estimates above depend only
on \(|\lambda|\).  Applying Cauchy's estimate on a fixed circle
\(|\lambda|=r_0\le1\), and absorbing the factor \(r_0^{-2}\) into the
constant, gives
\begin{equation}\label{eq:Kge3-delta2-coeff-bound}
  \|\delta_n^{[2]}(h)\|_\kappa
  +
  \|\partial_h\delta_n^{[2]}(h)\|_\kappa
  \le
  Cq^{2(n-1)}\frac{|t|^4}{U^3}.
\end{equation}

Since
$
  \rho_{\ge1}^{(\ge3)}(h)
  =
  \sum_{n\ge1}
  \left(
    \delta_n(h;1)-\delta_n^{[2]}(h)
  \right),
$
the estimates \eqref{eq:Kge3-delta-bound},
\eqref{eq:Kge3-d-delta-bound}, and
\eqref{eq:Kge3-delta2-coeff-bound} imply
\[
  \|P\rho_{\ge1}^{(\ge3)}(h)P\|_0
  +
  \|\partial_h(P\rho_{\ge1}^{(\ge3)}(h)P)\|_0
  \le
  C(d,h_0,t)\frac{|t|^4}{U^3}.
\]
Here we used that \(P\) is independent of \(h\).  Since
$
  \frac{|t|}{U}\le\varepsilon_{\rm SW},
$
the right-hand side is bounded by
$
  C(d,h_0,t)\frac{|t|^3}{U^2}.
$

Combining this with the estimates for \(\rho_0^{(\ge3)}\), we obtain
\[
  \sup_{|h|\le h_0}
  \|P\mathcal K_\Lambda^{(\ge3)}(h)P\|_0
  +
  \sup_{|h|\le h_0}
  \|\partial_h(P\mathcal K_\Lambda^{(\ge3)}(h)P)\|_0
  \le
  C_{\ge3}\frac{|t|^3}{U^2}.
\]
This proves the lemma.
\end{proof}

\begin{lem}[\(C^1\) bound on the explicit second-order comparison term]
\label{lem:R2-C1-bound}
Under the hypotheses of Corollary~\ref{cor:spec-LS-small-tU}, there is a
constant \(C_R(d,h_0,t)<\infty\) such that, uniformly in \(\Lambda\),
\[
  \sup_{|h|\le h_0}
  \|\mathcal R_\Lambda^{(2)}(h)\|_0
  +
  \sup_{|h|\le h_0}
  \|\partial_h\mathcal R_\Lambda^{(2)}(h)\|_0
  \le
  C_R(d,h_0,t)\frac{|t|^3}{U^2}.
\]
Consequently,
\[
  \sup_{|h|\le h_0}
  \|P\mathcal R_\Lambda^{(2)}(h)P\|_0
  +
  \sup_{|h|\le h_0}
  \|\partial_h(P\mathcal R_\Lambda^{(2)}(h)P)\|_0
  \le
  C_R(d,h_0,t)\frac{|t|^3}{U^2}.
\]
\end{lem}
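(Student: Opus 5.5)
We start from the explicit formula
\[
  \mathcal R_\Lambda^{(2)}(h)=\frac12\bigl(\ad_{S_h}^2(T_\Lambda^{(0)})\bigr)^{\rm diag},
  \qquad S_h:=\mathcal I_h(V),\quad V:=(T_\Lambda)^{\rm off},
\]
from Lemma~\ref{lem:bridge-lambda-xi-second-order}. The $C^0$ part is already contained in that lemma, namely $\|\mathcal R^{(2)}_\Lambda(h)\|_0\le C_{R,0}(d)|t|^3/U^2$. It was obtained from two applications of Lemma~\ref{lem:spec-com-bound} at $\kappa=0$, contractivity of the diagonal projection (Lemma~\ref{lem:D-grading-contraction}), and the bound $\|S_h\|_0\le C(d)|t|/U$ coming from Lemma~\ref{lem:Ih-interaction-bound}. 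The support sizes are at most $2$ for $S_h$ and $T^{(0)}_\Lambda$, and at most $4$ after one commutator, so every constant depends only on $d$.

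For the derivative, note that $V$ and $T_\Lambda^{(0)}$ do not depend on $h$. The diagonal projection is a fixed linear map, since it is defined through the spectral projections of $D_\Lambda$, which are $h$-independent. Hence it commutes with $\partial_h$, and
\[
  \partial_h\mathcal R_\Lambda^{(2)}(h)=\frac12\Bigl(\ad_{\partial_hS_h}\ad_{S_h}(T_\Lambda^{(0)})+\ad_{S_h}\ad_{\partial_hS_h}(T_\Lambda^{(0)})\Bigr)^{\rm diag}.
\]
Termwise differentiation is legitimate because $S_h$ is a finite sum of local pieces $\mathcal I_h((T_e)^{(k)})$, each analytic in $h$ by Lemma~\ref{lem:dIh-local-bound}. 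The same lemma, applied to each graded bond piece with $|X|=2$, $|k|\le2$ and $U>4h_0$, gives
\[
  \|\partial_hS_h\|_0\le C(d,h_0)\frac{|t|}{U^2}.
\]
Support is preserved, so $s(\partial_hS_h)\le2$. Applying Lemma~\ref{lem:spec-com-bound} twice more and then diagonal contractivity yields
\[
  \|\partial_h\mathcal R^{(2)}_\Lambda(h)\|_0\le C\,\|\partial_hS_h\|_0\,\|S_h\|_0\,\|T^{(0)}_\Lambda\|_0\le C\frac{|t|^3}{U^3}.
\]
Assuming $U\ge1$, this is bounded by $C|t|^3/U^2$. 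Adding the $C^0$ bound gives the first claim with $C_R(d,h_0,t)$.

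For the $P$-compressed statement, $P$ is independent of $h$, so $\partial_h(P\mathcal R P)=P(\partial_h\mathcal R)P$. Compression by $P$ does not increase the operator norm of any local term. The one subtlety is that $PXP$ need not be local in the fermionic sense, because $P$ is a global projection. On the $P$-block, however, $P=\prod_x p^{(1)}_x$, and for a $D$-diagonal local term $\mathcal R_Z$ we have $P\mathcal R_ZP=P\bigl(\prod_{x\in Z}p_x^{(1)}\bigr)\mathcal R_Z\bigl(\prod_{x\in Z}p_x^{(1)}\bigr)P$. Under $\mathcal U_\Lambda$ this is the local spin operator $\mathcal R_Z$ compressed to $Z$, whose norm is at most $\|\mathcal R_Z\|$. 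So the interaction $\{P\mathcal R_ZP\}_Z$ has $\|\cdot\|_0$ no larger than that of $\mathcal R^{(2)}_\Lambda$, and likewise for the derivative. This is the same convention the paper uses implicitly in Lemma~\ref{lem:high-order-endpoint-remainder-C1}.

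The main obstacle is essentially bookkeeping. We must make sure the derivative bound for $\mathcal I_h$ is applied gradewise on each bond piece with the correct $|X|$ and $|k|$, and that the interaction-level $\partial_h$ commutes with both the diagonal projection and the $P$-compression. The analytic content is straightforward.
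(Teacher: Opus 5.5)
Your proposal is correct and follows essentially the same route as the paper: take the $C^0$ bound from Lemma~\ref{lem:bridge-lambda-xi-second-order}, differentiate $\frac12(\ad_{\mathcal I_h(V)}^2(T_\Lambda^{(0)}))^{\rm diag}$ by the product rule (only $\mathcal I_h(V)$ depends on $h$), bound $\|\partial_h\mathcal I_h(V)\|_0\le C|t|/U^2$ via Lemma~\ref{lem:dIh-local-bound}, apply Lemma~\ref{lem:spec-com-bound} twice with diagonal contractivity to get $C|t|^3/U^3\le C|t|^3/U^2$, and pass to the compressed bound using that $P$ is independent of $h$. Your remark on why $\{P\mathcal R_ZP\}_Z$ is a genuine interaction whose $\|\cdot\|_0$-norm is no larger spells out a point that the paper leaves implicit when it simply says the compressed estimate follows by applying $P$ on both sides.
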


\begin{proof}
Set
$
  V:=(T_\Lambda)^{\rm off}.
$
We first estimate the uncompressed interaction
\(\mathcal R_\Lambda^{(2)}(h)\).  The corresponding compressed estimate
follows by applying \(P\) on both sides, since \(P\) is an orthogonal
projection and is independent of \(h\).

We use the following upstream inputs, all with the \(0\)-norm.  By
Lemma~\ref{lem:Ih-interaction-bound},
$
  \|\mathcal I_h(V)\|_0
  \le
  C(d)\frac{|t|}{U}.
$
By Lemma~\ref{lem:dIh-local-bound},
$
  \|\partial_h\mathcal I_h(V)\|_0
  \le
  C(d,h_0)\frac{|t|}{U^2}.
$
The hopping bounds give
$
  \|T_\Lambda^{(0)}\|_0\le C(d)|t|
  $
  and 
  $
  \|V\|_0\le C(d)|t|.
$
Finally, Lemma~\ref{lem:bridge-lambda-xi-second-order} gives the
\(C^0\)-bound:
$
  \|\mathcal R_\Lambda^{(2)}(h)\|_0
  \le
  C_{R,0}(d)\frac{|t|^3}{U^2}.
$

It remains to estimate the \(h\)-derivative.  Recall that
\[
  \mathcal R_\Lambda^{(2)}(h)
  =
  \frac12
  \left(
    \ad_{\mathcal I_h(V)}
    \left(
      \ad_{\mathcal I_h(V)}(T_\Lambda^{(0)})
    \right)
  \right)^{\rm diag}.
\]
The \(D_\Lambda\)-diagonal extraction is contractive in the \(0\)-norm by
Lemma~\ref{lem:D-grading-contraction}.  Differentiating gives
\[
\begin{aligned}
  \partial_h\mathcal R_\Lambda^{(2)}(h)
  &=
  \frac12
  \left(
    \ad_{\partial_h\mathcal I_h(V)}
    \left(
      \ad_{\mathcal I_h(V)}(T_\Lambda^{(0)})
    \right)
  \right)^{\rm diag}      
  +
  \frac12
  \left(
    \ad_{\mathcal I_h(V)}
    \left(
      \ad_{\partial_h\mathcal I_h(V)}(T_\Lambda^{(0)})
    \right)
  \right)^{\rm diag}.
\end{aligned}
\]
Using the interaction commutator bound
Lemma~\ref{lem:spec-com-bound}, we obtain
\begin{align*}
  \|\partial_h\mathcal R_\Lambda^{(2)}(h)\|_0
  &\le
  C(d)
  \|\partial_h\mathcal I_h(V)\|_0
  \|\mathcal I_h(V)\|_0
  \|T_\Lambda^{(0)}\|_0
  \le
  C(d,h_0,t)
  \frac{|t|}{U^2}
  \frac{|t|}{U}
  |t|
  =
  C(d,h_0,t)\frac{|t|^3}{U^3}.
\end{align*}
After increasing the large-\(U\) threshold if necessary, this is bounded
by
$
  C(d,h_0,t)\frac{|t|^3}{U^2}.
$
Together with the \(C^0\)-bound from
Lemma~\ref{lem:bridge-lambda-xi-second-order}, this proves
\[
  \sup_{|h|\le h_0}
  \|\mathcal R_\Lambda^{(2)}(h)\|_0
  +
  \sup_{|h|\le h_0}
  \|\partial_h\mathcal R_\Lambda^{(2)}(h)\|_0
  \le
  C_R(d,h_0,t)\frac{|t|^3}{U^2}.
\]

Since \(P\) is independent of \(h\),
$
  \partial_h(P\mathcal R_\Lambda^{(2)}(h)P)
  =
  P(\partial_h\mathcal R_\Lambda^{(2)}(h))P.
$
The compressed estimate follows from the same bound after applying \(P\)
on both sides.  This completes the proof.
\end{proof}

\begin{lem}[\(C^1\) bound on the residual second-order comparison term]
\label{lem:E2-C1-bound}
Under the hypotheses of Corollary~\ref{cor:spec-LS-small-tU}, there is a
constant \(C_E(d,h_0,t)<\infty\) such that, uniformly in \(\Lambda\),
\[
  \sup_{|h|\le h_0}
  \|\mathcal E_\Lambda^{(2)}(h)\|_0
  +
  \sup_{|h|\le h_0}
  \|\partial_h\mathcal E_\Lambda^{(2)}(h)\|_0
  \le
  C_E(d,h_0,t)\frac{|t|^4}{U^3}.
\]
Consequently,
\[
  \sup_{|h|\le h_0}
  \|P\mathcal E_\Lambda^{(2)}(h)P\|_0
  +
  \sup_{|h|\le h_0}
  \|\partial_h(P\mathcal E_\Lambda^{(2)}(h)P)\|_0
  \le
  C_E(d,h_0,t)\frac{|t|^4}{U^3}.
\]
\end{lem}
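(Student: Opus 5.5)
The plan is to follow the structure already used for the \(C^0\) part of \(\mathcal E^{(2)}_\Lambda(h)\) in the proof of Lemma~\ref{lem:bridge-lambda-xi-second-order}, and to add an \(h\)-derivative by means of the later-step bounds. Recall the representation
\[
  \mathcal E^{(2)}_\Lambda(h)=\sum_{n\ge1}\delta_n^{[2]}(h),
\]
where \(\delta_n(h;\lambda)=D_{n+1}(h;\lambda)-D_n(h;\lambda)\) are the diagonal increments of the auxiliary \(\lambda\)-scheme and \(\delta_n^{[2]}\) is their \(\lambda^2\)-Taylor coefficient. The \(C^0\)-bound \(\|\mathcal E^{(2)}_\Lambda(h)\|_0\le C_{E,0}(d)|t|^4/U^3\) is already part of Lemma~\ref{lem:bridge-lambda-xi-second-order}. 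So only the derivative bound remains to be proved.

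\textbf{Derivative bound.} I will use Lemma~\ref{lem:geometric-differentiated-diagonal-increment}, specifically \eqref{eq:geom-d-delta-bound}. It gives, for \(n\ge1\), \(|h|\le h_0\), \(|\lambda|\le1\),
\[
  \|\partial_h\delta_n(h;\lambda)\|_\kappa\le Cq^{2(n-1)}|\lambda|^2\frac{|t|^4}{U^4}\le Cq^{2(n-1)}|\lambda|^2\frac{|t|^4}{U^3}\qquad (U\ge1).
\]
Next I extend \(\lambda\) to the complex disc \(|\lambda|\le r_0\), exactly as in the proof of Lemma~\ref{lem:bridge-lambda-xi-second-order}. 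There, the LS/SW operations (grading, \(\mathcal I_h\), commutators, BCH) are complex multilinear, and the estimates depend only on \(|\lambda|\). In this setting \(\partial_h\delta_n(h;\lambda)\) is holomorphic in \(\lambda\), and the Taylor coefficients commute with \(\partial_h\), so \(\partial_h\delta_n^{[2]}(h)=(\partial_h\delta_n)^{[2]}(h)\). Cauchy's estimate on \(|\lambda|=r_0\) then yields
\[
  \|\partial_h\delta_n^{[2]}(h)\|_0\le\|\partial_h\delta_n^{[2]}(h)\|_\kappa\le r_0^{-2}\,Cq^{2(n-1)}\frac{|t|^4}{U^3}.
\]

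\textbf{Summation.} The series \(\sum_{n\ge1}\partial_h\delta_n^{[2]}\) converges geometrically, uniformly in \(h\), which justifies termwise differentiation of \(\sum_n\delta_n^{[2]}\). This gives \(\|\partial_h\mathcal E_\Lambda^{(2)}(h)\|_0\le C|t|^4/U^3\), uniformly in \(\Lambda\) and \(|h|\le h_0\). The compressed statement follows because \(P\) is an \(h\)-independent orthogonal projection, so \(\partial_h(P\mathcal E P)=P(\partial_h\mathcal E)P\), and compression does not increase the local norms.

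\textbf{Main obstacle.} The one delicate point is justifying that the bounds of Lemmas~\ref{lem:lambda-later-step-bounds} and~\ref{lem:geometric-differentiated-diagonal-increment} hold for complex \(|\lambda|\le r_0\), together with joint regularity in \((h,\lambda)\), so that \(\partial_h\) and the Cauchy integral may be interchanged. For this I would argue that the proofs of those lemmas used only norm estimates depending on \(|\lambda|\) and no self-adjointness. In addition, each finite-step \(\delta_n\) is jointly analytic in \((h,\lambda)\): the Neumann series for \(\mathcal I_h\) is analytic in \(h\) by Lemma~\ref{lem:dIh-local-bound}, and the BCH series converges uniformly. Uniform limits of such functions preserve holomorphy in \(\lambda\), and the interchange follows.
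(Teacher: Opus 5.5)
Your proposal is correct and follows essentially the same route as the paper. The paper also writes $\mathcal E^{(2)}_\Lambda=\sum_{n\ge1}\delta_n^{[2]}$, feeds the bounds of Lemma~\ref{lem:geometric-differentiated-diagonal-increment} into a Cauchy estimate on $|\lambda|=r_0$, sums the geometric series, and compresses by the $h$-independent $P$; it re-derives the $C^0$ part by the same Cauchy argument instead of citing Lemma~\ref{lem:bridge-lambda-xi-second-order}. Your explicit justification of $\partial_h\delta_n^{[2]}=(\partial_h\delta_n)^{[2]}$ and of termwise differentiation via joint analyticity covers points the paper leaves implicit.
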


\begin{proof}
We prove the uncompressed estimates for
\(\mathcal E_\Lambda^{(2)}(h)\).  The compressed estimates follow at the
end by applying \(P\) on both sides.

Recall the notation for the auxiliary \(\lambda\)-scheme.  The diagonal
increments are
$
  \delta_n(h;\lambda)
  :=
  D_{n+1}(h;\lambda)-D_n(h;\lambda)
  \, (n\ge0).
$
Near \(\lambda=0\), we write
$
  \delta_n(h;\lambda)
  =
  \sum_{m\ge2}\lambda^m\delta_n^{[m]}(h).
$
By definition,
$
  \mathcal E_\Lambda^{(2)}(h)
  =
  \sum_{n\ge1}\delta_n^{[2]}(h).
$
It is enough to prove
\begin{equation}\label{eq:E2-claim-delta-coeff}
  \|\delta_n^{[2]}(h)\|_0
  +
  \|\partial_h\delta_n^{[2]}(h)\|_0
  \le
  C(d,h_0,t)q^{2(n-1)}
  \frac{|t|^4}{U^3},
  \qquad n\ge1,
\end{equation}
uniformly in \(\Lambda\) and \(|h|\le h_0\).

By Lemma~\ref{lem:geometric-differentiated-diagonal-increment}, for
\(|\lambda|\le1\),
\[
  \|\delta_n(h;\lambda)\|_\kappa
  \le
  Cq^{2(n-1)}
  |\lambda|^2\frac{|t|^4}{U^3},
\qquad
  \|\partial_h\delta_n(h;\lambda)\|_\kappa
  \le
  Cq^{2(n-1)}
  |\lambda|^2\frac{|t|^4}{U^3}.
\]
We now extract the \(\lambda^2\)-coefficient.  As in the proof of
Lemma~\ref{lem:bridge-lambda-xi-second-order}, we regard \(\lambda\) as
a complex parameter in a fixed small disc.  The LS/SW maps are
holomorphic in \(\lambda\), and the estimates above depend only on
\(|\lambda|\).  Cauchy's estimate on a fixed circle
\(|\lambda|=r_0\le1\) gives
\[
  \|\delta_n^{[2]}(h)\|_\kappa
  +
  \|\partial_h\delta_n^{[2]}(h)\|_\kappa
  \le
  Cq^{2(n-1)}\frac{|t|^4}{U^3},
\]
where the factor \(r_0^{-2}\) is absorbed into the constant.  Since
\(\|\cdot\|_0\le\|\cdot\|_\kappa\), this proves
\eqref{eq:E2-claim-delta-coeff}.

Summing \eqref{eq:E2-claim-delta-coeff} over \(n\ge1\), we obtain
\[
  \|\mathcal E_\Lambda^{(2)}(h)\|_0
  +
  \|\partial_h\mathcal E_\Lambda^{(2)}(h)\|_0
  \le
  C(d,h_0,t)\frac{|t|^4}{U^3}.
\]
Since \(P\) is independent of \(h\),
$
  \partial_h(P\mathcal E_\Lambda^{(2)}(h)P)
  =
  P(\partial_h\mathcal E_\Lambda^{(2)}(h))P.
$
The compressed estimate follows by applying \(P\) on both sides.  This
completes the proof.
\end{proof}

\begin{proof}[Proof of Proposition~\ref{prop:P-block-remainder-C1-bound}]
By Lemma~\ref{lem:P-block-remainder-decomposition},
\[
  R_{P,\Lambda}(h)
  =
  P\mathcal R_\Lambda^{(2)}(h)P
  +
  P\mathcal E_\Lambda^{(2)}(h)P
  +
  P\mathcal K_\Lambda^{(\ge3)}(h)P.
\]
Since \(P\) is independent of \(h\), differentiating this identity gives
\[
  \partial_hR_{P,\Lambda}(h)
  =
  \partial_h\bigl(P\mathcal R_\Lambda^{(2)}(h)P\bigr)
  +
  \partial_h\bigl(P\mathcal E_\Lambda^{(2)}(h)P\bigr)
  +
  \partial_h\bigl(P\mathcal K_\Lambda^{(\ge3)}(h)P\bigr).
\]
Hence Lemmas~\ref{lem:high-order-endpoint-remainder-C1},
\ref{lem:R2-C1-bound}, and \ref{lem:E2-C1-bound} imply
\[
  \sup_{|h| \le h_0}
  \|R_{P,\Lambda}(h)\|_0
  +
  \sup_{|h| \le h_0}
  \|\partial_hR_{P,\Lambda}(h)\|_0
  \le
  C(d,h_0,t)
  \left(
    \frac{|t|^3}{U^2}
    +
    \frac{|t|^4}{U^3}
  \right).
\]
Since the spin identification \(\mathcal U_\Lambda\) is unitary and
independent of \(h\) and 
$
  \mathcal R_{P,\Lambda}(h)
  =
  \mathcal U_\Lambda R_{P,\Lambda}(h)\mathcal U_\Lambda^\ast,
 $ 
 we have 
 $
  \partial_h\mathcal R_{P,\Lambda}(h)
  =
  \mathcal U_\Lambda
  (\partial_hR_{P,\Lambda}(h))
  \mathcal U_\Lambda^\ast .
$
Thus the same estimate holds for \(\mathcal R_{P,\Lambda}(h)\).  Therefore
Proposition~\ref{prop:P-block-remainder-C1-bound} follows with
$
  \varepsilon_P(U;h_0)
  :=
  C(d,h_0,t)
  \left(
    \frac{|t|^3}{U^2}
    +
    \frac{|t|^4}{U^3}
  \right).
$
Clearly \(\varepsilon_P(U;h_0)\to0\) as \(U\to\infty\).
\end{proof}

\section{Finite-volume pressure and convexity tools}
\label{app:pressure-convexity-tools}

This appendix collects the finite-volume trace estimates used in
Section~\ref{sec:defects}.  The purpose is twofold.  First, we record
the elementary pressure Lipschitz and Duhamel derivative bounds used to
compare finite-volume pressures and their \(h\)-derivatives.  Second, we
prove the soft \(C^1\) defect estimate which compares the full Hubbard
pressure with the \(P\)-block pressure in the fixed positive-field
window.

\subsection{Pressure Lipschitz and derivative formula}
\label{app:pressure-duhamel-bounds}

Let \(\mathcal K\) be a finite-dimensional Hilbert space.  For a
self-adjoint operator \(H\) on \(\mathcal K\), define
\[
  p_\beta(H)
  :=
  \frac1{\beta|\Lambda|}
  \log\Tr_{\mathcal K}e^{-\beta H}.
\]
Here \(|\Lambda|\) is the volume parameter used for normalization.

\begin{lem}[Pressure Lipschitz bound]
\label{lem:pressure-Lipschitz-bound}
Let \(H\) and \(K\) be self-adjoint operators on \(\mathcal K\).  Then
\[
  |p_\beta(H+K)-p_\beta(H)|
  \le
  \frac{\|K\|}{|\Lambda|}.
\]
In particular, if \(K=K_\Lambda\) is the finite-volume sum of an interaction
\(\mathsf K=\{K_X\}_{\varnothing\neq X\subset\Lambda}\), then
\[
  |p_\beta(H+K_\Lambda)-p_\beta(H)|
  \le
  \|\mathsf K\|_0.
\]
\end{lem}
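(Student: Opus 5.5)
The proof rests on the Golden--Thompson-free monotonicity of the trace exponential under operator inequalities. Set \(a:=\|K\|\). Since \(K\) is self-adjoint, we have the operator inequalities
\[
  H-a\mathbbm 1\le H+K\le H+a\mathbbm 1 .
\]

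For the first claim, we use the fact that \(X\mapsto\Tr_{\mathcal K}e^{-\beta X}\) is monotone decreasing with respect to the operator order. Indeed, if \(X\le Y\), then the ordered eigenvalues satisfy \(\lambda_j(X)\le\lambda_j(Y)\) by the min--max principle. Summing \(e^{-\beta\lambda_j}\) over \(j\) then gives \(\Tr e^{-\beta Y}\le\Tr e^{-\beta X}\). Applying this to the two inequalities above, and using that scalars factor out of the exponential, we obtain
\[
  e^{-\beta a}\Tr e^{-\beta H}
  \le
  \Tr e^{-\beta(H+K)}
  \le
  e^{\beta a}\Tr e^{-\beta H}.
\]
Taking \(\log\) and dividing by \(\beta|\Lambda|\) gives \(|p_\beta(H+K)-p_\beta(H)|\le a/|\Lambda|\), which is the first claim.

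For the interaction version, it remains to bound \(\|K_\Lambda\|\le\sum_X\|K_X\|\). Each nonempty \(X\) contains at least one site, so
\[
  \sum_X\|K_X\|\le\sum_{x\in\Lambda}\sum_{X\ni x}\|K_X\|\le|\Lambda|\,\|\mathsf K\|_0 .
\]
Inserting this bound into the first claim gives the second claim.

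None of these steps is a real obstacle. The only point needing care is the eigenvalue-monotonicity justification in place of a general operator-monotonicity statement for \(e^{-\beta X}\), which fails in general. The trace, however, only sees eigenvalues, so min--max suffices.
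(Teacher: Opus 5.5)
Your proof is correct, but its key step differs from the paper's. The paper obtains the first inequality from the Gibbs variational principle, $\log\Tr e^{A}=\sup_\rho\{\Tr(\rho A)-\Tr(\rho\log\rho)\}$. This shows directly that $A\mapsto\log\Tr e^{A}$ is $1$-Lipschitz in operator norm, and the paper then takes $A=-\beta(H+K)$, $B=-\beta H$.

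You instead use the operator sandwich $H-\|K\|\mathbbm 1\le H+K\le H+\|K\|\mathbbm 1$ together with eigenvalue monotonicity (Courant--Fischer/Weyl). This yields the multiplicative bound $e^{-\beta\|K\|}\Tr e^{-\beta H}\le\Tr e^{-\beta(H+K)}\le e^{\beta\|K\|}\Tr e^{-\beta H}$. You are right that only monotonicity of the trace is needed, not operator monotonicity of $e^{-\beta X}$.

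Your route is more elementary: it needs no entropy functional or variational principle. It is also the same sandwich mechanism the paper itself uses in the proof of the soft defect pressure bound. The paper's route instead presents the result as a Lipschitz property of a convex functional, which fits naturally with Gibbs-state estimates. For instance, the same variational formula gives the Peierls--Bogoliubov-type lower bound $\log\Tr e^{A}-\log\Tr e^{B}\ge\Tr\bigl(\rho_B(A-B)\bigr)$ with $\rho_B=e^{B}/\Tr e^{B}$.

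For the lemma as stated, both arguments give the same constant. Your interaction-norm step is essentially the paper's.
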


\begin{proof}
Set
$
  \Phi(A):=\log\Tr_{\mathcal K}e^A
$
for self-adjoint \(A\) on \(\mathcal K\).  By the Gibbs variational principle,
\[
  \Phi(A)
  =
  \sup_{\rho}
  \left\{
    \Tr_{\mathcal K}(\rho A)
    -
    \Tr_{\mathcal K}(\rho\log\rho)
  \right\},
\]
where the supremum is over density matrices on \(\mathcal K\).  Hence
\[
  \Phi(A)-\Phi(B)
  \le
  \sup_{\rho}
  \Tr_{\mathcal K}\bigl(\rho(A-B)\bigr)
  \le
  \|A-B\|.
\]
Exchanging \(A\) and \(B\), we obtain
$
  |\Phi(A)-\Phi(B)|\le \|A-B\|.
$

Apply this with
$
  A=-\beta(H+K) $ and $
  B=-\beta H.
$
Then
\[
  \left|
    \log\Tr_{\mathcal K}e^{-\beta(H+K)}
    -
    \log\Tr_{\mathcal K}e^{-\beta H}
  \right|
  \le
  \beta\|K\|.
\]
Dividing by \(\beta|\Lambda|\) gives
$
  |p_\beta(H+K)-p_\beta(H)|
  \le
  \frac{\|K\|}{|\Lambda|}.
$

If
$
  K_\Lambda=\sum_{\varnothing\neq X\subset\Lambda}K_X,
$
then
\[
  \|K_\Lambda\|
  \le
  \sum_{\varnothing\neq X\subset\Lambda}\|K_X\|
  =
  \sum_{y\in\Lambda}
  \sum_{X\ni y}
  \frac{\|K_X\|}{|X|}
  \le
  \sum_{y\in\Lambda}
  \sum_{X\ni y}\|K_X\|
  \le
  |\Lambda|\|\mathsf K\|_0.
\]
Combining this with the operator-norm bound proves the interaction-norm
version.
\end{proof}

\begin{lem}[Pressure derivative formula]
\label{lem:duhamel-derivative-pressure}
Let \(H(h)\) be a \(C^1\) family of self-adjoint operators on
\(\mathcal K\), and define
\[
  p_\beta(h)
  :=
  \frac1{\beta|\Lambda|}
  \log\Tr_{\mathcal K}e^{-\beta H(h)}.
\]
Then
\[
  \partial_h p_\beta(h)
  =
  -\frac1{|\Lambda|}
  \omega_{\beta,h}(\partial_hH(h)),
\]
where
\[
  \omega_{\beta,h}(O)
  :=
  \frac{\Tr_{\mathcal K}(Oe^{-\beta H(h)})}
       {\Tr_{\mathcal K}e^{-\beta H(h)}}.
\]
\end{lem}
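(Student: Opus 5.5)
This is the standard first-derivative formula for a log-partition function. The proof is a direct differentiation of \(\Tr_{\mathcal K}e^{-\beta H(h)}\) using the Duhamel formula, combined with cyclicity of the trace. Since \(\mathcal K\) is finite-dimensional and \(h\mapsto H(h)\) is \(C^1\), the map \(h\mapsto e^{-\beta H(h)}\) is \(C^1\) in operator norm. Its derivative is given by Duhamel's formula,
\[
  \partial_h e^{-\beta H(h)}
  =
  -\beta\int_0^1
  e^{-s\beta H(h)}\,
  \bigl(\partial_hH(h)\bigr)\,
  e^{-(1-s)\beta H(h)}\,ds .
\]
I will justify this identity by differentiating the power series of the exponential term by term, or equivalently by writing \(e^{-\beta H(h+\epsilon)}-e^{-\beta H(h)}\) as the integral of \(\frac{d}{ds}\bigl(e^{-s\beta H(h+\epsilon)}e^{-(1-s)\beta H(h)}\bigr)\) over \(s\in[0,1]\), dividing by \(\epsilon\), and letting \(\epsilon\to0\).

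Next I take the trace. By linearity and continuity of \(\Tr_{\mathcal K}\), the trace commutes with both \(\partial_h\) and the \(s\)-integral. Cyclicity then gives
\[
  \Tr_{\mathcal K}\bigl(e^{-s\beta H}(\partial_hH)e^{-(1-s)\beta H}\bigr)
  =
  \Tr_{\mathcal K}\bigl((\partial_hH)e^{-\beta H}\bigr)
\]
for every \(s\in[0,1]\). The integrand is therefore independent of \(s\), and the integral collapses to
\[
  \partial_h\Tr_{\mathcal K}e^{-\beta H(h)}
  =
  -\beta\,\Tr_{\mathcal K}\bigl((\partial_hH(h))e^{-\beta H(h)}\bigr).
\]

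Finally, \(\Tr_{\mathcal K}e^{-\beta H(h)}>0\), so the logarithm is differentiable. The chain rule gives
\[
  \partial_h p_\beta(h)
  =
  \frac{1}{\beta|\Lambda|}\cdot
  \frac{-\beta\Tr_{\mathcal K}\bigl((\partial_hH)e^{-\beta H}\bigr)}{\Tr_{\mathcal K}e^{-\beta H}}
  =
  -\frac1{|\Lambda|}\,\omega_{\beta,h}\bigl(\partial_hH(h)\bigr).
\]

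The only step needing care is the Duhamel differentiation. It is routine in finite dimension, because all operators are bounded and the derivatives exist in norm, so I do not expect any real obstacle. As a consistency check, applying the formula with \(H(h)=H-hM\) gives \(\partial_hH=-M\), and hence \(\partial_hp=\omega(M)/|\Lambda|\). This recovers the magnetisation identities \(\partial_h p=m\) stated earlier in the paper.
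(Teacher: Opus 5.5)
Your proposal is correct and follows the same route as the paper: Duhamel's formula for \(\partial_h e^{-\beta H(h)}\), cyclicity of the trace to collapse the \(s\)-integral into \(-\beta\Tr_{\mathcal K}\bigl((\partial_hH(h))e^{-\beta H(h)}\bigr)\), and then division by \(\beta|\Lambda|\Tr_{\mathcal K}e^{-\beta H(h)}\). Your justification of the Duhamel step is more explicit than the paper's, which simply cites the formula.
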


\begin{proof}
By the Duhamel formula and cyclicity of the trace,
\[
  \partial_h
  \Tr_{\mathcal K}e^{-\beta H(h)}
  =
  -\beta
  \Tr_{\mathcal K}
  \left(
    (\partial_hH(h))e^{-\beta H(h)}
  \right).
\]
Dividing by
$
  \beta|\Lambda|\Tr_{\mathcal K}e^{-\beta H(h)}
$
gives the stated formula.
\end{proof}

\subsection{Even CAR operators and restricted trace factorisation}
\label{app:restricted-trace-factorisation}

We record the restricted trace factorisation used in the soft defect
pressure bound.  Since we work in a fermionic canonical half-filled
sector, we do not use an unrestricted tensor-product factorisation of
half-filled traces.  Instead, we insert a common outside
single-occupancy projector; this freezes the outside part and reduces the
half-filling constraint to the inside region.

We also use that all operators involved are even CAR operators, so that
operators with disjoint supports commute.

\begin{lem}[Restricted trace factorisation with a common outside single-occupancy projector]
\label{lem:trace-factorisation-with-Qout}
Let \(X,Y\subset\Lambda\) be disjoint and set
$
  Z:=X\cup Y.
$
Define
\[
  P_X^{\rm hf}:=\mathbbm 1_{\{N_X=|X|\}},
  \qquad
  P_Y^{(1)}:=\prod_{y\in Y}p_y^{(1)},
  \qquad
  P_Z^{\rm hf}:=\mathbbm 1_{\{N_Z=|Z|\}}.
\]
Let
$
  Q_X,A_X\in\mathfrak A_X^{\rm even}
$
and 
$
  B_Y\in\mathfrak A_Y^{\rm even}.
$
Then
\begin{equation}\label{eq:trace-factorisation-with-Qout}
  \Tr_{\cH_Z^{\rm hf}}
  \bigl(
    Q_X\,P_Y^{(1)}\,A_X\,B_Y
  \bigr)
  =
  \Tr_{\cH_X}
  \bigl(
    P_X^{\rm hf}Q_XA_XP_X^{\rm hf}
  \bigr)\,
  \Tr_{\cH_Y}
  \bigl(
    P_Y^{(1)}B_YP_Y^{(1)}
  \bigr).
\end{equation}
If, in addition, \(B_Y\) preserves \(P_Y^{(1)}\cH_Y\) (for instance, if
\([B_Y,P_Y^{(1)}]=0\)), then
\begin{equation}\label{eq:trace-factorisation-with-Qout-invariant}
  \Tr_{\cH_Z^{\rm hf}}
  \bigl(
    Q_X\,P_Y^{(1)}\,A_X\,B_Y
  \bigr)
  =
  \Tr_{\cH_X}
  \bigl(
    P_X^{\rm hf}Q_XA_XP_X^{\rm hf}
  \bigr)\,
  \Tr_{P_Y^{(1)}\cH_Y}
  \bigl(
    B_Y\restriction_{P_Y^{(1)}\cH_Y}
  \bigr).
\end{equation}

\end{lem}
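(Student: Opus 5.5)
The plan is to reduce both sides to a sum over the occupation-number (Fock) basis of $\cH_Z$, adapted to the decomposition $Z=X\cup Y$. Fix an ordering of $Z$ in which all sites of $X$ precede those of $Y$. Then $\cH_Z\cong\cH_X\otimes\cH_Y$ via the Jordan--Wigner-type map $\phi_X\otimes\phi_Y\mapsto \phi_X\wedge\phi_Y$, with $\phi_X,\phi_Y$ monomials in the creation operators applied to the vacuum. Under this identification an even operator $A\in\mathfrak A_X^{\rm even}$ acts as $A\otimes\mathbbm 1$, because it contains no creation or annihilation operators on $Y$. An even $B\in\mathfrak A_Y^{\rm even}$ must be pushed past the $X$-creation operators. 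Each $c_{y\sigma}$ anticommutes with each $c^\ast_{x\tau}$, and $B$ is even, so the resulting signs cancel monomial by monomial. Hence $B$ acts as $\mathbbm 1\otimes B$.

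This graded tensor identification is the step I expect to be the main obstacle. Evenness must be used explicitly here: an odd $B_Y$ would pick up the parity operator of $X$. I would check it on generators of the form $c^\ast_{y\sigma}c_{y'\tau}$ and $c_{y\sigma}c_{y'\tau}$ acting on $\phi_X\wedge\phi_Y$.

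Next, insert the projection onto the half-filled sector. On $\operatorname{ran}P_Y^{(1)}$ every site of $Y$ is singly occupied, so $N_Y=|Y|$. Therefore, on $\operatorname{ran}P_Y^{(1)}$, the constraint $N_Z=|Z|$ is equivalent to $N_X=|X|$. This gives the identity
\[
P_Z^{\rm hf}P_Y^{(1)}=P_X^{\rm hf}\otimes P_Y^{(1)},
\]
which is the only place where the canonical constraint enters. Since $P_Y^{(1)}$ and $P_Z^{\rm hf}$ are commuting number-operator projections, I would write
\[
\Tr_{\cH_Z^{\rm hf}}(Q_XP_Y^{(1)}A_XB_Y)=\Tr_{\cH_Z}\bigl(P_Z^{\rm hf}\,Q_XA_X\,P_Y^{(1)}B_Y\bigr).
\]
Here I use that $Q_X$ and $A_X$ commute with $P_Y^{(1)}$, by disjointness of supports and evenness. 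Cyclicity of the trace, together with the fact that $P_Z^{\rm hf}$ commutes with $P_Y^{(1)}$ and with $N_X$, lets me write the integrand as
\[
(P_X^{\rm hf}\otimes P_Y^{(1)})\,(Q_XA_X\otimes B_Y).
\]

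The trace of a tensor product factorises, so the full trace equals
\[
\Tr_{\cH_X}(P_X^{\rm hf}Q_XA_X)\,\Tr_{\cH_Y}(P_Y^{(1)}B_Y).
\]
Cyclicity and idempotence of the projections then give $\Tr(P_X^{\rm hf}Q_XA_XP_X^{\rm hf})\,\Tr(P_Y^{(1)}B_YP_Y^{(1)})$, which is \eqref{eq:trace-factorisation-with-Qout}.

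One subtlety needs a check. $Q_X$ and $A_X$ need not individually commute with $N_X$, so the first factor stays in its compressed form. The rewriting above only moved $P_Z^{\rm hf}$ through the product: on $\operatorname{ran}P_Y^{(1)}$ it coincides with $P_X^{\rm hf}$, and it is removed by cyclicity rather than commuted past $Q_XA_X$.

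For \eqref{eq:trace-factorisation-with-Qout-invariant}, assume $B_Y$ leaves $P_Y^{(1)}\cH_Y$ invariant. Then $P_Y^{(1)}B_YP_Y^{(1)}$ is the restriction of $B_Y$ to that subspace, extended by zero on its complement. Its trace over $\cH_Y$ therefore equals $\Tr_{P_Y^{(1)}\cH_Y}(B_Y\restriction)$.
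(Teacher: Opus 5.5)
Your proposal is correct and follows essentially the same route as the paper. You identify $\cH_Z\cong\cH_X\otimes\cH_Y$, use evenness so that the $X$- and $Y$-local factors act on separate tensor factors, apply the key identity $P_Z^{\rm hf}P_Y^{(1)}=P_X^{\rm hf}\otimes P_Y^{(1)}$, and factorise the trace, with only cosmetic differences: you compress by $P_Z^{\rm hf}$ on one side and commute $P_Y^{(1)}$ next to it (restoring the compressed factors by cyclicity at the end), where the paper keeps the two-sided sandwich $P_Z^{\rm hf}P_Y^{(1)}(\cdots)P_Y^{(1)}P_Z^{\rm hf}$ throughout, and your explicit sign check for even $B_Y$ spells out what the paper calls the canonical identification.
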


\begin{proof}
Since \(X\cap Y=\varnothing\), the local Fock space over \(Z\) is canonically
identified with
\[
  \cH_Z\cong \cH_X\otimes\cH_Y.
\]
Under this identification,
$
  N_Z=N_X\otimes\mathbbm 1+\mathbbm 1\otimes N_Y,
$
and
$
  P_Y^{(1)}=\mathbbm 1\otimes P_Y^{(1)}.
$

We first rewrite the left-hand side as a trace on the full space:
\[
  \Tr_{\cH_Z^{\rm hf}}
  \bigl(
    Q_X\,P_Y^{(1)}\,A_X\,B_Y
  \bigr)
  =
  \Tr_{\cH_Z}
  \bigl(
    P_Z^{\rm hf}\,
    Q_X\,P_Y^{(1)}\,A_X\,B_Y\,
    P_Z^{\rm hf}
  \bigr).
\]
Because \(Q_X,A_X\in\mathfrak A_X^{\rm even}\) and
\(P_Y^{(1)},B_Y\in\mathfrak A_Y^{\rm even}\),
the \(X\)-local factors commute with the \(Y\)-local factors.
Moreover, \(P_Y^{(1)}\) commutes with \(P_Z^{\rm hf}\), since both are functions
of the local number operators. Hence
\begin{align}
  \Tr_{\cH_Z^{\rm hf}}
  \bigl(
    Q_X\,P_Y^{(1)}\,A_X\,B_Y
  \bigr)
  &=
  \Tr_{\cH_Z}
  \bigl(
    P_Z^{\rm hf}P_Y^{(1)}\,Q_XA_X\,B_Y\,P_Z^{\rm hf}
  \bigr)
  \notag\\
  &=
  \Tr_{\cH_Z}
  \bigl(
    P_Z^{\rm hf}P_Y^{(1)}\,Q_XA_X\,B_Y\,P_Y^{(1)}P_Z^{\rm hf}
  \bigr).
  \label{eq:trace-factorisation-with-Qout-compressed}
\end{align}
The second equality uses \(P_Y^{(1)2}=P_Y^{(1)}\) together with cyclicity of the
trace.

Now observe that on the subspace \(P_Y^{(1)}\cH_Y\), one has \(N_Y=|Y|\). Since
$
  |Z|=|X|+|Y|,
$
the condition \(N_Z=|Z|\) on \(P_Z^{\rm hf}\cH_Z\) forces \(N_X=|X|\) once we
restrict to \(P_Y^{(1)}\cH_Y\). Therefore
\begin{equation}\label{eq:PZhf-PY1-factor}
  P_Z^{\rm hf}P_Y^{(1)}
  =
  P_X^{\rm hf}\otimes P_Y^{(1)}
  =
  P_Y^{(1)}P_Z^{\rm hf}.
\end{equation}
Substituting \eqref{eq:PZhf-PY1-factor} into
\eqref{eq:trace-factorisation-with-Qout-compressed}, and using again that the
\(X\)-local factors commute with the \(Y\)-local ones, we obtain
\begin{align}
  \Tr_{\cH_Z^{\rm hf}}
  \bigl(
    Q_X\,P_Y^{(1)}\,A_X\,B_Y
  \bigr)
  &=
  \Tr_{\cH_X\otimes\cH_Y}
  \bigl(
    (P_X^{\rm hf}Q_XA_XP_X^{\rm hf})
    \otimes
    (P_Y^{(1)}B_YP_Y^{(1)})
  \bigr)
  \notag\\
  &=
  \Tr_{\cH_X}
  \bigl(
    P_X^{\rm hf}Q_XA_XP_X^{\rm hf}
  \bigr)\,
  \Tr_{\cH_Y}
  \bigl(
    P_Y^{(1)}B_YP_Y^{(1)}
  \bigr).
\end{align}
This proves \eqref{eq:trace-factorisation-with-Qout}.

If \(B_Y\) preserves \(P_Y^{(1)}\cH_Y\), then
$
  P_Y^{(1)}B_YP_Y^{(1)}\restriction_{P_Y^{(1)}\cH_Y}
  =
  B_Y\restriction_{P_Y^{(1)}\cH_Y},
$
and therefore
\[
  \Tr_{\cH_Y}
  \bigl(
    P_Y^{(1)}B_YP_Y^{(1)}
  \bigr)
  =
  \Tr_{P_Y^{(1)}\cH_Y}
  \bigl(
    B_Y\restriction_{P_Y^{(1)}\cH_Y}
  \bigr).
\]
This gives \eqref{eq:trace-factorisation-with-Qout-invariant}. 
\end{proof}

\subsection{Finite-volume pressure derivatives}
\label{app:finite-volume-pressure-derivatives}

Recall the finite-volume Hubbard and  Heisenberg pressures
\(p_{\Lambda_L,\beta,U}^{\rm Hub}(h)\) and
\(p_{\Lambda_L,\beta,U}^{\rm Heis}(h)\) introduced in
Subsection~\ref{subsec:ref-heis-pressure}.  Since both are finite-volume
log-partition functions with \(h\) coupled linearly to the staggered
magnetisation, they are convex functions of \(h\).

\begin{lem}[Finite-volume derivative equals magnetisation]
\label{lem:finite-volume-pressure-derivative}
Fix a finite even torus \(\Lambda_L\), \(\beta>0\), and \(U>0\).
Then \(h\mapsto p_{\Lambda_L,\beta,U}^{\rm Hub}(h)\) is real-analytic
and convex on \(\mathbb R\), and
\[
  \partial_h p_{\Lambda_L,\beta,U}^{\rm Hub}(h)
  =
  m_{\Lambda_L,\beta,U}^{\rm Hub}(h)
  =
  \frac1{|\Lambda_L|}
  \omega_{\Lambda_L,\beta,U,h}^{\rm Hub}(M_{\Lambda_L}).
\]
Similarly,
\[
  \partial_h p_{\Lambda_L,\beta,U}^{\rm Heis}(h)
  =
  m_{\Lambda_L,\beta,U}^{\rm Heis}(h)
  =
  \frac1{|\Lambda_L|}
  \omega_{\Lambda_L,\beta,U,h}^{\rm Heis}
  (M_{\Lambda_L}^{\rm spin}).
\]
Moreover,
\[
  \left|
  \partial_h p_{\Lambda_L,\beta,U}^{\rm Hub}(h)
  \right|
  \le
  \frac12,
  \qquad
  \left|
  \partial_h p_{\Lambda_L,\beta,U}^{\rm Heis}(h)
  \right|
  \le
  \frac12.
\]
Consequently, for \(\sharp\in\{{\rm Hub},{\rm Heis}\}\),
\[
  \left|
  p_{\Lambda_L,\beta,U}^{\sharp}(h)
  -
  p_{\Lambda_L,\beta,U}^{\sharp}(\tilde h)
  \right|
  \le
  \frac12 |h-\tilde h|.
\]
\end{lem}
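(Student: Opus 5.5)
The plan is to treat both models with one argument, since each is a finite-dimensional log-partition function in which $h$ enters linearly through $-hM$. For the Hubbard case, set $\mathcal K=\cH_{\Lambda_L}^{\rm hf}$ and $H(h)=H^{\rm Hub}_{\Lambda_L}-hM_{\Lambda_L}$. For the Heisenberg case, set $\mathcal K=\cH_{\Lambda_L}^{\rm spin}$ and $H(h)=H^{\rm Heis}_{\Lambda_L}(J_0(U))-hM^{\rm spin}_{\Lambda_L}$.

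\emph{Analyticity.} The partition function is $Z(h)=\Tr_{\mathcal K}e^{-\beta H(h)}$. The map $h\mapsto e^{-\beta H(h)}$ is an entire function with values in a finite-dimensional matrix algebra, given by its norm-convergent exponential series. So $Z$ is entire. It is also strictly positive on $\mathbb R$, because it is the trace of a positive invertible operator. Hence $\log Z$ is real-analytic on $\mathbb R$, and so is the pressure.

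\emph{Derivative formula.} I would apply Lemma~\ref{lem:duhamel-derivative-pressure} with $\partial_hH(h)=-M$. This gives
\[
  \partial_hp=\frac1{|\Lambda_L|}\,\omega_{\beta,h}(M),
\]
which is exactly the magnetisation density $m^{\sharp}_{\Lambda_L,\beta,U}(h)$ for each model.

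\emph{Convexity.} Differentiating once more gives
\[
  \partial_h^2p=\frac{\beta}{|\Lambda_L|}\bigl(\omega(M^2)-\omega(M)^2\bigr)\ge0 .
\]
The key point making this work is the commutation structure. $M$ commutes with $H^{\rm Hub}_{\Lambda_L}$, because the hopping conserves $S^{(3)}_{\rm tot}$ but not $M$; so in general $M$ does not commute with $H(h)$, and the naive variance formula does not apply. To avoid this, I would prove convexity directly from Hölder's inequality for traces. For $h_1,h_2$ and $\theta\in[0,1]$, the Golden--Thompson-free form of the argument uses that $A\mapsto\log\Tr e^{A}$ is convex on self-adjoint matrices. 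This follows from the Gibbs variational principle already used in the proof of Lemma~\ref{lem:pressure-Lipschitz-bound}: $\log\Tr e^A$ is a supremum of affine functions of $A$. Composing this convex function with the affine map $h\mapsto -\beta H(h)$ gives convexity of $p$ in $h$. I expect this to be the only step needing care, and the variational-principle route settles it cleanly.

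\emph{Bounds.} The operators $S^{(3)}_x$ are mutually commuting with spectrum in $\{-\tfrac12,0,\tfrac12\}$ (on $\cH^{\rm hf}$ in the fermionic case), and $\eta_x=\pm1$. Hence $\|M\|\le|\Lambda_L|/2$. Since $\omega$ is a state, $|\omega(M)|\le\|M\|$, so $|\partial_hp|\le\tfrac12$. The Lipschitz estimate $|p(h)-p(\tilde h)|\le\tfrac12|h-\tilde h|$ then follows from the mean value theorem. Alternatively, it follows directly from Lemma~\ref{lem:pressure-Lipschitz-bound} with $K=-(h-\tilde h)M$ and $\|K\|/|\Lambda_L|\le|h-\tilde h|/2$.
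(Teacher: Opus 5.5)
Your proof is correct and follows the paper's route. You use the derivative identity from Lemma~\ref{lem:duhamel-derivative-pressure} with $\partial_hH(h)=-M$, the bound $\|M\|\le|\Lambda_L|/2$, and then integration (or Lemma~\ref{lem:pressure-Lipschitz-bound}). The only difference is that you justify analyticity and convexity, which the paper simply cites as a standard fact; your Gibbs-variational argument (a supremum of affine functions of $-\beta H(h)$, which is affine in $h$) is the right way to do this.

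You should, however, delete the displayed formula $\partial_h^2p=\frac{\beta}{|\Lambda_L|}\bigl(\omega(M^2)-\omega(M)^2\bigr)$ and the clause saying $M$ commutes with $H^{\rm Hub}_{\Lambda_L}$. The staggered magnetisation does not commute with the hopping, so that second-derivative formula is false in general; the correct second derivative is a Duhamel (Kubo--Mori) variance. As written, that sentence also contradicts itself.
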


\begin{proof}
We use the following standard fact without further comment: in finite
volume, a pressure obtained from a Hamiltonian of the form
\(H(h)=H(0)-hM\) is real-analytic and convex as a function of \(h\).
Thus both
$
  h\mapsto p_{\Lambda_L,\beta,U}^{\rm Hub}(h),
 $
 and 
 $
  h\mapsto p_{\Lambda_L,\beta,U}^{\rm Heis}(h)
$
are real-analytic and convex on \(\mathbb R\).

For the derivative identities, apply
Lemma~\ref{lem:duhamel-derivative-pressure}.  Since
$
  \partial_h H_{\Lambda_L}^{\rm Hub}(h)
  =
  -M_{\Lambda_L},
$
we obtain
\[
  \partial_h p_{\Lambda_L,\beta,U}^{\rm Hub}(h)
  =
  \frac1{|\Lambda_L|}
  \omega_{\Lambda_L,\beta,U,h}^{\rm Hub}(M_{\Lambda_L})
  =
  m_{\Lambda_L,\beta,U}^{\rm Hub}(h).
\]
Similarly, since
$
  \partial_h H_{\Lambda_L}^{\rm Heis}(J_0(U),h)
  =
  -M_{\Lambda_L}^{\rm spin},
$
we have
\[
  \partial_h p_{\Lambda_L,\beta,U}^{\rm Heis}(h)
  =
  \frac1{|\Lambda_L|}
  \omega_{\Lambda_L,\beta,U,h}^{\rm Heis}
  (M_{\Lambda_L}^{\rm spin})
  =
  m_{\Lambda_L,\beta,U}^{\rm Heis}(h).
\]

Finally,
since
$
  \|M_{\Lambda_L}\|\le\frac{|\Lambda_L|}{2}
$
and 
$
  \|M_{\Lambda_L}^{\rm spin}\|\le\frac{|\Lambda_L|}{2},
$
one obtains 
\[
  \left|
  \partial_h p_{\Lambda_L,\beta,U}^{\rm Hub}(h)
  \right|
  \le
  \frac12,
  \qquad
  \left|
  \partial_h p_{\Lambda_L,\beta,U}^{\rm Heis}(h)
  \right|
  \le
  \frac12.
\]
Integrating these bounds gives, for
\(\sharp\in\{{\rm Hub},{\rm Heis}\}\),
$
  \left|
  p_{\Lambda_L,\beta,U}^{\sharp}(h)
  -
  p_{\Lambda_L,\beta,U}^{\sharp}(\tilde h)
  \right|
  \le
  \frac12 |h-\tilde h|.
$
This completes the proof.
\end{proof}

\subsection{Positive fixed-field magnetisation}
\label{subsec:appendix-positive-Hub-magnetisation}

\begin{lem}[Heisenberg reference magnetisation lower bound]
\label{lem:Heis-reference-magnetisation-lower-bound}
Let
$
  h_I:=\inf I>0.
$
There exists a constant \(C_d<\infty\), depending only on the dimension
and on the normalization of the nearest-neighbour interaction norm, such
that
\[
  m_{\Lambda_L,\beta,U}^{\rm Heis}(h)
  \ge
  m_{\rm Heis}^{\rm lb}(U,\beta;I),
  \qquad h\in I,
\]
where
\[
  m_{\rm Heis}^{\rm lb}(U,\beta;I)
  :=
  \frac12\tanh\left(\frac{\beta h_I}{4}\right)
  -
  \frac{4C_d}{h_I}J_0(U).
\]
Moreover, for every \(\ell_0>0\),
\[
  \lim_{U\to\infty}
  \inf_{\beta J_0(U)\ge\ell_0}
  m_{\rm Heis}^{\rm lb}(U,\beta;I)
  =
  \frac12.
\]
\end{lem}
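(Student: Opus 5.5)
The plan is to repeat, for the Heisenberg reference model alone, the convexity argument already used in the proof of Lemma~\ref{lem:fixed-window-pressure-to-magnetisation}. The idea is to compare the Heisenberg pressure with the pure staggered-field pressure
\[
  p_{\Lambda,\beta}^{0}(h)=\frac1\beta\log\left(2\cosh\frac{\beta h}{2}\right),
\]
and then convert this pressure comparison into a lower bound on the derivative by a secant estimate.

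\textbf{Step 1: pressure comparison with the pure-field model.} Write
\[
  H_\Lambda^{\rm Heis}(J_0(U),h)=-hM_\Lambda^{\rm spin}+J_0(U)\sum_{\{x,y\}\in\mathscr B_\Lambda}\left(\boldsymbol S_x\cdot\boldsymbol S_y-\frac14\right).
\]
The bond interaction has interaction norm at most \(C_dJ_0(U)\). For example, one may take \(C_d=2d\), using \(\|\boldsymbol S_x\cdot\boldsymbol S_y-\frac14\|\le1\) and the fact that each site lies on \(2d\) bonds. Lemma~\ref{lem:pressure-Lipschitz-bound} then gives, uniformly in \(s\in\mathbb R\) and \(L\),
\[
  |p_{\Lambda,\beta,U}^{\rm Heis}(s)-p_{\Lambda,\beta}^{0}(s)|\le C_dJ_0(U).
\]

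\textbf{Step 2: secant bound on the window \([h/2,h]\).} Fix \(h\in I\). The function \(p_{\Lambda,\beta,U}^{\rm Heis}\) is convex and its derivative is \(m^{\rm Heis}_{\Lambda,\beta,U}\) by Lemma~\ref{lem:finite-volume-pressure-derivative}. Convexity gives
\[
  m^{\rm Heis}_{\Lambda,\beta,U}(h)\ge \frac{p^{\rm Heis}(h)-p^{\rm Heis}(h/2)}{h/2}.
\]
Replacing \(p^{\rm Heis}\) by \(p^0\) at the two points costs at most \(2C_dJ_0(U)/(h/2)=4C_dJ_0(U)/h\). Applying convexity of \(p^0\) once more bounds the resulting secant of \(p^0\) from below by
\[
  \partial_hp^0(h/2)=\tfrac12\tanh(\beta h/4).
\]
Finally, \(h\ge h_I\) lets us replace \(\tanh(\beta h/4)\) by \(\tanh(\beta h_I/4)\) and \(4C_dJ_0(U)/h\) by \(4C_dJ_0(U)/h_I\), which gives the stated lower bound \(m^{\rm lb}_{\rm Heis}\).

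\textbf{Step 3: the limit.} Under \(\beta J_0(U)\ge\ell_0\) we have \(\beta\ge \ell_0U/(4t^2)\to\infty\) as \(U\to\infty\). Hence
\[
  \inf_{\beta J_0(U)\ge\ell_0}\tanh(\beta h_I/4)=\tanh\!\left(\frac{\ell_0h_I}{4J_0(U)}\right)\to1,
\]
while \(4C_dJ_0(U)/h_I\to0\). So \(\lim_{U\to\infty}\inf_{\beta J_0(U)\ge\ell_0} m^{\rm lb}_{\rm Heis}\ge\frac12\). Conversely, the \(\tanh\) term never exceeds \(1/2\) and the subtracted term is nonnegative, so \(m^{\rm lb}_{\rm Heis}\le\frac12\) and the limit equals \(\frac12\).

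No step is a serious obstacle. The only care needed is in the two uses of convexity: the lower secant bound for the Heisenberg pressure at the right endpoint, and the lower bound for the \(p^0\) secant by the derivative at the left endpoint.
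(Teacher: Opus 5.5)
Your proposal is correct and follows the paper's proof essentially step for step. The paper also compares $p^{\rm Heis}_{\Lambda_L,\beta,U}$ with the pure staggered-field pressure $p^0_{\Lambda_L,\beta}$ via Lemma~\ref{lem:pressure-Lipschitz-bound} (error $C_dJ_0(U)$), then applies the same secant-on-$[h/2,h]$ convexity estimate from Lemma~\ref{lem:fixed-window-pressure-to-magnetisation}, and finishes with $\beta h_I\ge \ell_0 h_I/J_0(U)\to\infty$; your explicit choice $C_d=2d$ and your more careful treatment of the limit are only minor elaborations.
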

\begin{proof}
Set \(h_+:=\sup I\) and
$
  I^\sharp:=\left[\frac{h_I}{2},h_+\right].
$
Let \(p_{\Lambda_L,\beta}^{0}\) be the pure staggered-field spin
pressure.  Then
\[
  p_{\Lambda_L,\beta}^{0}(h)
  =
  \frac1\beta\log\left(2\cosh\frac{\beta h}{2}\right),
  \qquad
  \partial_h p_{\Lambda_L,\beta}^{0}(h)
  =
  \frac12\tanh\left(\frac{\beta h}{2}\right).
\]
Moreover,
\[
  H_{\Lambda_L}^{\rm Heis}(J_0(U),h)
  =
  -hM_{\Lambda_L}^{\rm spin}
  +
  H_{\Lambda_L}^{\rm Heis},
  \qquad
  \|H_{\Lambda_L}^{\rm Heis}\|_0
  \le
  C_dJ_0(U).
\]
Thus Lemma~\ref{lem:pressure-Lipschitz-bound} gives
$
  \sup_{s\in I^\sharp}
  \left|
  p_{\Lambda_L,\beta,U}^{\rm Heis}(s)
  -
  p_{\Lambda_L,\beta}^{0}(s)
  \right|
  \le
  C_dJ_0(U).
$
Applying the fixed-window convexity estimate used in
Lemma~\ref{lem:fixed-window-pressure-to-magnetisation}, with
\(C_dJ_0(U)\) in place of the pressure-comparison error, yields
\[
  m_{\Lambda_L,\beta,U}^{\rm Heis}(h)
  \ge
  \frac12\tanh\left(\frac{\beta h_I}{4}\right)
  -
  \frac{4C_d}{h_I}J_0(U),
  \qquad h\in I.
\]
This is the asserted lower bound.

Finally, \(J_0(U)=4t^2/U\to0\), and
$
  \beta J_0(U)\ge\ell_0
  \, \Longrightarrow\, 
  \beta h_I
  \ge
  \frac{\ell_0 h_I}{J_0(U)}
  \to\infty.
$
Hence
$
  \tanh\left(\frac{\beta h_I}{4}\right)\to1
$
uniformly in the Heisenberg-scale regime, and the claimed limit follows.
\end{proof}

\begin{rem}[Non-optimality of the elementary lower bound]
\label{rem:nonoptimal-Heis-reference-lower-bound}
The lower bound in
Lemma~\ref{lem:Heis-reference-magnetisation-lower-bound}
is not intended to be optimal.  It is an elementary finite-volume bound
obtained only from the pressure Lipschitz comparison with the pure
staggered-field spin system and the fixed-window convexity argument.

In the infinite-volume Heisenberg model, one expects stronger bounds in
the regime where antiferromagnetic long-range order is known.  In
particular, a suitable Dyson--Lieb--Simon/infrared-bound input should give
a positive staggered magnetisation uniformly as \(h\downarrow0+\) in the
dimensions and temperature regimes where such long-range order holds.
We do not use this stronger input here; the elementary bound above is
sufficient for the fixed positive-field comparison considered in this
paper.
\end{rem}

\subsection{Thermodynamic derivatives and quasi-averages}
\label{app:thermodynamic-derivatives-QA}

The following statement explains how thermodynamic derivatives of
pressures relate to limits of finite-volume magnetisations.
\begin{lem}[Thermodynamic derivatives and magnetisation limits]
\label{lem:thermo-derivative-magnetisation-pressure}
Fix \(\beta>0\), \(U>0\), and \(h_0>0\).  Let
\(\sharp\in\{{\rm Hub},{\rm Heis}\}\).  Assume that the thermodynamic
limit
\[
  p_{\beta,U}^{\sharp}(h)
  :=
  \lim_{L\to\infty}
  p_{\Lambda_L,\beta,U}^{\sharp}(h)
\]
exists as a finite real number for every \(h\in[0,h_0]\).  Then
\(p_{\beta,U}^{\sharp}\) is convex on \([0,h_0]\).

Moreover, for every \(h\in(0,h_0)\), the one-sided derivatives
\[
  \partial_h^-p_{\beta,U}^{\sharp}(h)
  :=
  \lim_{\varepsilon\downarrow0}
  \frac{
    p_{\beta,U}^{\sharp}(h)
    -
    p_{\beta,U}^{\sharp}(h-\varepsilon)
  }{\varepsilon}, \qquad
  \partial_h^+p_{\beta,U}^{\sharp}(h)
  :=
  \lim_{\varepsilon\downarrow0}
  \frac{
    p_{\beta,U}^{\sharp}(h+\varepsilon)
    -
    p_{\beta,U}^{\sharp}(h)
  }{\varepsilon}
\]
exist and satisfy
\[
  \partial_h^-p_{\beta,U}^{\sharp}(h)
  \le
  \liminf_{L\to\infty}
  m_{\Lambda_L,\beta,U}^{\sharp}(h)
  \le
  \limsup_{L\to\infty}
  m_{\Lambda_L,\beta,U}^{\sharp}(h)
  \le
  \partial_h^+p_{\beta,U}^{\sharp}(h).
\]
Consequently, if \(p_{\beta,U}^{\sharp}\) is differentiable at
\(h\in(0,h_0)\), then the thermodynamic limit of the finite-volume
magnetisations exists and is given by
\[
  \lim_{L\to\infty}
  m_{\Lambda_L,\beta,U}^{\sharp}(h)
  =
  \partial_h p_{\beta,U}^{\sharp}(h).
\]
In particular, this conclusion holds for all \(h\in(0,h_0)\) outside the
at most countable set of non-differentiability points of
\(p_{\beta,U}^{\sharp}\).
\end{lem}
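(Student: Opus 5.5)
The plan is to use the standard fact that pointwise limits of convex functions are convex, and that for convex functions the derivative at a point is squeezed between difference quotients.

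\emph{Convexity of the limit.} By Lemma~\ref{lem:finite-volume-pressure-derivative}, each finite-volume pressure \(h\mapsto p_{\Lambda_L,\beta,U}^{\sharp}(h)\) is convex on \(\mathbb R\). For \(h_1,h_2\in[0,h_0]\) and \(\theta\in[0,1]\), the convexity inequality
\[
  p_{\Lambda_L,\beta,U}^{\sharp}\bigl(\theta h_1+(1-\theta)h_2\bigr)
  \le
  \theta p_{\Lambda_L,\beta,U}^{\sharp}(h_1)+(1-\theta)p_{\Lambda_L,\beta,U}^{\sharp}(h_2)
\]
passes to the limit \(L\to\infty\), since all three points lie in \([0,h_0]\), where the limit exists by assumption. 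Hence \(p_{\beta,U}^{\sharp}\) is convex on \([0,h_0]\).

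\emph{One-sided derivatives.} At an interior point \(h\in(0,h_0)\), the one-sided derivatives exist because the difference quotients of a convex function are monotone in \(\varepsilon\). The left quotients are nondecreasing as \(\varepsilon\downarrow0\) and are bounded above by any right quotient. The right quotients are nonincreasing and bounded below by any left quotient. This gives the inequality \(\partial_h^-\le\partial_h^+\).

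\emph{The sandwich inequality.} Fix \(0<\varepsilon<\min(h,h_0-h)\). By the secant inequality for the differentiable convex function \(p_{\Lambda_L,\beta,U}^{\sharp}\), together with the identity \(\partial_h p^{\sharp}_{\Lambda_L,\beta,U}=m^{\sharp}_{\Lambda_L,\beta,U}\) from Lemma~\ref{lem:finite-volume-pressure-derivative},
\[
  \frac{p_{\Lambda_L,\beta,U}^{\sharp}(h)-p_{\Lambda_L,\beta,U}^{\sharp}(h-\varepsilon)}{\varepsilon}
  \le
  m_{\Lambda_L,\beta,U}^{\sharp}(h)
  \le
  \frac{p_{\Lambda_L,\beta,U}^{\sharp}(h+\varepsilon)-p_{\Lambda_L,\beta,U}^{\sharp}(h)}{\varepsilon}.
\]
Taking \(\liminf_{L\to\infty}\) on the left and \(\limsup_{L\to\infty}\) on the right turns the outer quotients into the corresponding quotients of \(p^{\sharp}_{\beta,U}\), since \(\varepsilon\) is fixed and pointwise convergence holds at \(h\) and \(h\pm\varepsilon\). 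Letting \(\varepsilon\downarrow0\) then yields
\[
  \partial_h^-p_{\beta,U}^{\sharp}(h)
  \le
  \liminf_{L\to\infty} m_{\Lambda_L,\beta,U}^{\sharp}(h)
  \le
  \limsup_{L\to\infty} m_{\Lambda_L,\beta,U}^{\sharp}(h)
  \le
  \partial_h^+p_{\beta,U}^{\sharp}(h).
\]

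\emph{Consequences.} If \(p^{\sharp}_{\beta,U}\) is differentiable at \(h\), the two outer terms coincide, so \(\liminf\) and \(\limsup\) agree and the limit of the magnetisations equals \(\partial_h p^{\sharp}_{\beta,U}(h)\). For the countability claim, \(h\mapsto\partial_h^+p^{\sharp}_{\beta,U}(h)\) is nondecreasing on \((0,h_0)\). At each point of non-differentiability there is a nondegenerate interval \((\partial_h^-,\partial_h^+)\), and these open intervals are pairwise disjoint for distinct points by monotonicity. Each contains a rational, so there are at most countably many such points.

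There is no real obstacle here. The only point needing care is to keep \(h\pm\varepsilon\) inside \([0,h_0]\), where pointwise convergence is assumed; this is why \(h\) is taken in the open interval \((0,h_0)\).
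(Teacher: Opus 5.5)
Your proposal is correct and follows essentially the same route as the paper: convexity passes to the pointwise limit, and the secant inequality for the finite-volume pressures (whose derivatives are the magnetisations) is taken to the limit $L\to\infty$ at fixed $\varepsilon$, with $h\pm\varepsilon\in[0,h_0]$, before sending $\varepsilon\downarrow0$. You additionally prove the existence of the one-sided derivatives and the countability of non-differentiability points, using disjoint intervals containing rationals; the paper simply cites both as standard facts about convex functions.
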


\begin{proof}
For each finite volume, the pressure
\(p_{\Lambda_L,\beta,U}^{\sharp}(h)\) is a convex differentiable function
of \(h\), and its derivative is the corresponding finite-volume
magnetisation:
$
  \partial_h p_{\Lambda_L,\beta,U}^{\sharp}(h)
  =
  m_{\Lambda_L,\beta,U}^{\sharp}(h).
$
Since \(p_{\beta,U}^{\sharp}\) is the pointwise finite limit of convex
functions, it is convex on \([0,h_0]\).

Fix \(h\in(0,h_0)\).  Let \(\varepsilon>0\) be such that
\(h-\varepsilon,h+\varepsilon\in[0,h_0]\).  By convexity of
\(p_{\Lambda_L,\beta,U}^{\sharp}\), we have
\[
  \frac{
    p_{\Lambda_L,\beta,U}^{\sharp}(h)
    -
    p_{\Lambda_L,\beta,U}^{\sharp}(h-\varepsilon)
  }{\varepsilon}
  \le
  \partial_h p_{\Lambda_L,\beta,U}^{\sharp}(h)
  \le
  \frac{
    p_{\Lambda_L,\beta,U}^{\sharp}(h+\varepsilon)
    -
    p_{\Lambda_L,\beta,U}^{\sharp}(h)
  }{\varepsilon}.
\]
Using
$
  \partial_h p_{\Lambda_L,\beta,U}^{\sharp}(h)
  =
  m_{\Lambda_L,\beta,U}^{\sharp}(h),
$
and taking \(\liminf\) and \(\limsup\) as \(L\to\infty\), we obtain
\[
  \frac{
    p_{\beta,U}^{\sharp}(h)
    -
    p_{\beta,U}^{\sharp}(h-\varepsilon)
  }{\varepsilon}
  \le
  \liminf_{L\to\infty}
  m_{\Lambda_L,\beta,U}^{\sharp}(h)
  \le
  \limsup_{L\to\infty}
  m_{\Lambda_L,\beta,U}^{\sharp}(h)
  \le
  \frac{
    p_{\beta,U}^{\sharp}(h+\varepsilon)
    -
    p_{\beta,U}^{\sharp}(h)
  }{\varepsilon}.
\]
Letting \(\varepsilon\downarrow0\) gives
\[
  \partial_h^-p_{\beta,U}^{\sharp}(h)
  \le
  \liminf_{L\to\infty}
  m_{\Lambda_L,\beta,U}^{\sharp}(h)
  \le
  \limsup_{L\to\infty}
  m_{\Lambda_L,\beta,U}^{\sharp}(h)
  \le
  \partial_h^+p_{\beta,U}^{\sharp}(h).
\]

If \(p_{\beta,U}^{\sharp}\) is differentiable at \(h\), then the left and
right derivatives coincide.  Hence the \(\liminf\) and \(\limsup\) above
coincide, and the thermodynamic limit of the finite-volume
magnetisations exists and equals \(\partial_h p_{\beta,U}^{\sharp}(h)\).

Finally, a convex function on an interval is differentiable outside an at
most countable set.  This proves the last assertion.
\end{proof}

\bibliographystyle{abbrvurl}

\begin{thebibliography}{10}

\bibitem{BorgsJedrzejewskiKotecky1996}
C.~Borgs, J.~Jedrzejewski, and R.~Koteck{\'y}.
\newblock {The staggered charge-order phase of the extended {Hubbard} model in
  the atomic limit}.
\newblock {\em Journal of Physics A: Mathematical and General}, 29(4):733--747,
  1996.
\newblock \href {https://doi.org/10.1088/0305-4470/29/4/005}
  {\path{doi:10.1088/0305-4470/29/4/005}}.

\bibitem{BorgsKotecky2000}
C.~Borgs and R.~Koteck{\'y}.
\newblock {Low temperature phase diagrams of fermionic lattice systems}.
\newblock {\em Communications in Mathematical Physics}, 208:575--604, 2000.
\newblock \href {https://doi.org/10.1007/s002200050002}
  {\path{doi:10.1007/s002200050002}}.

\bibitem{BorgsKoteckyUeltschi1996}
C.~Borgs, R.~Koteck{\'y}, and D.~Ueltschi.
\newblock {Low temperature phase diagrams for quantum perturbations of
  classical spin systems}.
\newblock {\em Communications in Mathematical Physics}, 181:409--446, 1996.
\newblock \href {https://doi.org/10.1007/BF02101010}
  {\path{doi:10.1007/BF02101010}}.

\bibitem{BratteliRobinson1}
O.~Bratteli and D.~W. Robinson.
\newblock {\em Operator Algebras and Quantum Statistical Mechanics 1:
  \(C^\ast\)- and \(W^\ast\)-Algebras, Symmetry Groups, Decomposition of
  States}.
\newblock Springer, Berlin, 2 edition, 1987.
\newblock \href {https://doi.org/10.1007/978-3-662-02520-8}
  {\path{doi:10.1007/978-3-662-02520-8}}.

\bibitem{BratteliRobinson2}
O.~Bratteli and D.~W. Robinson.
\newblock {\em Operator Algebras and Quantum Statistical Mechanics 2:
  Equilibrium States. Models in Quantum Statistical Mechanics}.
\newblock Springer, Berlin, 2 edition, 1997.
\newblock \href {https://doi.org/10.1007/978-3-662-03444-6}
  {\path{doi:10.1007/978-3-662-03444-6}}.

\bibitem{BravyiDiVincenzoLoss2011}
S.~Bravyi, D.~P. DiVincenzo, and D.~Loss.
\newblock {Schrieffer--Wolff transformation for quantum many-body systems}.
\newblock {\em Annals of Physics}, 326(10):2793--2826, 2011.
\newblock \href {https://doi.org/10.1016/j.aop.2011.06.004}
  {\path{doi:10.1016/j.aop.2011.06.004}}.

\bibitem{CorreggiGiulianiSeiringer2015}
M.~Correggi, A.~Giuliani, and R.~Seiringer.
\newblock {Validity of the spin-wave approximation for the free energy of the
  {Heisenberg} ferromagnet}.
\newblock {\em Communications in Mathematical Physics}, 339(1):279--307, 2015.
\newblock \href {https://doi.org/10.1007/s00220-015-2402-0}
  {\path{doi:10.1007/s00220-015-2402-0}}.

\bibitem{DattaFernandezFrohlich1996}
N.~Datta, R.~Fern{\'a}ndez, and J.~Fr{\"o}hlich.
\newblock {Low-temperature phase diagrams of quantum lattice systems. {I}.
  Stability for quantum perturbations of classical systems with finitely many
  ground states}.
\newblock {\em Journal of Statistical Physics}, 84:455--534, 1996.
\newblock \href {https://doi.org/10.1007/BF02179651}
  {\path{doi:10.1007/BF02179651}}.

\bibitem{DattaFernandezFrohlich1999}
N.~Datta, R.~Fern{\'a}ndez, and J.~Fr{\"o}hlich.
\newblock {Effective Hamiltonians and phase diagrams for tight-binding models}.
\newblock {\em Journal of Statistical Physics}, 96:545--611, 1999.
\newblock \href {https://doi.org/10.1023/A:1004594122474}
  {\path{doi:10.1023/A:1004594122474}}.

\bibitem{DelVecchioFrohlichPizzoRossi2021}
S.~Del~Vecchio, J.~Fr{\"o}hlich, A.~Pizzo, and S.~Rossi.
\newblock {Lie--Schwinger Block-Diagonalization and Gapped Quantum Chains with
  Unbounded Interactions}.
\newblock {\em Communications in Mathematical Physics}, 381:1115--1152, 2021.
\newblock \href {https://doi.org/10.1007/s00220-020-03878-y}
  {\path{doi:10.1007/s00220-020-03878-y}}.

\bibitem{GiulianiMastropietro2010}
A.~Giuliani and V.~Mastropietro.
\newblock {The two-dimensional {Hubbard} model on the honeycomb lattice}.
\newblock {\em Communications in Mathematical Physics}, 293:301--346, 2010.
\newblock \href {https://doi.org/10.1007/s00220-009-0910-5}
  {\path{doi:10.1007/s00220-009-0910-5}}.

\bibitem{HarrisLange1967}
A.~B. Harris and R.~V. Lange.
\newblock Single-particle excitations in narrow energy bands.
\newblock {\em Physical Review}, 157(2):295--314, 1967.
\newblock \href {https://doi.org/10.1103/PhysRev.157.295}
  {\path{doi:10.1103/PhysRev.157.295}}.

\bibitem{Hubbard1963}
J.~Hubbard.
\newblock {Electron correlations in narrow energy bands}.
\newblock {\em Proceedings of the Royal Society of London. Series A.
  Mathematical and Physical Sciences}, 276(1365):238--257, 1963.
\newblock \href {https://doi.org/10.1098/rspa.1963.0204}
  {\path{doi:10.1098/rspa.1963.0204}}.

\bibitem{Jedrzejewski1994}
J.~Jedrzejewski.
\newblock Phase diagrams of extended {Hubbard} models in the atomic limit.
\newblock {\em Physica A: Statistical Mechanics and its Applications},
  205(4):702--717, 1994.
\newblock \href {https://doi.org/10.1016/0378-4371(94)90231-3}
  {\path{doi:10.1016/0378-4371(94)90231-3}}.

\bibitem{Lieb1989}
E.~H. Lieb.
\newblock Two theorems on the {Hubbard} model.
\newblock {\em Physical Review Letters}, 62(10):1201--1204, 1989.
\newblock \href {https://doi.org/10.1103/PhysRevLett.62.1201}
  {\path{doi:10.1103/PhysRevLett.62.1201}}.

\bibitem{LiebWu1968}
E.~H. Lieb and F.~Y. Wu.
\newblock Absence of {Mott} transition in an exact solution of the short-range,
  one-band model in one dimension.
\newblock {\em Physical Review Letters}, 20(25):1445--1448, 1968.
\newblock \href {https://doi.org/10.1103/PhysRevLett.20.1445}
  {\path{doi:10.1103/PhysRevLett.20.1445}}.

\bibitem{MacDonaldGirvinYoshioka1988}
A.~H. MacDonald, S.~M. Girvin, and D.~Yoshioka.
\newblock {$t/U$} expansion for the {Hubbard} model.
\newblock {\em Physical Review B}, 37(16):9753--9756, 1988.
\newblock \href {https://doi.org/10.1103/PhysRevB.37.9753}
  {\path{doi:10.1103/PhysRevB.37.9753}}.

\bibitem{MielkeTasaki1993}
A.~Mielke and H.~Tasaki.
\newblock {Ferromagnetism in the {Hubbard} model. Examples from models with
  degenerate single-electron ground states}.
\newblock {\em Communications in Mathematical Physics}, 158(2):341--371, 1993.
\newblock \href {https://doi.org/10.1007/BF02108079}
  {\path{doi:10.1007/BF02108079}}.

\bibitem{MiyaoTominaga2021}
T.~Miyao and H.~Tominaga.
\newblock Electron-phonon interaction in {Kondo} lattice systems.
\newblock {\em Annals of Physics}, 429:168467, 2021.
\newblock \href {https://doi.org/10.1016/j.aop.2021.168467}
  {\path{doi:10.1016/j.aop.2021.168467}}.

\bibitem{Nagaoka1966}
Y.~Nagaoka.
\newblock Ferromagnetism in a narrow, almost half-filled \(s\) band.
\newblock {\em Physical Review}, 147(1):392--405, 1966.
\newblock \href {https://doi.org/10.1103/PhysRev.147.392}
  {\path{doi:10.1103/PhysRev.147.392}}.

\bibitem{SchriefferWolff1966}
J.~R. Schrieffer and P.~A. Wolff.
\newblock {Relation between the {Anderson} and {Kondo} Hamiltonians}.
\newblock {\em Physical Review}, 149(2):491--492, 1966.
\newblock \href {https://doi.org/10.1103/PhysRev.149.491}
  {\path{doi:10.1103/PhysRev.149.491}}.

\bibitem{Tasaki1998}
H.~Tasaki.
\newblock {From {Nagaoka}'s ferromagnetism to flat-band ferromagnetism and
  beyond: An introduction to ferromagnetism in the {Hubbard} model}.
\newblock {\em Progress of Theoretical Physics}, 99(4):489--548, 1998.
\newblock \href {https://doi.org/10.1143/PTP.99.489}
  {\path{doi:10.1143/PTP.99.489}}.

\bibitem{Teufel2020}
S.~Teufel.
\newblock {Non-equilibrium almost-stationary states and linear response for
  gapped quantum systems}.
\newblock {\em Communications in Mathematical Physics}, 373:621--653, 2020.
\newblock \href {https://doi.org/10.1007/s00220-019-03407-6}
  {\path{doi:10.1007/s00220-019-03407-6}}.

\bibitem{WesleMarcelliMiyaoMonacoTeufel2025}
M.~Wesle, G.~Marcelli, T.~Miyao, D.~Monaco, and S.~Teufel.
\newblock {Near linearity of the macroscopic Hall current response in
  infinitely extended gapped fermion systems}.
\newblock {\em Communications in Mathematical Physics}, 406:199, 2025.
\newblock \href {https://doi.org/10.1007/s00220-025-05361-y}
  {\path{doi:10.1007/s00220-025-05361-y}}.

\bibitem{YoshidaKatsura2021}
H.~Yoshida and H.~Katsura.
\newblock {Rigorous Results on the Ground State of the Attractive {SU}({N})
  {Hubbard} Model}.
\newblock {\em Physical Review Letters}, 126(10):100201, 2021.
\newblock \href {https://doi.org/10.1103/PhysRevLett.126.100201}
  {\path{doi:10.1103/PhysRevLett.126.100201}}.

\end{thebibliography}

\end{document}